%
%
%
%
%
%
\RequirePackage{fix-cm}
\documentclass[smallextended]{svjour3}       
\smartqed  

\usepackage{graphicx}      

\usepackage{lmodern,enumerate,rotating,tensor,etoolbox,comment,csquotes,lipsum}
\usepackage{amsmath,amsfonts,amssymb,subfigure,mathtools,mathabx,blkarray,centernot}
\usepackage{tikz-cd}
\usetikzlibrary{cd}
\usetikzlibrary{shapes,shapes.geometric,arrows,fit,calc,positioning,automata}
\usepackage[normalem]{ulem}


\usepackage{stmaryrd}
\SetSymbolFont{stmry}{bold}{U}{stmry}{m}{n}

\usepackage{longtable}

\usepackage{tabularx}

\newcolumntype{L}[1]{>{\raggedright\let\newline\\\arraybackslash\hspace{0pt}}m{#1}}
\newcolumntype{C}[1]{>{\centering\let\newline\\\arraybackslash\hspace{0pt}}m{#1}}
\newcolumntype{R}[1]{>{\raggedleft\let\newline\\\arraybackslash\hspace{0pt}}m{#1}}


\DeclarePairedDelimiter\ceil{\lceil}{\rceil}

\DeclareMathOperator{\CCa}{CC_A}

\DeclareMathOperator{\size}{size}
\DeclareMathOperator{\tim}{time}

\DeclareMathOperator{\WEIGHT}{WT}
\DeclareMathOperator{\trim}{trim}
\DeclareMathOperator{\head}{head}
\DeclareMathOperator{\tail}{tail}
\DeclareMathOperator{\TW}{TW}
\DeclareMathOperator{\RA}{RA}

\usepackage{scalerel,stackengine}
\stackMath
\newcommand\reallywidehat[1]{%
\savestack{\tmpbox}{\stretchto{%
  \scaleto{%
    \scalerel*[\widthof{\ensuremath{#1}}]{\kern-.6pt\bigwedge\kern-.6pt}%
    {\rule[-\textheight/2]{1ex}{\textheight}}
  }{\textheight}%
}{0.5ex}}%
\stackon[1pt]{#1}{\tmpbox}%
}
\parskip 1ex

\allowdisplaybreaks

\newcommand{\Z}{\mathbb{Z}}
\newcommand{\R}{\mathbb{R}}
\newcommand{\N}{\mathbb{N}}
\newcommand{\Q}{\mathbb{Q}}
\newcommand{\LM}{\mathcal{L}}

\newcommand{\underN}{\underline{\mathbb{N}}}
\newcommand{\underR}{\underline{\mathbb{R}}}
\newcommand{\underQ}{\underline{\mathbb{Q}}}
\newcommand{\Acal}{\mathcal{A}}
\newcommand{\dt}{\delta}
\newcommand{\Dt}{\Delta}

\newcommand{\ep}{\epsilon}

\newcommand{\Sig}{\Sigma}
\newcommand{\s}{\sigma}

\newcommand{\Mt}{\mathcal{M}}

\newcommand{\frakM}{\mathfrak{M}}

\newcommand{\frakB}{\mathfrak{B}}
\newcommand{\frakE}{\mathfrak{E}}
\newcommand{\frakT}{\mathfrak{T}}

\newcommand{\PSPACE}{\mathsf{PSPACE}}
\newcommand{\coNPSPACE}{\mathsf{coNPSPACE}}
\newcommand{\NPSPACE}{\mathsf{NPSPACE}}
\newcommand{\EXPTIME}{\mathsf{EXPTIME}}
\newcommand{\NEXPTIME}{\mathsf{NEXPTIME}}
\newcommand{\PTIME}{\mathsf{PTIME}}

\newcommand{\EXPSPACE}{\mathsf{EXPSPACE}}
\newcommand{\NL}{\mathsf{NL}}

\newcommand{\NP}{\mathsf{NP}}
\newcommand{\coNP}{\mathsf{coNP}}

\newcommand{\llb}{\llbracket}
\newcommand{\rrb}{\rrbracket}

\tikzset{
->, 
node distance=3cm, 
every state/.style={thick, fill=gray!10}, 
initial text=$ $, 
}

\tikzset{
  ,decision/.style=
    {
      diamond, draw, fill=blue!20, text width=4.5em, text badly centered, 
      node distance=3cm, inner sep=0pt
    }
  ,block/.style=
    {
      rectangle, draw, fill=gray!20, text width=7.5em, rounded corners, 
      minimum height=4.1em
    }
  ,blocks/.style=
    {
      rectangle, fill=white!20, text width=9em, rounded corners,
      minimum height=4em
    }
  ,line/.style={draw, -latex'}
  ,cloud/.style={ellipse,fill=white!20, node distance=2cm, minimum height=2em}
}

\usetikzlibrary{decorations.pathmorphing}
\newcommand\xrsquigarrow[1]{%
    \mathrel{%
        \begin{tikzpicture}[%
            baseline={(current bounding box.south)}
            ]
        \node[%
            ,inner sep=.44ex
            ,align=center
            ] (tmp) {$\scriptstyle #1$};
        \path[%
            ,draw,<-
            ,decorate,decoration={%
                ,zigzag
                ,amplitude=0.7pt
                ,segment length=1.2mm,pre length=3.5pt
                }
            ] 
        (tmp.south east) -- (tmp.south west);
        \end{tikzpicture}
        }
    }




\AtEndEnvironment{remark}{\qed}
\AtEndEnvironment{example}{\qed}
%

\usepackage{times,amsmath,amssymb,graphicx,tikz,algorithm,algorithmic,url,hhline,booktabs,soul}
\usetikzlibrary{automata,arrows,positioning}
\usetikzlibrary{petri}

\makeatletter
\newcommand{\subalign}[1]{%
  \vcenter{%
    \Let@ \restore@math@cr \default@tag
    \baselineskip\fontdimen10 \scriptfont\tw@
    \advance\baselineskip\fontdimen12 \scriptfont\tw@
    \lineskip\thr@@\fontdimen8 \scriptfont\thr@@
    \lineskiplimit\lineskip
    \ialign{\hfil$\m@th\scriptstyle##$&$\m@th\scriptstyle{}##$\crcr
      #1\crcr
    }%
  }
}
\makeatother

\usepackage[colorlinks]{hyperref}

\usepackage{colortbl}

\newcommand{\blue}{\color{blue}}
\definecolor{green}{rgb}{0.1,0.7,0.1}

%
%
%
%
%
\begin{document}

\tikzset{elliptic state/.style={draw,ellipse}}

\title{Detectability of labeled weighted automata over monoids
}


\author{Kuize Zhang 
}


\institute{K. Zhang \at
			  Control Systems Group, Technical University of Berlin, 10587 Berlin, Germany\\
              \email{kuize.zhang@campus.tu-berlin.de}           
}

\date{Received: date / Accepted: date}

\maketitle

\begin{abstract}

	In this paper, by developing appropriate methods, we for the first time 
	obtain characterization of four fundamental notions of detectability for general labeled weighted 
	automata over monoids (denoted by $\Acal^{\frakM}$ for short),
	where the four notions are
	strong (periodic) detectability (SD and SPD) and weak (periodic) detectability (WD and WPD).
	The contributions of the current paper are as follows.
	Firstly, we formulate the notions of concurrent composition, observer, and detector for $\Acal^{\frakM}$.
	Secondly, we use the concurrent composition to
	give an equivalent condition for SD, use the detector to give an equivalent condition for SPD,
	and use the observer to give equivalent conditions for WD and WPD, all for general 
	$\Acal^{\frakM}$ without any assumption. Thirdly, we prove that for a labeled weighted automaton
	over monoid $(\Q^k,+)$ (denoted by $\Acal^{\Q^k}$),
	its concurrent
	composition, observer, and detector can be computed in $\NP$, $2$-$\EXPTIME$, and $2$-$\EXPTIME$, respectively,
	by developing novel connections between $\Acal^{\Q^k}$ and the $\NP$-complete exact path length problem (proved
	by [Nyk\"{a}nen and Ukkonen, 2002]) and a subclass of Presburger arithmetic. 
	As a result, we prove that for $\Acal^{\Q^k}$, SD can be verified in $\coNP$, while SPD, WD, and WPD
	can be verified in $2$-$\EXPTIME$.
	Particularly, for $\Acal^{\Q^k}$ in which from every state, a distinct state can be
	reached through some unobservable, instantaneous path, detector $\Acal^{\Q^k}_{det}$ can be computed in 
	$\NP$, and SPD can be verified in $\coNP$.
	Finally, we prove that the problems of verifying SD and SPD of deterministic, deadlock-free,
	and divergence-free $\Acal^{\N}$ over monoid $(\N,+)$
	are both $\coNP$-hard.

	The original methods developed in this paper will provide foundations for characterizing other 
	fundamental properties (e.g., diagnosability and opacity) in labeled weighted automata over monoids.

	In addition, in order to differentiate labeled weighted automata over monoids from labeled timed automata,
	we also initially explore detectability in labeled timed automata, and prove
	that the SD verification problem is $\PSPACE$-complete, while WD and WPD are undecidable.

	\keywords{labeled weighted automaton \and monoid \and semiring \and detectability\and concurrent composition\and observer \and detector\and  complexity\and labeled timed automaton}
\end{abstract}

\section{Introduction}

\subsection{Background and motivation}
\label{subsec:background}

The state detection problem of partially-observed (aka labeled) dynamical systems has been a fundamental problem in 
both computer science \cite{Moore1956} and control science \cite{Kalman1963MathDescriptionofLDS} 
since the 1950s and the 1960s, respectively.
\emph{Detectability} is a basic property of labeled dynamical systems:
when it holds one can use an observed label/output sequence
generated by a system to reconstruct its \emph{current} state
\cite{Giua2002ObservabilityPetriNets,Shu2007Detectability_DES,Sandberg2005HomingSynchronizingSequence,Zhang2016WPGRepresentationReconBCN}.
This property plays a fundamental role in many related
control problems such as observer design and controller synthesis. Hence in different application
scenarios, it is meaningful to characterize different notions of detectability.
On the other hand, detectability is strongly related to another fundamental property of diagnosability
where the latter describes whether one can use an observed output sequence to determine whether 
some special events (called faulty events) have occurred 
\cite{Sampath1995DiagnosabilityDES,Hadjicostis2020DESbook}.
Recently, a decentralized setting of strong detectability and diagnosability (together with another 
property called predictability) were unified into one mathematical framework in labeled finite-state automata
\cite{Zhang2021UnifyingDetDiagPred}.
Moreover, detectability is also related to several cyber-security properties, e.g.,
the property of opacity that was originally proposed to describe information flow security in
computer science in the early 2000s \cite{Mazare2004Opacity} can be seen as the absence of detectability.

\emph{Discrete-event systems} (DESs) are usually composed of transitions between discrete states caused by
spontaneous occurrences of labeled events \cite{WonhamSupervisoryControl,Cassandras2009DESbook}.
For DESs modeled by \emph{labeled finite-state automata} and \emph{labeled Petri nets},
the detectability problem
has been widely studied,
see related results on labeled finite-state automata 
\cite{Shu2007Detectability_DES,Shu2011GDetectabilityDES,Zhang2017PSPACEHardnessWeakDetectabilityDES,Zhang2019KDelayStrDetDES,Masopust2018ComplexityDetectabilityDES},
and also see related results on labeled Petri nets \cite{Zhang2018WODESDetectabilityLPS,Masopust2019DetectabilityPetriNet,Zhang2020DetPNFA}, and on labeled bounded Petri nets \cite{Lan2020C_Det_Bounded_PetriNet}.
Detectability has also been studied for probabilistic finite-state automata 
\cite{Keroglou2017DetProbAutomata,Yin2017InitialStateDetectabilityStoDES}.

The above models, either logic systems (labeled finite-state automata and labeled Petri nets),
or probabilistic finite-state automata, are untimed. In such models, the time consumption for
a transition's execution is not specified. In spite of this, one can infer from the above literature
that all unobservable transitions' executions are assumed to consume no time by default, and the executions of
every pair of observable transitions with the same label are assumed to consume the same time.
In order to make these models more realistic, 
measures to time consumptions for transitions' executions have been added, so that timed models have been studied,
e.g., 
\emph{labeled timed automata}\footnote{In the current paper, we call the timed automata studied in 
\cite{Tripakis2002DiagnosisTimedAutomata,Cassez2012ComplexityCodiagnosability,Li2021ObserverSpecialTimedAutomata}
labeled timed automata, because the events therein are endowed with labels/outputs,
while in the standard timed automata proposed in \cite{Alur1994TimedAutomaton}, events are unlabeled.}
\cite{Tripakis2002DiagnosisTimedAutomata,Cassez2012ComplexityCodiagnosability},
special classes \emph{of labeled weighted 
automata} over \emph{semirings} 
\cite{Lai2021DetUnambiguousWAutomata,Lai2021ObserverPolyAmbiguousWPA}, etc. 

In this paper, we study \emph{labeled weighted 
automata over monoids}
\cite{Daviaud2017DegreeSequentialityWeightedAutomata}, denoted by $\Acal^{\frakM}$.
Such systems have various features. When monoid $\frakM$ is specified as $(\Q_{\ge0},+)$, where $\Q_{\ge0}$ denotes
the set of nonnegative rational numbers, $\Acal^{\frakM}$
becomes a one-clock labeled timed automaton in which the clock is reset along with every occurrence 
of every event and all clock constraints are singletons (details are shown in Section~\ref{sec:detLTA})
(such automata are exactly the automata studied  
in \cite{Li2021ObserverSpecialTimedAutomata}), and hence can represent timed DESs;
when $\frakM$ is specified as $(\Q^k,+)$, the weights can represent deviations of positions of a moving object
in some region; when $\frakM$ is specified as $A^*\times B^*$ over alphabets $A,B$,
where $A^*$ and $B^*$ are free monoids, $\Acal^{\frakM}$ becomes a (finite-state) transducer
\cite{Beal2002DeterminizationofTransducers}. We will characterize detectability for general $\Acal^{\frakM}$
and will also prove that the results obtained in $\Acal^{\Q^k}$ can be implemented algorithmically.
Consider the motivating example as follows.

\begin{example}\label{exam10_MPautomata} 
	Consider a finite region shown in Fig.~\ref{fig23_det_MPautomata}, in which $P_1,P_2,P_3,P_4$
	denote $4$ positions. Assume a robot $A$ walking between these positions to finish a prescribed task.
	Assume the energy levels of $A$ are quantized into $0,1,\dots,10$. When $A$ is in position $P_1$ and
	moves to $P_2$, it sends signal $a$ and its target position $P_2$ along with its energy level decreasing by
	$1$. The other movements can be
	described analogously, where signal $a$ corresponds to energy level decreasing by $1$, signal $u$
	corresponds to energy level decreasing by $0$ or $1$, $b$ corresponds to
	energy level increasing by $1$. $A$ sends $a$ and $b$, but never sends $u$. When $A$ sends a signal,
	it meanwhile sends the corresponding target position. Particularly when the energy level of $A$ is $10$
	it never increases,
	i.e., when $A$ is in energy level $10$ and moves from $P_i$ to $P_{i-1}$, $i=4,3,2$, the energy level of
	$A$ remains to be $10$. Particularly when the energy level of $A$ is $0$ it never decreases, i.e., in this case
	$A$ never moves from $P_i$ to $P_{i+1}$, $i=1,2,3$.
	All these information is known to a person $B$. Assume that whenever
	$A$ sends $a$ or $b$, $B$ receives/observes the symbols and the corresponding target positions.
	Then as time advances, $B$ cannot use these observations to determine what the energy level $A$ will be
	in unless $A$ never visits $P_3$. For example, assume initially $A$ is in energy level $5$ and in position $P_1$,
	and assume $B$ knows the initial energy level. When $A$ moves to $P_2$, $B$
	observes $a/P_2$, and knows that $A$ is in energy level $4$. And then $A$ moves to $P_3$, 
	$A$ could be in energy level $4$ or $3$ but $B$ does not know $A$ is in $P_3$. From now on,
	$B$ will never know what exact energy level $A$ will be in. 
	For example, $A$ moves back to $P_2$ and could be in energy level $5$ or $4$, $B$ observes $b/P_2$;
	and then $A$ moves to $P_3$ again, $A$ could be in energy level $5$, $4$, or $3$; and then $A$ moves back
	to $P_2$ again, $A$ could be in energy level $6$, $5$, or $4$, $B$ observes $b/P_2$ (see
	Table~\ref{tab3_det_MPautomata}).
	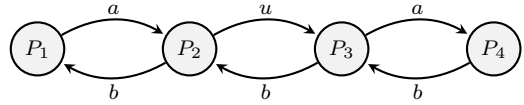
\begin{figure}[!htbp]
   \centering
	\begin{tikzpicture}
	[>=stealth',shorten >=1pt,thick,auto,node distance=2.0 cm, scale = 1.0, transform shape,
	->,>=stealth,inner sep=2pt]

	\tikzstyle{emptynode}=[inner sep=0,outer sep=0]

	\node[state] (p1) {$P_1$};
	\node[state] (p2) [right of = p1] {$P_2$};
	\node[state] (p3) [right of = p2] {$P_3$};
	\node[state] (p4) [right of = p3] {$P_4$};

	\path [->]
	(p1) edge [bend left] node {$a$} (p2)
	(p2) edge [bend left] node {$u$} (p3)
	(p3) edge [bend left] node {$a$} (p4)
	(p4) edge [bend left] node {$b$} (p3)
	(p3) edge [bend left] node {$b$} (p2)
	(p2) edge [bend left] node {$b$} (p1)
	;

    \end{tikzpicture}
	\caption{The finite region in the motivating Example~\ref{exam10_MPautomata}.}
	\label{fig23_det_MPautomata}  
	\end{figure}
	\begin{table}[!htbp]
		\centering
		\begin{tabular}{c|cccccccccccccccccc}
			\hline\rowcolor{lightgray}
			position deviation & $P_1$ & $\to P_2$ & $\to P_3$ & $\to P_2$ & $\to P_3$ & $\to P_2$ & $\to P_3$ & $\cdots$\\\hline
			energy level estimation & $5$ & $4$ & $4,3$ & $5,4$ & $5,4,3$ & $6,5,4$ & $6,5,4,3$ & $\cdots$\\
			observation & & $a/P_2$ & $\ep$ & $b/P_2$ & $\ep$ & $b/P_2$ & $\ep$ & $\cdots$\\
			\hline
		\end{tabular}
		\caption{Energy level estimation in the motivating Example~\ref{exam10_MPautomata}.}
		\label{tab3_det_MPautomata}
	\end{table}
	Later in Example~\ref{exam11_MPautomata}, we will show this model can be represented by a labeled 
	weighted automaton over some monoid, but cannot be described by a labeled timed automaton.
\end{example}

\subsection{Literature review}
\label{subsec:LiterRev}

Two fundamental definitions are \emph{strong detectability} and \emph{weak detectability}
\cite{Shu2007Detectability_DES}.
The former implies that there exists a positive integer $k$ such that for \emph{every}
infinite-length trajectory, each prefix of its label/output sequence of length no less than $k$
allows reconstructing the current state. The latter relaxes the former
by changing ``\emph{every}'' to ``\emph{some}''.
In order to adapt to different application scenarios,
variants of strong detectability and weak detectability are also considered,
which are called \emph{strong periodic detectability}
(a variant of strong detectability, requiring to determine states periodically along all output sequences)
and \emph{weak periodic detectability} (a variant of weak detectability, requiring
to determine states periodically along some output sequence) \cite{Shu2007Detectability_DES}.
Other essentially different variants of detectability such as \emph{eventual strong detectability}
and \emph{weak approximate detectability} can be found in \cite{Zhang2020DetPNFA}.

Most results on detectability of labeled finite-state automata are based on two fundamental 
assumptions of \emph{deadlock-freeness} 
(which implies that a system can always run) and \emph{divergence-freeness},
i.e., having no unobservable cycles
(which implies that the running of a system will always be eventually  observed).
For labeled finite-state automata, under the two assumptions,
an \emph{observer} method (actually the powerset construction used for determinizing nondeterministic finite automata
with $\ep$-transitions \cite{Sipser2006TheoryofComputation}) was proposed to 
verify weak (periodic) detectability in exponential time \cite{Shu2007Detectability_DES}, later a \emph{detector} method
(a reduced version of the observer, obtained by splitting the states of an observer into subsets of cardinality $2$,
previously used in \cite{Caines1988,Caines1991ObserverFiniteAutomata}) 
was proposed verify strong (periodic) detectability in polynomial time \cite{Shu2011GDetectabilityDES}.
Also under the two assumptions, verifying weak (periodic) detectability
was proven to be $\PSPACE$-complete \cite{Zhang2017PSPACEHardnessWeakDetectabilityDES}, verifying strong (periodic)
detectability was proven to be $\NL$-complete \cite{Masopust2018ComplexityDetectabilityDES}.
Recently, be developing a \emph{concurrent-composition} method in \cite{Zhang2019KDelayStrDetDES,Zhang2020DetPNFA}
(similar to but technically different from the structures used in 
\cite{Cassez2008FaultDiagnosisStDyObser,Tripakis2002DiagnosisTimedAutomata}), strong detectability was verified
in polynomial time without any assumption, removing the two assumptions used for years.

For labeled Petri nets with inhibitor arcs, weak detectability was proven to be undecidable
in \cite{Zhang2018WODESDetectabilityLPS} by reducing the undecidable language equivalence 
problem of labeled Petri nets (see \cite{Hack1975PetriNetLanguage}) to negation of weak detectability.
For labeled Petri nets, strong detectability was proven to be decidable under the two previously mentioned 
fundamental assumptions reformulated in labeled Petri nets,
it was also proven that it is $\EXPSPACE$-hard to verify strong detectability,
but weak detectability is undecidable \cite{Masopust2019DetectabilityPetriNet},
which strengthens the related undecidability result proven in \cite{Zhang2018WODESDetectabilityLPS}.
In \cite{Masopust2019DetectabilityPetriNet}, the undecidable language inclusion problem (but not
the language equivalence problem) of labeled Petri nets (also see \cite{Hack1975PetriNetLanguage}) was reduced to
negation of weak detectability, so that the same idea in the reduction constructed in 
\cite{Zhang2018WODESDetectabilityLPS},
i.e., clearing all tokens of the first of the two basic labeled Petri nets, was also implemented.
Later, the decidability result for strong detectability was strengthened to hold under only the 
divergence-freeness assumption 
\cite{Zhang2020bookDDS} by developing a new tool called extended concurrent
composition. All decidable results on labeled Petri nets proven in 
\cite{Masopust2019DetectabilityPetriNet,Zhang2020bookDDS} were obtained by reducing negation of strong 
detectability to satisfiability of some Yen's path formulae 
\cite{Yen1992YenPathLogicPetriNet,Atig2009YenPathLogicPetriNet}.

The notion of observer has been recently extended to a subclass of labeled timed automata in which the automata are
deterministic, there is a single clock that is reset along with every occurrence of every event
and all clock constraints in all transitions are singletons
\cite{Li2021ObserverSpecialTimedAutomata}. This class of labeled timed automata are exactly labeled 
weighted automata over the monoid $(\Q_{\ge0},+)$, denoted by $\Acal^{\Q_{\ge0}}$,
which are a strict subclass of the automata studied
in the current paper. The observer defined in \cite{Li2021ObserverSpecialTimedAutomata} was computed in 
$2$-$\EXPTIME$, and computed in $\EXPTIME$ when the considered automata are divergence-free. 
The method of computing an observer is via unfolding every state $q$ to a finite number $n$ of new states and then 
compute the observer of the newly obtained labeled finite-state automaton as in \cite{Shu2007Detectability_DES},
where $n$ is the maximum among the weights of all outgoing
transitions of $q$. Hence the method does not apply to labeled weighted automata with weights being negative
rational numbers. In addition, the authors also give an example to show that if some weights are irrational 
numbers, the observer may have infinitely many states and infinitely many transitions. In
Remark~\ref{rem3_det_MPautomata}, we will also use an example in the current paper to illustrate how to compute the 
observer defined in \cite{Li2021ObserverSpecialTimedAutomata}.

The notion of observer has also been extended to subclasses of labeled max-plus automata 
over the semiring $\underQ:=(\Q\cup\{-\infty\},\max,+,-\infty,0)$, denoted by $\Acal^{\underQ}$.
In \cite{Lai2021DetUnambiguousWAutomata}, the observer was computed for a divergence-free 
$\Acal^{unam,\underQ}$ ($\Acal^{unam,\underQ}$ is short for an unambiguous $\Acal^{\underQ}$) in $\EXPTIME$,
and in \cite{Lai2021ObserverPolyAmbiguousWPA} the observer was computed 
for a divergence-free, polynomially ambiguous $\Acal^{\underQ}$ with the clones property but no upper bound 
for time complexity was given. In \cite{Lai2021DetUnambiguousWAutomata}, the above mentioned four notions of
detectability of divergence-free $\Acal^{unam,\underQ}$ were verified in $\EXPTIME$ by using the observer. In 
\cite{Lai2021DetUnambiguousWAutomata,Lai2021ObserverPolyAmbiguousWPA}, the authors adopted the max-plus manner
to define detectability but not the real-time manner adopted in \cite{Li2021ObserverSpecialTimedAutomata}
and the current paper. The detectability results in untimed models mentioned above (e.g., \cite{Shu2007Detectability_DES,Zhang2017PSPACEHardnessWeakDetectabilityDES,Masopust2018ComplexityDetectabilityDES,Masopust2019DetectabilityPetriNet,Zhang2020DetPNFA}) 
are all in the real-time manner.
Consider a sequence $q_0\xrightarrow[]{e_1}\cdots\xrightarrow[]{e_n}q_n$ of transitions (called a path),
where $q_i$, $0\le i\le n$, are states,
$e_j$, $1\le j\le n$, are events; in the real-time manner, the timed word of the path is
$(e_1,t_1)\dots(e_n,t_n)$, where $t_j$ is the instant when $e_j$ occurs in the path; while in the max-plus manner,
the timed sequence of the path is $(e_1,t_1')\dots(e_n,t_n')$, $t_j'$ is the maximal time
for $e_1,\dots,e_j$ to occur among all different paths having $e_1\dots e_j$ as their event sequence and having 
$q_j$ as the final state, 
so $t_j'\ge t_j$. A detailed comparison will be given in Remark~\ref{rem2_det_MPautomata}.
The overlaps between the results in \cite{Lai2021ObserverPolyAmbiguousWPA} and the results in the current paper 
are the results of \cite{Lai2021DetUnambiguousWAutomata}, because in $\Acal^{unam,\underQ}$, 
under every event sequence, there exists at most one path from the initial 
states to any given state, resulting in that the max-plus manner coincides with the real-time manner.
The overlaps between the results in \cite{Lai2021ObserverPolyAmbiguousWPA} and the results in 
\cite{Li2021ObserverSpecialTimedAutomata} are a strict subset of the results in \cite{Lai2021DetUnambiguousWAutomata},
i.e., the observer of a divergence-free $\Acal^{unam,\underline{\Q_{\ge0}}}$ ($\Acal^{unam,\underline{\Q_{\ge0}}}$
is short for a labeled unambiguous weighted automaton over the semiring  
$\underline{\Q_{\ge0}}:=(\Q_{\ge0}\cup\{-\infty\},\max,+,-\infty,0)$).
The relations of the results in \cite{Lai2021DetUnambiguousWAutomata,Lai2021ObserverPolyAmbiguousWPA,Li2021ObserverSpecialTimedAutomata} 
and the current paper are shown in Fig.~\ref{fig22_det_MPautomata}.
\begin{figure}[!htbp]
	\begin{center}
		\begin{tikzpicture}
			\def\firstcircle{(1,0) circle (1)}
			\def\secondcircle{(0,0) circle (1)}
			\def\thirdcircle{(1,0.0) circle (0.6)}

			\scope
			\clip (-1,-1) rectangle (2,1)
				  \firstcircle;
			\fill [red!50] \secondcircle;
			\endscope
			\scope
			\clip (-1,-1) rectangle (2,1)
				  \secondcircle;
			\fill [green!50] \thirdcircle;
			\endscope
			\scope
			\clip (-1,-1) rectangle (2,1)
				  \secondcircle;
			\clip (-1,-1) rectangle (2,1)
				  \thirdcircle;
			\fill [blue!50] \firstcircle;
			\endscope
			\scope
			\clip \secondcircle;
			\fill [cyan!50] \thirdcircle;
			\endscope
			\scope
			\clip (-1,-1) rectangle (2,1)
				  \thirdcircle;
			\clip \firstcircle;
			\fill [yellow!50] \secondcircle;
			\endscope

			\node[black] at (-0.5,0) {$A$};
			\node[black] at (1.25,0.0) {$E$};
			\node[black] at (1.8,0.0) {$B$};
			\node[black] at (0.75,0.0) {$D$};
			\node[black] at (0.2,0.0) {$C$};
		\end{tikzpicture}
	\end{center}
	\caption{Relations of the results in
	\cite{Lai2021DetUnambiguousWAutomata,Lai2021ObserverPolyAmbiguousWPA,Li2021ObserverSpecialTimedAutomata}
	and the current paper, where \cite{Lai2021DetUnambiguousWAutomata}$=C\cup D$ (i.e., the results in 
	\cite{Lai2021DetUnambiguousWAutomata} are represented by $C\cup D$, the following equalities have similar meanings),
	\cite{Lai2021ObserverPolyAmbiguousWPA}$=A\cup C\cup D$, \cite{Li2021ObserverSpecialTimedAutomata}$=D\cup E$,
	the current paper$=B\cup C\cup D\cup E$. The automata considered in $A\cup C\cup D$ are divergence-free, the 
	automata considered in $E\cup B$ are not divergence-free.}
	\label{fig22_det_MPautomata}   
\end{figure}
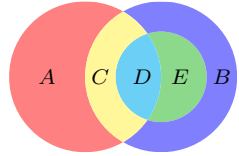

The results in \cite{Lai2021DetUnambiguousWAutomata} generally do not apply to an automaton $\Acal^{unam,\underQ}$
that is not divergence-free (see Remark~\ref{rem7_det_MPautomata}). Although the observer is computed
for a general automaton $\Acal^{\Q_{\ge0}}$ 
in \cite{Li2021ObserverSpecialTimedAutomata}, the observer cannot always be used for verifying detectability 
for an automaton $\Acal^{\Q_{\ge0}}$ that is not divergence-free; this is even true in labeled finite-state automata 
\cite[Remark~2]{Zhang2020DetPNFA}. \textbf{Detectability of general $\Acal^{unam,\underQ}$ and general
$\Acal^{\Q_{\ge0}}$ can be verified by using the methods developed in the current paper, for the first time.}
Apart from the above advantages of the current paper compared with
\cite{Li2021ObserverSpecialTimedAutomata,Lai2021DetUnambiguousWAutomata}, we also
show that $\Acal^{unam,\underN}$ and $\Acal^{unam,\N}$ are already fundamentally more complicated than
a labeled finite-state automaton $\Acal$ by showing in Theorem~\ref{thm12_det_MPautomata} that, the problems of 
verifying strong (periodic) detectability of deterministic, deadlock-free, and divergence-free 
$\Acal^{unam,\underN}$ and $\Acal^{unam,\N}$ are $\coNP$-hard, where $\Acal^{unam,\underN}$ denotes a labeled
unambiguous weighted automaton over the max-plus semiring $\underN:=(\N\cup\{-\infty\},\max,+,-\infty,0)$,
$\Acal^{unam,\N}$ denotes a labeled unambiguous weighted automaton over monoid $(\N,+,0)$,
because as previously mentioned, strong (periodic) detectability of automaton $\Acal$ can be verified in polynomial
time \cite{Shu2011GDetectabilityDES,Zhang2019KDelayStrDetDES}.

Probabilistic finite-state automata were also widely studied models in DESs (e.g., in
\cite{Keroglou2017DetProbAutomata,Yin2017InitialStateDetectabilityStoDES}). 
They are actually weighted automata over the semiring $(\R,+,\cdot,0,1)=:\underR$, 
but the weights are only chosen from $[0,1]$. Because only
probabilities are computed, no computation produces a value outside $[0,1]$. 
In such automata, the reliability of an event sequence is
defined as the sum of the reliabilities of its successful paths, where the reliability of a successful 
path is the product of the probabilities (i.e., weights) of all its transitions. 
Hence the detectability notions studied in 
\cite{Keroglou2017DetProbAutomata,Yin2017InitialStateDetectabilityStoDES}
are defined in a totally different way compared with those in \cite{Lai2021DetUnambiguousWAutomata,Li2021ObserverSpecialTimedAutomata}
and the current paper. 
On the other hand, in this paper we will give equivalent conditions for four fundamental 
notions of detectability of labeled weighted automata over any monoids, which include the results 
on labeled weighted automata over the monoid $(\R,\cdot,1)$ as a special class.

\subsection{Contributions of the paper}


The first contribution is on a general labeled weighted automaton over a monoid, denoted by $\Acal^{\frakM}$.
\begin{enumerate}
	\item We for the first time formulate the notions of concurrent composition, observer, and detector 
		for $\Acal^{\frakM}$, which are natural but nontrivial extensions of those for labeled
		finite-state automata.
		We use the notion of concurrent composition to give an equivalent condition for strong detectability,
		use the notion of observer to give equivalent conditions for weak detectability and weak 
		periodic detectability, and use the notion of detector to give an equivalent
		condition for strong periodic detectability, all for general $\Acal^{\frakM}$ without any assumption.
\end{enumerate}


The second contribution of the paper is on labeled weighted automata over the monoid $(\Q^k,+,0_k)$ (denoted by
$\Acal^{\Q^k}$), where $0_k\in\Q^k$ denotes the $k$-length zero vector,
in which the special results on unambiguous $\Acal^{\Q}$ also hold for labeled unambiguous weighted
automata over semiring $\underQ$ (denoted by $\Acal^{unam,{\underQ}}$), because the four notions of detectability
of $\Acal^{unam,\underQ}$ in \cite{Lai2021DetUnambiguousWAutomata} coincide with the four notions 
of detectability of $\Acal^{unam,\Q}$ in the current paper correspondingly (except for minor and neglectable 
differences, see Remark~\ref{rem2_det_MPautomata}).

\begin{enumerate}\setcounter{enumi}{1} 
	\item We find novel connections between $\Acal^{\Q^k}$ and the exact path length problem 
		\cite{Nykanen2002ExactPathLength} and a subclass of Presburger arithmetic 
		\cite{Graedel1988PresburgerArithmeticComplexity}
		so that detectability of $\Acal^{\Q^k}$ can be verified. 
	\item For $\Acal^{\Q^k}$, we prove that its observer and detector
		can be computed in $2$-$\EXPTIME$, its self-composition 
		can be computed in $\NP$, all in the size of $\Acal^{\Q^k}$.
	\item We prove that strong detectability of $\Acal^{\Q^k}$ can be verified in $\coNP$, 
		while strong periodic detectability,
		weak detectability, and weak periodic detectability of $\Acal^{\Q^k}$ can be 
		verified in $2$-$\EXPTIME$, all in the size of $\Acal^{\Q^k}$.
		Particularly for $\Acal^{\Q^k}$ in which from every state, a distinct state can be reached through 
		some unobservable, instantaneous path, its detector can be computed in $\NP$, and its strong periodic
		detectability can be verified in $\coNP$. 
		We also prove that the problems of verifying strong detectability
		and strong periodic detectability of deterministic, deadlock-free, and divergence-free $\Acal^{\N}$ 
		are both $\coNP$-hard. See Table~\ref{tab1_det_MPautomata} and Table~\ref{tab2_det_MPautomata} as 
		collections of related results.
\end{enumerate}

Finally, in order to differentiate labeled weighted automata over monoids from labeled timed
automata, we also initially explore detectability in labeled timed automata.

\begin{enumerate}\setcounter{enumi}{4}
	\item We prove that in labeled timed automata, the strong detectability verification problem
		is PSPACE-complete, while weak (periodic) detectability is undecidable.
\end{enumerate}

\begin{table}[!htbp]
	\centering{
	\begin{tabular}{|c||c|c|c|}
		\hline\rowcolor{lightgray}
		& SD & SPD & WD, WPD \\
		\hline
		\begin{tabular}[]{c}
		$\Acal$
		\end{tabular}
		&
		\begin{tabular}[]{c}
			$\PTIME$ \cite{Zhang2019KDelayStrDetDES}\\
			$\NL$ \cite{Zhang2021UnifyingDetDiagPred}
		\end{tabular}
		
		&
		$\PTIME$ (Cor.~\ref{cor7_det_MPautomata})
		&
		$\PSPACE$ \cite{Zhang2020DetPNFA}
		\\
		\hline
		\begin{tabular}[]{c}
			d.d. $\Acal$
		\end{tabular}
		&
		\begin{tabular}{c}
			$\PTIME$ \cite{Shu2011GDetectabilityDES}\\
			$\NL$-c. \cite{Masopust2018ComplexityDetectabilityDES}
		\end{tabular}
		&
		\begin{tabular}{c}
			$\PTIME$ \cite{Shu2011GDetectabilityDES}\\
			$\NL$-c. \cite{Masopust2018ComplexityDetectabilityDES}
		\end{tabular}
		&
		\begin{tabular}{c}
			$\EXPTIME$ \cite{Shu2007Detectability_DES}\\
			$\PSPACE$-c. \cite{Zhang2017PSPACEHardnessWeakDetectabilityDES}
		\end{tabular}
		\\
		\hline
		\begin{tabular}{c}
			$\Acal^{\Q^k}$
		\end{tabular} & 
		\begin{tabular}[]{c}
			$\coNP$-c.\\
			(Thms.~\ref{thm4_det_MPautomata},~\ref{thm12_det_MPautomata})
		\end{tabular}
		&
		\begin{tabular}[]{c}
			$2$-$\EXPTIME$\\
			(Thm.~\ref{thm11_det_MPautomata})\\
			$\coNP$-h.\\
			(Thm.~\ref{thm12_det_MPautomata})
		\end{tabular}
		& 
		\begin{tabular}[]{c}
			$2$-$\EXPTIME$\\
			(Thms.~\ref{thm6_det_MPautomata},~\ref{thm10_det_MPautomata})
		\end{tabular}
		\\
		\hline
		\begin{tabular}[]{c}
			d.d. $\Acal^{\Q^k}$
		\end{tabular} 
		&
		\begin{tabular}[]{c}
			$\coNP$-c.\\
			(Thms.~\ref{thm4_det_MPautomata},~\ref{thm12_det_MPautomata})
		\end{tabular}
		&
		\begin{tabular}[]{c}
			$\EXPTIME$\\
			(Thm.~\ref{thm14_det_MPautomata})\\
			$\coNP$-h.\\
			(Thms.~\ref{thm12_det_MPautomata})
		\end{tabular}
		& 
		\begin{tabular}[]{c}
			$\EXPTIME$\\
			(Cor.~\ref{cor6_det_MPautomata})
		\end{tabular}
		\\\hline
		\begin{tabular}{c}
			$\Acal^{u.,\Q^k}$\\ $\Acal^{u.,\underQ}$
		\end{tabular} & 
		\begin{tabular}[]{c}
			$\coNP$-c.\\
			(Cor.~\ref{cor4_det_MPautomata})
		\end{tabular}
		&
		\begin{tabular}[]{c}
			$2$-$\EXPTIME$\\
			(Thm.~\ref{thm11_det_MPautomata})\\
			$\coNP$-h.\\
			(Thm.~\ref{thm12_det_MPautomata})
		\end{tabular}
		& 
		\begin{tabular}[]{c}
			$2$-$\EXPTIME$\\
			(Thms.~\ref{thm6_det_MPautomata},~\ref{thm10_det_MPautomata})
		\end{tabular}\\\hline
		\begin{tabular}{c}
			d.d. $\Acal^{u.,\Q^k}$\\
			d.d. $\Acal^{u.,\underQ}$
		\end{tabular} & 
		\begin{tabular}[]{c}
			$\EXPTIME$ \cite{Lai2021DetUnambiguousWAutomata}\\
			$\coNP$-c.
			(Cor.~\ref{cor4_det_MPautomata})
		\end{tabular}
		&
		\begin{tabular}[]{c}
			$\EXPTIME$ \cite{Lai2021DetUnambiguousWAutomata}\\
			(Thm.~\ref{thm14_det_MPautomata})\\
			$\coNP$-h.
			(Thm.~\ref{thm12_det_MPautomata})
		\end{tabular}
		& 
		\begin{tabular}[]{c}
			$\EXPTIME$ \cite{Lai2021DetUnambiguousWAutomata}\\
			(Cor.~\ref{cor6_det_MPautomata})
		\end{tabular}\\
		\hline
		\begin{tabular}{c}
			d.d.d. $\Acal^{\N}$\\
			d.d.d $\Acal^{\underN}$
		\end{tabular} &
		\begin{tabular}{c}
			$\coNP$ (Cor.~\ref{cor4_det_MPautomata})\\
			$\coNP$-h. (Thm.~\ref{thm12_det_MPautomata})
		\end{tabular}
		&
		\begin{tabular}{c}
			$\EXPTIME$\\ (Thm.~\ref{thm14_det_MPautomata})\\
			$\coNP$-h. (Thm.~\ref{thm12_det_MPautomata})
		\end{tabular}
		&
		\begin{tabular}[]{c}
			$\EXPTIME$ \cite{Lai2021DetUnambiguousWAutomata}\\
			(Cor.~\ref{cor6_det_MPautomata})
		\end{tabular}\\
		\hline
	\end{tabular}}
	\caption{Results on complexity of verifying four notions of detectability of automata, where
	SD, SPD, WD, and WPD are short for strong detectability, strong periodic detectability, weak
	detectability, and weak periodic detectability, respectively;
	$\Acal$ denotes a labeled finite-state automaton, $\Acal^{\Q^k}$ (resp., $\Acal^{\underQ}$, $\Acal^{\underN}$,
	$\Acal^{\N}$)
	denotes a labeled weighted automaton over monoid $(\Q^k,+,0_k)$ (resp., max-plus semiring 
	$(\Q\cup\{-\infty\},\max,+,-\infty,0)$, max-plus semiring $(\N\cup\{-\infty\},\max,+,-\infty,0))$,
	monoid $(\N,+,0)$), ``$u.$'' is short for ``unambiguous'', ``d.d.'' is short for ``deadlock-free, divergence-free'',
	``d.d.d'' is short for ``deterministic, deadlock-free, divergence-free'', ``c.'' denotes ``complete'',
	``h.'' denotes ``hard''.} 
	\label{tab1_det_MPautomata}
\end{table}

\begin{table}[!htbp]
	\centering{
	\begin{tabular}{|c||c|c|c|}
		\hline
		\rowcolor{lightgray} & observer & detector & self-composition \\\hline
		$\Acal$ & $\EXPTIME$ \cite{Shu2007Detectability_DES} & $\PTIME$ \cite{Shu2011GDetectabilityDES} 
		& $\PTIME$ \cite{Zhang2019KDelayStrDetDES,Zhang2020DetPNFA} \\\hline
		$\Acal^{\Q^k}$ &
		$2$-$\EXPTIME$ (Thm.~\ref{thm5_det_MPautomata})
		& $2$-$\EXPTIME$ (Thm.~\ref{thm13_det_MPautomata})
		& $\NP$ (Thm.~\ref{thm3_det_MPautomata})
		\\\hline
		$\Acal^{\Q_{\ge0}}$ &
		$2$-$\EXPTIME$
		\cite{Li2021ObserverSpecialTimedAutomata} 
		(Thm.~\ref{thm5_det_MPautomata})
		& $2$-$\EXPTIME$ 
		(Thm.~\ref{thm13_det_MPautomata}) 
		& $\NP$ (Thm.~\ref{thm3_det_MPautomata})
		\\\hline
		\begin{tabular}{c}
			d. $\Acal^{\Q^k}$ 
		\end{tabular} &
		$\EXPTIME$ (Cor.~\ref{cor5_det_MPautomata})
		& $\EXPTIME$ (Cor.~\ref{cor8_det_MPautomata}) & $\NP$ (Thm.~\ref{thm3_det_MPautomata})
		\\\hline
		d. $\Acal^{\Q_{\ge0}}$ &
		$\EXPTIME$
		\cite{Li2021ObserverSpecialTimedAutomata} (Cor.~\ref{cor5_det_MPautomata})
		& $\EXPTIME$ 
		(Cor.~\ref{cor8_det_MPautomata}) &
		$\NP$ (Thm.~\ref{thm3_det_MPautomata})
		\\\hline
		\begin{tabular}{c}
			d. $\Acal^{u.,\Q^k}$\\
			d. $\Acal^{u.,\underQ}$
		\end{tabular} & $\EXPTIME$ \cite{Lai2021DetUnambiguousWAutomata} (Cor.~\ref{cor5_det_MPautomata})
		& $\EXPTIME$ 
		(Cor.~\ref{cor8_det_MPautomata}) 
		& $\NP$ (Thm.~\ref{thm3_det_MPautomata})
		\\\hline
	\end{tabular}} 
	\caption{ 
	Results on complexity of computing observers, detectors, and self-compositions of automata,
	where $\Acal$, $\Acal^{\Q^k}$, $\Acal^{\underQ}$, $u.$ 
	are the same as those in Table~\ref{tab1_det_MPautomata},  
	$\Acal^{\Q_{\ge0}}$ denotes a labeled weighted automaton over the monoid $(\Q_{\ge0},+,0)$,
	``d.'' is short for ``divergence-free''.}
	\label{tab2_det_MPautomata}
\end{table}

\section{Preliminaries}

\subsection{Notation}

Symbols $\N$, $\Z$, $\Z_{+}$, $\Q$, $\Q_{\ge0}$, $\R$, and $\R_{\ge0}$ denote the sets of nonnegative integers,
integers, positive integers, rational numbers, nonnegative rational numbers, real numbers, and nonnegative
real numbers, respectively. Symbol $0_k$ denotes the $k$-length zero vector.
For a finite alphabet $\Sig$, $\Sig^*$ and $\Sig^{\omega}$ are used to denote the set of
\emph{words} (i.e., finite-length sequences of elements of $\Sig$) over $\Sig$ including the empty word $\epsilon$
and the set of \emph{configurations} (i.e., infinite-length sequences of elements of $\Sig$) over $\Sig$,
respectively. $\Sig^{+}:=\Sig^*\setminus\{\epsilon\}$.
For a word $s\in \Sig^*$,
$|s|$ stands for its length, and we set $|s'|=+\infty$ for all $s'\in \Sig^{\omega}$.
For $s\in \Sig^+$ and $k\in\N$, $s^k$ and $s^{\omega}$ denote the concatenations of $k$ copies of
$s$ and infinitely many copies of $s$, respectively. Analogously, $L_1L_2:=\{e_1e_2|e_1\in L_1,
e_2\in L_2\}$, where $L_1,L_2\subset \Sig^*$.
For a word (configuration) $s\in \Sig^*(\Sig^{\omega})$, a word $s'\in \Sig^*$ is called a \emph{prefix} of $s$,
denoted as $s'\sqsubset s$,
if there exists another word (configuration) $s''\in \Sig^*(\Sig^{\omega})$ such that $s=s's''$.
For two nonnegative integers $i\le j$, $\llb i,j\rrb$ denotes the set of all integers no less than $i$ and no greater 
than $j$; and for a set $S$, $|S|$ denotes its cardinality and $2^S$ its power set. Symbols $\subset$
and $\subsetneq$ denote the subset and strict subset relations, respectively.

We will use the \emph{exact path length} (EPL) problem, the \emph{subset sum}
problem, and a subclass of \emph{Presburger arithmetic} in the literature to prove the main results.

\subsection{The exact path length problem}

Consider a $k$-dimensional weighted directed graph $G=(\Q^k,V,A)$, where $k\in\Z_{+}$, 
$\Q^{k}=\underbrace{\Q\times\cdots\times\Q}_{k}$,
$V$ is a finite set of vertices, $A\subset V\times\Q^k\times V$ a finite set of weighted edges (arcs) with
weights in $\Q^k$. For a path $v_1\xrightarrow[]{z_1}\cdots\xrightarrow[]{z_{n-1}}v_n$, its weight 
is defined by $\sum_{i=1}^{n-1}z_i$. For an edge $a=(v_1,z,v_2)\in A$, also denoted by $v_1\xrightarrow[]
{z}v_2$, we call $v_1$ and $v_2$ the \emph{tail} (denoted by $\tail(a)$) and the \emph{head} (denoted by
$\head(a)$) of $a$, respectively. The EPL problem \cite{Nykanen2002ExactPathLength}
is stated as follows.
\begin{problem}[EPL]\label{prob1_det_MPautomata}
	Given a positive integer $k$, a $k$-dimensional weighted directed graph $G=(\Q^k,V,A)$,
	two vertices $v_1,v_2\in V$, and a vector $z\in \Q^k$, determine
	whether there exists a path from $v_1$ to $v_2$ with weight $z$.
\end{problem}

We set as usual that for a positive integer $n$, the size $\size(n)$ of $n$ to be the length of its binary
representation, i.e., $\size(n)=\ceil{\log_2{(n+1)}}$ ($\ceil{\cdot}$ is the ceiling function),
$\size(-n)=1+\size(n)$; $\size(0)=1$; 
for a rational number $m/n$, where $m,n$ are relatively prime integers,
$\size(m/n)=\size(m)+\size(n)$; then for a vector $z\in\Q^{k}$, its size is the sum of the sizes of
its entries. The size of an instance $(k,G,v_1,v_2,z)$ of the EPL problem is defined by
$\size(k)+\size(G)+2+\size(z)$, where $\size(G)=|V|+\size(A)$, $\size(A)=\sum_{(v_1,z',v_2)\in A}
(2+\size(z'))$.

\begin{lemma}[\cite{Nykanen2002ExactPathLength}]\label{lem1_det_MPautomata}
	The EPL problem belongs to $\NP$.\footnote{
	Note that the original EPL problem studied in \cite{Nykanen2002ExactPathLength} is on graph $(\Z^k,V,A)$.
	However, the proof (a polynomial-time reduction from EPL to integer linear programming)
	in \cite{Nykanen2002ExactPathLength} also applies to the more general case for graph
	$(\Q^k,V,A)$, resulting in Lemma~\ref{lem1_det_MPautomata}.}
	The EPL problem is $\NP$-hard already for graph $(\N,V,A)$. 
\end{lemma}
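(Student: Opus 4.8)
The plan is to prove the two assertions separately: membership in $\NP$ by reducing EPL to integer linear programming (ILP) feasibility, as indicated in the footnote, and $\NP$-hardness by reducing the subset sum problem to EPL restricted to graphs $(\N,V,A)$.

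For membership, the first step is to observe that a path (walk) from $v_1$ to $v_2$ is determined, up to the order in which edges are traversed, by the multiplicity vector $(x_a)_{a\in A}\in\N^{|A|}$ recording how many times each edge $a$ is used, and that its weight is the linear form $\sum_{a\in A}x_a\weight(a)$. By the classical Eulerian-trail criterion for directed multigraphs, such a vector arises from an actual walk from $v_1$ to $v_2$ if and only if (i) the degree-balance equations hold, i.e.\ $\sum_{a:\head(a)=v}x_a-\sum_{a:\tail(a)=v}x_a$ equals $1$ if $v=v_2$, $-1$ if $v=v_1$, and $0$ otherwise (all balanced when $v_1=v_2$), and (ii) the edges with $x_a\ge1$ together with $v_1,v_2$ induce a weakly connected subgraph. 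Thus EPL asks exactly whether some $(x_a)$ satisfies (i), (ii), and $\sum_{a\in A}x_a\weight(a)=z$.

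The $\NP$ procedure then guesses the support $S=\{a:x_a\ge1\}$, verifies condition (ii) for $S$ in polynomial time, and is left with an ILP over the variables $(x_a)_{a\in S}$: the degree-balance equations, the weight equation, and $x_a\ge1$ on $S$ (equivalently $y_a:=x_a-1\ge0$). All coefficients have size polynomial in $\size(k)+\size(G)+2+\size(z)$, so by the standard polynomial bound on the magnitude of minimal solutions of integer linear systems, if this ILP is feasible it admits a solution of polynomial bit-size, which can be guessed and verified in polynomial time. I expect the main obstacle to be precisely condition (ii): the degree-balance system alone only forces the chosen multiset of edges to decompose into a $v_1$-$v_2$ walk together with possibly detached closed walks, so connectivity must be enforced separately to guarantee a single genuine walk; guessing the support and checking weak connectivity on it resolves this within $\NP$, and one must also dispose of the trivial case with empty support (where $v_1=v_2$ and $z=0$).

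For $\NP$-hardness over $(\N,V,A)$, I reduce from subset sum. Given positive integers $a_1,\dots,a_n$ and target $t$, build the layered acyclic graph on vertices $0,1,\dots,n$ with, for each $i$, two parallel edges from $i-1$ to $i$ of weights $0$ and $a_i$ (so $k=1$ and all weights lie in $\N$), set $v_1=0$, $v_2=n$, and $z=t$. Every path from $0$ to $n$ chooses one edge per layer, and its total weight is $\sum_{i\in S}a_i$ for the set $S$ of layers where the weight-$a_i$ edge is selected; hence a path of weight exactly $t$ exists if and only if some subset of $\{a_1,\dots,a_n\}$ sums to $t$. The construction is clearly polynomial in the (binary) input, establishing $\NP$-hardness already for one-dimensional nonnegative-integer weights and completing the proof.
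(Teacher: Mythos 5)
Your proof is correct and follows essentially the same route as the source this lemma rests on: the paper itself gives no proof but cites \cite{Nykanen2002ExactPathLength}, whose $\NP$-membership argument is precisely the reduction to integer linear programming described in the paper's footnote, and whose hardness argument is the same subset-sum layered-graph gadget (two parallel edges of weights $0$ and $n_i$ per layer) that you construct. Your handling of the one nontrivial detail of the membership reduction — guessing the support, checking weak connectedness, and invoking the Eulerian-trail criterion so that degree balance on the support certifies a genuine walk rather than a walk plus detached cycles, with polynomial-size ILP solutions as the certificate — is exactly what makes that reduction sound.
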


\subsection{The subset sum problem}

The subset sum problem \cite[p. 223]{Garey1990ComputerIntractability} is as follows.
\begin{problem}[subset sum]\label{prob2_det_MPautomata}
	Given positive integers $n_1,\dots,n_m$, and $N$, determine whether $N=\sum_{i\in I}n_i$ for some
	$I\subset\llb 1,m\rrb$.
\end{problem}

\begin{lemma}[\cite{Garey1990ComputerIntractability}]\label{lem2_det_MPautomata}
	The subset sum problem is $\NP$-complete.
\end{lemma}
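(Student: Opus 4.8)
The plan is to establish the two defining properties of $\NP$-completeness separately: membership in $\NP$, which is immediate, and $\NP$-hardness, for which I would exhibit a polynomial-time many-one reduction from a problem already known to be $\NP$-complete.

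Membership in $\NP$ is the easy half. Given an instance $(n_1,\dots,n_m,N)$, a witness is simply a subset $I\subset\llb 1,m\rrb$, which is described by at most $m$ bits. A verifier computes $\sum_{i\in I}n_i$ and checks equality with $N$; since addition of the binary representations takes time polynomial in the total bit-length of the input, the problem lies in $\NP$.

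For $\NP$-hardness I would reduce from $3$-$\SAT$. Given a $3$-CNF formula $\phi$ over variables $x_1,\dots,x_n$ with clauses $C_1,\dots,C_k$, I would build $2n+2k$ positive integers together with a target $N$, all written in base $10$ with exactly $n+k$ decimal positions: the first $n$ positions indexed by the variables (the \emph{variable block}) and the last $k$ by the clauses (the \emph{clause block}). For each variable $x_i$ introduce two numbers $t_i$ and $f_i$, each carrying a $1$ in variable position $i$; additionally $t_i$ carries a $1$ in clause position $j$ exactly when $C_j$ contains the literal $x_i$, and $f_i$ carries a $1$ in clause position $j$ exactly when $C_j$ contains $\neg x_i$. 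For each clause $C_j$ introduce two \emph{slack} numbers, each carrying a single $1$ in clause position $j$. Finally set the target $N$ to have digit $1$ in every variable position and digit $3$ in every clause position. The whole instance has $O((n+k)^2)$ bits, so the reduction runs in polynomial time.

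The correctness argument is where the construction must be checked carefully, and it is the main obstacle. The key observation is that base $10$ is large enough that no column of any subset sum ever produces a carry: the maximal number of $1$'s stacked in a single clause position is the three literal numbers plus the two slacks, i.e. $5<10$, and in a variable position it is $2<10$. Consequently a subset sums to $N$ \emph{if and only if} it does so digit-by-digit. Reading the variable positions, the required value $1$ forces exactly one of $t_i,f_i$ to be selected for each $i$, which is precisely a truth assignment $\alpha$; reading the clause positions, the required value $3$ together with the fact that at most $2$ of it can come from the two slacks forces at least one selected literal number to contribute, i.e. $\alpha$ satisfies $C_j$. Conversely, from a satisfying assignment one selects $t_i$ or $f_i$ according to $\alpha$ and then tops up each clause column to $3$ using the appropriate number ($0$, $1$, or $2$) of slacks, which is always possible because a satisfied clause contributes between $1$ and $3$. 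Thus $\phi$ is satisfiable iff the constructed instance is a yes-instance, which completes the reduction and hence the $\NP$-hardness proof.
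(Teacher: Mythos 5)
Your proof is correct, but there is nothing in the paper to compare it against: the paper does not prove this lemma, it imports it as a black box by citing Garey and Johnson, and then uses it only as the source problem for its own hardness reduction (Theorem~\ref{thm12_det_MPautomata}). So the comparison is really with the cited literature rather than with an in-paper argument. Your route — membership in $\NP$ via the subset as witness, and $\NP$-hardness via the direct digit-based reduction from $3$-$\SAT$ with a variable block, a clause block, two slack numbers per clause, and target digits $1$ and $3$ — is the standard self-contained textbook proof (the one in Sipser and in CLRS), and the details check out: base $10$ keeps every column sum at most $5$, so no carries occur and the sum condition decomposes digit-by-digit; the variable digits then force exactly one of $t_i,f_i$ per variable, hence a well-defined assignment; and the clause digits force satisfaction because the two slacks alone contribute at most $2$ of the required $3$. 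All constructed numbers are positive and duplicates are harmless, so the instance fits Problem~\ref{prob2_det_MPautomata} as stated. Garey and Johnson's own book reaches the result by a longer chain ($3$-$\SAT$ to three-dimensional matching to the partition problem, with subset sum hard as a generalization of partition), so your argument is genuinely more direct, at the price of having to verify the carry-freeness and digit bookkeeping explicitly — which you do.
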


\subsection{Presburger arithmetic}

We will use a subclass of Presburger arithmetic.
A Presburger formula/sentence of this subclass is as follows:
\begin{align}\label{eqn34_det_MPautomata}
	(Q_1 x_1\in\N)\dots(Q_s x_s\in\N)[\Phi(x_1,\dots,x_{s})],
\end{align}
where $Q_1,\dots,Q_s$ are any \emph{quantifier prefix} ($Q_i=\exists$ (\emph{existential quantifier}) or $\forall$
(\emph{universal quantifier})),
$x_1,\dots,x_s$ are variables, 
$\Phi(x_1,\dots,x_{s})$ is a formula consisting of a Boolean combination of 
linear inequalities of the form 
\begin{align}\label{eqn35_det_MPautomata}
	a_1x_1+\cdots+a_{s}x_{s} \le  b 
\end{align}
with $a_1,\dots,s_{s},b$ constant integers.

For example,
$(x_1=1)=(x_1\ge 1)\wedge(x_1\le 1)=\neg(x_1<1)\wedge(x_1\le 1)=
\neg(x_1\le 0)\wedge(x_1\le 1)$ and 
$(x_1=1)\implies (x_2>2)=\neg(x_1=1) \vee \neg (x_2\le 2)$ are such quantifier-free formulae.

\begin{lemma}[\cite{Graedel1988PresburgerArithmeticComplexity}]\label{lem4_det_MPautomata}
	Consider a Presburger sentence as in \eqref{eqn34_det_MPautomata} of length $r$ with $m$ quantifier
	alternations (i.e., with $m$ blocks of adjacent quantifiers of the same kind). Then the sentence
	is satisfied if and only if $$(Q_1 x_1\le w)\dots(Q_s x_s\le w)
	[\Phi(x_1,\dots,x_{s})]$$ is satisfied, where $w=2^{cr^{(s+3)^{m+1}}}$, $c$ is a constant. 
\end{lemma}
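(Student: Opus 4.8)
The plan is to establish the bound by quantifier elimination driven by quantitative ``small solution'' estimates for systems of linear Diophantine inequalities, processing the quantifier blocks in \eqref{eqn34_det_MPautomata} from the innermost outward and tracking, at each stage, how the magnitudes of the witnesses and the bit-sizes of the coefficients grow. The guiding principle is that whenever a prenex Presburger sentence is true, its quantifiers can be witnessed (for $\exists$) or refuted (for $\forall$) by integer values bounded by a quantity computable from the coefficients of $\Phi$; bounding these test points uniformly over all $s$ quantifiers then yields the single common range $[0,w]$.

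The core ingredient I would isolate first is a bounded-witness lemma for the innermost quantifier. Fixing the outer variables, the matrix $\Phi$ is a Boolean combination of inequalities of the form \eqref{eqn35_det_MPautomata}; on the region where a fixed Boolean pattern of these inequalities holds, the truth of $(\exists x_s \in \N)\,\Phi$ reduces to the solvability of a system $Ax \le b$ over $\N$. By the classical small-solution bound for integer programming (obtained via vertex solutions of the associated polyhedron, Cramer's rule, and Hadamard's inequality on subdeterminants), if such a system is solvable it has a solution whose entries are bounded by a quantity polynomial in the number of variables and exponential in the bit-size of the coefficients. A universal quantifier is handled dually through $(\forall x_s)\,\Phi \equiv \neg(\exists x_s)\,\neg\Phi$, so that a false universal statement is witnessed by a bounded counterexample.

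I would then iterate this over the $m$ alternation blocks. Eliminating one block of like quantifiers turns a quantifier-free formula with coefficients of bit-length $\ell$ and $p$ atomic predicates into a new quantifier-free formula, namely a Boolean combination of inequalities together with congruence/divisibility predicates that record integrality, whose coefficient bit-length $\ell'$ and atom count $p'$ are bounded by fixed functions of $\ell$, $p$, and the number of quantifiers in the block. The bounded-witness lemma then guarantees that each $x_i$ in that block need only range over an interval $[0, 2^{O(\ell' s)}]$. Composing these estimates across the $m+1$ strata, each stratum raising the exponent by a factor governed by the $s$ quantifiers and by the coefficient growth, produces an iterated bound of exactly the shape $w = 2^{cr^{(s+3)^{m+1}}}$ once the length parameter $r$ absorbs the initial coefficient sizes.

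The main obstacle is the quantitative bookkeeping: one must verify that the compounding of coefficient growth and witness bounds across successive eliminations stays within the stated tower $2^{cr^{(s+3)^{m+1}}}$, rather than degenerating into a taller tower, and one must control the moduli of the congruence predicates introduced by elimination, since these contribute their own period bounds that must be folded into $w$. Forcing the exponent to be $(s+3)^{m+1}$ and no larger requires the sharpest available small-solution estimates together with a careful amortization of the per-block blow-up over the alternation depth $m$; this delicate accounting is where the bulk of the technical work of \cite{Graedel1988PresburgerArithmeticComplexity} resides.
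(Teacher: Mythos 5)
This lemma is not proved in the paper at all: it is imported verbatim, with citation, from Gr\"{a}del's 1988 paper, so there is no in-paper argument to compare yours against. Judged on its own terms, your proposal correctly identifies the family of techniques on which such results rest (Ferrante--Rackoff/Reddy--Loveland style bounded-quantifier arguments: small-witness bounds for the innermost block via vertex/Cramer/Hadamard estimates, dualization for universal quantifiers, and propagation of the bounds outward through the $m+1$ alternation strata), and in that sense the route you sketch is the standard one.

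However, as a proof there is a genuine gap, and it sits exactly where the lemma's content lies. The statement is not ``some bound $w$ exists'' --- that much follows softly from decidability-style arguments --- but that the specific value $w=2^{cr^{(s+3)^{m+1}}}$ suffices; your sketch asserts that the compounding of per-block coefficient growth and witness bounds ``produces an iterated bound of exactly the shape'' claimed, and then concedes that this accounting ``is where the bulk of the technical work resides.'' That accounting \emph{is} the lemma, so deferring it leaves nothing proved. A second concrete defect: your induction does not close as formulated. The class \eqref{eqn34_det_MPautomata}--\eqref{eqn35_det_MPautomata} consists of Boolean combinations of linear inequalities only, but eliminating a quantifier block (Cooper/Presburger style) introduces congruence and divisibility atoms, so the intermediate formulas fall outside the class to which your ``bounded-witness lemma'' applies. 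To make the iteration sound you must state and prove the witness bound for the enlarged language (inequalities plus congruences, with the moduli folded into the size parameter $r$) from the very first step, or else avoid syntactic elimination altogether and argue semantically on bounded quantifier ranges as Gr\"{a}del does; as written, the inductive hypothesis and the inductive step speak about different formula classes.
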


This yields a decision procedure:
one can first compute $w$ from $r,m,s$, and then check all
$s$-tuples of nonnegative integers $x_1,\dots,x_{s}$ with $x_i\le w$, whether $\Phi(x_1,\dots,x_{s})$
is true.

In \eqref{eqn34_det_MPautomata}, if $s\in\Z_{+}$ is also regarded as input, $Q_1=\cdots=Q_s=\exists$, and
$\Phi(x_1,\dots,x_{s})$ is conjunctions of 
linear equations of the form $a_1x_1+\cdots+a_{s}x_{s}= b$ with 
$a_1,\dots,a_{s},b$ constant rational numbers,
then \eqref{eqn34_det_MPautomata} becomes the $\NP$-complete integer linear programming
\cite{Papadimitriou1981IntegerProgramming,Schrijver1984LinearIntegerProgramming}.

\subsection{Labeled weighted automata over monoids}\label{subsec:WAM}

A \emph{monoid} is a triple $\frakM=(T,\otimes,{\bf1})$, where for all $a,b,c\in T$,
\begin{itemize}
	\item $a\otimes b\in T$,
	\item (associativity) $(a\otimes b)\otimes c=a\otimes(b\otimes c)$,
	\item $a\otimes {\bf 1}={\bf 1}\otimes a=a$ (${\bf 1}\in T$ is called \emph{identity} of $\frakM$).
\end{itemize}
Particularly, if there exists an element ${\bf0}\in T$ such that ${\bf0}\otimes a=a\otimes {\bf0}={\bf0}$
for all $a\in T$, then we call ${\bf0}$ \emph{zero} of $\frakM$. Any monoid has exactly one \emph{identity}
and at most one \emph{zero}.

A \emph{labeled weighted automaton over monoid $\frakM=(T,\otimes,{\bf1})$} is a tuple
\begin{align}\label{LWA_monoid_det_MPautomata}
	\Acal^{\frakM}=(\frakM,Q,E,Q_0,\Dt,\alpha,\mu,\Sig,\ell),
\end{align}
where $Q$ is a finite set of \emph{states},
$E$ a finite \emph{alphabet} (elements of $E$ are called \emph{events}),
$Q_0\subset Q$ a set of \emph{initial states}, $\Delta\subset Q\times E\times Q$ a \emph{transition relation}
(elements of $\Delta$ are called \emph{transitions}, a transition $(q,e,q')\in\Delta$ is interpreted as 
when $\Acal^\frakM$ is in state $q$ and event $e$ occurs, $\Acal^\frakM$ transitions to state $q'$),
$\alpha$ assigns to each initial state $q_0\in Q_0$
a nonzero \emph{weight} $\alpha(q_0)\in T$, $\mu$ assigns to each transition $(q,e,q')\in\Delta$ 
(or rewritten as $q\xrightarrow[]{e}q'$)
a nonzero \emph{weight} $\mu(e)_{qq'}\in T$, where the transition is also denoted by 
$q\xrightarrow[]{e/\mu(e)_{qq'}}q'$, $\Sig$ is a finite set of \emph{outputs/labels}, and
$\ell:E\to\Sig\cup\{\ep\}$ is a \emph{labeling function}.

Particularly, $\Acal^{\Q^k}$, $\Acal^{\Q_{\ge0}}$, $\Acal^{\N}$ denote $\Acal^{\frakM}$ in which
$\frakM$ is specified as $(\Q^k,+,0_k)$, $(\Q_{\ge0},+,0)$, $(\N,+,0)$, where $\Acal^{\Q_{\ge0}}$
and $\Acal^{\N}$ can represent timed DESs.
The size $\size(\Acal^{\Q^k})$ of a given $\Acal^{\Q^k}$ is defined by
$|Q|+|\Delta|+\size(\alpha)+\size(\mu)+\size(\ell)$,
where 
the size of a rational vector has already been defined before,
$\size(\alpha)=|Q_0|+\sum_{q\in Q_0}\size(\alpha(q))$, $\size(\mu)=\sum_{(q,e,q')\in \Delta}
\size(\mu(e)_{qq'})$, $\size(\ell)=|\{(e,\ell(e))|e\in E\}|$. The size of a given
$\Acal^{\Q_{\ge0}}$ (resp., $\Acal^{\N}$) can be defined analogously.

\begin{remark}\label{rem1_det_MPautomata}
A \emph{labeled finite-state automaton} (studied in \cite{Shu2007Detectability_DES,Shu2011GDetectabilityDES,Masopust2018ComplexityDetectabilityDES,Zhang2017PSPACEHardnessWeakDetectabilityDES}, etc.)
can be regarded as automaton $\Acal^{\N}$ such that all 
unobservable transitions
are instantaneous and every two observable transitions with the same label have
the same weight in $\N$. The observer of such $\Acal^{\N}$ can be computed in exponential time 
\cite{Shu2007Detectability_DES}. 
In the sequel, we use $\Acal$ to denote a labeled finite-state automaton (without weights).
\end{remark}

Events in $E_{uo}=\{e\in E|\ell(e)=\ep\}$ are called
\emph{unobservable}, events in $E_o=\{e\in E|\ell(e)\ne\ep\}$ are called \emph{observable}. When an observable
event $e\in E_o$ occurs, $\ell(e)$ is observed; but when an unobservable event $e\in E_{uo}$ occurs, nothing is
observed. For every $q\in Q$, we also regard $q\xrightarrow[]{\ep/{\bf1}}q$ as a transition.
A transition $q\xrightarrow[]{e/\mu(e)_{qq'}}q'$ is called \emph{instantaneous} if $\mu(e)_{qq'}
={\bf1}$, and called \emph{noninstantaneous} otherwise. A transition $q\xrightarrow[]{e/\mu(e)_{qq'}}q'$
is called \emph{observable (resp., unobservable)} if $e$ is observable (resp., unobservable).
We denote by $\Dt_o=\{(q,e,q')\in\Dt|\ell(e)\ne\ep\}$ and $\Dt_{uo}=\{(q,e,q')\in\Dt|\ell(e)=\ep,
e\ne\ep\}$ 
the sets of observable transitions and unobservable 
transitions, respectively. Particularly, we denote $E_a=\{e\in E_o|\ell(e)=a\}$,
$\Dt_a=\{(q,e,q')\in\Dt|\ell(e)=a\}$, where $a\in\Sig$.
Automaton $\Acal^\frakM$ is called
\emph{deterministic} if (1) $|Q_0|=1$ and (2) for all
states $q,q',q''\in Q$ and events $e\in E$, if $(q,e,q')\in \Delta$ and $(q,e,q'')\in\Delta$ then $q'=q''$
(hence one also has $\mu(e)_{qq'}=\mu(e)_{qq''}$).

From now on, without loss of generality, we assume for each initial state $q_0\in Q_0$, 
$\alpha(q_0)={\bf1}$, because otherwise we add a new initial state $q_0'\notin Q$ and set 
$\alpha(q_0')={\bf 1}$; and then for each initial state $q_0\in Q_0$ such that $\alpha(q_0)
\ne {\bf 1}$, we let $q_0$ not be initial any more, and add a transition $q_0'\xrightarrow[]{\varepsilon/
\alpha(q_0)} q_0$, where $\varepsilon$ is a new event not in $E$ and unobservable.
The case for $\Acal^{\Q_{\ge0}}$ is interpreted as follows: if an automaton was initially
in state $q_0$, then before instant $\alpha(q_0)$, no event occurred, hence nothing could be observed.
So it makes sense to set $\varepsilon$ to be unobservable.

Particularly for $\Acal^{\Q_{\ge0}}$, for initial state $q\in Q_0$, $\alpha(q)$ denotes its initial
time delay, and in a transition $q\xrightarrow[]{e/\mu(e)_{qq'}}q'$, $\mu(e)_{qq'}$ denotes its
time delay (i.e., the time consumption of its execution). Hence the execution of an instantaneous
transition has time delay $0$, i.e., does not cost time,
while the execution of a noninstantaneous transition has time delay a positive rational number
$\mu(e)_{qq'}$, i.e., costs time $\mu(e)_{qq'}$. 
As pointed out before, without loss of generality, we assume $\alpha(q_0)=0$ for all $q_0\in Q_0$.

For $q_0,\dots,q_n\in Q$ and $e_1,\dots,e_n\in E$, 
$n\in \Z_{+}$, we call a sequence
\begin{align}\label{path_det_MPautomaton}
	\pi:=q_0\xrightarrow[]{e_1}q_1\xrightarrow[]{e_2}\cdots\xrightarrow[]{e_n}q_n
\end{align}
of transitions a (\emph{finite}) \emph{path}. 
A state $q$ is called \emph{reachable} if it is initial or there exists a path from some initial state
to $q$.
A path $\pi$ is called \emph{simple} if $q_0,\dots,q_{n}$ are pairwise different.
A path $\pi$ is called a \emph{cycle} if $q_0=q_n$, and a \emph{simple cycle} is such that $q_0,\dots,q_{n-1}$
are pairwise different. A path $\pi$ is called \emph{unobservable} if $\ell(e_1\dots e_n)=\ep$,
and called \emph{observable} otherwise.
The set of paths starting at $q_0\in Q$ and ending at $q\in Q$ is denoted by $q_0\rightsquigarrow{}q$.
Particularly, for $e_1,\dots,e_n\in E$, $q_0\xrsquigarrow{e_1\dots e_n}q$ denotes the set of all paths
under $e_1\dots e_n$, i.e., the paths
$q_0\xrightarrow[]{e_1}q_1\xrightarrow[]{e_2}\cdots \xrightarrow[]{e_{n-1}}q_{n-1}\xrightarrow[]{e_n}q$, where
$q_1,\dots,q_{n-1}\in Q$. Automaton $\Acal^\frakM$ is called \emph{unambiguous} if under every event sequence,
there exists at most one path from the initial states to any given state, i.e., for all $s\in E^+$ and 
$q\in Q$, one has $\left|\bigcup_{q_0\in Q_0}(q_0\xrsquigarrow{s}q)\right|\le 1$. If $\Acal^{\frakM}$ is deterministic
then it is unambiguous.

The \emph{weighted word} of path $\pi$ is defined by 
\begin{align}\label{timedword_det_MPautomaton}
 	\tau(\pi):=(e_1,t_1)(e_2,t_2)\dots(e_n,t_n),
\end{align}
where for all $i\in\llb 1,n\rrb$, 
$t_i=\bigotimes_{j=1}^{i}\mu(e_j)_{q_{j-1}q_j}$.The \emph{weight} of path $\pi$ is defined by
$t_n=:\WEIGHT_{\pi}$.
A path $\pi$ is called \emph{instantaneous} if $t_1=\cdots=t_n={\bf1}$, and called \emph{noninstantaneous}
otherwise.

Particularly for $\Acal^{\Q_{\ge0}}$,
one has $t_i=\sum_{j=1}^{i}{\mu(e_j)_{q_{j-1}q_j}}$, hence
the $t_i$ in $\tau(\pi)$ can be used to denote the total time consumptions for
the first $i$ transitions in path $\pi$, $i\in\llb 1,n\rrb$.
Hence we also call a weighted word of automaton $\Acal^{\Q_{\ge0}}$ \emph{timed word}.
If $\Acal^{\Q_{\ge0}}$ generates a path
$\pi$ as in \eqref{path_det_MPautomaton}, consider its timed  word $\tau(\pi)$ as in 
\eqref{timedword_det_MPautomaton}, then at instant $t_i$, one observes $\ell(e_i)$ if $\ell(e_i)\ne\ep$;
and observes nothing otherwise, where $i\in\llb 1,n\rrb$. We simply say one observes $\ell(\tau(\pi))$.
With this intuitive observation, we will define the notion of current-state estimate in the next section.

We use $L(\Acal^\frakM)$ to denote the set of weighted words of all paths of $\Acal^{\frakM}$ starting 
from initial states.

For $q_0,q_1,\dots$$\in Q$ and $e_1,e_2,\dots$$\in E$, 
we call 
\begin{align}
	\pi:=q_0\xrightarrow[]{e_1}q_1\xrightarrow[]{e_2}\cdots
\end{align}
an \emph{infinite path}. 
The \emph{$\omega$-weighted word} of infinite path $\pi$ is defined by 
\begin{align}
	\tau(\pi):=(e_1,t_1)(e_2,t_2)\dots,
\end{align}
where for all $i\in\Z_{+}$, $t_i=\bigotimes_{j=1}^{i}\mu(e_j)_{q_{j-1}q_j}$.

We use $L^{\omega}(\Acal^\frakM)$ by to denote the set of $\omega$-{weighted} words of all infinite paths of
$\Acal^\frakM$ starting from initial states. Particularly, we also call an $\omega$-weighted word of 
automaton $\Acal^{\Q_{\ge0}}$ \emph{$\omega$-timed word}.

Labeling function $\ell$ is recursively extended to
$E^*\cup E^{\omega}\to \Sig^*\cup\Sig^{\omega}$ as $\ell(e_1e_2\dots)\\=\ell(e_1)\ell(e_2)\dots$.
We also extend $\ell$ as follows:
for all $(e,t)\in E\times T$, $\ell((e,t))=(\ell(e),t)$ if $\ell(e)\ne\ep$, and $\ell((e,t))=\ep$
otherwise. Hence $\ell$ is also recursively extended to $(E\times T)^*\cup(E\times T)^{\omega}\to
(\Sig\times T)^*\cup(\Sig\times T)^{\omega}$. For a weighted word $\tau(\pi)$,
where $\pi$ is a path of $\Acal^{\frakM}$, $\ell(\tau(\pi))$ is called the \emph{weighted label/output
sequence} of both $\pi$ and $\tau(\pi)$. We also extend the previously defined function
$\tau$ as follows: for all $\gamma=(\s_1,t_1)\dots(\s_n,t_n)\in(\Sig\times T)^{*}$, 
\begin{equation}\label{eqn11_det_MPautomata} 
	\tau(\gamma)=(\s_1,t_1')\dots(\s_n,t_n'),
\end{equation}
where $t_j'=\bigotimes_{i=1}^{j}t_i$ for all $j\in\llb 1,n\rrb$. Moreover, $\tau$ is also extended to
$(\Sig\times T)^{\omega}$ recursively.

The \emph{weighted language} $\LM(\Acal^{\frakM})$ and \emph{$\omega$-weighted language} $\LM^{\omega}
(\Acal^{\frakM})$ generated by $\Acal^\frakM$ are defined by
\begin{align}
	\LM(\Acal^\frakM) &:= \{\gamma\in(\Sig\times T)^*|(\exists w\in L(\Acal^\frakM))[\ell(w)=\gamma]\}
\end{align}
and
\begin{align}
	\LM^{\omega}(\Acal^\frakM) &:= \{\gamma\in(\Sig\times T)^\omega|(\exists w\in L^\omega(\Acal^\frakM))[\ell(w)=\gamma]\},
\end{align}
respectively. Particularly, we also call $\LM(\Acal^{\Q_{\ge0}})$ and $\LM^{\omega}(\Acal^{\Q_{\ge0}})$ \emph{timed
language} and \emph{$\omega$-timed language}, respectively.

\begin{example}
	A labeled unambiguous weighted automaton $\Acal_0^{\underN}$ over semiring $\underN$ is shown in 
	Fig.~\ref{fig10_det_MPautomata}. Because $\Acal_0^{\underN}$ is unambiguous, it is the same as
	the labeled weighted automaton $\Acal_0^{\N}$ over monoid $(\N,+)$,
	\begin{figure}[!htbp]
        \centering
	\begin{tikzpicture}
	[>=stealth',shorten >=1pt,thick,auto,node distance=2.0 cm, scale = 1.0, transform shape,
	->,>=stealth,inner sep=2pt, initial text = 0]

	\tikzstyle{emptynode}=[inner sep=0,outer sep=0]

	\node[initial, state, initial where = left] (q0) {$q_0$};
	\node[state] (q2) [right of = q0] {$q_2$};
	\node[state] (q1) [above of = q2] {$q_1$};
	\node[state] (q3) [right of = q1] {$q_3$};
	\node[state] (q4) [right of = q2] {$q_{4}$};

	\path [->]
	(q0) edge node [above, sloped] {$u/10$} (q1)
	(q0) edge node [above, sloped] {$u/1$} (q2)
	(q2) edge [loop above] node [above, sloped] {$u/1$} (q2)
	(q1) edge node [above, sloped] {$a/1$} (q3)
	(q2) edge node [above, sloped] {$a/1$} (q4)
	(q3) edge [loop right] node {$a/1$} (q3)
	(q4) edge [loop right] node {$a/1$} (q4)
	;

     \end{tikzpicture}
	 \caption{Labeled unambiguous weighted automaton $\Acal_0^{\underN}$ ($\Acal_0^{\N}$) that is not divergence-free,
	where event $u$ is unobservable (we denote $\ell(u)=\ep$), event $a$ is observable (we denote $\ell(a)=a$).}
	\label{fig10_det_MPautomata}
\end{figure}
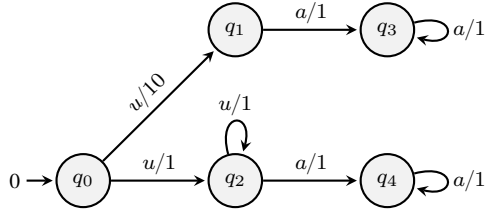
\end{example}

\begin{example}\label{exam1_MPautomata}
	Consider labeled weighted automaton $\Acal^{\N}_1$ shown in Fig.~\ref{fig6_det_MPautomata},
	where only $q_0$ is initial,
	event $u$ is unobservable, events $a$ and $b$ are observable,
	$\ell(a)=\ell(b)=\rho$. Automaton $\Acal_1^{\N}$ is ambiguous, because
	$q_3$ can be reached from $q_0$ through two paths
	$q_0\xrightarrow[]{a} q_1\xrightarrow[]{b} q_3$ and $q_0\xrightarrow[]{a} q_2
	\xrightarrow[]{b} q_3$ under the same event sequence $ab$.
	\begin{figure}[!htbp]
        \centering
	\begin{tikzpicture}
	[>=stealth',shorten >=1pt,thick,auto,node distance=2.0 cm, scale = 1.0, transform shape,
	->,>=stealth,inner sep=2pt, initial text = 0]

	\tikzstyle{emptynode}=[inner sep=0,outer sep=0]

	\node[initial, state, initial where = left] (q0) {$q_0$};
	\node[state] (q1) [above right of = q0] {$q_1$};
	\node[state] (q2) [below right of = q0] {$q_2$};
	\node[state] (q3) [above right of = q2] {$q_3$};
	\node[state] (q4) [right of = q3] {$q_{4}$};

	\path [->]
	(q0) edge node [above, sloped] {$a/1$} (q1)
	(q0) edge node [above, sloped] {$a/1$} (q2)
	(q1) edge [loop above] node [above, sloped] {$u/1$} (q1)
	(q2) edge [loop below] node [below, sloped] {$u/1$} (q2)
	(q1) edge node [above, sloped] {$b/2$} (q3)
	(q2) edge node [above, sloped] {$b/1$} (q3)
	(q3) edge node [above, sloped] {$u/1$} (q4)
	(q4) edge [loop right] node {$a/1$} (q4)
	;

        \end{tikzpicture}
		\caption{Labeled ambiguous weighted automaton $\Acal_1^{\N}$, where $\ell(u)=\ep$, $\ell(a)=\ell(b)=\rho$.}
		\label{fig6_det_MPautomata}
	\end{figure}

	Consider paths 
	\begin{subequations}\label{eqn12_det_MPautomata}
		\begin{align}
			\pi_1=q_0\xrightarrow[]{a} q_1\xrightarrow[]{b} q_3, & \quad	
			\pi_2=q_0\xrightarrow[]{a} q_2\xrightarrow[]{b} q_3,\label{eqn12_1_det_MPautomata}\\
			\pi_3=q_0\xrightarrow[]{a} q_1\xrightarrow[]{u} q_1\xrightarrow[]{b} q_3, & \quad
			\pi_4=q_0\xrightarrow[]{a} q_2\xrightarrow[]{u} q_2\xrightarrow[]{b} q_3,\\
			\pi_5=q_0\xrightarrow[]{a} q_2\xrightarrow[]{b} q_3\xrightarrow[]{u} q_4, &
		\end{align}
	\end{subequations}
	one then has 
	\begin{subequations}\label{eqn13_det_MPautomata}
		\begin{align}
			\tau(\pi_1)=(a,1)(b,3), & \quad \ell(\tau(\pi_1))=(\rho,1)(\rho,3),\label{eqn13_1_det_MPautomata}\\
			\tau(\pi_2)=(a,1)(b,2), & \quad \ell(\tau(\pi_2))=(\rho,1)(\rho,2),\\
			\tau(\pi_3)=(a,1)(u,2)(b,4), & \quad \ell(\tau(\pi_3))=(\rho,1)(\rho,4),\\
			\tau(\pi_4)=(a,1)(u,2)(b,3), & \quad \ell(\tau(\pi_4))=(\rho,1)(\rho,3),\\
			\tau(\pi_5)=(a,1)(b,2)(u,3), & \quad \ell(\tau(\pi_5))=(\rho,1)(\rho,2).
		\end{align}
	\end{subequations}
	Path $\pi_1$ has the following meaning: $\Acal_1^{\N}$ starts at initial state $q_0$; 
	when event $a$ occurs after time segment $1$, $\Acal_1^{\N}$ transitions to state $q_1$,
	we observe $\rho$ at time instant $1$;
	when event $b$ occurs after time segment $2$ since the occurrence of the previous event $a$,
	$\Acal_1^{\N}$ transitions to state $q_3$, and we observe $\rho$ at time instant $3$.
	The other paths have similar interpretations.
\end{example}

\section{Main results}

\subsection{The definition of current-state estimate}

For automaton $\Acal^\frakM$, for $\ep$, we define the \emph{instantaneous initial-state estimate} by
\begin{equation}\label{IISE_det_MPautomata}
	\begin{split}
	\Mt(\Acal^{\frakM},\ep):=Q_0\cup
								\{q\in Q|&(\exists q_0\in Q_0)(\exists s\in (E_{uo})^+)(\exists\pi\in q_0\xrsquigarrow{s}q)\\
								&[\tau(\pi)\in(E_{uo}\times\{ {\bf1}\})^+]\}.
	\end{split}
\end{equation}

Analogously, for a subset $x\subset Q$, we define its \emph{instantaneous-state estimate} by
\begin{equation}\label{ISE_det_MPautomata}
	\begin{split}
		\Mt(\Acal^{\frakM},\ep|x):=x\cup
								\{q\in Q|&(\exists q'\in x)(\exists s\in (E_{uo})^+)(\exists\pi\in q'\xrsquigarrow{s}q)\\
								&[\tau(\pi)\in(E_{uo}\times\{ {\bf1}\})^+]\}.
	\end{split}
\end{equation}

$\Mt(\Acal^{\frakM},\ep)$ denotes the set of states $\Acal^{\frakM}$ can be in at the initial time
before any possible observable event occurs (note that an observable event may occur at the initial time
when the event is in some instantaneous transition starting from some initial state),
so we only consider unobservable, instantaneous paths,
which is represented by $\tau(\pi)\in(E_{uo}\times\{ {\bf1}\})^+$.

More generally, for $\Acal^{\frakM}$, for weighted label/output sequence
$\gamma\in(\Sig\times T)^+$, we define the \emph{current-state estimate} as 
\begin{equation}\label{CSE_det_MPautomata}
	\begin{split}
		\Mt(\Acal^{\frakM},\gamma):=\{q\in Q|& (\exists\text{ a path }q_0\xrightarrow[]{s'}q'\xrightarrow[]{s}q)[\\
									&(q_0\in Q_0)\wedge(s'\in E^*E_o)\wedge(s\in(E_{uo})^*)\wedge\\
									&(\ell(\tau(q_0\xrightarrow[]{s'}q'))=\gamma)\wedge
									(\tau(q'\xrightarrow[]{s}q)\in(E_{uo}\times\{{\bf1}\})^*)]\}.
	\end{split}
\end{equation}

Intuitively, for $\gamma=(\s_1,t_1)\dots(\s_n,t_n)\in(\Sig\times T)^+$, $\Mt(\Acal^{\frakM},\gamma)$ 
denotes the set of states $\Acal^{\frakM}$ can be in when $\gamma$ has just been generated by $\Acal^{\frakM}$.
Particularly for $\Acal^{\Q_{\ge0}}$, $\Mt(\Acal^{\Q_{\ge0}},\gamma)$
denotes the set of states $\Acal^{\Q_{\ge0}}$ can be in when we just observe $\gamma\in(\Sig\times\Q_{\ge0})^*$.
In order to fit the setting of current-state estimate, after the occurrence of the last observable event
(denoted by $e_o$) in $s'$ (i.e., $e_o$ occurs at the current time), we only allow unobservable,
instantaneous paths, which is represented by $\tau(q'\xrightarrow[]{s}q)\in(E_{uo}\times\{{\bf1}\})^*$.

Analogously, for a subset $x\subset Q$, for a weighted label sequence $\gamma\in(\Sig\times T)^+$,
we define the current-state estimate when
automaton $\Acal^{\frakM}$ starts from some state of $x$ by
\begin{equation}\label{CSE_x_det_MPautomata}
	\begin{split}
	\Mt(\Acal^{\frakM},\gamma|x):=\{q\in Q|& (\exists\text{ a path }q_0\xrightarrow[]{s'}q'\xrightarrow[]{s}q)[\\
									&(q_0\in x)\wedge(s'\in E^*E_o)\wedge(s\in(E_{uo})^*)\wedge\\
									&(\ell(\tau(q_0\xrightarrow[]{s'}q'))=\gamma)\wedge
									(\tau(q'\xrightarrow[]{s}q)\in(E_{uo}\times\{{\bf1}\})^*)]\}.
	\end{split}
\end{equation}
Then one directly sees that $\Mt(\Acal^{\frakM},\gamma)=\Mt(\Acal^{\frakM},\gamma|Q_0)$ for all
$\gamma\in(\Sig\times T)^*$.

\begin{example}\label{exam2_MPautomata}
	Reconsider automaton $\Acal_1^{\N}$ in Fig.~\ref{fig6_det_MPautomata}. 
	By considering paths $\pi_1,\dots,\pi_5$ in \eqref{eqn12_det_MPautomata} and their timed words
	and timed label sequences in \eqref{eqn13_det_MPautomata}, we have
	\begin{subequations}\label{eqn17_det_MPautomata}
		\begin{align}
			&\Mt(\Acal_1^{\N},(\rho,1)(\rho,2))=\{q_3\},\label{eqn17_1_det_MPautomata}\\
			&\Mt(\Acal_1^{\N},(\rho,1)(\rho,3))=\{q_3\}.\label{eqn17_2_det_MPautomata}
		\end{align}
	\end{subequations}
	\eqref{eqn17_1_det_MPautomata} holds, because $\pi_2,\pi_5$ are all the paths such that their timed 
	label sequences are $(\rho,1)(\rho,2)$; in $\pi_2$, once the last observable event
	$b$ occurs, $q_3$ is reached, so $q_3$ in $\pi_2$ is consistent with timed label
	sequence (i.e., observation) $(\rho,1)(\rho,2)$; in $\pi_5$, $q_3$ is consistent with $(\rho,1)(\rho,2)$
	for the same reason, however, $q_4$ is not consistent with $(\rho,1)(\rho,2)$ because $q_4$ is reached 
	once $u$ occurs, i.e., at instant $3$. Nevertheless, if at instant $3$ we observe nothing,
	we know that event $u$ occurs and $\Acal_1^{\N}$ transitions to state $q_4$.
	Similarly, \eqref{eqn17_2_det_MPautomata} holds. $q_4$ is not consistent with $(\rho,1)(\rho,3)$
	because at instant $3$, $q_4$ can only be reached through path $\pi_5$, but at instant $3$
	one observes nothing.
\end{example}

\subsection{The definitions of detectability}

In this subsection, we formulate the four fundamental notions of detectability.

\begin{definition}[SD]\label{def_SD_MPautomata}
	A labeled weighted automaton $\Acal^{\frakM}$ \eqref{LWA_monoid_det_MPautomata}
	is called \emph{strongly detectable} if
	there is $t\in\Z_+$, for every $\omega$-{weighted} word $w\in L^{\omega}(\Acal^\frakM)$, for each
	prefix $\gamma$ of $\ell(w)$, if $|\gamma|\ge t$, then $|\Mt(\Acal^{\frakM},\gamma)|=1$.
\end{definition}

\begin{definition}[SPD]\label{def_SPD_MPautomata}
	A labeled weighted automaton $\Acal^{\frakM}$ \eqref{LWA_monoid_det_MPautomata}
	is called \emph{strongly periodically
	detectable} if there is $t\in\Z_+$, for every $\omega$-{weighted} word $w\in L^{\omega}(\Acal^\frakM)$, 
	for every prefix $w'\sqsubset w$, there is $w''\in (E\times T)^*$ such that 
	$|\ell(w'')|<t$, $w'w''\sqsubset w$, and $|\Mt(\Acal^{\frakM},\ell(w'w''))|=1$.
\end{definition}

\begin{definition}[WD]\label{def_WD_MPautomata}
	A labeled weighted automaton $\Acal^{\frakM}$ \eqref{LWA_monoid_det_MPautomata}
	is called \emph{weakly detectable} if
	$L^{\omega}(\Acal^\frakM)\ne\emptyset$ implies that
	there is $t\in\Z_+$, for some $\omega$-{weighted} word $w\in L^{\omega}(\Acal^\frakM)$, for each
	prefix $\gamma$ of $\ell(w)$, if $|\gamma|\ge t$, then $|\Mt(\Acal^{\frakM},\gamma)|=1$.
\end{definition}

\begin{definition}[WPD]\label{def_WPD_MPautomata}
	A labeled weighted automaton $\Acal^{\frakM}$ \eqref{LWA_monoid_det_MPautomata}
	is called \emph{weakly periodically
	detectable} if $L^{\omega}(\Acal^\frakM)\ne\emptyset$ implies that
	there is $t\in\Z_+$, for some $\omega$-{weighted} word $w\in L^{\omega}(\Acal^\frakM)$, for each prefix
	$w'\sqsubset w$, there is $w''\in (E\times T)^*$ such that $|\ell(w'')|<t$, $w'w''\sqsubset w$,
	and $|\Mt(\Acal^{\frakM},\ell(w'w''))|=1$.
\end{definition}

Particularly, if $\Acal^{\Q_{\ge0}}$ is strongly (resp., weakly) detectable, then there exists $t\in\N$,
along every (resp., some) $\omega$-timed word $w\in L^{\omega}(\Acal^{\Q_{\ge0}})$, if we observe at least
$t$ outputs, we can determine the corresponding current state. If $\Acal^{\Q_{\ge0}}$ is strongly (resp., weakly)
periodically detectable, then there exists $t\in\N$, along every (resp., some) $\omega$-timed word
$w\in L^{\omega}(\Acal^{\Q_{\ge0}})$, no matter how many outputs we have observed, we can determine the 
corresponding state after observing at most $t$ outputs.

Strong detectability and strong periodic detectability are incomparable.
Consider a labeled finite-state automaton $\Acal_1$ that contains two states $q_0$ and $q_1$
which are both initial, and two transitions $q_0\xrightarrow[]{u}q_0$ and $q_1\xrightarrow[]{u}q_1$
with $u$ unobservable. $\Acal_1$ is strongly detectable vacuously,
but not strongly periodically detectable by 
definition. Consider another labeled finite-state automaton $\Acal_2$ that
contains three states $q_0,q_1,q_2$ such that only $q_0$ is initial, the transitions of $\Acal_2$ are 
$q_0\xrightarrow[]{a}q_1$, $q_0\xrightarrow[]{a}q_2$, $q_1\xrightarrow[]{b}q_0$,
$q_2\xrightarrow[]{b}q_0$,
where $a$ and $b$ are observable. $\Acal_2$ is not strongly detectable but strongly periodically
detectable also by definition. Particularly, if an automaton $\Acal^{\frakM}$ is deadlock-free (i.e., for each 
reachable state $q$, there exists a transition starting at $q$) and divergence-free (i.e., there exists no
reachable unobservable cycle)\footnote{The two conditions imply that at each reachable state, there 
exists an infinitely long path whose label sequence is also of infinite length.},
then strong detectability is stronger than strong periodic detectability.
Weak detectability and weak periodic detectability also have similar relations.

Particularly for $\Acal^{\N}$, if we assume that every observable transition $q\xrightarrow[]{e/\mu(e)
_{qq'}}q'$ satisfies $\mu(e)_{qq'}>0$ (i.e., $\mu(e)_{qq'}\ge1$),
then there will be no two observable events occurring at the 
same time in one path. In this case, in Definition~\ref{def_SD_MPautomata} and Definition~\ref{def_WD_MPautomata}, 
$|\gamma|\ge k$ implies that the total time consumption is no less than $k$ when $\gamma$ has just
been generated.

\begin{example}\label{exam11_MPautomata}
	Reconsider the model in Example~\ref{exam10_MPautomata}.
	In this model, if the pairs of energy levels and positions are considered as the states,
	signals $a,u,b$ are considered 
	as the events, where $a,b$ are observable and $u$ is unobservable, and the position deviations are
	considered as the weights of transitions, then the model can be regarded as a
	labeled weighted automaton over a monoid. In detail, the states are $(i,P_j)$, $i\in\llb 0,10 \rrb$,
	$j\in\llb 1,4 \rrb$, particularly if state $(i,P_j)$ is initial, then it has an input arrow with weight $P_j$.
	The transitions are $(i,P_k)\xrightarrow[]{a/P_{k+1}-P_k}(i-1,P_{k+1})$,
	$(i,P_2)\xrightarrow[]{u/P_{3}-P_2}(i-1,P_{3})$,
	$(i,P_2)\xrightarrow[]{u/P_{3}-P_2}(i,P_{3})$, $i\in\llb 1,10 \rrb$, $k = 1,3 $,
	$(j,P_l)\xrightarrow[]{a/P_{l-1}-P_l}(j+1,P_{l-1})$,
	$(10,P_l)\xrightarrow[]{a/P_{l-1}-P_l}(10,P_{l-1})$, $j\in\llb 0,9 \rrb$, $l\in\llb 2,4 \rrb$.
	The monoid is generated by $\{P_1,\dots,P_4,P_{i+1}-P_i,P_i-P_{i+1}|i=1,2,3\}$ under $+$, where 
	$P_1,\dots,P_4$ are considered as monomials. Assume $(5,P_1)$ is the initial state, then the model
	becomes a labeled weighted
	automaton over the monoid which is weakly detectable (by, for example, infinite path $(5,P_1)(\xrightarrow
	[]{a/P_2-P_1}(4,P_2)\xrightarrow[]{b/P_1-P_2}(5,P_1))^{\omega}$ that produces $\omega$-weighted label sequence
	$((a,P_2)(b,P_1))^{\omega}$) but not strongly detectable (as shown in Table~\ref{tab3_det_MPautomata}).
	Apparently, the model cannot be represented by a labeled timed automaton.
\end{example}

\subsection{The definition of concurrent composition}

In order to give an equivalent condition for strong detectability, we define a notion of
\emph{concurrent composition} for a labeled 
weighted automaton $\Acal^{\frakM}$ and itself (i.e., the self-composition of $\Acal^{\frakM}$).
This notion can be regarded as an extension of the notion 
of self-composition $\CCa(\Acal)$ of a labeled finite-state automaton $\Acal$ proposed in
\cite{Zhang2020DetPNFA,Zhang2019KDelayStrDetDES}.
In \cite{Zhang2019KDelayStrDetDES}, $\CCa(\Acal)$ is proposed to give a
polynomial-time algorithm for verifying strong versions of detectability of $\Acal$ without
any assumption, removing two standard assumptions of deadlock-freeness and divergence-freeness
used in \cite{Shu2007Detectability_DES,Shu2011GDetectabilityDES}, etc.
In \cite{Zhang2020DetPNFA}, $\CCa(\Acal)$ and another tool called \emph{bifurcation automaton}
is used to verify a different variant of
detectability called eventual strong detectability, which is strictly weaker than strong detectability
even for labeled deterministic finite-state automata.
In $\CCa(\Acal)$, observable transitions of $\Acal$ are synchronized 
and unobservable transitions of $\Acal$ interleave. Differently, in order to define $\CCa(\Acal^{\frakM})$,
we need to consider both how to synchronize paths and how to synchronize weights of paths,
where the difficulty lies in the latter.
$\CCa(\Acal)$ can be computed in time polynomial in the size of $\Acal$ (see Table~\ref{tab2_det_MPautomata}).
However, the case for $\Acal^{\frakM}$
is much more complicated. The computability of $\CCa(\Acal^{\frakM})$ heavily depends on
$\frakM$. For example, generally $\CCa(\Acal^{\R})$
is uncomputable. 
Particularly, we  will show that $\CCa(\Acal^{\Q^k})$ is computable in time nondeterministically 
polynomial in the size of $\Acal^ {\Q^k}$ (in Section~\ref{subsubsec:computation_self_composition}),
by connecting $\Acal^{\Q^k}$ with the EPL problem (Problem~\ref{prob1_det_MPautomata}),
and generally it is unlikely that the time
consumption can be reduced, although the size of $\CCa(\Acal^{\Q^k})$ is polynomial in the size of
$\Acal^{\Q^k}$.

\begin{definition}\label{def_CC_MPautomata}
	Consider a labeled weighted automaton $\Acal^{\frakM}$ \eqref{LWA_monoid_det_MPautomata}.
	We define its \emph{self-composition} by a labeled finite-state automaton 
\begin{align}\label{CC_MPautomata}
	\CCa(\Acal^{\frakM})=(Q',E',Q_0',\dt',\Sig,\ell'),
\end{align}
where $Q'=Q\times Q$; $E'=\{(e_1,e_2)\in E_o\times E_o|\ell(e_1)=\ell(e_2)\}$; $Q_0'=Q_0\times Q_0$;
$\dt'\subset Q'\times E'\times Q'$ is the transition relation,
for all states $(q_1,q_2),(q_3,q_4)\in Q'$ and events $(e_1,e_2)\in E'$,
$((q_1,q_2),(e_1,e_2),(q_3,q_4))\in\dt'$if and only if in $\Acal^{\frakM}$, there exist states $q_5,q_6,q_7,q_8\in
Q$, event sequences $s_1,s_2,s_3,s_4\in (E_{uo})^*$, and paths
\begin{equation}\label{eqn6_det_MPautomata}
\begin{split}
	\pi_1 &:= q_1\xrightarrow[]{s_1}q_5\xrightarrow[]{e_1}q_7\xrightarrow[]{s_3}q_3,\\
	\pi_2 &:= q_2\xrightarrow[]{s_2}q_6\xrightarrow[]{e_2}q_8\xrightarrow[]{s_4}q_4,
\end{split}
\end{equation}
such that $\tau(q_1\xrightarrow[]{s_1}q_5\xrightarrow[]{e_1}q_7)=w_1(e_1,t_1)$,
$\tau(q_2\xrightarrow[]{s_2}q_6\xrightarrow[]{e_2}q_8)=w_2(e_2,t_2)$, $w_1,w_2\in(E_{uo}\times T
)^*$, $t_1=t_2\in T$, $q_7\xrightarrow[]{s_3}q_3$ and $q_8\xrightarrow[]{s_4}q_4$ are instantaneous;
for all $(e_1,e_2)\in E'$, $\ell'((e_1,e_2))=\ell(e_1)$,
and $\ell'$ is recursively extended to $(E')^*\cup(E')^{\omega}\to \Sig^*\cup\Sig^{\omega}$. For a state
$q'$ of $\CCa(\Acal^{\frakM})$, we write $q'=(q'(L),q'(R))$, where ``$L$'' and ``$R$'' denote 
``left'' and ``right'', respectively.
\end{definition}

Intuitively, there is a transition $(q_1,q_2)\xrightarrow[]{(e_1,e_2)}(q_3,q_4)$ in $\CCa(\Acal^{\Q_{\ge0}})$ 
if and only if in $\Acal^{\Q_{\ge0}}$, starting from $q_1$ and $q_2$ at the same time,
after some common time delay, $e_1$ and $e_2$ 
occur as the unique observable events, state $q_1$ and $q_2$ can transition to $q_3$ and $q_4$, 
respectively. Since we consider an observation at exactly the instant when the observable events
$e_1,e_2$ occur, we only consider unobservable, instantaneous paths after the occurrences of $e_1,e_2$
(see \eqref{eqn6_det_MPautomata}). See the following example. Whenever we draw $\CCa(\Acal^{\Q^k})$
for some given $\Acal^{\Q^k}$, we only draw reachable states and transitions.

\begin{example}\label{exam3_MPautomata}
	Reconsider labeled weighted automaton $\Acal_1^{\N}$ in Fig.~\ref{fig6_det_MPautomata}.
	Its self-composition $\CCa(\Acal_1^{\N})$ is depicted in Fig.~\ref{fig7_det_MPautomata}.
	$(q_1,q_2)\xrightarrow[]{(b,b)}(q_3,q_3)$ is a transition of $\CCa(\Acal_1^{\N})$ because 
	we have two paths $q_1\xrightarrow[]{b}q_3=\pi_1$ and
	$q_2\xrightarrow[]{u}q_2\xrightarrow[]{b}q_3=\pi_2$ such that $\tau(\pi_1)=(b,2)=\tau(\pi_2)$.

	\begin{figure}[!htbp]
        \centering
\begin{tikzpicture}
	[>=stealth',shorten >=1pt,thick,auto,node distance=2.0 cm, scale = 1.0, transform shape,
	->,>=stealth,inner sep=2pt]

	\tikzstyle{emptynode}=[inner sep=0,outer sep=0]

	\node[initial, state, initial where = above] (00) {$q_0q_0$};
	\node[state] (11) [above left of = 00] {$q_1q_1$};
	\node[state] (12) [above right of = 00] {$q_1q_2$};
	\node[state] (21) [below left of = 00] {$q_2q_1$};
	\node[state] (22) [below right of = 00] {$q_2q_2$};
	\node[state] (33) [below right of = 12] {$q_3q_3$};
	\node[state] (44) [right of = 33] {$q_4q_4$};

	\path [->]
	(00) edge node [above, sloped] {$(a,a)$} (11)
	(00) edge node [above, sloped] {$(a,a)$} (12)
	(00) edge node [above, sloped] {$(a,a)$} (21)
	(00) edge node [above, sloped] {$(a,a)$} (22)
	(12) edge node [above, sloped] {$(b,b)$} (33)
	(22) edge node [above, sloped] {$(b,b)$} (33)
	(33) edge node [above, sloped] {$(a,a)$} (44)
	(44) edge [loop right] node {$(a,a)$} (44)
	;
	
	\node at (2.1,2.3) {$(b,b)$};
	\node at (2.1,-2.3) {$(b,b)$};

	\draw 
	(11) .. controls (2.1,2.5) .. (33) 
	;
	\draw
	(21) .. controls (2.1,-2.5) .. (33)
	;

        \end{tikzpicture}
		\caption{Self-composition $\CCa(\Acal_1^{\N})$ of automaton $\Acal_1^{\N}$
		in Fig.~\ref{fig6_det_MPautomata}.}
		\label{fig7_det_MPautomata}
	\end{figure}
\end{example}

\subsection{The definition of observer}\label{subsec:observer}

We next define a notion of \emph{observer} to concatenate current-state estimates along weighted
label sequences. Later, we will use the notion of observer to give equivalent conditions for weak detectability
and weak periodic 
detectability of labeled weighted automaton $\Acal^{\frakM}$. An observer $\Acal_{obs}^{\frakM}$
of $\Acal^{\frakM}$ is a natural but nontrivial
extension of the observer $\Acal_{obs}$ of labeled finite-state automaton $\Acal$ proposed in 
\cite{Shu2007Detectability_DES}. 
Since in automaton $\Acal$, no weights need be considered, its observer $\Acal_{obs}$ can be computed
by directly concatenating the current-state estimates along label sequences,
so $\Acal_{obs}$ can be computed in exponential time. However, 
$\Acal_{obs}^{\frakM}$ is much more complicated, because when we concatenate
current-state estimates along weighted label sequences, we must additionally consider how to synchronize
weights. In order to define observer $\Acal^{\frakM}_{obs}$ that is a finite automaton,
we need to define a more general notion of \emph{pre-observer}
$\tensor*[^{pre}]{\Acal}{^{\frakM}_{obs}}$ that can be regarded a deterministic automaton, in which
there may exist infinitely mainly events, because the events of $\tensor*[^{pre}]{\Acal}{_{obs}^{\frakM}}$ 
are pairs of events of $\Acal^{\frakM}$ and weights chosen from
$\frakM$, and $\frakM$ may be of infinite cardinality.
$\Acal^{\frakM}_{obs}$ is a reduced version of
$\tensor*[^{pre}]{\Acal}{^{\frakM}_{obs}}$. The computability of ${\Acal}{_{obs}^{\frakM}}$
depends on $\frakM$. Generally, ${\Acal}{_{obs}^{\R}}$
is uncomputable. Despite of this difficulty, particularly for automaton
$\Acal^{\Q^k}$, we will prove that ${\Acal}{_{obs}^{\Q^k}}$
can be computed in $2$-$\EXPTIME$ in the size of $\Acal^{\Q^k}$ (Section~\ref{subsubsec:observer}),
which shows an essential difference between labeled finite-state automata and labeled
weighted automata over monoids.


\begin{definition}\label{def_pre_observer_MPautomata}
	For labeled weighted automaton $\Acal^{\frakM}$, we define its \emph{pre-observer}
as a deterministic automaton 
\begin{align}\label{eqn_pre_observer_MPautomata}
	\tensor*[^{pre}]{\Acal}{_{obs}^{\frakM}}=(X,\Sig\times T,x_0,\bar\dt_{obs}),
\end{align}
where $X\subset 2^Q\setminus\{\emptyset\}$ is the state set, $\Sig\times T$ the alphabet, 
$x_0=\Mt(\Acal^{\frakM},\ep) \in X$ the unique initial state, $\bar\dt_{obs}\subset X\times (\Sig\times T)
\times X$ the transition relation.
Note that $\Sig\times T$ may be infinite. For all $x\subset Q$ different from $x_0$, $x\in X$ if and only
if there is $\gamma\in(\Sig\times T)^{+}$ such that $x=\Mt(\Acal^{\frakM},\gamma)$. For all $x,x'\in
X$ and $(\s,t)\in \Sig\times T$, $(x,(\s,t),x')\in\bar\dt_{obs}$ if and only if $x'=\Mt(\Acal^{\frakM},
(\s,t)|x)$ (defined in \eqref{CSE_x_det_MPautomata}).
\end{definition}


In Definition~\ref{def_pre_observer_MPautomata}, after $\bar\dt_{obs}$ is recursively
extended to $\bar\dt_{obs}\subset X\times (\Sig\times T)^*\times X$ as usual, one has for all $x\in X$ and 
$(\s_1,t_1)\dots(\s_n,t_n)=:\gamma\in (\Sig\times T)^{+}$, $(x_0,\gamma,x)\in\bar\dt_{obs}$ if and only if
$\Mt(\Acal^{\frakM},\tau(\gamma))=x$, where $\tau(\gamma)$ is defined in \eqref{eqn11_det_MPautomata}.

On the other hand, the alphabet $\Sig\times T$ may not be finite, so generally we cannot compute
the whole $\tensor*[^{pre}]{\Acal}{^{\frakM}_{obs}}$. However, in order to study
weak detectability and weak periodic detectability, it is enough to consider a subautomaton of
$\tensor*[^{pre}]{\Acal}{^{\frakM}_{obs}}$ with finitely many events.
\begin{definition}\label{def_observer_MPautomata}
	For labeled weighted automaton $\Acal^{\frakM}$, we define its \emph{observers}
	as deterministic finite automata
	\begin{equation}\label{eqn_observer_sim_MPautomata}
		{\Acal}{^{\frakM}_{obs}}=(X,\Sig_{obs}^T,x_0,\dt_{obs}),
	\end{equation}
	which are subautomata of its pre-observer $\tensor*[^{pre}]{\Acal}{^{\frakM}_{obs}}$,
	where $\Sig_{obs}^T$ is a finite subset of $\Sig\times T$, $\dt_{obs}\subset X\times\Sig_{obs}^T\times X$
	is such that if there is a transition
	from $x\in X$ to $x'\in X$ in $\bar\dt_{obs}$ then at least one transition from $x$ to $x'$ in 
	$\bar\dt_{obs}$ belongs to $\dt_{obs}$.
\end{definition}

Note that a given automaton $\Acal^{\frakM}$ may have more than one observer.

\begin{example}\label{exam4_MPautomata}
	Reconsider automaton $\Acal_1^{\N}$ in Fig.~\ref{fig6_det_MPautomata}.
	Its pre-observer $\tensor*[^{pre}]{\Acal}{^{\N}_{1obs}}$ is shown in 
	Fig.~\ref{fig8_det_MPautomata}. From the pre-observer, one sees that
	for all $n\in\Z_{+}$, $(\{q_1,q_2\},(\rho,n),\{q_3\})$ are transitions. Hence there exist infinitely 
	many transitions. However, there exist finitely many states. In order to obtain one of its 
	observers, one only need replace $(\rho,\Z_+)$ by $(\rho,1)$. 

	\begin{figure}[!htbp]
        \centering
\begin{tikzpicture}
	[>=stealth',shorten >=1pt,thick,auto,node distance=2.0 cm, scale = 1.0, transform shape,
	->,>=stealth,inner sep=2pt]

	\tikzstyle{emptynode}=[inner sep=0,outer sep=0]

	\node[initial, state, initial where = left] (0) {$q_0$};
	\node[state] (12) [right of = 0] {$q_1,q_2$};
	\node[state] (3) [right of = 12] {$q_3$};
	\node[state] (4) [right of = 3] {$q_4$};

	\path [->]
	(0) edge node [above, sloped] {$(\rho,1)$} (12)
	(12) edge node [above, sloped] {$(\rho,\Z_+)$} (3)
	(3) edge node [above, sloped] {$(\rho,2)$} (4)
	(4) edge [loop right] node {$(\rho,1)$} (4)
	;
	
	\end{tikzpicture}
	\caption{Pre-observer $\tensor*[^{pre}]{\Acal}{^{\N}_{1obs}}$ of automaton $\Acal_1^{\N}$ in 
	Fig.~\ref{fig6_det_MPautomata}, where $(\rho,\Z_+)$ means that the events can be $(\rho,n)$ for any $n\in\Z_+$.}
	\label{fig8_det_MPautomata}
	\end{figure}
\end{example}

\subsection{The definition of detector}

In order to give an equivalent condition for strong periodic detectability,
we define a notion of \emph{detector} $\Acal^{\frakM}_{det}$ (a nondeterministic finite automaton)
for labeled weighted automaton $\Acal^{\frakM}$,
which can be regarded as a simplified version of observer $\Acal_{obs}^{\frakM}$
\eqref{eqn_observer_sim_MPautomata}. Detector
$\Acal^{\frakM}_{det}$ can be regarded as
a nontrivial extension of the detector $\Acal_{det}$ of labeled finite-state automaton
$\Acal$ proposed in \cite{Shu2011GDetectabilityDES}. In order to define $\Acal_{det}
^{\frakM}$, we must additionally consider how to synchronize weights of paths. 
Moreover, we also need to define a more general notion of
\emph{pre-detector} $\tensor*[^{pre}]
{\Acal}{^{\frakM}_{det}}$ (similar to pre-observer $\tensor*[^{pre}]{\Acal}{^{\frakM}_{obs}}$), in which 
there may exist infinitely many events.
The detector $\Acal_{det}$ of $\Acal$ can be computed in time polynomial
in the size of $\Acal$. 
However, 
the computability of detector ${\Acal}{^{\frakM}_{det}}$ still depends on 
$\frakM$. Particularly we will prove that $\Acal_{det}^{\Q^k}$ can be computed in 
$2$-$\EXPTIME$ in the size of $\Acal^{\Q^k}$ (Section~\ref{subsubsec:detector}), although the size
of $\Acal_{det}^{\Q^k}$ is polynomial in the size of $\Acal^{\Q^k}$, and 
for $\Acal^{\Q^k}$ in which from each state, a distinct state can be
reached through some unobservable, instantaneous path, detector $\Acal^{\Q^k}_{det}$ can be computed in 
$\NP$.

\begin{definition}\label{def_pre_detector_MPautomata}
	For labeled weighted automaton $\Acal^{\frakM}$, we define its \emph{pre-detector}
as an automaton 
\begin{align}\label{eqn_pre_detector_MPautomata}
	\tensor*[^{pre}]{\Acal}{_{det}^{\frakM}}=(X,\Sig\times T,x_0,\bar\dt_{det}),
\end{align}
where $X=\{x_0\}\cup\{x\subset Q|1\le|x|\le 2\}$ is the state set, $\Sig\times T$ the alphabet,
$x_0=\Mt(\Acal^{\frakM},\ep)$ the unique initial state, 
$\bar\dt_{det}\subset X\times (\Sig\times T)\times X$ the transition relation.
For all $x\in X$ and $(\s,t)\in\Sig\times T$, $(x,(\s,t),\Mt(\Acal^{\frakM},(\s,t)|x))\in\bar\dt_{det}$
if $|\Mt(\Acal^{\frakM},(\s,t)|x)|=1$; $(x,(\s,t),x')\in\bar\dt_{det}$ for all $x'\subset \Mt(\Acal^{\frakM},(\s,t)|x)$
satisfying $|x'|=2$ if $|\Mt(\Acal^{\frakM},(\s,t)|x)|\ge 2$, where $\Mt(\Acal^{\frakM},(\s,t)|x)$ is
the current-state estimate when $\Acal^{\frakM}$ starts from some state of $x$ (defined
in \eqref{CSE_x_det_MPautomata}).
\end{definition}

Similarly to verifying weak (periodic) detectability, 
in order to verify strong periodic detectability, it is enough to consider a finite subautomaton of 
pre-detector $\tensor*[^{pre}]{\Acal}{_{det}^{\frakM}}$
with finitely many events.

\begin{definition}\label{def_detector_MPautomata}
	For labeled weighted automaton $\Acal^{\frakM}$, we define its \emph{detectors}
	as nondeterministic finite automata
	\begin{equation}\label{eqn_detector_sim_MPautomata}
		{\Acal}{^{\frakM}_{det}}=(X,\Sig_{det}^T,x_0,\dt_{det}),
	\end{equation} 
	which are subautomata of its pre-detector $\tensor*[^{pre}]{\Acal}{_{det}^{\frakM}}$,
	where $\Sig_{det}^T$ is a finite subset of $\Sig\times T$, $\dt_{det}\subset X\times\Sig_{det}^T\times X$ is
	such that if there exists a transition
	from $x\in X$ to $x'\in X$ in $\bar\dt_{det}$ then at least one transition from $x$ to $x'$ in 
	$\bar \dt_{det}$ belongs to $\dt_{det}$.
\end{definition}

\begin{example}\label{exam5_MPautomata}
	Reconsider automaton $\Acal_1^{\N}$ in Fig.~\ref{fig6_det_MPautomata}.
	Its pre-detector $\tensor*[^{pre}]{\Acal}{^{\N}_{1det}}$ is also shown in Fig.~\ref{fig8_det_MPautomata}.
	That is,
	its pre-observer is the same as its pre-detector. In order to obtain one of its detectors,
	one also only need replace $(b,\Z_+)$ by $(b,1)$. 
\end{example}


For the relationship between observer ${\Acal}{^{\frakM}_{obs}}$ and detector
${\Acal}{^{\frakM}_{det}}$, we have the following lemma.

\begin{lemma}\label{lem3_det_MPautomata}
	Consider a labeled weighted automaton $\Acal^{\frakM}$, 
	any observer ${\Acal}{^{\frakM}_{obs}}$ \eqref{eqn_observer_sim_MPautomata},
	and any detector ${\Acal}{^{\frakM}_{det}}$ \eqref{eqn_detector_sim_MPautomata}, such that $\Acal^{\frakM}_{obs}$
	and $\Acal^{\frakM}_{det}$ have the same event set.
	For every transition $(x,(\s,t),x')\in\dt_{obs}$,
	for every $\bar x'\subset x'$ satisfying $|\bar x'|=2$ if $|x'|\ge 2$ and $|\bar x'|=1$
	otherwise, there is
	$\bar x\subset x$ such that (1) $|\bar x|=2$ and $(\bar x,(\s,t),\bar x')\in\dt_{det}$
	if $|x|>1$ and (2) $|\bar x|=1$ and $(\bar x,(\s,t),\bar x')\in\dt_{det}$
	if $|x|=1$.
\end{lemma}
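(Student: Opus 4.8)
The plan is to reduce the whole statement to one additivity property of the current-state estimate and then run a short case analysis on the cardinalities of $x$ and $x'$. First I would record the decomposition
$$\Mt(\Acal^{\frakM},(\s,t)\mid x)=\bigcup_{q\in x}\Mt(\Acal^{\frakM},(\s,t)\mid\{q\}),$$
which is immediate from \eqref{CSE_x_det_MPautomata}: every path witnessing membership in the left-hand estimate starts at exactly one state of $x$, so the existential quantifier over the starting state distributes over the union. Writing $M(y):=\Mt(\Acal^{\frakM},(\s,t)\mid y)$ for brevity, the hypothesis $(x,(\s,t),x')\in\dt_{obs}\subset\bar\dt_{obs}$ together with Definition~\ref{def_detector_MPautomata}'s companion (the pre-observer rule) gives $x'=M(x)=\bigcup_{q\in x}M(\{q\})$.

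Next I would split into cases according to $|x|$ and $|x'|$, in each one exhibiting a suitable $\bar x\subset x$ together with the membership $(\bar x,(\s,t),\bar x')\in\bar\dt_{det}$ demanded by the pre-detector rule (Definition~\ref{def_pre_detector_MPautomata}). If $|x|=1$, say $x=\{q\}$, I take $\bar x=x$; since $M(\bar x)=M(\{q\})=x'$, the subcases $|x'|\ge2$ and $|x'|=1$ match the two clauses of the pre-detector rule directly, and $|\bar x|=1$ as required. If $|x|>1$ the prescribed target is $|\bar x|=2$. When $|x'|\ge2$ I write $\bar x'=\{p_1,p_2\}$ and use the decomposition to pick $q_1,q_2\in x$ with $p_i\in M(\{q_i\})$; if $q_1\neq q_2$ set $\bar x=\{q_1,q_2\}$, and if $q_1=q_2$ enlarge by any second state of $x$ (available because $|x|>1$), so that in either case $\bar x'\subset M(\bar x)$ with $|M(\bar x)|\ge2$, triggering the two-element clause. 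When $|x'|=1$, say $x'=\{p\}$, I pick $q_1\in x$ with $M(\{q_1\})=\{p\}$ and any $q_2\in x\setminus\{q_1\}$; since every $M(\{q\})\subset M(x)=\{p\}$, the union $M(\bar x)$ is again $\{p\}$, so the singleton clause applies and $|\bar x|=2$. In all cases $\bar x$ has the cardinality stated in the lemma and induces $(\bar x,(\s,t),\bar x')\in\bar\dt_{det}$.

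Finally I would pass from $\bar\dt_{det}$ to $\dt_{det}$: because $(x,(\s,t),x')\in\dt_{obs}$, the label $(\s,t)$ lies in the observer's event set $\Sig_{obs}^T$, which by hypothesis equals the detector's event set $\Sig_{det}^T$; hence the transition $(\bar x,(\s,t),\bar x')$ just constructed uses an admissible detector event and therefore belongs to $\dt_{det}$.

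The main obstacle I anticipate is the bookkeeping in this last step rather than the combinatorics. The subautomaton $\Acal^{\frakM}_{det}$ is only required to retain, between each connected pair of states, \emph{at least one} event of $\tensor*[^{pre}]{\Acal}{^{\frakM}_{det}}$ (Definition~\ref{def_detector_MPautomata}), so I must argue that the shared-event-set hypothesis genuinely forces the \emph{specific} event $(\s,t)$ to survive between $\bar x$ and $\bar x'$, and not merely some event; concretely this means reading $\dt_{det}$ as the restriction of $\bar\dt_{det}$ to the admissible event set $\Sig_{det}^T$, which the equality $\Sig_{obs}^T=\Sig_{det}^T$ legitimises. The only other point needing explicit care is the degenerate subcase $q_1=q_2$ when $|x|>1$, where a second source state must be adjoined by hand to meet the $|\bar x|=2$ requirement.
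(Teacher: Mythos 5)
Your proposal is correct and takes essentially the same route as the paper's proof: the paper likewise extracts, for the two states of $\bar x'$, witness paths originating at states $q_3,q_3'\in x$ (your union decomposition $\Mt(\Acal^{\frakM},(\s,t)|x)=\bigcup_{q\in x}\Mt(\Acal^{\frakM},(\s,t)|\{q\})$ is just this unwinding of \eqref{CSE_x_det_MPautomata}), sets $\bar x=\{q_3,q_3'\}$, and adjoins an arbitrary second state of $x$ when $q_3=q_3'$. If anything you are more thorough: the paper writes out only the case $|x|\ge 2$, $|x'|\ge 2$ and ends with ``by definition, $(\bar x,(\s,t),\bar x')\in\dt_{det}$'', whereas you cover all cardinality cases explicitly and flag the one genuinely delicate point --- that passing from $\bar\dt_{det}$ to $\dt_{det}$ requires reading $\dt_{det}$ as the restriction of $\bar\dt_{det}$ to the common event set $\Sig_{obs}^T=\Sig_{det}^T$, a step the paper's proof performs silently.
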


\begin{proof}
	We only need to prove the case $|x|\ge 2$ and $|x'|\ge 2$, the other cases hold similarly.
	Arbitrarily choose $\{q_1,q_2\}=\bar x'\subset x'$ such that $q_1\ne q_2$. By definition,
	there exist $q_3,q_3',q_4,q_5\in Q$, $e_1,e_2\in E_o$, $s_1,s_2,s_3,s_4\in (E_{uo})^*$,
	and paths
	\begin{align*}
		&q_3\xrightarrow[]{s_1e_1}q_4\xrightarrow[]{s_3}q_1,\\
		&q_3'\xrightarrow[]{s_2e_2}q_5\xrightarrow[]{s_4}q_2
	\end{align*}
	such that $\ell(e_1)=\ell(e_2)=\s$, the weights of paths $q_3\xrightarrow[]{s_1e_1}q_4$
	and $q_3'\xrightarrow[]{s_2e_2}q_5$ are both equal to $t$, and paths $q_4\xrightarrow[]{s_3}q_1$ and
	$q_5\xrightarrow[]{s_4}q_2$ are unobservable and instantaneous.
	If $q_3=q_3'$, we choose $\bar x=\{q_3,q_6\}$, where $q_6\in x\setminus\{q_3\}$;
	otherwise, we choose $\bar x=\{q_3,q_3'\}$. Then by definition, 
	one has $(\bar x,(\s,t),\bar x')\in\dt_{det}$.
\end{proof}

\subsection{Equivalent conditions for detectability of labeled weighted automata over monoids}

In this subsection, we give equivalent conditions for the four notions of detectability of 
labeled weighted automata over monoids by using the notions of self-composition, observer, and detector.

\subsubsection{For strong detectability:}

We use the notion of self-composition to give an equivalent condition for strong detectability
of labeled weighted automata over monoids.

\begin{theorem}\label{thm1_det_MPautomata}
	A labeled weighted automaton $\Acal^{\frakM}$ \eqref{LWA_monoid_det_MPautomata}
	is not strongly detectable if and only if in its self-composition $\CCa(\Acal^{\frakM})$
	\eqref{CC_MPautomata},
	\begin{enumerate}[(i)]
		\item there exists a transition sequence
			\begin{align}\label{eqn2_3_det_MPautomata}
				q_0'\xrightarrow[]{s_1'}q_1'\xrightarrow[]{s_2'}q_1'\xrightarrow[]{s_3'}q_2'
			\end{align}
			satisfying
			\begin{align}\label{eqn2_1_det_MPautomata}
				q_0'\in Q_0'; q_1',q_2'\in Q'; s_1',s_2',s_3'\in(E')^+; q_2'(L)\ne q_2'(R);
			\end{align}
		\item\label{eqn2_2_det_MPautomata}
			and in $\Acal^{\frakM}$, there exists a cycle reachable from $q_2'(L)$.
	\end{enumerate}
\end{theorem}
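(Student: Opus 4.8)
The plan is to route everything through a single bridging lemma: from an initial state, a finite path of $\CCa(\Acal^{\frakM})$ exists and ends at $q'=(p,p')$ if and only if $\Acal^{\frakM}$ has two paths from initial states, one to $p$ and one to $p'$, carrying one and the same weighted label sequence $\gamma$ whose length equals the length of the $\CCa$-path (each $\CCa$-transition absorbing exactly one observed event per side). Reaching a state $q'$ with $q'(L)\neq q'(R)$ then exactly witnesses $\{q'(L),q'(R)\}\subset\Mt(\Acal^{\frakM},\gamma)$, hence $|\Mt(\Acal^{\frakM},\gamma)|\ge 2$. I would prove the forward half by induction on the number of synchronized observed events: since $q_7\xrightarrow{s_3}q_3$ and $q_8\xrightarrow{s_4}q_4$ in \eqref{eqn6_det_MPautomata} are instantaneous and the synchronized increments satisfy $t_1=t_2$, two partial paths that share the cumulative weight before a step still share it afterwards; the base case uses $\alpha(q_0)=\mathbf 1$. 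The converse half decomposes two paths sharing $\gamma$ at their observed events and matches the step increments.

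For sufficiency ($\Leftarrow$), assume (i) and (ii). Pumping the cycle gives, for every $k\ge 1$, a $\CCa$-path $q_0'\xrightarrow{s_1'}q_1'\xrightarrow{(s_2')^{k}}q_1'\xrightarrow{s_3'}q_2'$ of length $|s_1'|+k|s_2'|+|s_3'|$, which grows without bound because $s_2'\in(E')^+$. By the bridging lemma each such path yields two $\Acal^{\frakM}$-paths ending at $q_2'(L)\neq q_2'(R)$ with a common weighted label sequence $\gamma_k$, so $|\Mt(\Acal^{\frakM},\gamma_k)|\ge 2$ while $|\gamma_k|\to\infty$. Condition (ii) furnishes a cycle reachable from $q_2'(L)$; prolonging the left $\Acal^{\frakM}$-path to that cycle and looping forever produces an infinite path whose $\omega$-weighted word $w\in L^{\omega}(\Acal^{\frakM})$ satisfies $\gamma_k\sqsubset\ell(w)$. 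Thus for every $t$ there is $w$ and a prefix $\gamma_k\sqsubset\ell(w)$ with $|\gamma_k|\ge t$ and $|\Mt(\Acal^{\frakM},\gamma_k)|\ge 2$, so $\Acal^{\frakM}$ fails Definition~\ref{def_SD_MPautomata}.

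For necessity ($\Rightarrow$), suppose $\Acal^{\frakM}$ is not strongly detectable and set $t=|Q|^2+2$. By Definition~\ref{def_SD_MPautomata} there are $w\in L^{\omega}(\Acal^{\frakM})$ and a prefix $\gamma\sqsubset\ell(w)$ with $|\gamma|\ge t$ and $|\Mt(\Acal^{\frakM},\gamma)|\ge 2$. The infinite path realizing $w$ passes, just after producing $\gamma$, through some $r\in\Mt(\Acal^{\frakM},\gamma)$ from which it continues forever, so a cycle is reachable from $r$ in the finite-state $\Acal^{\frakM}$. Choosing any $p'\in\Mt(\Acal^{\frakM},\gamma)\setminus\{r\}$ and applying the bridging lemma with the path to $r$ on the left and the path to $p'$ on the right gives a $\CCa$-path of length $|\gamma|$ from $Q_0'$ to $q_2'=(r,p')$, where $q_2'(L)=r\neq p'=q_2'(R)$. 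Its $|\gamma|-1\ge|Q|^2+1>|Q'|$ interior states must repeat, say at positions $i<j$; this produces the decomposition \eqref{eqn2_3_det_MPautomata} with repeated state $q_1'$ and with $s_1',s_2',s_3'\in(E')^+$ all nonempty (since $1\le i<j\le|\gamma|-1$), giving (i), while the cycle reachable from $r=q_2'(L)$ gives (ii).

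The main obstacle is the weight bookkeeping in the bridging lemma, specifically its converse half: two $\Acal^{\frakM}$-paths sharing the global weighted label sequence $\gamma$ agree on the \emph{cumulative} weight at every observed event, whereas $\CCa$ synchronizes the \emph{per-step} increments $t_1=t_2$ of \eqref{eqn6_det_MPautomata}, so one must recover equal increments from equal cumulative weights. This is immediate when $\frakM$ is a group --- in particular for $\Acal^{\Q^k}$, where increments are genuine differences --- and more generally it rests on left-cancellation. The delicate part is isolating exactly this property and checking that the instantaneous unobservable tails $q_7\xrightarrow{s_3}q_3$ and $q_8\xrightarrow{s_4}q_4$ never disturb the accumulated weight; everything else is the standard pump-and-pigeonhole argument transported from labeled finite-state automata.
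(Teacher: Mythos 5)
Your proposal is structurally the paper's own proof: sufficiency by pumping the cycle at $q_1'$ and prolonging the left projection through the cycle supplied by condition (ii), necessity by forcing $|\gamma|>|Q|^2$ and applying the pigeonhole principle to the at most $|Q|^2$ states of $\CCa(\Acal^{\frakM})$. Your ``bridging lemma'' is precisely the correspondence the paper leaves implicit when it asserts that from two paths $\pi_1,\pi_2$ witnessing $|\Mt(\Acal^{\frakM},\gamma)|>1$ ``one can construct a transition sequence of $\CCa(\Acal^{\frakM})$''. The forward half of that lemma, and with it your entire sufficiency direction, is correct over an arbitrary monoid.

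The obstacle you flag in the converse half is not a defect of your write-up relative to the paper: the paper's necessity proof makes exactly the same silent jump, and your diagnosis that it rests on left-cancellation is exactly right --- indeed, without cancellativity the theorem itself is false, so the gap cannot be closed by a cleverer argument. Over $\frakM=(\N,\max,0)$ consider the automaton with initial state $q_0$, observable events $a,b$, and transitions $q_0\xrightarrow[]{a/5}p_1$, $q_0\xrightarrow[]{a/5}p_2$, $p_1\xrightarrow[]{b/3}r_1$, $p_2\xrightarrow[]{b/4}r_2$, $r_1\xrightarrow[]{a/1}r_1$, $r_2\xrightarrow[]{a/1}r_2$. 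Since $\max(5,3)=\max(5,4)=5$, both infinite paths generate the weighted label sequence $(a,5)(b,5)(a,5)(a,5)\dots$, so $|\Mt(\Acal^{\frakM},\gamma)|=2$ for every prefix $\gamma$ of length at least $2$, and the automaton is not strongly detectable. Yet in $\CCa(\Acal^{\frakM})$ the state $(p_1,p_2)$ has no outgoing transition (the per-step weights $3\ne 4$ of \eqref{eqn6_det_MPautomata} cannot be synchronized), the only reachable cycles are the self-loops on $(r_1,r_1)$ and $(r_2,r_2)$, and from these no state $(q,q')$ with $q\ne q'$ is reachable, so conditions (i)--(ii) fail. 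Consequently the theorem, the paper's proof, and your proof are all sound exactly when $\otimes$ is left-cancellative --- which covers every monoid the paper actually computes with, namely $(\Q^k,+)$, $(\Q_{\ge0},+)$, $(\N,+)$, and free monoids --- and under that hypothesis your plan is complete: the converse of the bridging lemma follows by cancelling the common cumulative weight $T_{i-1}$ in $T_{i-1}\otimes\delta_i^{L}=T_{i-1}\otimes\delta_i^{R}$ to recover equal increments. For arbitrary monoids no repair is possible, because the statement is not true.
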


\begin{proof}
	By Definition~\ref{def_SD_MPautomata}, $\Acal^{\frakM}$ is not strongly detectable if and only if 
	for all $k\in\Z_+$, there exist $w_k\in L^{\omega}(\Acal^\frakM)$ and $\gamma\sqsubset
	\ell(w_k)$, such that $|\gamma|\ge k$ and $|\Mt(\Acal^{\frakM},\gamma)|>1$.

	``if'': Arbitrarily given $k\in\Z_{+}$, consider $q_0'\xrightarrow[]{s_1'}q_1'\xrightarrow[]{(s_2')^k}q_1'
	\xrightarrow[]{s_3'}q_2'$, then by \eqref{eqn2_1_det_MPautomata},
	in $\Acal^{\frakM}$ there exists a path $q_0'(L)\xrightarrow[]{\bar s_1}q_1'(L)
	\xrightarrow[]{\bar s_2}q_1'(L)\xrightarrow[]{\bar s_3}q_2'(L)=:\pi_L$ such that $\ell(\bar s_1)=\ell'(s_1')$,
	$\ell(\bar s_2)=\ell'((s_2')^k)$, $\ell(\bar s_3)=\ell'(s_3')$, and $\Mt(\Acal^{\frakM},\gamma)\supset
	\{q_2'(L),q_2'(R)\}$, where $\gamma=\ell(\tau(\pi_L))$; by \eqref{eqn2_2_det_MPautomata},
	there also exists a path $q_2'(L)\xrightarrow[]
	{\bar s_4}q_3\xrightarrow[]{\bar s_5}q_3$, where $\bar s_5\in E^+$.
	Note that $q_3\xrightarrow[]{\bar s_5}q_3$ can be repeated for 
	infinitely many times. Choose $$w_k=\tau\left(q_0'(L)\xrightarrow[]{\bar s_1}q_1'(L)
	\xrightarrow[]{\bar s_2}q_1'(L)\xrightarrow[]{\bar s_3}q_2'(L)\xrightarrow[]{\bar s_4}q_3\left(\xrightarrow[]
	{\bar s_5}q_3\right)^{\omega}\right),$$ one has $w_k\in L^{\omega}(\Acal^\frakM)$, $\gamma\sqsubset
	\ell(w_k)$ satisfies $|\gamma|\ge k+2$, and $|\Mt(\Acal^{\frakM},\gamma)|>1$. That is, $\Acal^{\frakM}$ is not strongly
	detectable.

	``only if'': Assume that $\Acal^{\frakM}$ is not strongly detectable. Choose $k>|Q|^2$, $w_k\in 
	L^{\omega}(\Acal^\frakM)$, and $\gamma\sqsubset \ell(w_k)$ such that $|\gamma|\ge k$ and $|\Mt(\Acal
	^{\frakM},\gamma)|>1$. Then there exist two different paths $\pi_1$ and $\pi_2$ starting at initial states and
	ending at different states such that $\tau(\pi_1)=\tau(\pi_2)\sqsubset w_k$, and after the last observable
	events of $\pi_1$ and $\pi_2$, all transitions are unobservable and instantaneous. By definition of $\CCa(\Acal
	^{\frakM})$, from $\pi_1$ and $\pi_2$ one can construct a transition sequence of $\CCa(\Acal^{\frakM})$
	as in \eqref{eqn2_3_det_MPautomata} by the Pigeonhole Principle, because $\CCa(\Acal^{\frakM})$ has at most
	$|Q|^2$ states. On the other hand, because $\Acal^{\frakM}$ has finitely many states, \eqref{eqn2_2_det_MPautomata}
	holds.
\end{proof}

\subsubsection{For strong periodic detectability:}

We first use the notion of observer to give an equivalent condition for strong periodic detectability
of labeled weighted automata, and furthermore represent the equivalent condition in terms of the notion of detector.

\begin{theorem}\label{thm8_det_MPautomata}
	A labeled weighted automaton $\Acal^{\frakM}$ \eqref{LWA_monoid_det_MPautomata} 
	is not strongly periodically detectable if and only if in any observer
	${\Acal}{^{\frakM}_{obs}}$
	\eqref{eqn_observer_sim_MPautomata}, at least one of the two following conditions holds.
	\begin{enumerate}[(i)]
		\item\label{item14_det_MPautomata}
			There is a reachable state $x\in X$ such that $|x|>1$
			and there exists a path $q\xrightarrow[]{s_1}q'\xrightarrow[]{s_2}q'$ in $\Acal^{\frakM}$,
			where $q\in x$, $s_1\in (E_{uo})^*$, $s_2\in(E_{uo})^+$, $q'\in Q$.
		\item\label{item15_det_MPautomata}
			There is a reachable cycle in ${\Acal}{^{\frakM}_{obs}}$
			such that no state in the cycle is a singleton.
	\end{enumerate}
\end{theorem}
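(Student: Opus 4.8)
The plan is to prove both directions by rephrasing strong periodic detectability as the (non-)existence of an infinite observed behaviour along which the current-state estimate is eventually never a singleton. By Definition~\ref{def_SPD_MPautomata}, $\Acal^{\frakM}$ is \emph{not} strongly periodically detectable exactly when for every $t\in\Z_+$ there are $w\in L^\omega(\Acal^\frakM)$ and a prefix $w'\sqsubset w$ with $|\Mt(\Acal^{\frakM},\ell(w'w''))|>1$ for all $w''$ such that $w'w''\sqsubset w$ and $|\ell(w'')|<t$. The two conditions capture the two ways such a ``bad'' word can arise: either it produces only finitely many observations and stabilises at a non-singleton estimate while diverging through an unobservable cycle (this is (i)), or it produces infinitely many observations while cycling through non-singleton observer states (this is (ii)).

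For the ``if'' direction I would show each condition yields non-SPD. Assuming (i), since $x$ is reachable I fix $\gamma$ with $\Mt(\Acal^{\frakM},\gamma)=x$ and a path $\rho$ from an initial state realizing $\gamma$ and ending in $q\in x$; appending the unobservable suffix $s_1 s_2^{\omega}$ gives $w\in L^\omega(\Acal^\frakM)$ with $\ell(w)=\gamma$ finite. Choosing $w'$ past the last observation forces $\ell(w'w'')=\gamma$ for all admissible $w''$, so the estimate stays $x$ with $|x|>1$, and no bound $t$ can work. Assuming (ii), let $x_1\to\cdots\to x_m\to x_1$ be the reachable non-singleton cycle with labels $o_1,\dots,o_m$ and let $\gamma$ reach $x_1$. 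Every finite prefix of the periodic observed word $\gamma(o_1\cdots o_m)^{\omega}$ is realizable because each observer state is a nonempty set of consistent states, so the tree of realizing finite paths is infinite and finitely branching; by König's lemma it contains an infinite path, i.e.\ a $w\in L^\omega(\Acal^\frakM)$ realizing the periodic observation. Along $w$ the estimates are eventually always among $x_1,\dots,x_m$, all non-singletons, so SPD fails.

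For the ``only if'' direction I would assume non-SPD and split on divergence. \textbf{Case A:} some $w\in L^\omega(\Acal^\frakM)$ has $\ell(w)$ finite and $\Mt(\Acal^{\frakM},\ell(w))=:x$ not a singleton. Writing $w$ as an observed prefix followed by an infinite unobservable suffix starting at the state $\hat q$ reached just after the last observable event, the empty instantaneous tail in \eqref{CSE_det_MPautomata} gives $\hat q\in x$; finiteness of $\Acal^{\frakM}$ forces an unobservable cycle inside the suffix, i.e.\ $\hat q\xrightarrow{s_1}q'\xrightarrow{s_2}q'$ with $s_1\in(E_{uo})^*$, $s_2\in(E_{uo})^+$, and $x$ is reachable with $|x|>1$; this is (i). \textbf{Case B:} every such divergent $w$ has a singleton final estimate. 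Fix $t=|X|+1$ and a bad witness $(w,w')$. Were there fewer than $t-1$ observations after $w'$, the word $w$ would diverge and its maximal extension would force $\Mt(\Acal^{\frakM},\ell(w))$ to be non-singleton, contradicting Case B; hence there are at least $t-1$ observations after $w'$, producing $t$ non-singleton estimates $y_0\to y_1\to\cdots\to y_{t-1}$ along a path of the pre-observer. Since $t>|X|$, two coincide by the pigeonhole principle, giving a reachable non-singleton cycle that the finite observer preserves by Definition~\ref{def_observer_MPautomata}; this is (ii).

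The main obstacle I anticipate is the König's-lemma lifting in the ``if'' direction for (ii): an observer cycle only asserts set-level reachability between estimates, not a single state admitting the whole periodic continuation, so I must carefully justify that finiteness of $\Acal^{\frakM}$ and finite branching let me assemble the finite realizing paths into one genuine infinite run. A secondary delicate point is the divergence bookkeeping: because $\Mt(\Acal^{\frakM},\gamma)$ absorbs only unobservable \emph{instantaneous} continuations, I must verify that the state $\hat q$ at which divergence begins really lies in the observer state $x$, and that divergent witnesses with singleton final estimates are correctly excluded in Case B rather than contributing spurious long non-singleton stretches.
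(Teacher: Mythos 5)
Your overall structure reproduces the paper's proof: for sufficiency you freeze the estimate along a divergent word (condition (i)) and realize the periodic observation along the observer cycle (condition (ii)); for necessity you apply the pigeonhole principle to current-state estimates when observations keep arriving (yielding (ii)) and to the states of the infinite unobservable tail when they stop (yielding (i)). Your Case A/Case B split is organized slightly differently from the paper, which assumes $\neg$(ii) and derives (i), but the two decompositions are logically equivalent, and your bookkeeping in Case A (that the state $\hat q$ reached at the last observable event lies in the estimate via the empty instantaneous tail of \eqref{CSE_det_MPautomata}) and in Case B (at least $t-1$ observations after $w'$, else a divergent witness contradicts Case B) is sound.

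The step that fails as written is your K\"onig's-lemma argument for (ii) $\Rightarrow\neg$SPD. If the tree's nodes are finite paths ordered by one-transition extension, it is indeed finitely branching and infinite, but the infinite branch K\"onig's lemma produces need \emph{not} realize the periodic observation: it may be eventually unobservable. Concretely, take one initial state carrying an unobservable instantaneous self-loop and an observable self-loop of weight $1$; every finite path of self-loops has observation a prefix of the target $(a,1)(a,2)\dots$, and the all-unobservable infinite branch realizes only $\ep$, so "some infinite branch exists" does not give "some infinite branch realizes the target." If instead you stratify nodes by the number of completed observation chunks, the levels are infinite (unobservable, especially instantaneous, cycles give unboundedly many realizing paths per chunk), so K\"onig's lemma does not apply. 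The repair is to run the argument one level down, on estimate \emph{members} rather than paths: let $x^{(n)}$ be the (finite, nonempty) estimate after $n$ chunks, and put an edge $q\to q'$ between consecutive levels when $q'$ is reachable from $q$ by a path realizing the $n$-th chunk with the ending convention of \eqref{CSE_x_det_MPautomata}. By that definition every node at level $n+1$ has a predecessor at level $n$, so for each $n$ one can trace back from level $n$ to level $0$; since each level has at most $|Q|$ nodes, a diagonalization over these backward traces (choose a level-$0$ state occurring in infinitely many of them, then a level-$1$ successor occurring in infinitely many of the remaining ones, and so on) yields an infinite state sequence whose witnessing segments concatenate into a genuine $w\in L^{\omega}(\Acal^{\frakM})$ with the periodic observation. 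Your instinct that this is the delicate point is correct — indeed the paper itself merely asserts $\tau(\alpha\beta^{\omega})\in\LM^{\omega}(\Acal^{\frakM})$ without proof — but the path-tree form of K\"onig's lemma is not the way to discharge it.
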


\begin{proof}
	By Definition~\ref{def_SPD_MPautomata}, $\Acal^{\frakM}$ is not strongly periodically
	detectable if and only for all $k\in\Z_+$, there is an $\omega$-{weighted} word $w_k\in L^{\omega}(\Acal^\frakM)$
	and a prefix $w'\sqsubset w_k$ such that for all $w''\in (E\times T)^*$ satisfying
	$|\ell(w'')|<k$ and $w'w''\sqsubset w_k$, one has $|\Mt(\Acal^{\frakM},\ell(w'w''))|>1$.

	``if'': Assume \eqref{item14_det_MPautomata} holds. Then there exists a path
	$q_0\xrightarrow[]{s_{\gamma}}q\xrightarrow[]{s_1}q'\xrightarrow[]{s_2}q'$ in $\Acal^{\frakM}$
	such that $q_0\in Q_0$ and $\Mt(\Acal^{\frakM},\ell(\tau(q_0\xrightarrow[]{s_{\gamma}}q)))=x$.
	Denote $\tau(q_0\xrightarrow[]{s_{\gamma}}q)=:w_1\in L(\Acal^\frakM)$ and $\tau(q_0\xrightarrow[]{s_{\gamma}}q
	\xrightarrow[]{s_1}q'(\xrightarrow[]{s_2}q')^{\omega})=:w_1w_2\in L^{\omega}(\Acal^\frakM)$, then 
	for every $w\sqsubset w_2$, one has $\ell(w)=\ep$ and $|\Mt(\Acal^{\frakM},\ell(w_1w
	))|=|\Mt(\Acal^{\frakM},\ell(w_1))|>1$, which violates strong periodic detectability by definition.

	Assume \eqref{item15_det_MPautomata} holds. That is, there exist $\alpha\in(\Sig\times T)^*$, $\beta\in
	(\Sig\times T)^+$ such that $(x_0,\alpha,x),(x,\beta,x)\in\dt_{obs}$ for some $x\in X$ satisfying
	$|x|>1$, $\Mt(\Acal^{\frakM},\tau(\alpha))=\Mt(\Acal^{\frakM},\tau(\alpha\beta))=x$, and for all $\beta'\sqsubset\beta$,
	$|\Mt(\Acal^{\frakM},\tau(\alpha\beta'))|>1$. Then $\tau(\alpha\beta^{\omega})\in\LM^{\omega}(\Acal^{\frakM})$.
	Choose $w_{\alpha}w_{\beta}\in L^{\omega}(\Acal^\frakM)$
	such that $\ell(w_\alpha)=\tau(\alpha)$ and $\ell(w_{\alpha}w_\beta)=\tau(\alpha\beta^{\omega})$,
	then for every $w_{\beta}'
	\sqsubset w_{\beta}$, one has $|\Mt(\Acal^{\frakM},\ell(w_\alpha w_{\beta}'))|>1$,
	which also violates strong periodic detectability.

	``only if'':  Assume $\Acal^{\frakM}$ is not strongly periodically detectable and
	\eqref{item15_det_MPautomata} does not hold, next we prove \eqref{item14_det_MPautomata}
	holds.

	Since $\Acal^{\frakM}$ is not strongly periodically detectable, by definition,
	choose integer $k>|2^Q|$, $w_k\in L^{\omega}({\Acal^\frakM})$, and
	prefix $w'\sqsubset w_k$ such that for all $w''\in (\Sig\times T)^*$, $w'w''\sqsubset w_k$
	and $|\ell(w''))|<k$ imply $|\Mt(\Acal^{\frakM},\ell(w'w''))|>1$. Since \eqref{item15_det_MPautomata}
	does not hold, one has $\ell(w_k)\in(\Sig\times T)^*$ and $|\ell(w_k)|<k+|\ell(w')|$.
	Otherwise if $|\ell(w_k)|\ge k+|\ell(w')|$ or $\ell(w_k)\in(\Sig\times T)^{\omega}$,
	we can choose $\bar w''$ such that $w'\bar w''\sqsubset w_k$ and $|\ell(\bar w'')|=
	k$, then by the Pigeonhole Principle, there exist $\bar w_1'',\bar w_2''\sqsubset
	\bar w''$ such that $|\ell(\bar w_1'')|<|\ell(\bar w_2'')|$ and $\Mt(\Acal^{\frakM},\ell(w'\bar
	w_1''))=\Mt(\Acal^{\frakM},\ell(w'\bar w_2''))$, that is, there is a reachable cycle in pre-observer 
	$\tensor*[^{pre}]{\Acal}{_{obs}^{\frakM}}$ such that no state in the cycle is a singleton,
	then by definition, \eqref{item15_det_MPautomata} holds.
	Then $w_k=w'\hat w_1''\hat w_2''$, where $\hat w_1''\in (E\times T)^*$, $\hat w_2''\in (E_{uo}\times T)^
	{\omega}$. Moreover, one has $|\Mt(\Acal^{\frakM},\ell(w'
	\hat w_1''))|>1$, and also by the Pigeonhole Principle there exists a path $q_0\xrightarrow[]{w'
	\hat w_1''}q\xrightarrow[]{\tilde w_1''}q'\xrightarrow[]{\tilde w_2''}q'$ for some
	$q_0\in Q_0$, $q,q'\in Q$, $\tilde w_1''\in (E_{uo})^*$, and $\tilde w_2''\in
	(E_{uo})^+$, i.e., $\Mt(\Acal^{\frakM},\ell(w'\hat w_1''))$ is a reachable state of pre-observer 
	$\tensor*[^{pre}]{\Acal}{_{obs}^{\frakM}}$ such that $|\Mt(\Acal^{\frakM},\ell(w'\hat w_1''))|>1$ and $q\in
	\Mt(\Acal^{\frakM},\ell(w'\hat w_1''))$. By definition, $\Mt(\Acal^{\frakM},\ell(w'\hat w_1''))$ is also 
	a reachable state of observer $\Acal_{obs}^{\frakM}$, then \eqref{item14_det_MPautomata} holds.
\end{proof}

\begin{theorem}\label{thm9_det_MPautomata}
	A labeled weighted automaton $\Acal^{\frakM}$ \eqref{LWA_monoid_det_MPautomata} 
	is
	not strongly periodically detectable if and only if in any detector
	${\Acal}{_{det}^{\frakM}}$
	\eqref{eqn_detector_sim_MPautomata}, at least one of the two following conditions holds.
	\begin{enumerate}[(1)]
		\item\label{item16_det_MPautomata}
			There is a reachable state $x'\in X$ such that $|x'|>1$
			and there exists a path $q\xrightarrow[]{s_1}q'\xrightarrow[]{s_2}q'$ in $\Acal^{\frakM}$, where
			$q\in x'$, $s_1\in (E_{uo})^*$, $s_2\in(E_{uo})^+$, $q'\in Q$.
		\item\label{item17_det_MPautomata}
			There is a reachable cycle in ${\Acal}{_{det}^{\frakM}}$ such that
			all states in the cycle have cardinality $2$.
	\end{enumerate}
\end{theorem}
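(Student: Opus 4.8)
The plan is to deduce Theorem~\ref{thm9_det_MPautomata} from Theorem~\ref{thm8_det_MPautomata} by showing that the two detector conditions \eqref{item16_det_MPautomata} and \eqref{item17_det_MPautomata} are equivalent, respectively, to the two observer conditions \eqref{item14_det_MPautomata} and \eqref{item15_det_MPautomata}. I would fix a detector $\Acal^{\frakM}_{det}$ and an observer $\Acal^{\frakM}_{obs}$ sharing the same finite event set, so that Lemma~\ref{lem3_det_MPautomata} applies; since both characterizations are stated for \emph{any} observer/detector and in fact describe the intrinsic property of strong periodic detectability of $\Acal^{\frakM}$, it suffices to verify the equivalence for one such compatible pair, and the bookkeeping of which representative weights $t$ label the chosen transitions is handled (exactly as in the proof of Theorem~\ref{thm8_det_MPautomata}) by passing through the pre-observer and pre-detector, where every weight is present. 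The backbone of the whole argument is a monotonicity-and-inclusion lemma, proved by induction on the length of the defining word $\gamma\in(\Sig\times T)^*$: every reachable detector state $\bar x$ is contained in the reachable observer state $x$ reached by the same $\gamma$, using that $\Mt(\Acal^{\frakM},(\s,t)|\cdot)$ defined in \eqref{CSE_x_det_MPautomata} is monotone in its second argument (more source states yield more reachable states), together with the fact that the two automata share the initial state $\Mt(\Acal^{\frakM},\ep)$.

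Conditions \eqref{item14_det_MPautomata} and \eqref{item16_det_MPautomata} are then almost immediate. For \eqref{item16_det_MPautomata}$\Rightarrow$\eqref{item14_det_MPautomata}, if $x'$ is a reachable detector state with $|x'|>1$ and $q\in x'$ admits the unobservable cycle $q\xrightarrow[]{s_1}q'\xrightarrow[]{s_2}q'$, then by the inclusion lemma $x'\subseteq x$ for the observer state $x$ reached by the same word, whence $q\in x$, $|x|\ge|x'|>1$, and the same cycle witnesses \eqref{item14_det_MPautomata}. For the converse I would take a reachable observer state $x$ with $|x|>1$ and $q\in x$ carrying the cycle, choose the $2$-subset $\bar x'=\{q,q''\}\subseteq x$ with $q''\in x\setminus\{q\}$, and thread $\bar x'$ backwards through the observer path reaching $x$: applying Lemma~\ref{lem3_det_MPautomata} to each observer transition in turn produces a detector path ending at $\bar x'$, hence a reachable detector state containing $q$, so \eqref{item16_det_MPautomata} holds.

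For the cyclic conditions, \eqref{item17_det_MPautomata}$\Rightarrow$\eqref{item15_det_MPautomata} is the easy direction: a reachable detector cycle of $2$-subsets is traversed by a word $\alpha\beta^{\omega}$, where $\alpha$ is a reachability prefix and $\beta$ the cycle label; running the deterministic observer on $\alpha\beta^{\omega}$ yields an eventually periodic sequence of states, and by the inclusion lemma every state of its periodic part contains the corresponding detector $2$-subset and is therefore non-singleton, giving a reachable observer cycle with no singleton state. The reverse direction \eqref{item15_det_MPautomata}$\Rightarrow$\eqref{item17_det_MPautomata} is the main obstacle, and I would handle it by backward threading plus a pigeonhole argument. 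Starting from an arbitrary $2$-subset of the repeated state $x_0$ of a non-singleton observer cycle $x_0\to\cdots\to x_n=x_0$, I apply Lemma~\ref{lem3_det_MPautomata} transition-by-transition backwards around the cycle; since every observer state in the cycle has cardinality $\ge 2$, Lemma~\ref{lem3_det_MPautomata} always returns a $2$-subset, producing an infinite backward chain of $2$-subsets of $Q$ lying over the cyclically repeating $x_i$. As $Q$ has only finitely many $2$-subsets, two of those lying over $x_0$ coincide; reading the intervening detector transitions forward then gives a detector cycle all of whose states have cardinality $2$, and threading once more back along the observer's reachability prefix to $x_0$ shows this cycle is reachable, establishing \eqref{item17_det_MPautomata}. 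Combining the four equivalences with Theorem~\ref{thm8_det_MPautomata} completes the proof; the delicate points to get right are the monotonicity needed for the inclusion lemma and the reachability maintained throughout the backward threading.
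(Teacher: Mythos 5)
Your proposal is correct and takes essentially the same route as the paper's own proof: it reduces Theorem~\ref{thm9_det_MPautomata} to Theorem~\ref{thm8_det_MPautomata} by establishing the two pairwise equivalences, using the inclusion of each reachable detector state in the observer state reached by the same word for the easy directions, and backward threading via Lemma~\ref{lem3_det_MPautomata} combined with a pigeonhole argument for the converse directions. The differences are only presentational—you package the inclusion as an explicit induction lemma and phrase the paper's ``repeat the cycle $|2^Q|+1$ times'' pigeonhole as eventual periodicity of the deterministic observer on $\alpha\beta^{\omega}$—and you even share the paper's one glossed-over detail, namely that threading backward along the reachability prefix lands on a subset of $x_0$ rather than on the detector's initial state $x_0$ itself, which is repaired by invoking the transitions out of $x_0$ given directly by Definition~\ref{def_pre_detector_MPautomata}.
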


\begin{proof}
	We use Theorem~\ref{thm8_det_MPautomata} to prove this result.

	We firstly prove \eqref{item16_det_MPautomata} of this theorem is equivalent to \eqref{item14_det_MPautomata}
	of Theorem~\ref{thm8_det_MPautomata}.

	``\eqref{item16_det_MPautomata}$\Rightarrow$\eqref{item14_det_MPautomata}'':
	Assume \eqref{item16_det_MPautomata} holds.
	In ${\Acal}{_{det}^{\frakM}}$, choose a transition sequence $x_0\xrightarrow[]
	{\alpha}x'$. Then one has $x'\subset x$, where $(x_0,\alpha,x)\in\dt_{obs}$,
	hence \eqref{item14_det_MPautomata} of Theorem~\ref{thm8_det_MPautomata} holds.

	``\eqref{item16_det_MPautomata}$\Leftarrow$\eqref{item14_det_MPautomata}'':
	Assume \eqref{item14_det_MPautomata} holds.
	In ${\Acal}{_{obs}^{\frakM}}$, choose a transition sequence $x_0\xrightarrow[]
	{\alpha}x$. By Lemma~\ref{lem3_det_MPautomata}, moving backward on $x_0\xrightarrow[]
	{\alpha}x$ from $x$ to $x_0$, we can obtain a transition sequence
	$x_0\xrightarrow[]{\alpha}x'$ of ${\Acal}{_{det}^{\frakM}}$ such that $q\in x'\subset x$,
	hence \eqref{item16_det_MPautomata} of this theorem holds.

	We secondly prove \eqref{item17_det_MPautomata} of this theorem is equivalent to \eqref{item15_det_MPautomata}
	of Theorem~\ref{thm8_det_MPautomata}.

	``\eqref{item17_det_MPautomata}$\Rightarrow$\eqref{item15_det_MPautomata}'':
	Assume \eqref{item17_det_MPautomata} holds.
	In ${\Acal}{_{det}^{\frakM}}$, choose a transition sequence $x_0\xrightarrow[]{\alpha}
	x\xrightarrow[]{\beta}x$ such that in $x\xrightarrow[]{\beta}x$
	all states are of cardinality $2$ and $|\beta|>0$. Without loss of generality, we assume 
	$|\beta|>|2^{Q}|$, because otherwise we can repeat $x\xrightarrow[]{\beta}x$ for
	$|2^{Q}|+1$ times. By definition, one has for all
	$\beta'\sqsubset\beta$, for the $x_{\beta'}$ satisfying $(x_0,\alpha\beta',x_{\beta'})\in\dt_{obs}$,
	$|x_{\beta'}|>1$. Then by the Pigeonhole Principle,
	there exist $\beta_1,\beta_2\sqsubset\beta$ such that $|\beta_1|<|\beta_2|$
	and $x_{\beta_1}=x_{\beta_2}$, where $(x_0,\alpha\beta_1,x_{\beta_1}),(x_0,\alpha\beta_2,x_{\beta_2})
	\in\dt_{obs}$. Thus, \eqref{item15_det_MPautomata} of Theorem~\ref{thm8_det_MPautomata} holds.

	``\eqref{item17_det_MPautomata}$\Leftarrow$\eqref{item15_det_MPautomata}'':
	Assume \eqref{item15_det_MPautomata} holds.
	In ${\Acal}{_{obs}^{\frakM}}$, choose a transition sequence $x_0\xrightarrow[]{\alpha}
	x_1\xrightarrow[]{\beta_1}\cdots\xrightarrow[]{\beta_n}x_{n+1}$ such that $n\ge|Q|^2$, 
	$x_1=x_{n+1}$, $|x_1|,\dots,|x_{n+1}|>1$, and $\beta_1,\dots,\beta_n\in\Sig\times T$.
	By using Lemma~\ref{lem3_det_MPautomata} from $n+1$ to $2$,
	we obtain $x_i'\subset x_i$ for all $i\in\llb 1,n+1\rrb$
	such that $|x_1'|=\cdots=|x_{n+1}'|=2$ and a transition sequence
	$x_1'\xrightarrow[]{\beta_1}\cdots\xrightarrow[]{\beta_n}x_{n+1}'$ of ${\Acal}{_{det}^{\frakM}}$.
	Moreover, also by Lemma~\ref{lem3_det_MPautomata}, we obtain a transition sequence 
	$x_0\xrightarrow[]{\alpha}x_1'$ of ${\Acal}{_{det}^{\frakM}}$. By the Pigeonhole Principle,
	\eqref{item17_det_MPautomata} of this theorem holds.
\end{proof}

\subsubsection{For weak detectability and weak periodic detectability:}

We use the notion of observer to give equivalent conditions for weak detectability
and weak periodic detectability of labeled weighted automata.

\begin{theorem}\label{thm2_det_MPautomata}
	A labeled weighted automaton $\Acal^{\frakM}$ \eqref{LWA_monoid_det_MPautomata} 
	is 
	weakly detectable if and only if either one of the following three conditions holds.
	\begin{enumerate}[(i)]
		\item\label{item1_det_MPautomata}
			$L^{\omega}(\Acal^\frakM)=\emptyset$.
		\item\label{item2_det_MPautomata}
			$L^{\omega}(\Acal^\frakM)\ne\emptyset$ and
			there exists $w\in L^{\omega}(\Acal^\frakM)$ such that $\ell(w)\in (\Sig\times T)^*$.
		\item\label{item3_det_MPautomata}
			$L^{\omega}(\Acal^\frakM)\ne\emptyset$ and in any one of its observers, 
			there is a reachable cycle in which all states are singletons.
	\end{enumerate}
\end{theorem}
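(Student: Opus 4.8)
The plan is to prove both implications by identifying each of the three cases with one possibility for the witnessing $\omega$-weighted word demanded by Definition~\ref{def_WD_MPautomata}: either no witness is needed, or a witness exists whose label sequence is finite, or a witness exists whose label sequence is infinite but eventually has only singleton current-state estimates. First I would dispose of the two easy cases of the ``if'' direction by vacuity. When \eqref{item1_det_MPautomata} holds the definition is met trivially. When \eqref{item2_det_MPautomata} holds, some $w\in L^{\omega}(\Acal^\frakM)$ has $\ell(w)\in(\Sig\times T)^*$, hence only finitely many prefixes, so choosing any $t>|\ell(w)|$ makes the implication ``$|\gamma|\ge t\Rightarrow|\Mt(\Acal^{\frakM},\gamma)|=1$'' vacuously true for this $w$.

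The substantive part of the ``if'' direction is \eqref{item3_det_MPautomata}. Here I would fix a reachable singleton cycle $\{q\}\to\cdots\to\{q\}$ in the chosen observer together with a run $x_0\to\cdots\to\{q\}$ witnessing reachability. Each observer edge $(\{p\},(\s,t),\{p'\})\in\dt_{obs}$ satisfies $\{p'\}=\Mt(\Acal^{\frakM},(\s,t)|\{p\})$, so by \eqref{CSE_x_det_MPautomata} it lifts to a concrete path from $p$ to $p'$ in $\Acal^{\frakM}$ whose observed label is $(\s,t)$. Concatenating the lifts along the reachability run and then iterating the lifts of the cycle forever yields an infinite path from an initial state, i.e.\ some $w\in L^{\omega}(\Acal^\frakM)$; its label sequence is infinite because each turn of the cycle contains at least one observable transition, and every prefix $\gamma$ long enough to lie inside the cycle satisfies $\Mt(\Acal^{\frakM},\gamma)=$ the corresponding singleton observer state, so $|\Mt(\Acal^{\frakM},\gamma)|=1$, giving weak detectability.

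For the ``only if'' direction I would assume weak detectability and $L^{\omega}(\Acal^\frakM)\ne\emptyset$, so \eqref{item1_det_MPautomata} fails; Definition~\ref{def_WD_MPautomata} then supplies $t\in\Z_+$ and a witness $w$ with $|\Mt(\Acal^{\frakM},\gamma)|=1$ for every prefix $\gamma$ of $\ell(w)$ with $|\gamma|\ge t$. If $\ell(w)\in(\Sig\times T)^*$ then \eqref{item2_det_MPautomata} holds outright. Otherwise $\ell(w)$ is infinite, and reading it in the pre-observer $\tensor*[^{pre}]{\Acal}{_{obs}^{\frakM}}$ gives an infinite run $x_0,x_1,x_2,\dots$ with $x_i=\Mt(\Acal^{\frakM},\gamma_i)$ for the length-$i$ prefix $\gamma_i$, where every $x_i$ with $i\ge t$ is a singleton; since $2^{Q}$ is finite, the Pigeonhole Principle produces $t\le i<j$ with $x_i=x_j$, i.e.\ a reachable cycle of singletons in the pre-observer.

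The hard part will be transferring this pre-observer structure to an \emph{arbitrary} observer, since by Definition~\ref{def_observer_MPautomata} an observer carries only a finite, reduced event set and so need not read the exact labels of $\ell(w)$. I would handle this by invoking the defining property that every pre-observer edge between two states is matched by at least one observer edge between the same two states: applying it to each edge of the reachability run $x_0\to\cdots\to x_i$ and of the cycle $x_i\to\cdots\to x_j=x_i$ reproduces, with possibly different edge labels but the identical state sequence, a reachable singleton cycle in the observer. As the observer was arbitrary, \eqref{item3_det_MPautomata} follows, which completes the argument.
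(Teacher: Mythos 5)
Your overall architecture matches the paper's proof: vacuity for conditions \eqref{item1_det_MPautomata} and \eqref{item2_det_MPautomata}, lifting an observer singleton cycle to an infinite path for the ``if'' of \eqref{item3_det_MPautomata}, and Pigeonhole in the pre-observer followed by the edge-matching clause of Definition~\ref{def_observer_MPautomata} for the ``only if''. The ``only if'' half and the two easy cases are correct as written; indeed you make explicit the pre-observer-to-observer transfer that the paper compresses into ``by definition''.

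The gap is in your treatment of the reachability segment $x_0\to\cdots\to\{q\}$ in the ``if'' direction for \eqref{item3_det_MPautomata}. Your lifting statement is formulated only for edges between \emph{singletons}, $(\{p\},(\s,t),\{p'\})$, yet the states along the reachability run are in general not singletons (already $x_0=\Mt(\Acal^{\frakM},\ep)$ typically is not). Even if you generalize the lift to an arbitrary edge $(x,(\s,t),x')\in\dt_{obs}$ --- which only yields a path from \emph{some} state of $x$ to \emph{some} state of $x'$ --- forward concatenation fails: the lift of the edge $y_{i-1}\to y_i$ may end at a state $p'\in y_i$ that is not the start state of the lift chosen for the next edge $y_i\to y_{i+1}$, and from $p'$ there may exist no path carrying the next edge's label at all. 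So ``concatenating the lifts along the reachability run'' is not a valid step. The paper avoids this entirely: from $(x_0,\gamma_1,x_1)\in\dt_{obs}\subset\bar\dt_{obs}$ it concludes $\Mt(\Acal^{\frakM},\tau(\gamma_1))=x_1=\{q_1\}$ (the extension property stated after Definition~\ref{def_pre_observer_MPautomata}), and then the membership $q_1\in\Mt(\Acal^{\frakM},\tau(\gamma_1))$, by the very definition \eqref{CSE_det_MPautomata}, directly supplies a \emph{single} path from an initial state $q_0\in Q_0$ to $q_1$ whose weighted label sequence is $\tau(\gamma_1)$; no edge-by-edge concatenation is needed, and only the cycle is then iterated, where all states are singletons so your concatenation argument is sound. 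Alternatively you could repair your construction by choosing the lifts \emph{backwards} from $\{q\}$: since $y_i=\Mt(\Acal^{\frakM},(\s_i,t_i)|y_{i-1})$, \emph{every} state of $y_i$ (not just some) is the endpoint of a lift starting in $y_{i-1}$, so a right-to-left induction makes consecutive endpoints match; finally prepend an unobservable, instantaneous path from an initial state to the chosen state of $x_0$, which exists by the definition of $\Mt(\Acal^{\frakM},\ep)$.
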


\begin{proof}
	``if'': \eqref{item1_det_MPautomata} implies that $\Acal^{\frakM}$ is weakly detectable vacuously.

	Assume \eqref{item2_det_MPautomata} holds. Choose integer $k>|\ell(w)|$, then one has $\Acal^{\frakM}$ is weakly
	detectable vacuously.

	Assume \eqref{item3_det_MPautomata} holds. Then in any observer ${\Acal}{^{\frakM}_{obs}}$,
	there is a
	transition sequence $x_0\xrightarrow[]{\gamma_1}x_1\xrightarrow[]{\gamma_2}x_1$ such that $\gamma_1\in
	(\Sig^T_{obs})^*$, $\gamma_2\in(\Sig^T_{obs})^+$, and in $x_1\xrightarrow[]{\gamma_2}x_1$,
	all states are singletons. Hence in $\Acal^{\frakM}$, there exists an infinite path $q_0\xrightarrow[]{s_1}q_1
	\xrightarrow[]{s_2}q_1(\xrightarrow[]{s_2}q_1)^{\omega}=:\pi$ such that $\tau(\pi)\in L^{\omega}(\Acal^\frakM)$,
	$q_0\in x_0$, $\{q_1\}=x_1$,
	$\ell(\tau(q_0\xrightarrow[]{s_1}q_1))=\tau(\gamma_1)$, $\ell(\tau(q_0\xrightarrow[]{s_1}
	q_1\xrightarrow[]{s_2}q_1))=\tau(\gamma_1\gamma_2)$, and $\ell(\tau(\pi))=\tau(\gamma_1(\gamma_2)^{\omega})$.
	For all prefixes $\gamma\sqsubset\gamma_1(\gamma_2)^{\omega}$
	such that $|\gamma|\ge|\gamma_1|$, one has $|\Mt(\Acal^{\frakM},\tau(\gamma))|=1$. Then $\Acal^{\frakM}$ is weakly detectable.

	``only if'': Assume $\Acal^{\frakM}$ is weakly detectable and neither \eqref{item1_det_MPautomata} 
	nor \eqref{item2_det_MPautomata} holds. Then by the finiteness of the number of states of $\Acal^{\frakM}$
	and the Pigeonhole Principle, in pre-observer $\tensor*[^{pre}]{\Acal}{_{obs}^{\frakM}}$,
	there is a reachable cycle in which all states are singletons, then by definition,
	\eqref{item3_det_MPautomata} holds.
\end{proof}

\begin{theorem}\label{thm7_det_MPautomata}
	A labeled weighted automaton $\Acal^{\frakM}$ \eqref{LWA_monoid_det_MPautomata} 
	is 
	weakly periodically detectable if and only if either one of the following three conditions holds.
	\begin{enumerate}[(i)]
		\item\label{item11_det_MPautomata}
			$L^{\omega}(\Acal^\frakM)=\emptyset$.
		\item\label{item12_det_MPautomata}
			$L^{\omega}(\Acal^\frakM)\ne\emptyset$,
			there exists $w\in L^{\omega}(\Acal^\frakM)$ such that $\ell(w)\in (\Sig\times T)^*$
			and $|\Mt(\Acal^{\frakM},\ell(w))|=1$.
		\item\label{item13_det_MPautomata}
			$L^{\omega}(\Acal^\frakM)\ne\emptyset$ and in any one of its observers, 
			there is a reachable cycle in which at least one state is a singleton.
	\end{enumerate}
\end{theorem}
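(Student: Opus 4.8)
The plan is to follow the architecture of the already-proved Theorem~\ref{thm2_det_MPautomata} for weak detectability, adapting it to the two features that distinguish weak periodic detectability. First, whereas weak detectability on a finite observation is vacuous (one simply picks the threshold $t$ larger than $|\ell(w)|$), periodic detectability must still produce a singleton estimate on the unobservable tail, which is exactly why clause \eqref{item12_det_MPautomata} carries the extra requirement $|\Mt(\Acal^{\frakM},\ell(w))|=1$. Second, periodic detectability asks for singletons infinitely often rather than eventually always, so the witnessing cycle in an observer needs only one singleton state, hence the weakened condition \eqref{item13_det_MPautomata} as compared with the ``all states singletons'' condition in Theorem~\ref{thm2_det_MPautomata}. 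Throughout I would use Definition~\ref{def_WPD_MPautomata}, the observer/pre-observer correspondence of Definition~\ref{def_observer_MPautomata}, finiteness of the observer state set, and the Pigeonhole Principle.

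For the ``if'' direction I would dispatch the three conditions. Condition \eqref{item11_det_MPautomata} gives weak periodic detectability vacuously. For \eqref{item12_det_MPautomata} I would factor $w=w_1w_2$ with $\ell(w_1)=\ell(w)$ and $w_2\in(E_{uo}\times T)^{\omega}$, set $t=|\ell(w)|+1$, and argue that every prefix $w'\sqsubset w$ either can be extended within $w$ to $w_1$ (if $w'\sqsubset w_1$) or already satisfies $\ell(w')=\ell(w)$ (if $w'$ reaches into $w_2$); in both cases a suitable $w''$ gives $\ell(w'w'')=\ell(w)$, whence $|\Mt(\Acal^{\frakM},\ell(w'w''))|=|\Mt(\Acal^{\frakM},\ell(w))|=1$ with $|\ell(w'')|\le|\ell(w)|<t$. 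For \eqref{item13_det_MPautomata} I would take a reachable cycle $x_0\xrightarrow[]{\alpha}x\xrightarrow[]{\beta}x$ through a singleton state, lift $x_0\xrightarrow[]{\alpha}x(\xrightarrow[]{\beta}x)^{\omega}$ to an infinite path of $\Acal^{\frakM}$ whose $\omega$-weighted word $w$ visits that singleton once per iteration of $\beta$, and choose $t>|\alpha|+|\beta|$ so that from every prefix the singleton estimate recurs within $t$ observations.

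For the ``only if'' direction I would assume weak periodic detectability together with the negations of \eqref{item11_det_MPautomata} and \eqref{item12_det_MPautomata}, and derive \eqref{item13_det_MPautomata}. Since \eqref{item11_det_MPautomata} fails, $L^{\omega}(\Acal^{\frakM})\ne\emptyset$, so Definition~\ref{def_WPD_MPautomata} supplies a threshold $t$ and a witness $w\in L^{\omega}(\Acal^{\frakM})$. If $\ell(w)\in(\Sig\times T)^{*}$ were finite, then for any prefix $w'$ lying in the unobservable tail of $w$ the continuation forced by periodic detectability would give $|\Mt(\Acal^{\frakM},\ell(w))|=1$, i.e.\ exactly \eqref{item12_det_MPautomata}, contradicting our assumption; hence $\ell(w)\in(\Sig\times T)^{\omega}$. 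Along the infinite path traced by $\ell(w)$ in the pre-observer, a singleton state occurs in every window of $t$ observations, hence infinitely often; by finiteness of the state set and the Pigeonhole Principle some singleton recurs, and the segment between two of its occurrences is a reachable cycle containing a singleton. Passing to an observer via Definition~\ref{def_observer_MPautomata} then yields \eqref{item13_det_MPautomata}.

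The main obstacle I expect is the lifting step in the ``if'' direction for \eqref{item13_det_MPautomata}: unlike weak detectability, where singletons need only appear eventually and without interruption, here I must produce a single uniform bound $t$ that simultaneously covers the transient prefix $\alpha$ before the cycle is entered and the period $\beta$ of the cycle, and I must check that lifting an observer cycle back to a genuine infinite path of $\Acal^{\frakM}$ respects the weight-synchronization built into the current-state estimate \eqref{CSE_det_MPautomata}, so that the observed singleton state really corresponds to $|\Mt(\Acal^{\frakM},\cdot)|=1$ at the claimed observation length rather than merely to a singleton vertex of the observer graph.
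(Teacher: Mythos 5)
Your proposal is correct and follows exactly the route the paper intends: the paper omits this proof, stating only that it is ``similar to the proof of Theorem~\ref{thm2_det_MPautomata}'', and your argument is precisely that adaptation with the right modifications for periodicity (the extra singleton requirement in condition~(ii), windows of length $t$ plus the Pigeonhole Principle in the ``only if'' direction, and a cycle containing at least one singleton in condition~(iii)). The obstacle you flag in the lifting step is resolved by rotating the observer cycle so that it starts and ends at its singleton state $\{q\}$: tracing the observer transitions backward through one period must then terminate at $q$ itself (the only available state), so the lifted segments concatenate into a genuine infinite path of $\Acal^{\frakM}$ whose current-state estimate equals $\{q\}$ at every period boundary, and taking $t>|\alpha|+|\beta|$ gives the uniform bound.
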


We omit the proof of Theorem~\ref{thm7_det_MPautomata} that is similar to the proof of 
Theorem~\ref{thm2_det_MPautomata}.

	We have given equivalent conditions for the four notions of detectability of a labeled weighted automaton
	$\Acal^{\frakM}$, where these conditions are represented by its self-composition $\CCa(\Acal^{\frakM})$,
	any of its observers ${\Acal}{^{\frakM}_{obs}}$, and any of its detectors
	${\Acal}{^{\frakM}_{det}}$.
	Hence the decidability of these notions directly depends on whether the corresponding 
	$\CCa(\Acal^{\frakM})$, ${\Acal}{^{\frakM}_{obs}}$, and ${\Acal}{^{\frakM}_{det}}$
	are computable, which directly depends on $\frakM$. 
	In the following Section~\ref{subsec:verification_A_Q},
	we consider monoid $(\Q^k,+,0_k)$, and show that $\CCa(\Acal^{\Q^k})$, 
	$\Acal_{obs}^{\Q^k}$, and $\Acal_{det}^{\Q^k}$ are all
	computable with complexity upper bounds.

\subsection{Verification of notions of detectability for labeled weighted automata over the monoid 
$(\Q^k,+,0_k)$}
\label{subsec:verification_A_Q}

In this subsection, we show for labeled weighted automaton $\Acal^{\Q^k}$,
its self-composition $\CCa(\Acal^{\Q^k})$ \eqref{CC_MPautomata},
observers $\Acal_{obs}^{\Q^k}$ \eqref{eqn_observer_sim_MPautomata}, and detectors
$\Acal_{det}^{\Q^k}$ \eqref{eqn_detector_sim_MPautomata}
are computable in $\NP$, $2$-$\EXPTIME$, and $2$-$\EXPTIME$,
by using the EPL problem (Lemma~\ref{lem1_det_MPautomata})
and a subclass of Presburger arithmetic (Lemma~\ref{lem4_det_MPautomata}).
As a result, the problem of verifying strong detectability of 
$\Acal^{\Q^k}$ is proven to belong to $\coNP$, and the problems of verifying strong periodic detectability,
weak detectability, and weak periodic detectability of $\Acal^{\Q^k}$ are proven to belong to $2$-$\EXPTIME$.
Particularly, for $\Acal^{\Q^k}$ in which from every state, a distinct state can be
reached through some unobservable, instantaneous path, detectors $\Acal^{\Q^k}_{det}$ can be computed in 
$\NP$, and SPD can be verified in $\coNP$. In addition, we also prove that the problems of verifying strong
detectability and strong periodic detectability of deterministic, deadlock-free, and divergence-free $\Acal^{\N}$
are both $\coNP$-hard by constructing polynomial-time reductions from 
the $\NP$-complete subset sum problem (Lemma~\ref{lem2_det_MPautomata}).

\begin{remark}\label{rem4_det_MPautomata}
	We point out that in order to characterize detectability for automaton $\Acal^{\Q^k}$, one can 
	consider automaton $\Acal^{\Z^k}$ without loss of generality.
	Given an automaton $\Acal^{\Q^k}$, enumerate the entries of the weights of all its transitions 
	by $m_1/n_1,\dots,m_l/n_l$, where $m_i$ and $n_i$ are relatively prime integers, $i\in\llb 1,l
	\rrb$, then after multiplying the entries by a large positive integer $M$, the newly obtained automaton
	$\bar\Acal^{\Q^k}$ have the weights of all its transitions in $\Z^k$, where $M$ is the least common 
	multiple of $n_1,\dots,n_l$. One has $\Acal^{\Q^k}$ is detectable
	if and only if $\bar\Acal^{\Q^k}$ is detectable with respect to all the four definitions of detectability,
	because for every two paths $\pi_1$ and $\pi_2$ in $\Acal^{\Q^k}$, they have
	the same weight in $\Acal^{\Q^k}$ if and only if they have the same weight in $\bar\Acal^{\Q^k}$.
	In addition, such a modification does not change complexity class membership when verifying detectability
	of $\Acal^{\Q^k}$ and $\bar\Acal^{\Q^k}$. Later, we will show in order to verify detectability of
	$\Acal^{\Z^{k}}$, one can consider $\Acal^{\Z}$ without loss of generality.

	However, it is not known whether verification of detectability of $\Acal^{\Z^k}$ can be transformed to 
	verification of detectability of $\Acal^{\N^k}$ without loss of generality. One can try to find a function
	$f:\Z^k\to \N^k$ that maps the weights of all transitions of $\Acal^{\Z^k}$ to elements in $\N^k$
	(the newly obtained automaton is denoted by $f(\Acal^{\Z^k})$)
	and meanwhile preserves detectability, but it is not known whether such an $f$ exists. For simplicity,
	we try to find an $f$ such that
	\begin{align*}
		&\left( \forall x_1,\dots,x_m,y_1,\dots,y_n\in\Z^k \right)
		&\left[ \sum_{i=1}^{m}x_i=\sum_{i=1}^{n}y_i \iff \sum_{i=1}^{m}f(x_i)=\sum_{i=1}^{n}f(y_i) \right],
	\end{align*}
	because such an $f$ preserves detectability between $\Acal^{\Z^k}$ and $f(\Acal^{\Z^k})$
	($f$ guarantees that for every two paths $\pi_1$ and $\pi_2$ in $\Acal^{\Z^k}$, they have the same weight
	in $\Acal^{\Z^k}$ if and only if they have the same weight in $f(\Acal^{\Z^k})$). However, it is 
	easy to see that such an $f$ does not exist.
\end{remark}

\subsubsection{Computation of self-composition $\CCa(\Acal^{\Q^k})$ and verification of strong detectability}
\label{subsubsec:computation_self_composition}

As shown in Remark~\ref{rem4_det_MPautomata}, without loss of generality we compute 
$\CCa(\Acal^{\Z^k})$. Moreover, because we will reduce computation of $\CCa(\Acal^{\Z^k})$ to the multidimensional 
EPL problem which belongs to $\NP$, and the $1$-dimensional EPL problem is already $\NP$-hard 
(Lemma~\ref{lem1_det_MPautomata}), we compute $\CCa(\Acal^{\Z})$ without loss of generality.
Next we compute $\CCa(\Acal^{\Z})$ \eqref{CC_MPautomata}.
Given states $(q_1,q_2),(q_3,q_4)\in Q'$ and event $(e_1,e_2)\in E'$, we verify whether there is a
transition $$((q_1,q_2),(e_1,e_2),(q_3,q_4))\in\dt'$$ as follows:
\begin{enumerate}[(i)]
	\item\label{item4_det_MPautomata}
		Guess states $q_5,q_6,q_7,q_8\in Q$ such that there exist transitions 
		$q_5\xrightarrow[]{e_1}q_7$, $q_6\xrightarrow[]{e_2}q_8$ and unobservable, instantaneous paths
		$q_7\xrightarrow[]{s_3}q_3$, $q_8\xrightarrow[]{s_4}q_4$, where $s_3,s_4\in (E_{uo})^*$.
	\item\label{item5_det_MPautomata} 
		Check whether there exist unobservable paths $q_1\xrightarrow[]{s_1}q_5$, $q_2\xrightarrow[]{s_2}q_6$,
		where $s_1,s_2\in (E_{uo})^*$, such that the weights of paths $q_1\xrightarrow[]{s_1}q_5\xrightarrow[]{e_1}
		q_7$, $q_2\xrightarrow[]{s_2}q_6\xrightarrow[]{e_2}q_8$ are the same. If such paths $q_1\xrightarrow[]
		{s_1}q_5$, $q_2\xrightarrow[]{s_2}q_6$ exist, then one has $((q_1,q_2),(e_1,e_2),(q_3,q_4))\in\dt'$.
\end{enumerate}

Next we check the above \eqref{item5_det_MPautomata}. Firstly, compute subautomata $\Acal_{q_1}^{\Z}$ 
(resp., $\Acal_{q_2}^{\Z}$) of $\Acal^{\Z}$ starting at $q_1$ (resp., $q_2$) and passing through
exactly all possible unobservable transitions. Secondly, compute asynchronous product $\Acal_{q_1}^{\Z}
\otimes\Acal_{q_2}^{\Z}$ of $\Acal_{q_1}^{\Z}$ and $\Acal_{q_2}^{\Z}$, where the states
of the product are exactly pairs $(p_1,p_2)$ with $p_1$ and $p_2$ being states of
$\Acal_{q_1}^{\Z}$ and
$\Acal_{q_2}^{\Z}$, respectively; transitions are of the form 
\begin{equation}\label{eqn20_det_MPautomata}
	(p_1,p_2)\xrightarrow[]{(\ep,e)/-\mu(e)_{p_2p_3}}(p_1,p_3),
\end{equation}
where 
$p_2\xrightarrow[]{e/\mu(e)_{p_2p_3}}p_3$ is a transition of $\Acal_{q_2}^{\Z}$,
or of the form 
\begin{equation}\label{eqn21_det_MPautomata}
	(p_1,p_2)\xrightarrow[]{(e,\ep)/\mu(e)_{p_1p_3}}(p_3,p_2),
\end{equation}
where $p_1\xrightarrow[]{e/\mu(e)_{p_1p_3}}p_3$ is a transition of $\Acal_{q_1}^{\Z}$.
Regard $\Acal_{q_1}^{\Z}\otimes\Acal_{q_2}^{\Z}$ as a weighted directed graph, and the
above $-\mu(e)_{p_2p_3}$ and $\mu(e)_{p_1p_3}$ as the weights of transitions \eqref{eqn20_det_MPautomata} and
\eqref{eqn21_det_MPautomata}.
Finally, check in $\Acal_{q_1}^{\Z}\otimes
\Acal_{q_2}^{\Z}$, whether there is a path from $(q_1,q_2)$ to $(q_5,q_6)$ whose weight is equal to 
$\mu(e_2)_{q_6q_8}-\mu(e_1)_{q_5q_7}$, which is actually a $1$-dimensional EPL problem 
(Problem~\ref{prob1_det_MPautomata}).
Then since the EPL problem belongs to $\NP$ (Lemma~\ref{lem1_det_MPautomata}), the following 
result holds. 

\begin{theorem}\label{thm3_det_MPautomata}
	The self-composition $\CCa(\Acal^{\Q^k})$ of labeled weighted automaton $\Acal^{\Q^k}$ 
	can be computed in $\NP$ in the size of $\Acal^{\Q^k}$.
\end{theorem}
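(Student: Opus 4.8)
The plan is to show that $\CCa(\Acal^{\Q^k})$ has only polynomially many states and that membership of each candidate transition in $\dt'$ is decidable by a nondeterministic polynomial-time procedure, so that the whole structure is computable in $\NP$. First I would invoke Remark~\ref{rem4_det_MPautomata} to replace $\Acal^{\Q^k}$ by an automaton with integer weights, clearing denominators by their least common multiple; this is a polynomial-time transformation that preserves equality of path weights and hence does not alter which triples lie in $\dt'$. Since the weight-matching test for a single transition will be encoded as a $k$-dimensional instance of the EPL problem (Problem~\ref{prob1_det_MPautomata}), and multidimensional EPL lies in $\NP$ by Lemma~\ref{lem1_det_MPautomata}, it suffices to carry out the argument for the one-dimensional automaton $\Acal^{\Z}$, the vector case being identical with component-wise weights. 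By Definition~\ref{def_CC_MPautomata} the state set of $\CCa$ is $Q\times Q$, so there are at most $|Q|^2$ states and at most $|Q|^4|E'|$ candidate triples, a polynomial number; the whole task therefore reduces to deciding, for each candidate triple $((q_1,q_2),(e_1,e_2),(q_3,q_4))$, whether it belongs to $\dt'$.

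For a fixed candidate triple I would run the two-step procedure sketched above. Step one nondeterministically guesses intermediate states $q_5,q_6,q_7,q_8\in Q$ and checks in polynomial time that $q_5\xrightarrow[]{e_1}q_7$ and $q_6\xrightarrow[]{e_2}q_8$ are transitions and that $q_3$ (resp.\ $q_4$) is reachable from $q_7$ (resp.\ $q_8$) by an unobservable instantaneous path; because in $(\Z,+)$ a path is instantaneous exactly when every edge on it has weight $0$, this last check is ordinary graph reachability in the subgraph of zero-weight unobservable edges and is polynomial. Step two must then decide whether there exist unobservable paths $q_1\xrightarrow[]{s_1}q_5$ and $q_2\xrightarrow[]{s_2}q_6$ making the weights of $q_1\xrightarrow[]{s_1}q_5\xrightarrow[]{e_1}q_7$ and $q_2\xrightarrow[]{s_2}q_6\xrightarrow[]{e_2}q_8$ coincide.

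The crux is to recast step two as an EPL instance. I would form the unobservable-reachable subautomata $\Acal_{q_1}^{\Z}$ and $\Acal_{q_2}^{\Z}$ and their asynchronous product $\Acal_{q_1}^{\Z}\otimes\Acal_{q_2}^{\Z}$, assigning to a left-move the weight $+\mu$ as in \eqref{eqn21_det_MPautomata} and to a right-move the weight $-\mu$ as in \eqref{eqn20_det_MPautomata}. The verification to carry out is that, because addition is commutative, interleaving the two unobservable paths into a single product run neither loses nor creates solutions, and that the sign flip converts the requirement ``weight of the left prefix equals weight of the right prefix'' into ``the accumulated product weight equals the fixed target $\mu(e_2)_{q_6q_8}-\mu(e_1)_{q_5q_7}$''. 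Hence the desired paths exist if and only if the product graph carries a path from $(q_1,q_2)$ to $(q_5,q_6)$ of exactly this weight, which is precisely Problem~\ref{prob1_det_MPautomata}; by Lemma~\ref{lem1_det_MPautomata} it is solvable in $\NP$.

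The main obstacle is that the unobservable paths $s_1,s_2$ may run through cycles and so be of unbounded, possibly exponential, length, which rules out simply guessing them as part of the nondeterministic witness. Routing through EPL is exactly what circumvents this: EPL decides the existence of an exact-weight path of arbitrary length while keeping a polynomial-size certificate (via the reduction to integer linear programming behind Lemma~\ref{lem1_det_MPautomata}). Putting the pieces together, each of the polynomially many transition-membership questions is an $\NP$ predicate whose witness comprises the four guessed states together with the EPL certificate, and since $\CCa(\Acal^{\Q^k})$ has only $|Q|^2$ states and polynomially many transitions, the full self-composition is computable in $\NP$; in the $k$-dimensional case the only change is that the target and edge weights are vectors in $\Z^k$ and the EPL instance is $k$-dimensional, still in $\NP$.
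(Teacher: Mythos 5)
Your proposal is correct and follows essentially the same route as the paper's own proof: reduce to integer weights via Remark~\ref{rem4_det_MPautomata}, then to the one-dimensional case, guess the intermediate states $q_5,q_6,q_7,q_8$, and decide the weight-matching condition for each candidate transition as an exact-path-length query (Problem~\ref{prob1_det_MPautomata}, Lemma~\ref{lem1_det_MPautomata}) on the asynchronous product $\Acal_{q_1}^{\Z}\otimes\Acal_{q_2}^{\Z}$ with sign-flipped right-component weights and target $\mu(e_2)_{q_6q_8}-\mu(e_1)_{q_5q_7}$. Your added justifications (instantaneous paths being exactly paths through weight-$0$ unobservable edges, hence plain reachability, and commutativity of $+$ ensuring the interleaving in the product loses no solutions) are correct elaborations of steps the paper leaves implicit.
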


\begin{example}\label{exam6_MPautomata}
	We use automaton $\Acal_1^{\N}$ in Fig.~\ref{fig6_det_MPautomata} to illustrate how to 
	compute $\CCa(\Acal_1^{\N})$. Recall its self-composition $\CCa(\Acal_1^{\N})$ shown
	in Fig.~\ref{fig7_det_MPautomata}. We check whether there exists a transition $((q_1,q_2),(b,b),(q_3,q_3))$
	in $\CCa(\Acal_1^{\N})$ as follows: (1) Guess transitions $q_1\xrightarrow[]{b/2}q_3$ and
	$q_2\xrightarrow[]{b/1}q_3$ of $\Acal_1^{\N}$. Because the two transitions have different weights,
	now we do not know whether there exists a transition $((q_1,q_2),(b,b),(q_3,q_3))$ in $\CCa(\Acal_1^{\N})$.
	(2) Compute subautomata $\Acal_{1q_1}^{\N}$ and $\Acal_{1q_2}^{\N}$ and their asynchronous product
	$\Acal_{1q_1}^{\N}\otimes \Acal_{1q_2}^{\N}$ as in Fig.~\ref{fig9_det_MPautomata}.
	The rest is to check whether there exists a path from $(q_1,q_2)$ to $(q_1,q_2)$ in
	$\Acal_{1q_1}^{\N}\otimes \Acal_{1q_2}^{\N}$ with weight $\mu(b)_{q_2q_3}-\mu(b)_{q_1q_3}=1-2=-1$.
	The answer is YES: $(q_1,q_2)\xrightarrow[]{(\ep,u)}(q_1,q_2)$ is such a path. By these transitions and 
	paths we find two paths $q_1\xrightarrow[]{b}q_3=:\pi_1$ and $q_2\xrightarrow[]{u}q_2\xrightarrow[]{b}q_3=:\pi_2$
	such that they have the same weight. Note that $\pi_1$ and $\pi_2$ are exactly the $\pi_1$ and $\pi_2$ in 
	Example~\ref{exam3_MPautomata}. Then we conclude that there exists a transition $((q_1,q_2),(b,b),(q_3,q_3))$
	in $\CCa(\Acal_1^{\N})$. The other transitions of $\CCa(\Acal_1^{\N})$ can be computed similarly.
	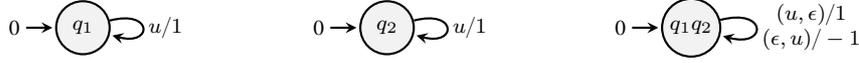
\begin{figure}[!htbp]
        \centering
	\begin{tikzpicture}
	[>=stealth',shorten >=1pt,thick,auto,node distance=2.0 cm, scale = 1.0, transform shape,
	->,>=stealth,inner sep=2pt, initial text = 0]

	\tikzstyle{emptynode}=[inner sep=0,outer sep=0]

	\node[initial, state, initial where = left] (q1) {$q_1$};

	\path [->]
	(q1) edge [loop right] node {$u/1$} (q1)
	;

	\node[emptynode, right of = q1] (empty1) {};

	\node[initial, state, initial where = left, right of = empty1] (q2) {$q_2$};

	\path [->]
	(q2) edge [loop right] node {$u/1$} (q2)
	;

	\node[emptynode, right of = q2] (empty2) {};

	\node[initial, state, initial where = left, right of = empty2] (q1q2) {$q_1q_2$};

	\path [->]
	(q1q2) edge [loop right] node {$\begin{matrix}(u,\ep)/1\\(\ep,u)/-1\end{matrix}$} (q1q2)
	;

        \end{tikzpicture}
		\caption{Subautomaton $\Acal_{1q_1}^{\N}$ (left) and subautomaton $\Acal_{1q_2}^{\N}$ (middle)
		of labeled weighted automaton $\Acal_1^{\N}$ in Fig.~\ref{fig6_det_MPautomata}, and their
		asynchronous product $\Acal_{1q_1}^{\N}\otimes \Acal_{1q_2}^{\N}$ (right).} 
		\label{fig9_det_MPautomata} 
	\end{figure}
\end{example}

One can see that the condition in Theorem~\ref{thm1_det_MPautomata} can be verified in time linear in the 
size of $\CCa(\Acal^{\Q^k})$ by computing its strongly connected components
(a similar check is referred to \cite[Theorem 3]{Zhang2020DetPNFA}),
then the following result holds.

\begin{theorem}\label{thm4_det_MPautomata}
	The problem of verifying strong detectability of labeled weighted automaton $\Acal^{\Q^k}$
	belongs to $\coNP$.
\end{theorem}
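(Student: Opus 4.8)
The plan is to prove the $\coNP$ upper bound by showing that the complement problem---deciding that $\Acal^{\Q^k}$ is \emph{not} strongly detectable---lies in $\NP$. By Remark~\ref{rem4_det_MPautomata} I may assume without loss of generality that all transition weights lie in $\Z^k$ (indeed in $\Z$), so every weight has polynomial bit-size and no complexity class membership is affected. By Theorem~\ref{thm1_det_MPautomata}, non-strong-detectability is equivalent to the existence, in the self-composition $\CCa(\Acal^{\Q^k})$, of a transition sequence $q_0'\xrightarrow[]{s_1'}q_1'\xrightarrow[]{s_2'}q_1'\xrightarrow[]{s_3'}q_2'$ with $q_0'\in Q_0'$, $s_1',s_2',s_3'\in(E')^+$ and $q_2'(L)\ne q_2'(R)$, together with a cycle in $\Acal^{\Q^k}$ reachable from $q_2'(L)$. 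The whole task therefore reduces to certifying this structural condition by a polynomial-size witness that is checkable in deterministic polynomial time.

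First I would bound the size of the witness. Since $\CCa(\Acal^{\Q^k})$ has at most $|Q|^2$ states, each of the three required connections---a nonempty path $q_0'\rightsquigarrow q_1'$, a nonempty cycle at $q_1'$, and a nonempty path $q_1'\rightsquigarrow q_2'$---can be realized by a path of length $O(|Q|^2)$, contracting repeated states via the Pigeonhole Principle. Likewise, the reachable cycle demanded from $q_2'(L)$ is a purely graph-theoretic reachability-plus-cycle condition on the underlying finite automaton $\Acal^{\Q^k}$ (condition (ii) of Theorem~\ref{thm1_det_MPautomata} imposes no weight constraint), and can be witnessed by a simple path followed by a simple cycle of length $O(|Q|)$. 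Thus the \emph{skeleton} of the witness---the list of states and $\CCa$-edges used---is of polynomial size.

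The remaining ingredient is to certify that each guessed edge $((q_1,q_2),(e_1,e_2),(q_3,q_4))$ genuinely belongs to $\dt'$. By the construction underlying Theorem~\ref{thm3_det_MPautomata}, membership of a single such edge reduces to a one-dimensional exact path length instance in the asynchronous product $\Acal_{q_1}^{\Z}\otimes\Acal_{q_2}^{\Z}$; and by Lemma~\ref{lem1_det_MPautomata} the EPL problem lies in $\NP$, so each edge possesses a polynomial-size certificate verifiable in polynomial time. I would attach one such EPL certificate to every $\CCa$-edge appearing in the skeleton. Because the skeleton contains only polynomially many edges, the full witness---skeleton together with all per-edge EPL certificates---remains of polynomial size, and the verifier then checks each EPL certificate, checks adjacency of the listed edges, checks $q_2'(L)\ne q_2'(R)$, and checks the cycle in $\Acal^{\Q^k}$, all in polynomial time.

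The conceptual point to get right---and the main thing that could go wrong---is that $\CCa(\Acal^{\Q^k})$ cannot be materialized deterministically in polynomial time, since deciding even a single edge is $\NP$-hard via EPL. The argument must therefore never build the whole structure: the $\NP$ algorithm for the complement only \emph{guesses} a polynomial fragment of $\CCa(\Acal^{\Q^k})$ and certifies its edges individually, the polynomial state bound $|Q|^2$ guaranteeing that a fragment of polynomial size already exhibits the condition of Theorem~\ref{thm1_det_MPautomata}. Combining this polynomial skeleton bound with the polynomial-size EPL certificates yields an $\NP$ procedure for non-strong-detectability, whence strong detectability of $\Acal^{\Q^k}$ belongs to $\coNP$.
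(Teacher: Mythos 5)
Your proposal is correct and takes essentially the same route as the paper: the paper likewise combines the characterization of Theorem~\ref{thm1_det_MPautomata} with the $\NP$ certification of self-composition edges via the EPL problem (Theorem~\ref{thm3_det_MPautomata}, Lemma~\ref{lem1_det_MPautomata}) to place non-strong-detectability in $\NP$, hence strong detectability in $\coNP$. Your write-up is merely more explicit than the paper's one-sentence argument---in particular, your observation that only a polynomial, pigeonhole-shortened fragment of $\CCa(\Acal^{\Q^k})$ need ever be guessed and certified (rather than the whole self-composition, which cannot be materialized deterministically in polynomial time) is exactly the point the paper leaves implicit when it speaks of a linear-time strongly-connected-components check on $\CCa(\Acal^{\Q^k})$.
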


Particularly, one directly sees from the process of computing $\CCa(\Acal^{\Z})$ that,
if all transitions of $\Acal^{\Z}$ are observable, then its self-composition $\CCa(\Acal^{\Z})$ 
can be computed in polynomial time. Hence we have
the following direct corollary. 

\begin{corollary}\label{cor1_det_MPautomata}
	Consider a labeled weighted automaton $\Acal^{\Q^k}$ all of whose transitions are observable.
	Its self-composition $\CCa(\Acal^{\Q^k})$ can be computed in polynomial time,
	and its strong detectability can be verified also in polynomial time.
\end{corollary}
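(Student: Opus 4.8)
The plan is to exploit the fact that, when every transition of $\Acal^{\Q^k}$ is observable, the unobservable event set $E_{uo}$ is empty, so that $(E_{uo})^{*}=\{\ep\}$. First I would revisit the construction in Definition~\ref{def_CC_MPautomata} (equivalently the guessing procedure of Section~\ref{subsubsec:computation_self_composition}): the four unobservable segments $q_1\xrightarrow[]{s_1}q_5$, $q_2\xrightarrow[]{s_2}q_6$, $q_7\xrightarrow[]{s_3}q_3$, $q_8\xrightarrow[]{s_4}q_4$ now all collapse to empty paths, forcing $q_5=q_1$, $q_6=q_2$, $q_7=q_3$, and $q_8=q_4$. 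Consequently $((q_1,q_2),(e_1,e_2),(q_3,q_4))\in\dt'$ holds if and only if $(q_1,e_1,q_3),(q_2,e_2,q_4)\in\Dt$, $\ell(e_1)=\ell(e_2)$, and $\mu(e_1)_{q_1q_3}=\mu(e_2)_{q_2q_4}$.

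With this simplified characterization the EPL subroutine, which was the sole source of $\NP$-hardness in the general computation of $\CCa(\Acal^{\Q^k})$, disappears entirely: there are no unobservable paths whose accumulated weights must be matched, and the weight condition reduces to a single equality test between two vectors in $\Q^k$, decidable in time polynomial in the sizes of the two weights. I would therefore compute $\CCa(\Acal^{\Q^k})$ by iterating over all ordered pairs of transitions in $\Dt\times\Dt$, of which there are $O(|\Dt|^2)$, and adding $((q_1,q_2),(e_1,e_2),(q_3,q_4))$ to $\dt'$ exactly when the label and weight conditions above are met. Each test is polynomial, so the whole self-composition has polynomially many states and transitions and is produced in polynomial time.

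For strong detectability I would invoke Theorem~\ref{thm1_det_MPautomata}: automaton $\Acal^{\Q^k}$ fails to be strongly detectable exactly when $\CCa(\Acal^{\Q^k})$ contains a transition sequence as in \eqref{eqn2_3_det_MPautomata} satisfying \eqref{eqn2_1_det_MPautomata}, together with a cycle of $\Acal^{\Q^k}$ reachable from $q_2'(L)$. As already noted before the corollary, this condition can be checked in time linear in the size of $\CCa(\Acal^{\Q^k})$ via a strongly connected component decomposition (cf.~\cite[Theorem 3]{Zhang2020DetPNFA}). Since $\CCa(\Acal^{\Q^k})$ is of polynomial size and computed in polynomial time, the entire verification runs in polynomial time.

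The only point requiring care---rather than a genuine obstacle---is verifying that the collapse of all four unobservable segments is complete, so that the weight-matching requirement truly reduces to the single equality $\mu(e_1)_{q_1q_3}=\mu(e_2)_{q_2q_4}$; this is immediate once $E_{uo}=\emptyset$ forces each $s_i=\ep$. Following Remark~\ref{rem4_det_MPautomata} one could first clear denominators to assume integer weights, but this is unnecessary here since equality of vectors in $\Q^k$ is already testable in polynomial time.
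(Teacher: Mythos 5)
Your proposal is correct and follows essentially the same route as the paper: the corollary is stated there as a direct consequence of the computation procedure for $\CCa(\Acal^{\Q^k})$, since with $E_{uo}=\emptyset$ the unobservable segments in Definition~\ref{def_CC_MPautomata} collapse, the EPL check degenerates to a single weight-equality test $\mu(e_1)_{q_1q_3}=\mu(e_2)_{q_2q_4}$, and the strong-detectability condition of Theorem~\ref{thm1_det_MPautomata} is then checked in time linear in the size of the (polynomially computable) self-composition via strongly connected components. You merely make explicit what the paper leaves as ``one directly sees,'' which is a faithful elaboration rather than a different argument.
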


\subsubsection{Computation of observer $\Acal_{obs}^{\Q^k}$ and verification of
weak detectability and weak periodic detectability}
\label{subsubsec:observer}

Also as shown in Remark~\ref{rem4_det_MPautomata}, without loss of generality we compute observer 
$\Acal_{obs}^{\Z^k}$ \eqref{eqn_observer_sim_MPautomata} of automaton $\Acal^{\Z^k}$.
To this end, we will repetitively use the EPL problem (Lemma~\ref{lem1_det_MPautomata}) and
the subcalss of Presburger arithmetic in Lemma~\ref{lem4_det_MPautomata} with at most $2$ quantifier
alternations. The final result in this part is that $\Acal^{\Z^k}_{obs}$ can be computed in 
$2$-$\EXPTIME$ in the size of $\Acal^{\Z^k}$. Since as a special case, $\Acal^{\Z}_{obs}$ is computed
in $2$-$\EXPTIME$ in the size of $\Acal^{\Z}$, to simplify the notation in the following statement,
we consider $\Acal^{\Z}$.
The initial state $x_0=\Mt(\Acal^{\Z},\ep)$ can be directly computed by starting at an initial state of $\Acal^{\Z}$
and passing through all possible unobservable, instantaneous paths. We then start from $x_0$,
find all reachable states step by step together with the corresponding transitions.

\begin{enumerate}[(i)]
	\item \label{item18_det_MPautomata} 
		Choose a state $x_1\in X$ that we have just computed. Choose $\s\in\Sig$. 
		For each $q\in x_1$, compute subautomaton $\Acal_{q}^{\Z}$ that consists of all paths 
		of the form 
		\begin{equation}\label{eqn3_det_MPautomata}
			q\xrightarrow[]{s}q^1\xrightarrow[]{e}q^2
		\end{equation}
		of $\Acal^{\Z}$ such that $s\in(E_{uo})^*$, $e\in E_{\s}$ (i.e., $e$ is observable and $\ell(e)=\s$).
		Denote the set of all such 
		$q^2$ by $x_2$. Note that one may have $x_2=\emptyset$, $|x_2|>|x_1|$, $|x_2|=|x_1|$, or $|x_2|<|x_1|$.


	\item\label{item19_det_MPautomata}
		Choose an arbitrary $\bar x_2=\{\bar q_1^2,\dots,\bar q_n^2\}\subset x_2$, where $|\bar x_2|=n$.
		For every $\bar q_i^2$, $i\in\llb 1,n\rrb$, \textbf{nondeterministically} choose a path
 		\begin{align}\label{eqn22_det_MPautomata}
			\bar q_i\xrightarrow[]{\bar s_i}\bar q_i^1\xrightarrow[]{\bar e_i}\bar q_i^2
		\end{align} as in \eqref{eqn3_det_MPautomata}, where $\bar q_i\in x_1$.

	\item\label{item9_det_MPautomata}
		Compute asynchronous product (which will be regarded as a weighted automaton and a weighted
		directed graph)
		\begin{equation}\label{eqn4_det_MPautomata}
			\Acal_{\bar q_{1}}^{\Z}\otimes\cdots\otimes\Acal_{\bar q_{n}}^{\Z},
		\end{equation}
		where the states of the asynchronous product/automaton are $(\hat q_1,\dots,\hat q_n)$,
		where $\hat q_{i}$ are states of $\Acal_{\bar q_i}^{\Z}$, $i\in\llb 1,n\rrb$; there is a transition
		\begin{align}\label{eqn23_det_MPautomata} 
			(\hat q_1,\dots,\hat q_n)
			\xrightarrow[]{(\hat e_1,\dots,\hat e_n)\left/\left(\mu(\hat e_1)_{\hat q_1\hat q_1'},\dots,
			\mu(\hat e_n)_{\hat q_n\hat q_n'}\right)\right.} 
			(\hat q_1',\dots,\hat q_n')
		\end{align}
		in automaton \eqref{eqn4_det_MPautomata} if and only if one of the following two conditions holds.
		\begin{enumerate}
			\item\label{item6_det_MPautomata}
				For some $i\in\llb 1,n\rrb$, $\hat q_i\xrightarrow[]{\hat e_i}\hat q_i'$
				is an unobservable transition of $\Acal_{\bar q_i}^{\Z}$, for all other $j\in\llb 1,n\rrb$, 
				$\hat e_j=\ep$ and $\hat q_j=\hat q_j'$. In this case, \eqref{eqn23_det_MPautomata} is called 
				\emph{unobservable}. 
			\item\label{item7_det_MPautomata}
				For every $i\in\llb 1,n\rrb$, $\hat q_i\xrightarrow[]{\hat e_i}\hat q'_i$ is an observable
				transition of $\Acal^{\Z}_{\bar q_i}$, $\hat q_i=\bar q_i^1$, $\hat e_i=\bar e_i$, and
				$\hat q'_i=\bar q_i^2$. In this case, \eqref{eqn23_det_MPautomata} is called \emph{observable}.
		\end{enumerate}
		Automaton \eqref{eqn4_det_MPautomata} has at most $|Q|^{n}$ states, at most $n|\Dt_{uo}||Q|^{n-1}$
		unobservable transitions (recall $\Dt_{uo}=\{(q,e,q')\in\Dt|\ell(e)=\ep,e\ne\ep\}$),
		and exactly $1$ observable transition. 

		If in automaton \eqref{eqn4_det_MPautomata},
		\begin{equation*}
			\tag{A}\label{quoteA_det_MPautomata}
			\parbox{\dimexpr\linewidth-4em}{%
			\strut
				there exists a path $\pi$ from the initial state $(\bar q_1,\dots,\bar q_n)$ to the state
				$(\bar q_1^2,\dots,\bar q_n^2)$ 
				such that only the last transition is observable and all components of the path
				have the same weight, which is denoted by $t\in\Z$,
			\strut
			}
		\end{equation*}
		then $\Mt(\Acal^{\Z},\ep|\bar x_2)\subset \Mt(\Acal^{\Z},(\s,t)|x_1)$ (defined in \eqref{ISE_det_MPautomata} 
		and \eqref{CSE_x_det_MPautomata}).
		We say a subset $\bar x_2\subset x_2$ is a \emph{pre-successor} of $x_1$ if
		for every $\bar q_i^2\in\bar x_2$, there exists a path $\bar q_i\xrightarrow[]{\bar s_i}\bar q_i^1
		\xrightarrow[]{\bar e_i}\bar q_i^2$ as in \eqref{eqn22_det_MPautomata} such that the corresponding
		automaton \eqref{eqn4_det_MPautomata} 
		satisfies condition \eqref{quoteA_det_MPautomata}. Then by definition, if $\bar x_2$
		is a pre-successor of $x_1$ and no other $\tilde x_2$ satisfying $\bar x_2\subsetneq \tilde x_2\subset x_2$
		is a pre-successor of $x_1$, then in $\Acal_{obs}^{\Z}$, there is a transition $x_1\xrightarrow[]{
		(\s,t)}\Mt(\Acal^{\Z},\ep|\bar x_2)$, and there is no transition from $x_1$ to $\Mt(\Acal^{\Z},\ep|\tilde x_2)$
		for any of such $\tilde x_2$ in case $\Mt(\Acal^{\Z},\ep|\bar x_2)\subsetneq \Mt(\Acal^{\Z},\ep|\tilde x_2)$,
		where $t$ is the weight of any components of a path $\pi$ 
		as in \eqref{quoteA_det_MPautomata}. Such special pre-successors $\bar x_2$ are called \emph{successors}
		of $x_1$.

		In order to find all successors of $x_1$, we first check whether $x_2$ is a pre-successor of $x_1$.
		If the answer is YES, then $x_2$ is a successor of $x_1$ and we obtain a transition of $\Acal_{obs}
		^{\Z}$ from $x_1$ to $\Mt(\Acal^{\Z},\ep|x_2)$, we also know no strict subset of $x_2$ will be a successor
		of $x_1$; otherwise, we check whether subsets of $x_2$ are successors of $x_1$ in a decreasing order of
		cardinality. Once we find a successor of $x_1$, we do not need to check the strict subsets of the successor 
		because its strict subsets will not be successors of $x_1$.

		A decision procedure for checking condition \eqref{quoteA_det_MPautomata} is 
		given in \eqref{item10_det_MPautomata}.
	\item\label{item10_det_MPautomata}
		In \eqref{eqn4_det_MPautomata}, we delete the observable transition, and
		replace the weight $$\left(\mu(\hat e_1)_{\hat q_1\hat q_1'},\dots,
		\mu(\hat e_n)_{\hat q_n\hat q_n'}\right)$$ (see \eqref{eqn23_det_MPautomata}) of each transition by
		$$\left(\mu(\hat e_2)_{\hat q_2\hat q_2'}-\mu(\hat e_1)_{\hat q_1\hat q_1'},\dots,
		\mu(\hat e_n)_{\hat q_n\hat q_n'}-\mu(\hat e_1)_{\hat q_1\hat q_1'}\right),$$
		then we obtain a new automaton 
		\begin{align}\label{eqn24_det_MPautomata}
			\reallywidehat{\Acal_{\bar q_{1}}^{\Z}\otimes\cdots\otimes\Acal_{\bar q_{n}}^{\Z}}.
		\end{align}
		Then we check whether
		\begin{equation*}
			\tag{B}\label{quoteB_det_MPautomata}
			\parbox{\dimexpr\linewidth-4em}{%
			\strut
				there is a path $\bar \pi$ from $(\bar q_1,\dots,\bar q_n)$ to $(\bar q_1^1,\dots,\bar q_n^1)$ in
				\eqref{eqn24_det_MPautomata} with weight
					$(\mu(\bar e_1)_{\bar q_1^1\bar q_1^2}-\mu(\bar e_2)_{\bar q_2^1\bar q_2^2},\dots,\mu(\bar e_1)
					_{\bar q_1^1\bar q_1^2}-\mu(\bar e_n)_{\bar q_n^1\bar q_n^2}).$
			\strut
			}
		\end{equation*}
		
		If the answer
		is YES, then $\bar\pi\xrightarrow[]{(\bar e_1,\dots,\bar e_n)}(\bar q_1^2,\dots,\bar q_n^2)$ is a path of 
		\eqref{eqn4_det_MPautomata} as shown in \eqref{quoteA_det_MPautomata}, i.e.,
		the weight $(w_1,\dots,w_n)$ of $\bar\pi$ in \eqref{eqn4_det_MPautomata} satisfies 
		$$(w_1,\dots,w_n)+(\mu(\bar e_1)_{\bar q_1^1\bar q_1^2},\dots,\mu(\bar e_n)_{\bar q_n^1\bar q_n^2})$$ 
		has equal components, and $\Mt(\Acal^{\Z}, \ep|\bar x_2)\subset\Mt(\Acal^{\Z},(\s,w_1+\mu(\bar e_1)
		_{\bar q_1^1\bar q_1^2})|x_1)$. That is, $\bar x_2$ is a pre-successor of $x_1$.
		If additionally $\bar x_2$ is a successor of $x_1$, then 
		\begin{align}\label{eqn30_det_MPautomata}
			x_1\xrightarrow[]{\left(\s,w_1+\mu(\bar e_1)_{\bar q_1^1\bar q_1^2}\right)}\Mt(\Acal^{\Z},\ep|\bar x_2)
		\end{align}
		is a transition of $\Acal^{\Z}_{obs}$. 

		\eqref{quoteB_det_MPautomata} is an EPL problem of weighted directed graph \eqref{eqn24_det_MPautomata},
		and hence can be
		checked in $\NP$ in the size of \eqref{eqn24_det_MPautomata} by Lemma~\ref{lem1_det_MPautomata}.
	\item\label{item23_det_MPautomata} 
		In the following, we check for every successor $\bar x_2$ of $x_1$, for every $\emptyset\ne
		\hat x_2\subsetneq\bar x_2$ satisfying $\Mt(\Acal^{\Z},\ep|\hat x_2)\subsetneq \Mt(\Acal^{\Z},\ep|\bar x_2)$, 
		whether
		\begin{equation*}
			\tag{C}\label{quoteC_det_MPautomata}
			\parbox{\dimexpr\linewidth-4em}{%
			\strut
				there is a transition from $x_1$ to $\Mt(\Acal^{\Z},\ep|\hat x_2)$ in $\Acal^{\Z}_{obs}$, 
			\strut
			}
		\end{equation*}
		which is equivalent to whether
		\begin{equation*}
			\tag{D}\label{quoteD_det_MPautomata}
			\parbox{\dimexpr\linewidth-4em}{%
			\strut
				there exists $t\in\Z$ such that for every $\hat q_i^2\in\hat x_2$, there exists
				a path $\hat q_i\xrightarrow[]{\hat s_i}\hat q_i^1\xrightarrow[]{\hat e_i}\hat q_i^2$ as 
				in \eqref{eqn22_det_MPautomata} with weight $t$, and for any $q\in x_1$ and any
				$q^2\in x_2\setminus \hat x_2$, there exists no path $q\xrightarrow[]{s}q^1\xrightarrow
				[]{e}q^2$ as in \eqref{eqn3_det_MPautomata} with weight $t$.
			\strut
			}
		\end{equation*}
		We will use the subclass of Presburger arithmetic as in Lemma~\ref{lem4_det_MPautomata} to check
		\eqref{quoteD_det_MPautomata}.

		Denote $\hat x_2=\{\hat q_1^2,\dots,\hat q_m^2\}$, where $|\hat x_2|=m$. For every $i\in\llb 1,m\rrb$,
		\textbf{nondeterministically} choose a path $\hat q_i\xrightarrow[]{\hat s_i}\hat q_i^1\xrightarrow[]{\hat e_i}
		\hat q_i^2$	as in \eqref{eqn22_det_MPautomata}, then $\hat q_i\in x_1$; consider a copy 
		of $\Acal^{\Z}_{\hat q_i}$
		as in \eqref{item18_det_MPautomata}, and use $\Acal^{\Z}_{\frakB_i}$ to denote the subautomaton 
		of $\Acal^{\Z}_{\hat q_i}$ obtained by deleting all observable transitions and all states that do not
		belong to any unobservable path from $\hat q_i$ to $\hat q_i^1$, hence $\Acal^{\Z}_{\frakB_i}$ consists
		of all unobservable paths from $\hat q_i$ to $\hat q_i^1$ in $\Acal^{\Z}_{\hat q_i}$. We call $\hat q_i$
		the \emph{source} and $\hat q_i^1$ the \emph{sink}.

		Denote 
		\begin{equation}\label{eqn38_det_MPautomata}
			\begin{split}
				\frakE =:  \{ (q,q^1,e,q^2)| &q\in x_1,e\in E_\s,q^2\in x_2\setminus\hat x_2,\\
				&\text{there exists a path }q\xrightarrow[]{s}q^1\xrightarrow[]{e}q^2\text{ as in 
				\eqref{eqn3_det_MPautomata}}.
			\end{split}
		\end{equation}
		Then $|\frakE|\le |x_1||\Dt_{\s}|(|x_2|-|\hat x_2|)$, where recall $\Dt_{\s}=\{(q,e,q')\in\Dt|\ell(e)=\s\}$.
		Rewrite $\frakE$ as $\{(q_j,q_j^1,e_j,q_j^2)|j\in\llb 1,|\frakE|\rrb\}$ and denote $(q_j,q_j^1,e_j,q_j^2)
		=:\frakE_j$, compute subautomaton $\Acal^{\Z}_{\frakE_j}$ from $\Acal^{\Z}_{q_j}$ by deleting 
		all observable transitions and all states that do not belong to any unobservable path from $q_j$ to $q_j^1$,
		hence $\Acal^{\Z}_{\frakE_j}$ consists of all unobservable paths from $q_j$ to $q_j^1$ in $\Acal^{\Z}
		_{q_j}$. We call $q_j$ the \emph{source} and $q_j^1$ the \emph{sink}.

		Then \eqref{quoteD_det_MPautomata} is satisfied if and only if the following \eqref{quoteE_det_MPautomata} 
		is satisfied.
		\begin{equation*}
			\tag{E}\label{quoteE_det_MPautomata}
			\parbox{\dimexpr\linewidth-4em}{%
			\strut
				There exists $t\in\Z$ such that in $\Acal^{\Z}_{\frakB_i}$, $i\in\llb 1,m\rrb$, 
				there exists a path from $\hat q_i$ to $\hat q_i^1$ with weight $t-\mu(\hat e_i)
				_{\hat q_i^1\hat q_i^2}$; and in $\Acal^{\Z}_{\frakE_j}$, $j\in\llb 1,|\frakE|\rrb$,
				there exists no path from $q_j$ to $q_j^1$ with weight $t-\mu(e_j)_{q_j^1q_j^2}$.
			\strut
			}
		\end{equation*}

		If \eqref{quoteE_det_MPautomata} is satisfied, then there is a transition $x_1
		\xrightarrow[]{(\s,t)}\Mt(\Acal^{\Z},\ep|\hat x_2)$ in $\Acal^{\Z}_{obs}$.

		We will equivalently transform \eqref{quoteE_det_MPautomata} to a Presburger formula as in 
		Lemma~\ref{lem4_det_MPautomata} and check satisfiability of the formula. 
		Rewrite $\Acal^{\Z}_{\frakB_i}$ as a weighted directed graph $G_{\frakB_i}
		=(\Z,V_{\frakB_i},A_{\frakB_i})$, $i\in\llb 1,m\rrb$; also rewrite $\Acal^{\Z}_{\frakE_j}$ as 
		a weighted directed graph $G_{\frakE_j}=(\Z,V_{\frakE_j},A_{\frakE_j})$, $j\in\llb 1,|\frakE|\rrb$.
		Then \eqref{quoteE_det_MPautomata} is equivalent to the following \eqref{quoteF_det_MPautomata}.
		\begin{equation*}
			\tag{F}\label{quoteF_det_MPautomata}
			\parbox{\dimexpr\linewidth-4em}{%
			\strut
				There exists $t\in\Z$ such that in graph $G_{\frakB_i}$, $i\in\llb 1,m\rrb$, 
				there exists a path $\hat\pi_i$ from $\hat q_i$ to $\hat q_i^1$ with weight $t-\mu(\hat e_i)
				_{\hat q_i^1\hat q_i^2}$; and in graph $G_{\frakE_j}$, $j\in\llb 1,|\frakE|\rrb$,
				there exists no path from $q_j$ to $q_j^1$ with weight $t-\mu(e_j)_{q_j^1q_j^2}$.
			\strut
			}
		\end{equation*}

		In graph $G_{\frakB_i}$, $i\in\llb 1,m\rrb$, for each edge $a_i\in A_{\frakB_i}$, denote its
		weight by $w_{a_i}$ and define a variable $y_{a_i}\in\N$ which indicates how many times $a_i$ is included
		in a path. In graph $G_{\frakE_j}$, $j\in\llb 1,|\frakE|\rrb$, for each edge $a_j\in A_{\frakE_j}$, also
		denote its weight by $w_{a_j}$ and define a variable $y_{a_j}\in\N$ as above. In these graphs, an edge
		$a$ such that $y_a>0$ is called \emph{realizable}, a path is called \emph{realizable} if all its edges are 
		realizable. We furthermore have \eqref{quoteF_det_MPautomata} is satisfied if and only if
		\begin{subequations}\label{eqn39_det_MPautomata}
			\begin{align}
				& ( \exists_{a_1\in A_{\frakB_1}} y_{a_1}\in \N ) \dots 
				( \exists_{a_m\in A_{\frakB_m}} y_{a_m}\in \N ) \label{eqn39_a_det_MPautomata} \\
				& ( \forall_{a_1\in A_{\frakE_1}} y_{a_1}\in \N ) \dots 
				( \forall_{a_{|\frakE|}\in A_{\frakE_{|\frakE|}}} y_{a_{|\frakE|}}\in \N ) \label{eqn39_b_det_MPautomata} \\
				& \left[
				\bigwedge_{i\in\llb 1,m\rrb} \bigwedge_{v_i\in V_{\frakB_i}\setminus\{\hat q_i,\hat q_i^1\}}
				\left( \sum_{\substack{a_i\in A_{\frakB_i}\\ \tail(a_i)=v_i}}y_{a_i} =
				\sum_{\substack{a_i\in A_{\frakB_i}\\ \head(a_i)=v_i}}y_{a_i}\right) \wedge
				\right.\label{eqn39_c_det_MPautomata}\\
				& \bigwedge_{i\in\llb 1,m\rrb} 
				\left( \sum_{\substack{a_i\in A_{\frakB_i}\\ \tail(a_i)=\hat q_i }}y_{a_i} =
				\sum_{\substack{a_i\in A_{\frakB_i}\\ \head(a_i)=\hat q_i}}y_{a_i}+1\right) \wedge\label{eqn39_d_det_MPautomata}\\
				& \bigwedge_{i\in\llb 1,m\rrb} 
				\left( \sum_{\substack{a_i\in A_{\frakB_i}\\ \tail(a_i)=\hat q_i^1}}y_{a_i} + 1 =
				\sum_{\substack{a_i\in A_{\frakB_i}\\ \head(a_i)=\hat q_i^1}}y_{a_i}\right) \wedge \label{eqn39_e_det_MPautomata} \\ 
				& \bigwedge_{i\in\llb 1,m \rrb} \bigwedge_{a_i\in A_{\frakB_i}}
				\left( \begin{array}{c} y_{a_i} > 0 \wedge \\ \head(a_i)\ne\hat q_i^1 \end{array} \implies
				\begin{array}{c} \exists \text{ a realizable path} \\\text{from }\head(a_i)\text{ to }\hat q_i^1 
				\end{array}
				\right) \wedge \label{eqn39_f_det_MPautomata}\\
				& \bigwedge_{i\in\llb 2,m \rrb}
				\left( \sum_{a_1\in A_{\frakB_1}}w_{a_1}y_{a_1} + \mu(\hat e_1)_{\hat q_1^1\hat q_1^2}
				= \sum_{a_i\in A_{\frakB_i}}w_{a_i}y_{a_i} + \mu(\hat e_i)_{\hat q_i^1\hat q_i^2} \right)\wedge\label{eqn39_g_det_MPautomata}\\
				& \bigwedge_{j\in\llb 1,|\frakE|\rrb} \left(  \bigwedge_{v_j\in V_{\frakE_j}\setminus\{q_j,q_j^1\}}
				\left( \sum_{\substack{a_j\in A_{\frakE_j}\\ \tail(a_j)=v_j}}y_{a_j} =
				\sum_{\substack{a_j\in A_{\frakE_j}\\ \head(a_j)=v_j}}y_{a_j} \right) \wedge \right. \label{eqn39_h_det_MPautomata} \\
				& 
				\left( \sum_{\substack{a_j\in A_{\frakE_j}\\ \tail(a_j)=q_j }}y_{a_j} =
				\sum_{\substack{a_j\in A_{\frakE_j}\\ \head(a_j)=q_j }}y_{a_j}+1 \right) \wedge \label{eqn39_i_det_MPautomata}\\ 
				& 
				\left( \sum_{\substack{a_j\in A_{\frakE_j}\\ \tail(a_j)=q_j^1 }}y_{a_j} + 1 =
				\sum_{\substack{a_j\in A_{\frakE_j}\\ \head(a_j)=q_j^1 }}y_{a_j}\right) \wedge\label{eqn39_j_det_MPautomata} \\
				& 
				\bigwedge_{a_j\in A_{\frakE_j}}
				\left( 
				\begin{array}{c} y_{a_j} > 0 \wedge \\ \head(a_j)\ne q_j^1 \end{array} \implies
				\begin{array}{c} \exists \text{ a realizable path} \\\text{from }\head(a_j)\text{ to }q_j^1 
				\end{array}
				\right) \label{eqn39_k_det_MPautomata} \\
				&\implies \left. 
				\left. \sum_{a_1\in A_{\frakB_1}}w_{a_1}y_{a_1} + \mu(\hat e_1)_{\hat q_1^1\hat q_1^2}
				\ne \sum_{a_j\in A_{\frakE_j}}w_{a_j}y_{a_j} + \mu(e_j)_{q_j^1q_j^2} \right) \label{eqn39_l_det_MPautomata}
				\right]
			\end{align}
		\end{subequations}
		is satisfied.
		In \eqref{eqn39_a_det_MPautomata}, $( \exists_{a_1\in A_{\frakB_1}} y_{a_1}\in \N ) $ is 
		short for $(\exists y_{a_1^1}\in\N)\dots(\exists y_{a_1^{\alpha}}\in\N)$, where $a_1^1,\dots,a_1^{\alpha}$
		is an arbitrary arrangement of the edges of $A_{\frakB_1}$. The other expressions in 
		\eqref{eqn39_a_det_MPautomata} and the expressions in \eqref{eqn39_b_det_MPautomata} have analogous meanings.
		Hence \eqref{eqn39_a_det_MPautomata} contains $\sum_{i=1}^{m}|A_{\frakB_i}|$ existential
		quantifiers and \eqref{eqn39_b_det_MPautomata} contains $\sum_{j=1}^{|\frakE|}|A_{\frakE_j}|$
		universal quantifiers. One can see \eqref{eqn39_c_det_MPautomata},
		\eqref{eqn39_d_det_MPautomata}, and \eqref{eqn39_e_det_MPautomata} are all satisfied if and only if
		in every graph $G_{\frakB_i}$,
		$i\in\llb 1,m\rrb$, all edges $a$ satisfying $y_a>0$ form one path from $\hat q_i$ to $\hat q_i^1$ and
		possibly several disjoint cycles, where these cycles do not intersect with the path. Then 
		\eqref{eqn39_c_det_MPautomata}, \eqref{eqn39_d_det_MPautomata}, \eqref{eqn39_e_det_MPautomata}, and
		\eqref{eqn39_f_det_MPautomata} are all satisfied if and only if in every graph $G_{\frakB_i}$,
		$i\in\llb 1,m\rrb$, all edges $a$ satisfying $y_a>0$ form exactly one path $\hat\pi_i$ from
		$\hat q_i$ to $\hat q_i^1$; then together with \eqref{eqn39_g_det_MPautomata}, these paths additionally
		satisfy $\WEIGHT_{\hat\pi_k}+\mu(\hat e_k)_{\hat q_k^1\hat q_k^2}=\WEIGHT_{\hat\pi_l}+\mu(\hat e_l)
		_{\hat q_l^1\hat q_l^2}=:t$ for all different $k,l$ in $\llb 1,m \rrb$ (note that in
		\eqref{eqn39_g_det_MPautomata},
		$\sum_{a_i\in A_{\frakB_i}}w_{a_i}y_{a_i}$ is the weight $\WEIGHT_{\hat \pi_i}$ of path $\hat \pi_i$,
		$i\in\llb 1,m \rrb$).
		Analogously, \eqref{eqn39_h_det_MPautomata}, \eqref{eqn39_i_det_MPautomata}, \eqref{eqn39_j_det_MPautomata}, 
		\eqref{eqn39_k_det_MPautomata}, and \eqref{eqn39_l_det_MPautomata} are all satisfied if and only if
		in every graph $G_{\frakE_j}$, $j\in\llb 1,|\frakE|\rrb$, if all edges $a$ satisfying $y_a>0$ form
		exactly one path $\pi_j$ from
		$q_j$ to $q_j^1$ then $\WEIGHT_{\pi_j}+\mu(e_j)_{q_j^1q_j^2}$ is not equal to $t$. Hence
		\begin{equation*}
			\parbox{\dimexpr\linewidth-4em}{%
			\strut
				the whole \eqref{eqn39_det_MPautomata} is satisfied only if there exists a transition $x_1
				\xrightarrow[]{(\s,t)}\Mt(\Acal^{\Z},\ep|\hat x_2)$ in $\Acal^{\Z}_{obs}$.
			\strut
			}
		\end{equation*}

	\item\label{item24_det_MPautomata} 
		Now we specify \eqref{eqn39_f_det_MPautomata} and \eqref{eqn39_k_det_MPautomata} to make the whole 
		\eqref{eqn39_det_MPautomata} become a Presburger formula. 

		Consider an arbitrary $i\in\llb 1,m \rrb$ and an arbitrary edge $a_i\in A_{\frakB_i}$
		such that $\head(a_i)\ne \hat q_i^1$. Compute a subgraph $G_{\frakB_i}^{a_i}=(\Z,V_{\frakB_i}^{a_i},
		A_{\frakB_i}^{a_i})$ of graph $G_{\frakB_i}$ that consists of $a_i$ and all 
		paths from $\head(a_i)$ to $\hat q_i^1$ in time polynomial in the size of $G_{\frakB_i}$.
		We next show that \eqref{eqn39_f_det_MPautomata} can be equivalently specified as
		\begin{align}\label{eqn41_det_MPautomata} 
			\bigwedge_{i\in\llb 1,m \rrb} \bigwedge_{a_i\in A_{\frakB_i}}
			\left( \begin{array}{c} y_{a_i} > 0\wedge \\ \head(a_i)\ne\hat q_i^1\end{array}
				\implies \sum_{ \substack{a\in A_{\frakB_i}^{a_i}\\ \tail(a)=\head(a_i)
			}}y_{a}>0 \right)\wedge.
		\end{align}

		For every $l\in\llb 1,m \rrb$, if $G_{\frakB_l}$ satisfies \eqref{eqn39_c_det_MPautomata},
		\eqref{eqn39_d_det_MPautomata}, and \eqref{eqn39_e_det_MPautomata} (after $\bigwedge_{i\in\llb 1,m \rrb}$ 
		was deleted), then for every $a_l\in A_{\frakB_l}$, subgraph $G_{\frakB_l}^{a_l}$
		also satisfies \eqref{eqn39_c_det_MPautomata}, \eqref{eqn39_d_det_MPautomata}, and 
		\eqref{eqn39_e_det_MPautomata} if $\tail(a_l)$ and
		$\hat q_l^1$ are regarded as the source and the sink, respectively.
		Before proceeding, we give an illustrative example. 
		\begin{example}
			Consider the directed graph in Fig.~\ref{fig20_det_MPautomata} showing a graph $G_{\frakB_i}$ for
			some $i\in\llb 1,m \rrb$.
			\begin{figure}[!htbp]
                \centering
                \begin{tikzpicture}
                	[>=stealth',shorten >=1pt,thick,auto,node distance=2.0 cm, scale = 1.0, transform shape,
                	->,>=stealth,inner sep=2pt, initial text = 0]
                
                	\tikzstyle{emptynode}=[inner sep=0,outer sep=0]
                
                	\node[state] (0) {$\hat q_i$};
                	\node[state] (1) [below of = 0] {1};
					\node[emptynode] (empty1) [below of = 1] {};
                	\node[state] (2) [left of = empty1] {2};
                	\node[state] (3) [right of = empty1] {3};
					\node[emptynode] (empty2) [left of = 2] {};
					\node[state] (4) [below of = empty2] {4};
					\node[state] (5) [right of = 4] {5};
					\node[state] (6) [right of = 5] {6};
					\node[state] (7) [right of = 6] {7};
					\node[state] (8) [right of = 7] {8};
					\node[state] (9) [below of = 6] {9};
					\node[emptynode] (empty3) [below of = 4] {};
                	\node[state] (qi1) [left of = empty3] {$\hat q_i^1$};

                	\path [->]
					(0) edge node {$b_0,1$} (1)
					(1) edge node [sloped, above] {$b_3,3$} (2)
					(1) edge node [sloped, above] {$b_4,0$} (3)
					(4) edge node [sloped, above] {$b_5,1$} (2)
					(5) edge node [sloped, above]  {$b_{9},0$} (4)
					(2) edge node [sloped, above]  {$b_{6},4$} (5)
					(5) edge node [sloped, above]  {$b_{10},4$} (6)
					(6) edge node [sloped, above]  {$b_{7},4$} (1)
					(3) edge node [sloped, above]  {$b_{8},0$} (7)
					(8) edge node [sloped, above]  {$b_{13},1$} (7)
					(1) edge [bend left] node [sloped, above] {$b_2,1$} (8)
					(1) edge [bend right] node [sloped, above] {$b_1,1$} (4)
					(5) edge node [sloped, above]  {$b_{11},0$} (9)
					(7) edge node [sloped, above]  {$b_{12},1$} (9)
					(9) edge node [sloped, above]  {$b_{14},1$} (qi1)
					;

                \end{tikzpicture}
				\caption{A directed graph illustrating a graph $G_{\frakB_i}$, where $b_k$, $0\le k\le 14$,
				denote the corresponding edges, the number after each $b_k$ denotes the valuation of $y_{b_k}$.}
				\label{fig20_det_MPautomata} 
           \end{figure}
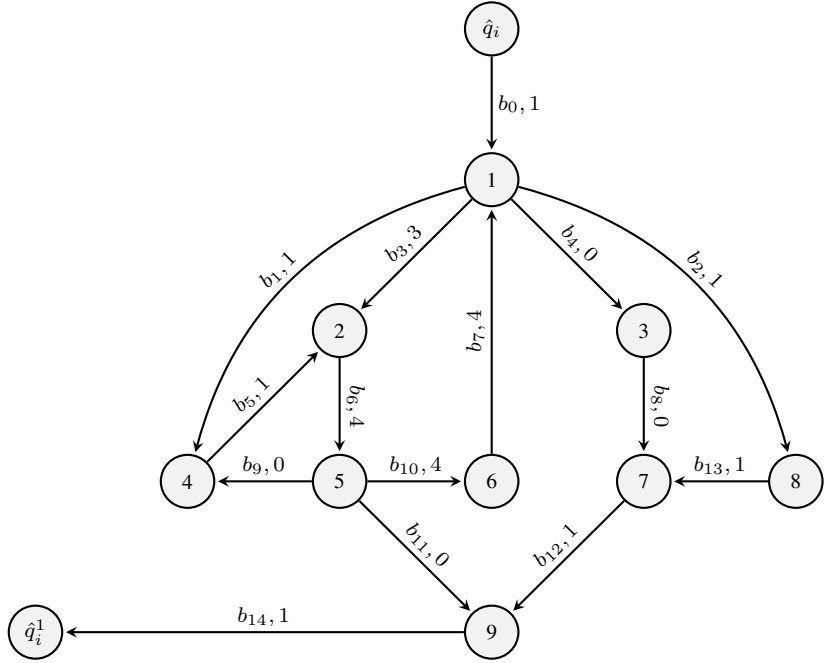
		   Note that the graph contains a cycle $1\rightarrow 2\rightarrow 5\rightarrow 6\rightarrow 1$. 
		   One easily has $G_{\frakB_i}^{b_0}=G_{\frakB_i}$. In subgraph
		   $G_{\frakB_i}^{b_0}$, one has $$\sum_{ \substack{a\in A_{\frakB_i}^{b_0}\\ \tail(a)=\head(b_0)\\
		   }} y_{a}=y_{b_1}+y_{b_3}+y_{b_4}+y_{b_2}=5>0.$$
		   Subgraph $G_{\frakB_i}^{b_1}$ can be obtained from $G_{\frakB_i}$ by only deleting $b_0$.
		   One also has $$\sum_{ \substack{a\in A_{\frakB_i}^{b_1}\\ \tail(a)=\head(b_1)\\
		   }} y_{a}=y_{b_5}=1>0.$$
		   After checking all other $b_k$ such that $y_{b_k}>0$ ($k=2,3,5,6,7,10,12,13,14$), we have
		   graph $G_{\frakB_i}$ satisfies formula \eqref{eqn41_det_MPautomata} (after $\bigwedge_{i\in\llb 1,m \rrb}$
		   was deleted). In addition, graph $G_{\frakB_i}$ satisfies \eqref{eqn39_c_det_MPautomata}, 
		   \eqref{eqn39_d_det_MPautomata}, and \eqref{eqn39_e_det_MPautomata} (after $\bigwedge_{i\in\llb 1,
		   m \rrb}$ was deleted). Moreover, for each $k$, $1\le k \le 14$, subgraph $G_{\frakB_i}^{b_k}$
		   satisfies \eqref{eqn39_c_det_MPautomata}, \eqref{eqn39_d_det_MPautomata}, and \eqref{eqn39_e_det_MPautomata}
		   if $\tail(b_k)$ and $\hat q_i^1$ are regarded as the source and the sink, respectively.
		\end{example}
		
		Now we come back to the poof of the property that \eqref{eqn39_f_det_MPautomata} can be equivalently
		specified by \eqref{eqn41_det_MPautomata}. Equivalently, we need to prove that \eqref{eqn39_c_det_MPautomata}, 
		\eqref{eqn39_d_det_MPautomata}, \eqref{eqn39_e_det_MPautomata}, and \eqref{eqn39_f_det_MPautomata}
		are all satisfied if and only if \eqref{eqn39_c_det_MPautomata}, 
		\eqref{eqn39_d_det_MPautomata}, \eqref{eqn39_e_det_MPautomata}, and \eqref{eqn41_det_MPautomata}
		are all satisfied. The ``only if'' part trivially holds. 
		To prove the ``{\bf if}'' part, we need to, for every $i\in\llb 1,m \rrb$, for
		every edge $a_i\in A_{\frakB_i}$ such that $y_{a_i}>0$ and $\head(a_i)\ne\hat q_i^1$,
		find a realizable path $\hat\pi_i$ from $\head(a_i)$
		to $\hat q_i^1$. We \textbf{claim} that for a graph
		$G_{\frakB_i}$ satisfying \eqref{eqn39_c_det_MPautomata}, \eqref{eqn39_d_det_MPautomata},
		\eqref{eqn39_e_det_MPautomata}, and \eqref{eqn41_det_MPautomata} (after $\bigwedge _{i\in\llb 1,m \rrb}$
		was deleted), $i\in\llb 1,m \rrb$, in every realizable cycle, if we subtract $y_a$ for each edge $a$
		in the cycle by $1$,
		then the rest of $G_{\frakB_i}$ (denoted by $G_{\frakB_i}'$) still satisfies \eqref{eqn39_c_det_MPautomata},
		\eqref{eqn39_d_det_MPautomata}, \eqref{eqn39_e_det_MPautomata}, and \eqref{eqn41_det_MPautomata}.
		One directly sees that if $G_{\frakB_i}$ satisfies \eqref{eqn39_c_det_MPautomata},
		\eqref{eqn39_d_det_MPautomata}, \eqref{eqn39_e_det_MPautomata}, then after doing such a subtraction,
		$G_{\frakB_i}'$ still satisfies \eqref{eqn39_c_det_MPautomata},
		\eqref{eqn39_d_det_MPautomata}, \eqref{eqn39_e_det_MPautomata}.
		As one can see, for every realizable cycle in $G_{\frakB_i}$, for every edge $a_i\in A_{\frakB_i}$, 
		either the cycle is contained in subgraph $G_{\frakB_i}^{a_i}$ or it does not intersect with 
		$G_{\frakB_i}^{a_i}$.
		
		Fix a graph $G_{\frakB_i}$, where $i\in \llb 1,m \rrb$, that
		satisfies \eqref{eqn39_c_det_MPautomata}, \eqref{eqn39_d_det_MPautomata}, \eqref{eqn39_e_det_MPautomata},
		and \eqref{eqn41_det_MPautomata} (after $\bigwedge_{i\in\llb 1,m \rrb}$ was deleted).
		Fix a realizable simple cycle $C$ and an edge $a_i\in A_{\frakB_i}^{a_i}$ such that $y_{a_i}>0$ and
		$\head(a_i)\ne\hat q_i^1$. Without loss of generality, assume $C$ is contained in $G_{\frakB_i}^{a_i}$.
		Note that $\tail(a_i)$ is the source of subgraph $G_{\frakB_i}^{a_i}$, so it satisfies
		\begin{align}\label{eqn44_det_MPautomata}
			\sum_{\substack{a\in A_{\frakB_i}^{a_i}\\ \head(a)=\tail(a_i)}} y_a+1 =
			\sum_{\substack{a\in A_{\frakB_i}^{a_i}\\ \tail(a)=\tail(a_i)}} y_a.
		\end{align}
		
		\begin{itemize}
			\item If $a_i$ does not intersect with $C$, then after doing the subtraction (on $C$), $a_i$ trivially
				satisfies \eqref{eqn41_det_MPautomata}.
			\item If $a_i$ intersects with $C$ at $\head(a_i)$ but $a_i$ is not contained in $C$, then after
				doing the subtraction, $a_i$ still satisfies \eqref{eqn41_det_MPautomata}, because 
				$\head(a_i)$ satisfies \eqref{eqn39_c_det_MPautomata}.
			\item If $a_i$ is contained in $C$ and $a_i$ is a self-loop, then $C$ coincides with $a_i$. After doing
				the subtraction we may have $y_{a_i}>0$ or $y_{a_i}=0$. In the former case, 
				$a_i$ naturally satisfies \eqref{eqn41_det_MPautomata}, in the latter case, $a_i$ satisfies
				\eqref{eqn41_det_MPautomata} vacuously, and on the other hand, because $\head(a_i)$
				is the source (satisfying \eqref{eqn44_det_MPautomata}), there is $a\in A_{\frakB_i}^{a_i}$ 
				such that $\tail(a)=\head(a_i)$ and $y_a>0$.
			\item If $a_i$ is contained in $C$ and $a_i$ is not a self-loop, after doing the subtraction,
				if $y_{a_i}>0$ then $a_i$ still satisfies \eqref{eqn41_det_MPautomata} because $\head(a_i)$ 
				satisfies \eqref{eqn39_c_det_MPautomata}, if $y_{a_i}=0$, then $a_i$ satisfies 
				\eqref{eqn41_det_MPautomata} vacuously, and there is $a\in A_{\frakB_i}^{a_i}$
				such that $\tail(a)=\tail(a_i)$ and $y_a>0$, because $\tail(a_i)$ is the source
				(satisfying \eqref{eqn44_det_MPautomata}).
		\end{itemize}
		
		We have proved the above {\bf claim}.
		We next repetitively trim $G_{\frakB_i}^{a_i}$ by doing the above subtractions on 
		realizable, simple cycles until we finally obtain a subgraph $\trim(G_{\frakB_i}^{a_i})$ in which
		there is no realized cycle. By the claim, we still have $\trim(G_{\frakB_i}^{a_i})$ satisfies 
		\eqref{eqn39_c_det_MPautomata},	\eqref{eqn39_d_det_MPautomata}, \eqref{eqn39_e_det_MPautomata}, and
		\eqref{eqn41_det_MPautomata}. In addition, we have either (1) there is an edge $b_1$ in 
		$\trim(G_{\frakB_i}^{a_i})$ such that $\tail(b_1)=\head(a_i)$ and $y_{b_1}>0$, or, (2) 
		there is an edge $b_2$ in $\trim(G_{\frakB_i}^{a_i})$ such that $\tail(b_2)=\tail(a_i)$, $y_{b_2}>0$,
		and in the original graph $G_{\frakB_i}^{a_i}$ there is a realizable path from $\head(a_i)$ to
		$\tail(a_i)$ (hence there is a realizable path from $\head(a_i)$ to $\head(b_2)$). Since in
		$\trim(G_{\frakB_i}^{a_i})$ there is no realizable cycle and $\trim(G_{\frakB_i}^{a_i})$ satisfies 
		\eqref{eqn41_det_MPautomata}, if we start from $\head(b_1)$ or $\head(b_2)$ and traverse
		realizable edges one by one, finally we will reach $\hat q_i^1$, thus we find a realizable path from
		$\head(a_i)$ to $\hat q_i^1$ in $G_{\frakB_i}^{a_i}$. The proof of the ``{\bf if}'' part has been finished.

		\begin{example}\label{exam9_MPautomata}
			Reconsider the subgraph $G_{\frakB_i}^{b_0}$ of graph $G_{\frakB_i}$ shown in
			Fig.~\ref{fig20_det_MPautomata}.
			After doing the above subtractions on its realizable cycles, we finally obtain 
			$\trim(G_{\frakB_i}^{b_0})$ in Fig.~\ref{fig21_det_MPautomata}.
			\begin{figure}[!htbp]
                \centering
                \begin{tikzpicture}
                	[>=stealth',shorten >=1pt,thick,auto,node distance=2.0 cm, scale = 1.0, transform shape,
                	->,>=stealth,inner sep=2pt, initial text = 0]
                
                	\tikzstyle{emptynode}=[inner sep=0,outer sep=0]
                
                	\node[state] (0) {$\hat q_i$};
                	\node[state] (1) [below of = 0] {1};
					\node[emptynode] (empty1) [below of = 1] {};
                	\node[state] (2) [left of = empty1] {2};
                	\node[state] (3) [right of = empty1] {3};
					\node[emptynode] (empty2) [left of = 2] {};
					\node[state] (4) [below of = empty2] {4};
					\node[state] (5) [right of = 4] {5};
					\node[state] (6) [right of = 5] {6};
					\node[state] (7) [right of = 6] {7};
					\node[state] (8) [right of = 7] {8};
					\node[state] (9) [below of = 6] {9};
					\node[emptynode] (empty3) [below of = 4] {};
                	\node[state] (qi1) [left of = empty3] {$\hat q_i^1$};

                	\path [->]
					(0) edge node {$b_0,1$} (1)
					(1) edge node [sloped, above] {$b_3,0$} (2)
					(1) edge node [sloped, above] {$b_4,0$} (3)
					(4) edge node [sloped, above] {$b_5,1$} (2)
					(5) edge node [sloped, above]  {$b_{9},0$} (4)
					(2) edge node [sloped, above]  {$b_{6},1$} (5)
					(5) edge node [sloped, above]  {$b_{10},1$} (6)
					(6) edge node [sloped, above]  {$b_{7},1$} (1)
					(3) edge node [sloped, above]  {$b_{8},0$} (7)
					(8) edge node [sloped, above]  {$b_{13},1$} (7)
					(1) edge [bend left] node [sloped, above] {$b_2,1$} (8)
					(1) edge [bend right] node [sloped, above] {$b_1,1$} (4)
					(5) edge node [sloped, above]  {$b_{11},0$} (9)
					(7) edge node [sloped, above]  {$b_{12},1$} (9)
					(9) edge node [sloped, above]  {$b_{14},1$} (qi1)
					;

                \end{tikzpicture}
				\caption{$\trim(G_{\frakB_i}^{b_0})$.}
				\label{fig21_det_MPautomata} 
           \end{figure}

		We find two realizable paths from $1$ (i.e., $\head(b_0)$) to $\hat q_i^1$. They are
		\begin{align*}
			&b_1b_5b_6b_{10}b_7b_2b_{13}b_{12}b_{14},\\
			&b_2b_{13}b_{12}b_{14}.
		\end{align*}
		It is easy to see that $\trim(G_{\frakB_i}^{b_0})$ satisfies \eqref{eqn39_c_det_MPautomata},
		\eqref{eqn39_d_det_MPautomata}, \eqref{eqn39_e_det_MPautomata}, and \eqref{eqn41_det_MPautomata}
		(after $\bigwedge_{i\in\llb 1,m \rrb}$ was deleted).
		\end{example}
		
		Now consider an arbitrary $j\in\llb 1,|\frakE| \rrb$ and an arbitrary edge $a_j\in A_{\frakE_j}$
		such that $y_{a_j}>0$ and $\head(a_j)\ne q_j^1$.
		Compute a subgraph $G_{\frakE_j}^{a_j}$ of graph $G_{\frakE_j}$ that consists of 
		$a_j$ and all 
		paths from $\head(a_j)$ to $q_j^1$ in time polynomial in the size of $G_{\frakE_j}$. Analogously, we 
		have \eqref{eqn39_h_det_MPautomata}, \eqref{eqn39_i_det_MPautomata}, \eqref{eqn39_j_det_MPautomata},
		and \eqref{eqn39_k_det_MPautomata} (after $\bigwedge_{j\in\llb 1,|\frakE| \rrb}$ was deleted)
		are all satisfied if and only if
		\eqref{eqn39_h_det_MPautomata}, \eqref{eqn39_i_det_MPautomata}, \eqref{eqn39_j_det_MPautomata}, and
		\begin{align}\label{eqn42_det_MPautomata} 
			\bigwedge_{a_j\in A_{\frakE_j}}
			\left( \begin{array}{c} y_{a_j} > 0 \wedge \\ \head(a_j)\ne q_j^1 \end{array} \implies
				\sum_{ \substack{a\in A_{\frakE_j}^{a_j}\\ \tail(a)=\head(a_j)
			}} y_{a}>0 \right) 
		\end{align}
		are all satisfied.

	\item\label{item25_det_MPautomata} 
		Now we have obtained a Presburger formula \eqref{eqn39_det_MPautomata} (after \eqref{eqn39_f_det_MPautomata} 
		and \eqref{eqn39_k_det_MPautomata} were replaced by \eqref{eqn41_det_MPautomata} and
		\eqref{eqn42_det_MPautomata}, respectively) in the form of \eqref{eqn34_det_MPautomata}. Next we show 
		the formula has length polynomial in the size of $\Acal^{\Z}$. By \eqref{eqn39_a_det_MPautomata} and
		\eqref{eqn39_b_det_MPautomata}, the formula has at most $2$ quantifier alternations. It has $2$
		quantifier alternations if and only if $\sum_{i=1}^{m}|A_{\frakB_i}|>0$ and
		$\sum_{j=1}^{|\frakE|}|A_{\frakE_j}|>0$. In \eqref{eqn39_c_det_MPautomata}, \eqref{eqn39_d_det_MPautomata}, 
		and \eqref{eqn39_e_det_MPautomata}, the number of equations is bounded from above by $\sum_{i=1}^m|V_{\frakB_i}|
		\le |\hat x_2||Q|\le |Q|^2$ (see \eqref{item23_det_MPautomata}); in \eqref{eqn39_h_det_MPautomata},
		\eqref{eqn39_i_det_MPautomata}, and \eqref{eqn39_j_det_MPautomata}, the number of equations has an upper 
		bound $\sum_{i=1}^{|\frakE|}|V_{\frakE_j}| \le |\frakE| |Q| \le |x_1||\Dt_\s|(|x_2|-|\hat x_2|) |Q|
		\le |Q|^3|\Dt|$ (also see \eqref{item23_det_MPautomata}); in \eqref{eqn39_g_det_MPautomata}, the number
		of equations is bounded from above by $m-1\le |Q|$; in \eqref{eqn39_l_det_MPautomata}, the number of
		inequalities
		is bounded from above by $|\frakE|\le |x_1||\Dt_\s|(|x_2|-|\hat x_2|) \le |Q|^2|\Dt|$; in 
		\eqref{eqn41_det_MPautomata}, the number of inequalities is no greater than $2\sum_{i=1}^{m}|A_{\frakB_i}|
		\le 2m|\Dt|\le 2|Q||\Dt|$; in \eqref{eqn42_det_MPautomata}, the number of inequalities is no greater than
		$2\sum_{j=1}^{|\frakE|}|A_{\frakE_j}|\le 2|\frakE||\Dt|\le 2|x_1||\Dt_\s|(|x_2|-|\hat x_2|) |\Dt|
		\le 2|Q|^2|\Dt|^2$. All these equations and inequalities have length polynomial in the size of $\Acal^{\Z}$.
		Hence {\bf\eqref{eqn39_det_MPautomata} has length polynomial in the size of $\Acal^{\Z}$}.

		\begin{itemize}
				\item
		When \eqref{eqn39_det_MPautomata} has $2$ quantifier alternations,
		by Lemma~\ref{lem4_det_MPautomata}, in the worst case we need to check all 
		$\sum_{i=1}^{m}|A_{\frakB_i}|+\sum_{j=1}^{|\frakE|}|A_{\frakE_j}|$ variables 
		between $0$ and $w=2^{cr^{(s+3)^3}}$ to make sure whether
		\eqref{eqn39_det_MPautomata} is satisfied, where $c$ is a constant, $r$ is the length of 
		\eqref{eqn39_det_MPautomata}, $s=\sum_{i=1}^{m}|A_{\frakB_i}|+\sum_{j=1}^{|\frakE|}|A_{\frakE_j}|$ 
		is the number of quantifiers. The logarithm of $w$ is exponential
		in the size of $\Acal^{\Z}$, and each check can be done in $\EXPTIME$, hence the satisfiability of
		\eqref{eqn39_det_MPautomata} can be checked in $2$-$\EXPTIME$ in the size of $\Acal^{\Z}$.

				\item
		When \eqref{eqn39_det_MPautomata} has $1$ quantifier alternation (in this case exactly one of 
		$\sum_{i=1}^{m}|A_{\frakB_i}|$ and $\sum_{j=1}^{|\frakE|}|A_{\frakE_j}|$ is equal to $0$),
		one has $w=2^{cr^{(s+3)^2}}$.
		Hence the satisfiability of \eqref{eqn39_det_MPautomata} can be checked also in $2$-$\EXPTIME$.

				\item
		When \eqref{eqn39_det_MPautomata} has $0$ quantifier alternation, i.e., all graphs $G_{\frakB_i}$,
		$i\in\llb 1,m \rrb$, $G_{\frakE_j}$, $j\in \llb 1,|\frakE|\rrb$, are singletons, the satisfiability of
		\eqref{eqn39_det_MPautomata} can be checked in polynomial time.
		\end{itemize}
\end{enumerate}

\begin{remark}\label{rem5_det_MPautomata}
	In \eqref{item24_det_MPautomata}, the specifications of \eqref{eqn39_f_det_MPautomata} and
	\eqref{eqn39_k_det_MPautomata} to \eqref{eqn41_det_MPautomata} and \eqref{eqn42_det_MPautomata}
	are crucial steps. If we did not do this but directly specify \eqref{eqn39_f_det_MPautomata} and
	\eqref{eqn39_k_det_MPautomata} as they are, then \eqref{eqn39_det_MPautomata} would have length 
	exponential in the size of $\Acal^{\Z}$, and then the satisfiability of \eqref{eqn39_det_MPautomata} 
	would be checked in $3$-$\EXPTIME$ in the size of $\Acal^{\Z}$ by using Lemma~\ref{lem4_det_MPautomata}. 

	Taking graph $G_{\frakB_i}^{a_i}$ for some $i\in\llb 1,m \rrb$ for example, in order to directly specify
	\eqref{eqn39_f_det_MPautomata}, we need to enumerate
	all $\alpha$ distinct (at most exponentially many) simple paths from $\head(a_i)$ to $\hat q_i^1$ as 
	\begin{align*}
		&a_i^{1,1}\dots a_i^{1,1_\alpha},\\ & \vdots \\
		&a_i^{\alpha,1}\dots a_i^{\alpha,\alpha_\alpha},
	\end{align*}
	where $\tail(a_i^{l,1})=\head(a_i)$, $\head(a_i^{l,l_\alpha})=
	\hat q_i^1$, $\head(a_i^{l,k_l})=\tail(a_i^{l,k_l+1})$, $1\le l\le \alpha$,  
	$1\le l_\alpha\le |V_{\frakB_i}^{a_i}|-1$, $1\le k_l\le l_\alpha-1$.
	\eqref{eqn39_f_det_MPautomata} is directly specified as $$\bigvee_{l=1}^{\alpha}\left(\bigwedge
	_{k_l=1}^{l_{\alpha}} \left(y_{a_i^{l,k_l}}>0\right)\right),$$
	which has length exponential in the size of $\Acal^{\Z}$.
\end{remark}

\begin{remark}\label{rem6_det_MPautomata}
	Using the above procedure of computing observer $\Acal^{\Q^k}_{obs}$ as in Items \eqref{item18_det_MPautomata}
	through \eqref{item25_det_MPautomata}, one can do state estimation based on a given weighted label sequence
	$\gamma=(\s_1,t_1)\dots(\s_n,t_n)$, where $\s_i\in \Sig$, $t_i\in\Q^k$, $i\in\llb 1,n
	\rrb$: first, compute all states $q_1$ that are reachable from some state of $Q_0$ through
	some unobservable path followed by some observable transition with an event $e_1$ in $E_{\s_1}$,
	where the entire path has weight $t$, in nondeterministic polynomial time by Lemma~\ref{lem1_det_MPautomata},
	denote the set of all such
	$q_1$ by $x_1$, and call $x_1$ the \emph{$(\s_1,t_1)$-successor} of $Q_0$; second, compute the 
	$(\s_2,t_2-t_1)$-successor $x_2$ of $x_1$; \dots; finally compute the $(\s_n,t_n-t_{n-1})$-successor
	$x_n$ of $x_{n-1}$, then one has $\Mt(\Acal^{\Q^k},\ep|x_n)=\Mt(\Acal^{\Q^k},\gamma)$.
\end{remark}

Now we analyze the complexity of computing $\Acal_{obs}^{\Z}$. For every $x_1\subset Q$ (see 
\eqref{item18_det_MPautomata}), we compute the corresponding $x_2$ in polynomial time.
For every $\bar x_2\subset x_2$ (see \eqref{item19_det_MPautomata}), we check whether $\bar x_2$ is a
successor of $x_1$ and compute the corresponding transitions starting from $x_1$ 
in $\Acal_{obs}^{\Z}$ as follows: we compute automaton
\eqref{eqn4_det_MPautomata} for at most $(|x_1||\Dt_{\s}|)^{|\bar x_2|}$ times
(see \eqref{item19_det_MPautomata} and \eqref{item9_det_MPautomata}).
Then in each automaton \eqref{eqn24_det_MPautomata} (trivially obtained from \eqref{eqn4_det_MPautomata})
(see \eqref{item10_det_MPautomata}), we check the existence of path $\bar\pi$ (see \eqref{quoteB_det_MPautomata})
and compute transitions of $\Acal_{obs}^{\Z}$ starting at $x_1$ in $\NP$ in the size of 
\eqref{eqn24_det_MPautomata} by Lemma~\ref{lem1_det_MPautomata}, where the size of \eqref{eqn24_det_MPautomata}
is exponential in the size of $\Acal^{\Z}$. Totally we check the existence of $\bar\pi$ in 
\eqref{eqn24_det_MPautomata} for at most exponentially many times. Hence this part can be finished in 
$\NEXPTIME$ in the size of $\Acal^{\Z}$.

Consider the above $x_1$ and $x_2$, for every successor $\bar x_2\subset x_2$ of $x_1$ computed as above,
for every subset $\hat x_2\subsetneq \bar x_2$, we check whether there is a transition from $x_1$ to 
$\Mt(\Acal^{\Z},\ep|\hat x_2)$ as in \eqref{item23_det_MPautomata}, \eqref{item24_det_MPautomata},
\eqref{item25_det_MPautomata}. When $x_1,x_2,\hat x_2$ are given, $\frakE$ (see \eqref{eqn38_det_MPautomata})
is also given, we need to check the satisfiability of Presburger formula \eqref{eqn39_det_MPautomata} 
by Lemma~\ref{lem4_det_MPautomata} for at most 
$(|x_1||\Dt_{\s}|)^{|\hat x_2|}$ times. Totally we do the check for at most exponentially many times.
Hence this part can be finished in $2$-$\EXPTIME$ in the size of $\Acal^{\Z}$.


\begin{theorem}\label{thm5_det_MPautomata}
	An observer ${\Acal}_{obs}^{\Q^k}$ \eqref{eqn_observer_sim_MPautomata}
	of a labeled weighted automaton $\Acal^{\Q^k}$ 
	can be computed in 
	$2$-$\EXPTIME$ in the size of $\Acal^{\Q^k}$.
\end{theorem}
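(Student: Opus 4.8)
The plan is to read the theorem off directly from the explicit construction in Items~\eqref{item18_det_MPautomata}--\eqref{item25_det_MPautomata}, after first removing all dependence on $k$ and on denominators. First I would invoke Remark~\ref{rem4_det_MPautomata} to replace $\Acal^{\Q^k}$ by an equivalent $\Acal^{\Z^k}$ (clearing denominators by the least common multiple of all denominators occurring in the transition weights, which changes neither detectability nor the complexity class), so that every weight lies in $\Z^k$. The construction itself is stated for the scalar case $\Acal^{\Z}$ only to lighten notation, and it transfers verbatim to $\Acal^{\Z^k}$ once the $1$-dimensional EPL instances are replaced by $k$-dimensional ones (still in $\NP$ by Lemma~\ref{lem1_det_MPautomata}) and each scalar weight-matching equality by its $k$ componentwise copies. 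Correctness is then immediate from the definitions: a breadth-first exploration starting at $x_0=\Mt(\Acal^{\Z},\ep)$ repeatedly computes, for each already-discovered state $x_1$ and each $\s\in\Sig$, the candidate target set $x_2$ and then its \emph{successors} $\bar x_2$ together with the weights labelling the matching transitions, so by the definition of successor this enumerates exactly the reachable states and transitions of some observer $\Acal_{obs}^{\Z}$ as in \eqref{eqn_observer_sim_MPautomata}.

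For the running time I would bookkeep the three nested loops. There are at most $2^{|Q|}$ observer states, so the outer exploration discovers at most exponentially many $x_1$. For each pair $(x_1,\s)$ the set $x_2$ is computed in polynomial time, and the successor test runs over the at most $2^{|x_2|}$ subsets $\bar x_2$ in decreasing order of cardinality. Each such test builds the asynchronous product \eqref{eqn4_det_MPautomata}, whose size is exponential in $|\Acal^{\Z}|$, and then decides the reachability condition \eqref{quoteB_det_MPautomata} on the derived graph \eqref{eqn24_det_MPautomata}; by Lemma~\ref{lem1_det_MPautomata} this is an EPL instance solvable in $\NP$ in the size of \eqref{eqn24_det_MPautomata}, i.e.\ in $\NEXPTIME$ in $|\Acal^{\Z}|$. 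Since only exponentially many such tests are performed, the whole transition-generation phase stays within $\NEXPTIME$.

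The dominant cost, and the step I expect to be the real obstacle, is the pruning phase of Item~\eqref{item23_det_MPautomata}: for every successor $\bar x_2$ and every proper subset $\hat x_2\subsetneq\bar x_2$ one must decide whether a transition $x_1\xrightarrow[]{(\s,t)}\Mt(\Acal^{\Z},\ep|\hat x_2)$ exists, i.e.\ condition \eqref{quoteD_det_MPautomata}, which is encoded as the $\exists\forall$ Presburger sentence \eqref{eqn39_det_MPautomata} with at most two quantifier alternations. The key point I would emphasize is to keep this sentence of length polynomial in $|\Acal^{\Z}|$: the reachability subformulas \eqref{eqn39_f_det_MPautomata} and \eqref{eqn39_k_det_MPautomata} must be rewritten as the \emph{local} flow conditions \eqref{eqn41_det_MPautomata} and \eqref{eqn42_det_MPautomata} (this is exactly the trimming argument of Item~\eqref{item24_det_MPautomata}), since the naive disjunction over all simple paths would be exponentially long and would cost one extra exponential, as spelled out in Remark~\ref{rem5_det_MPautomata}. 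Granting polynomial length $r$ and a bounded number $s$ of quantifiers, Lemma~\ref{lem4_det_MPautomata} bounds every witness by $w=2^{cr^{(s+3)^{3}}}$, so $\log w$ is exponential in $|\Acal^{\Z}|$; brute-forcing each variable up to $w$ and evaluating the quantifier-free matrix costs $\EXPTIME$ per assignment, hence $2$-$\EXPTIME$ per sentence, and multiplying by the exponentially many sentences keeps the entire pruning phase—and therefore the computation of $\Acal_{obs}^{\Q^k}$—within $2$-$\EXPTIME$ in $|\Acal^{\Q^k}|$.
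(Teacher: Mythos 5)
Your proposal is correct and follows essentially the same route as the paper's own proof: the reduction to $\Acal^{\Z}$ via Remark~\ref{rem4_det_MPautomata}, the successor generation through asynchronous products and EPL instances (giving the $\NEXPTIME$ phase), the pruning phase encoded as the $\exists\forall$ Presburger sentence \eqref{eqn39_det_MPautomata} kept polynomially long by the local flow rewriting \eqref{eqn41_det_MPautomata}--\eqref{eqn42_det_MPautomata}, and finally Lemma~\ref{lem4_det_MPautomata} yielding the $2$-$\EXPTIME$ bound. One wording slip: you say ``a bounded number $s$ of quantifiers,'' but it is the number of quantifier \emph{alternations} that is bounded (by $2$), while $s$ itself is polynomial in $\size(\Acal^{\Z})$ --- which is exactly why $\log w = cr^{(s+3)^3}$ is exponential rather than polynomial, as your subsequent computation (and the paper's) requires.
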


Particularly, for deadlock-free and divergence-free automaton $\Acal^{\Q^k}$, in which there exists
no unobservable cycle, from the above procedure of computing 
${\Acal}{^{\Q^k}_{obs}}$, one directly sees that the EPL problem and Presburger arithmetic are not needed, but 
one only needs to enumerate all (exponentially many) unobservable paths. Hence the next direct
corollary follows.
\begin{corollary}\label{cor5_det_MPautomata}
	The observer ${\Acal}_{obs}^{\Q^k}$ \eqref{eqn_observer_sim_MPautomata}
	of a divergence-free $\Acal^{\Q^k}$ 
	can be computed in $\EXPTIME$ in the size of $\Acal^{\Q^k}$.
\end{corollary}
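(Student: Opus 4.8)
The plan is to specialize the general construction of $\Acal_{obs}^{\Z^k}$ from Items~\eqref{item18_det_MPautomata}--\eqref{item25_det_MPautomata} that proves Theorem~\ref{thm5_det_MPautomata}, and to exploit divergence-freeness to replace its two expensive subroutines---the EPL test of Lemma~\ref{lem1_det_MPautomata} used for \eqref{quoteB_det_MPautomata}, and the Presburger satisfiability test of Lemma~\ref{lem4_det_MPautomata} used for \eqref{quoteF_det_MPautomata}---by direct enumeration. As in Remark~\ref{rem4_det_MPautomata} I would first reduce to $\Acal^{\Z^k}$ without loss of generality. Since $\Acal^{\Z^k}$ is divergence-free, it has no reachable unobservable cycle, so every unobservable path $q\xrightarrow[]{s}q^1$ with $s\in(E_{uo})^*$ visits pairwise distinct states and has length at most $|Q|-1$. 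Hence for each state $q$ the family of unobservable paths used in \eqref{eqn3_det_MPautomata} is finite, with at most $|\Dt_{uo}|^{|Q|}$ members, and the weight in $\Z^k$ of each can be computed exactly in polynomial time.

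First I would build, for each already-discovered current-state estimate $x_1\subset Q$ and each label $\s\in\Sig$, the finite list of all triples $(q,e,w)$ with $q\in x_1$, $e\in E_{\s}$, and $w\in\Z^k$ the total weight of some unobservable path from $q$ followed by the observable transition $e$; this merely enumerates the at most exponentially many unobservable paths and sums weights, with no recourse to EPL. From this list the finitely many distinct observable weights $t\in\Z^k$ after $\s$ are read off directly. For each such $t$, the $(\s,t)$-successor of $x_1$ in the sense of Remark~\ref{rem6_det_MPautomata} is the instantaneous-unobservable closure $\Mt(\Acal^{\Z^k},\ep\mid\cdot)$ of the set of those $q^2$ reached by a triple of weight exactly $t$; because all achievable weights are now listed explicitly, the ``there exists no path of weight $t$'' clause that forced the universal quantifier of \eqref{quoteF_det_MPautomata} is settled by inspection rather than by Presburger arithmetic, and the distinction between \emph{pre-successors} and \emph{successors} in \eqref{item9_det_MPautomata} becomes a comparison of explicitly computed finite weight sets.

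For the complexity I would count at most $2^{|Q|}$ observer states; for each of them and each of the $|\Sig|$ labels the path enumeration is single-exponential, the weights produced have bit-length polynomial in $\size(\Acal^{\Z^k})$ (each is a sum of at most $|Q|$ original weights), and the number of distinct observed weights $t$ is therefore single-exponential. Grouping the triples by weight and forming the closures $\Mt(\Acal^{\Z^k},\ep\mid\cdot)$ is again at most exponential, so the whole observer is produced in $\EXPTIME$ in $\size(\Acal^{\Z^k})$, hence in $\size(\Acal^{\Q^k})$ by Remark~\ref{rem4_det_MPautomata}.

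The main obstacle I anticipate is the bookkeeping of weights rather than the logic: I must verify that even though a divergence-free unobservable subgraph may still contain exponentially many distinct paths, each single path weight stays of polynomial bit-length, so that every equality test $w=t$ and every closure is polynomial per path; and I must check that enumerating paths genuinely recovers the exact sets $\Mt(\Acal^{\Z^k},(\s,t)\mid x_1)$, i.e. that the maximality condition separating \emph{pre-successors} from \emph{successors} in \eqref{item9_det_MPautomata} is faithfully preserved when phrased as a comparison of finite weight multisets. Everything else is a routine specialization of the proof of Theorem~\ref{thm5_det_MPautomata}, with its $\NP$/Presburger inner tests replaced by deterministic exponential enumeration.
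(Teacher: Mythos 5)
Your proposal is correct and follows essentially the same route as the paper: the paper justifies Corollary~\ref{cor5_det_MPautomata} by observing that divergence-freeness (no reachable unobservable cycle) bounds unobservable paths to simple paths of length at most $|Q|-1$, so the EPL subroutine and the Presburger satisfiability test in the construction of Theorem~\ref{thm5_det_MPautomata} can be replaced by direct enumeration of the exponentially many unobservable paths. Your write-up merely fills in the bookkeeping (polynomial bit-length of path weights, grouping triples by weight to recover the exact $(\s,t)$-successors and hence the successor/pre-successor distinction) that the paper leaves implicit in its one-sentence argument.
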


In Theorem~\ref{thm2_det_MPautomata}, conditions~\eqref{item1_det_MPautomata} and \eqref{item2_det_MPautomata}
can be verified in time linear in the size of $\Acal^{\Q^k}$ by computing its strongly connected
components, and condition~\eqref{item3_det_MPautomata} can be verified in time linear in the size
of any observer ${\Acal}{^{\Q^k}_{obs}}$. Then the following result holds.

\begin{theorem}\label{thm6_det_MPautomata}
	The weak detectability of a labeled weighted automaton $\Acal^{\Q^k}$ 
	can be verified in $2$-$\EXPTIME$ in the size of $\Acal^{\Q^k}$.
\end{theorem}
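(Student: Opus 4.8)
The plan is to combine the structural characterization of weak detectability in Theorem~\ref{thm2_det_MPautomata} with the complexity bound on computing observers from Theorem~\ref{thm5_det_MPautomata}. By Theorem~\ref{thm2_det_MPautomata}, $\Acal^{\Q^k}$ is weakly detectable if and only if one of conditions \eqref{item1_det_MPautomata}, \eqref{item2_det_MPautomata}, \eqref{item3_det_MPautomata} holds; hence it suffices to bound the cost of testing each condition and take the maximum.

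First I would dispose of conditions \eqref{item1_det_MPautomata} and \eqref{item2_det_MPautomata}, both of which refer only to the combinatorial structure of $\Acal^{\Q^k}$ and ignore the weights. Condition \eqref{item1_det_MPautomata}, namely $L^{\omega}(\Acal^{\Q^k})=\emptyset$, holds exactly when no reachable state lies on a cycle, which is decided by computing the reachable part of the underlying directed graph of $\Acal^{\Q^k}$ and its strongly connected components, in time linear in $\size(\Acal^{\Q^k})$. Condition \eqref{item2_det_MPautomata} asks for an infinite path whose label sequence is finite; since the state set is finite, such a path exists if and only if there is a reachable unobservable cycle, which is again a linear strongly-connected-component computation, this time restricted to the subgraph of unobservable transitions $\Dt_{uo}$ (and it automatically gives $L^{\omega}(\Acal^{\Q^k})\ne\emptyset$). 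Thus both conditions are verifiable in linear time and contribute nothing beyond $2$-$\EXPTIME$.

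Next I would treat condition \eqref{item3_det_MPautomata}. Its first clause, $L^{\omega}(\Acal^{\Q^k})\ne\emptyset$, is the negation of \eqref{item1_det_MPautomata} and is again linear. For its second clause I would compute one observer $\Acal_{obs}^{\Q^k}$: by Theorem~\ref{thm5_det_MPautomata} this takes $2$-$\EXPTIME$ and yields a deterministic finite automaton whose state set $X$ is contained in $2^Q\setminus\{\emptyset\}$, hence of size at most exponential in $\size(\Acal^{\Q^k})$. On the computed observer I would then search for a reachable cycle all of whose states are singletons: restrict $\Acal_{obs}^{\Q^k}$ to those $x\in X$ with $|x|=1$ together with the transitions between them, and test whether this induced subgraph contains a cycle reachable from $x_0$ in the full observer. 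This is a standard reachability-plus-SCC check running in time linear in the size of $\Acal_{obs}^{\Q^k}$, hence at most exponential in $\size(\Acal^{\Q^k})$. Summing the contributions, the dominant term is the computation of the observer, so the whole procedure runs in $2$-$\EXPTIME$.

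The main obstacle, carrying essentially all of the cost, is the computation of $\Acal_{obs}^{\Q^k}$ needed in \eqref{item3_det_MPautomata}; this is already settled by Theorem~\ref{thm5_det_MPautomata}, where the EPL problem and the bounded fragment of Presburger arithmetic are invoked. The only genuinely new point is to justify that testing condition \eqref{item3_det_MPautomata} on a single computed observer is sound, since Theorem~\ref{thm2_det_MPautomata} phrases it as holding in \emph{any} one of the observers. Here I would observe that, by Definition~\ref{def_observer_MPautomata}, all observers of $\Acal^{\Q^k}$ share the same reachable state set and the same connection structure on those states (they differ only in the concrete weights labeling the chosen transitions), so the existence of a reachable all-singleton cycle is independent of which observer is selected; checking it on the observer we computed therefore suffices.
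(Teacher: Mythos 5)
Your proposal is correct and follows essentially the same route as the paper: the paper likewise verifies conditions \eqref{item1_det_MPautomata} and \eqref{item2_det_MPautomata} of Theorem~\ref{thm2_det_MPautomata} in linear time via strongly connected components, and verifies condition \eqref{item3_det_MPautomata} in time linear in the size of an observer computed via Theorem~\ref{thm5_det_MPautomata}, so that the $2$-$\EXPTIME$ observer computation dominates. Your added justification that checking a single computed observer suffices (since all observers share the same state set and the same state-to-state connectivity, differing only in the weights chosen on transitions) is a detail the paper leaves implicit, and it is correct.
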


Similarly, by Theorem~\ref{thm7_det_MPautomata} and Theorem~\ref{thm5_det_MPautomata}, the following result
holds.

\begin{theorem}\label{thm10_det_MPautomata}
	The weak periodic detectability of a labeled weighted automaton $\Acal^{\Q^k}$ 
	can be verified in $2$-$\EXPTIME$ in the size of $\Acal^{\Q^k}$.
\end{theorem}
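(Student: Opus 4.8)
The plan is to reduce the verification of weak periodic detectability to testing the three equivalent conditions furnished by Theorem~\ref{thm7_det_MPautomata}, and then to argue that each test is carried out within the stated bound, the dominating cost being the observer construction guaranteed by Theorem~\ref{thm5_det_MPautomata}. This mirrors exactly the argument used for weak detectability in Theorem~\ref{thm6_det_MPautomata}, with condition~\eqref{item13_det_MPautomata} (``at least one singleton in a reachable cycle'') replacing the ``all singletons'' condition and condition~\eqref{item12_det_MPautomata} now carrying the extra requirement $|\Mt(\Acal^{\Q^k},\ell(w))|=1$.

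First I would dispose of condition~\eqref{item11_det_MPautomata}. Since the existence of an infinite path is independent of the transition weights, $L^{\omega}(\Acal^{\Q^k})\ne\emptyset$ holds if and only if some cycle of $\Acal^{\Q^k}$ is reachable from $Q_0$, which I would decide in time linear in the size of $\Acal^{\Q^k}$ by a strongly-connected-component decomposition. If $L^{\omega}(\Acal^{\Q^k})=\emptyset$ the automaton is weakly periodically detectable vacuously; otherwise I proceed. Next, for condition~\eqref{item12_det_MPautomata} I must detect an infinite word $w$ whose weighted label sequence $\ell(w)$ is finite and whose estimate is a singleton. Having computed an observer $\Acal_{obs}^{\Q^k}$, I would enumerate its reachable singleton states $\{q\}$ and, for each, test in $\Acal^{\Q^k}$ (again ignoring weights) whether the relevant source state admits an infinite purely unobservable continuation, i.e.\ whether an unobservable cycle is reachable through unobservable transitions; this is a linear-time reachability/SCC check per observer state.

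Finally, condition~\eqref{item13_det_MPautomata} asks for a reachable cycle of $\Acal_{obs}^{\Q^k}$ that contains at least one singleton state. Once $\Acal_{obs}^{\Q^k}$ is in hand this becomes a pure graph query: decompose the observer into strongly connected components and check whether some nontrivial component (or a self-loop) reachable from $x_0$ contains a singleton state, which is linear in the size of $\Acal_{obs}^{\Q^k}$. All of these steps are therefore either polynomial-time checks on $\Acal^{\Q^k}$ or linear-time scans of $\Acal_{obs}^{\Q^k}$, whose number of states is at most exponential in the size of $\Acal^{\Q^k}$.

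The entire complexity is thus concentrated in the construction of $\Acal_{obs}^{\Q^k}$, and this is precisely where the hard part lies: by Theorem~\ref{thm5_det_MPautomata} the observer can be computed in $2$-$\EXPTIME$ by way of the EPL problem and the bounded Presburger-arithmetic decision procedure. Since the observer has at most exponential size, the subsequent cycle- and reachability-scans are absorbed into the $2$-$\EXPTIME$ budget, and no further blow-up occurs. Combining the three tests with Theorem~\ref{thm7_det_MPautomata} yields that weak periodic detectability of $\Acal^{\Q^k}$ is verifiable in $2$-$\EXPTIME$ in the size of $\Acal^{\Q^k}$, as claimed.
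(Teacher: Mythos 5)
Your proposal is correct and follows essentially the same route as the paper: the paper proves this theorem by combining the observer-based equivalent conditions of Theorem~\ref{thm7_det_MPautomata} with the $2$-$\EXPTIME$ observer construction of Theorem~\ref{thm5_det_MPautomata}, exactly as you do, with the graph-theoretic checks (SCC decomposition and cycle/reachability scans) absorbed into that bound. Your explicit handling of condition~\eqref{item12_det_MPautomata} via singleton observer states plus an unobservable-continuation check is a detail the paper leaves implicit, but it is consistent with the paper's argument.
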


By Theorems~\ref{thm2_det_MPautomata} and \ref{thm7_det_MPautomata}, and Corollary~\ref{cor5_det_MPautomata},
the following result holds.
\begin{corollary}\label{cor6_det_MPautomata}
	The weak detectability and weak periodic detectability of a labeled deadlock-free, divergence-free
	weighted automaton $\Acal^{\Q^k}$ 
	can be verified in $\EXPTIME$ in the size of $\Acal^{\Q^k}$. The upper bounds also apply to
	deadlock-free and divergence-free unambiguous 
	weighted automata over semiring $\underQ$. 
\end{corollary}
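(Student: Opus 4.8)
The plan is to combine the structural characterizations of weak detectability and weak periodic detectability established in Theorem~\ref{thm2_det_MPautomata} and Theorem~\ref{thm7_det_MPautomata} with the sharpened observer-computation bound for the divergence-free case in Corollary~\ref{cor5_det_MPautomata}, following exactly the template used for the general bounds right before Theorem~\ref{thm6_det_MPautomata}, but replacing every appeal to the $2$-$\EXPTIME$ observer by the $\EXPTIME$ one. First I would invoke Corollary~\ref{cor5_det_MPautomata} to compute an observer $\Acal^{\Q^k}_{obs}$ of the given deadlock-free, divergence-free $\Acal^{\Q^k}$ in $\EXPTIME$. The key quantitative point I would record is that this observer has size only singly exponential in the size of $\Acal^{\Q^k}$: its state set is contained in $2^Q\setminus\{\emptyset\}$, and because divergence-freeness means there is no reachable unobservable cycle, its transitions are obtained by enumerating the (at most exponentially many) unobservable paths rather than by solving EPL or Presburger instances.

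Next I would check the conditions of Theorem~\ref{thm2_det_MPautomata} for weak detectability. As already observed before Theorem~\ref{thm6_det_MPautomata}, conditions (i) and (ii) are decided in time linear in the size of $\Acal^{\Q^k}$ by computing strongly connected components. In the present setting these two conditions are in fact automatically false: deadlock-freeness gives $L^{\omega}(\Acal^{\Q^k})\neq\emptyset$, and divergence-freeness forces every infinite path to carry an infinite-length label sequence (cf. the footnote accompanying the two assumptions), so no $w\in L^{\omega}(\Acal^{\Q^k})$ has $\ell(w)\in(\Sig\times T)^{*}$. Hence only condition (iii)—the existence of a reachable cycle in $\Acal^{\Q^k}_{obs}$ all of whose states are singletons—remains, and this is a cycle-detection question decidable in time linear in the size of $\Acal^{\Q^k}_{obs}$, i.e.\ in $\EXPTIME$. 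Thus weak detectability is decided in $\EXPTIME$. The argument for weak periodic detectability is identical, now using Theorem~\ref{thm7_det_MPautomata}: conditions (i) and (ii) again fail for the same reasons (finite-label $w$ cannot exist), and condition (iii), namely a reachable observer cycle containing at least one singleton, is checked in time linear in the observer size.

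For the final clause, I would transfer the bounds to deadlock-free, divergence-free unambiguous automata over the max-plus semiring $\underQ$. The point is that in an unambiguous automaton, under every event sequence there is at most one path from the initial states to any fixed state, so the weight assigned in the max-plus manner coincides with the weight assigned in the real-time manner. Consequently the four detectability notions for $\Acal^{unam,\underQ}$ coincide with those for the corresponding real-time automaton $\Acal^{unam,\Q}$ (cf.\ Remark~\ref{rem2_det_MPautomata}), and $\Acal^{unam,\Q}$ is merely the special case $k=1$ of $\Acal^{\Q^k}$. Therefore the $\EXPTIME$ bounds just established transfer verbatim.

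The main obstacle I anticipate is not the logical combination of results but the two quantitative/comparison claims that make it go through. First, one must verify carefully that Corollary~\ref{cor5_det_MPautomata} really yields an observer whose state count \emph{and} transition count are singly exponential, since the linear-time cycle check for condition (iii) only stays within $\EXPTIME$ if the observer itself is of singly-exponential size; this requires confirming that dropping the EPL and Presburger machinery in the divergence-free case both produces a correct observer and does not secretly blow up the transition count. Second, the coincidence-of-notions claim for the unambiguous max-plus case must be handled with care so that the ``minor and negligible differences'' of Remark~\ref{rem2_det_MPautomata} genuinely do not affect the truth of conditions (i)--(iii) in Theorem~\ref{thm2_det_MPautomata} and Theorem~\ref{thm7_det_MPautomata}.
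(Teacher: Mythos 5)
Your proposal is correct and follows essentially the same route as the paper, whose proof of this corollary is exactly the combination of Theorems~\ref{thm2_det_MPautomata} and~\ref{thm7_det_MPautomata} with the $\EXPTIME$ observer bound of Corollary~\ref{cor5_det_MPautomata}, together with the coincidence of notions for unambiguous automata over $\underQ$ noted in Remark~\ref{rem2_det_MPautomata}. Your extra observations (that conditions (i)--(ii) of the two theorems fail automatically under deadlock-freeness and divergence-freeness, and that the observer's state and transition counts are singly exponential) are sound refinements of the same argument rather than a different approach.
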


When all transitions of $\Acal^{\Q^k}$ are observable, 
we have the following direct corollary.

\begin{corollary}\label{cor2_det_MPautomata}
	Consider a labeled weighted automaton $\Acal^{\Q^k}$ all of whose transitions are observable.
	Its observer ${\Acal}{^{\Q^k}_{obs}}$ can be computed in $\EXPTIME$,
	and its weak detectability and weak periodic detectability can be verified also
	in $\EXPTIME$.
\end{corollary}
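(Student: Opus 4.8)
The plan is to exploit that the hypothesis ``all transitions observable'' forces $E_{uo}=\emptyset$, which collapses every source of super-exponential cost in the general observer construction underlying Theorem~\ref{thm5_det_MPautomata}. First I would note that with no unobservable events there is no reachable unobservable cycle, so $\Acal^{\Q^k}$ is automatically divergence-free and Corollary~\ref{cor5_det_MPautomata} already yields an $\EXPTIME$ bound for computing an observer $\Acal_{obs}^{\Q^k}$. To make this self-contained and expose why the construction degenerates, I would re-inspect Items~\eqref{item18_det_MPautomata}--\eqref{item25_det_MPautomata}: in Item~\eqref{item18_det_MPautomata} each subautomaton $\Acal_q^{\Z}$ now consists solely of single observable transitions $q\xrightarrow{e}q^2$, since the unobservable prefix $s\in(E_{uo})^*$ must be empty; hence the asynchronous products \eqref{eqn4_det_MPautomata} contain no unobservable transitions to interleave, condition~\eqref{quoteA_det_MPautomata} reduces to checking that the guessed single transitions share one weight, and neither the EPL problem nor the Presburger machinery of Items~\eqref{item23_det_MPautomata}--\eqref{item25_det_MPautomata} is invoked.

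With this degeneration the observer becomes a weighted powerset automaton: the distinct weighted labels $(\s,t)$ occurring on its transitions are exactly the label--weight pairs of single transitions of $\Acal^{\Q^k}$, of which there are at most $|\Dt|$, and for every reachable state set $x_1\subset Q$ and every such $(\s,t)$ the successor $\Mt(\Acal^{\Q^k},(\s,t)\mid x_1)=\{q'\mid(\exists q\in x_1)(\exists e\in E_\s)[(q,e,q')\in\Dt,\ \mu(e)_{qq'}=t]\}$ is computable in polynomial time. As there are at most $2^{|Q|}$ reachable state sets and at most $|\Dt|$ events, the full observer $\Acal_{obs}^{\Q^k}$ is assembled in $\EXPTIME$ in the size of $\Acal^{\Q^k}$, proving the first assertion.

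For the two detectability claims I would invoke the equivalent conditions of Theorem~\ref{thm2_det_MPautomata} (weak detectability) and Theorem~\ref{thm7_det_MPautomata} (weak periodic detectability). Conditions~\eqref{item1_det_MPautomata} and \eqref{item2_det_MPautomata} (and their counterparts \eqref{item11_det_MPautomata}, \eqref{item12_det_MPautomata}) are testable in time linear in the size of $\Acal^{\Q^k}$ by computing strongly connected components, while the remaining condition~\eqref{item3_det_MPautomata} (resp.\ \eqref{item13_det_MPautomata})---searching for a reachable cycle of the observer all (resp.\ at least one) of whose states are singletons---is testable in time linear in the size of $\Acal_{obs}^{\Q^k}$. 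Since the observer has already been built in $\EXPTIME$, these searches cost $\EXPTIME$ as well, so both weak detectability and weak periodic detectability are verified in $\EXPTIME$, exactly as in the proofs of Theorems~\ref{thm6_det_MPautomata} and~\ref{thm10_det_MPautomata} but with the cheaper observer bound substituted for the $2$-$\EXPTIME$ one.

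The only point requiring a moment's care is confirming that the event set of $\Acal_{obs}^{\Q^k}$ stays polynomially bounded---the general construction admits events $(\s,t)$ with $t$ the shared weight of arbitrarily long unobservable paths, which is precisely where the EPL and Presburger steps enter---but once one observes that without unobservable transitions every observer event inherits the weight of a single underlying transition, this bound is immediate and no real obstacle remains. Thus the corollary is essentially the divergence-free instance of Corollary~\ref{cor5_det_MPautomata} combined with the linear-time verification of Theorems~\ref{thm6_det_MPautomata} and~\ref{thm10_det_MPautomata}.
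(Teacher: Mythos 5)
Your proposal is correct and follows essentially the same route the paper intends: the corollary is stated there as a direct consequence of the observer-construction procedure, which, exactly as you argue, degenerates to a weighted powerset construction (no EPL or Presburger steps, at most $|\Dt|$ weighted labels, at most $2^{|Q|}$ states) when $E_{uo}=\emptyset$, after which Theorems~\ref{thm2_det_MPautomata} and~\ref{thm7_det_MPautomata} are checked in time linear in the observer's size, as in Theorems~\ref{thm6_det_MPautomata} and~\ref{thm10_det_MPautomata}. Your additional observation that all-observable implies divergence-free, so that Corollary~\ref{cor5_det_MPautomata} applies immediately, is a valid shortcut consistent with the paper's reasoning.
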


\subsubsection{Computation of detector $\Acal_{det}^{\Q^k}$ and
verification of strong periodic detectability}
\label{subsubsec:detector}

One sees detector ${\Acal}_{det}^{\Q^k}$ \eqref{eqn_detector_sim_MPautomata}
is a simplified version of
observer ${\Acal}_{obs}^{\Q^k}$ \eqref{eqn_observer_sim_MPautomata},
hence ${\Acal}_{det}^{\Q^k}$ can be
computed similarly by starting from the initial state $x_0=\Mt(\Acal^{\Q^k},\ep)$,
and find all reachable states and transitions step by step, where 
the states (except for $x_0$) are of cardinality $\le2$. Similarly to \eqref{item18_det_MPautomata}
in Section~\ref{subsubsec:observer},
choose a state $x_1\in X$ of $\Acal_{det}^{\Q^k}$ that we have just computed (here $1\le |x_1|\le 2$ if $x_1\ne x_0$), 
choose $\s\in\Sig$, for each $q\in x_1$, compute subautomaton $\Acal_{q}^{\Q^k}$ that consists of all paths 
of the form 
\begin{equation}\label{eqn43_det_MPautomata}
	q\xrightarrow[]{s}q^1\xrightarrow[]{e}q^2
\end{equation}
of $\Acal^{\Q^k}$ such that $s\in(E_{uo})^*$, $e\in E_{\s}$. Denote the set of all such 
$q^2$ by $x_2$. Note that one may have $|x_2|>2$.
\begin{enumerate}[(1)]
	\item If $|\Mt(\Acal^{\Q^k},\ep|x_2)|=1$ (i.e., $\Mt(\Acal^{\Q^k},\ep|x_2)=x_2$ and $x_2$ is a singleton),
		then we find a transition $x_1\xrightarrow[]{(\s,t)} x_2$
		of $\Acal^{\Q^k}_{det}$, where $t$ can be the weight of any path from any
		$q$ in $x_1$ to the unique $q^2$ in $x_2$ as in \eqref{eqn43_det_MPautomata}.
	\item If $|\Mt(\Acal^{\Q^k},\ep|x_2)|>1$, for every $\bar x_2\subset x_2$ with $|\bar x_2|=2$,
		we check whether there exists two paths
		\begin{align*}
			 q_i \xrightarrow[]{s_i}q^1_i\xrightarrow[]{e_i}q^2_i,\quad i=1,2,
		\end{align*}
		as in \eqref{eqn43_det_MPautomata}, such that $q_i\in x_1$, $\{q_1^2,q_2^2\}=\bar x_2$,
		and the two paths have the same weight $t\in \Q^k$.
		If the answer is YES, then we find transitions $x_1\xrightarrow[]{(\s,t)}\bar x_2'$ for any 
		$\bar x_2'\subset \Mt(\Acal^{\Q^k},\ep|\bar x_2)$ with $|\bar x_2'|=2$.
		In each of these checks, we need to compute the synchronous product (see \eqref{eqn4_det_MPautomata}) 
		of two subautomata of $\Acal^{\Q^k}$ (in polynomial time), and solve a $1$-dimensional EPL problem
		in the product (by Lemma~\ref{lem1_det_MPautomata}, in $\NP$).
	\item If $|\Mt(\Acal^{\Q^k},\ep|x_2)|>1$, for every $\bar x_2\subset x_2$ with $|\bar x_2|=1$
		and $|\Mt(\Acal^{\Q^k},\ep|\bar x_2)|>1$, we find transitions $x_1\xrightarrow[]{(\s,t)} \bar x_2'$
		for any $\bar x_2'\subset\Mt(\Acal^{\Q^k},\ep|\bar x_2)$ with $|\bar x_2'|=2$,
		where $t$ can be the weight of any path from any
		$q$ in $x_1$ to the unique $q^2$ in $\bar x_2$ as in \eqref{eqn43_det_MPautomata}.
	\item If $|\Mt(\Acal^{\Q^k},\ep|x_2)|>1$, for every $\bar x_2\subset x_2$ with
		$|\Mt(\Acal^{\Q^k},\ep|\bar x_2)|=1$, we check whether 
		\begin{equation*}
			\tag{G}\label{quoteG_det_MPautomata} 
			\parbox{\dimexpr\linewidth-4em}{%
			\strut
				there exists a path
				\begin{align*}
					\hat q \xrightarrow[]{\hat s}\hat q^1\xrightarrow[]{\hat e}\hat q^2
				\end{align*}
				as in \eqref{eqn43_det_MPautomata}, such that $\hat q\in x_1$, $\{\hat q^2\}=\bar x_2$, the weight
				of the path is denoted by $t\in \Q^k$; and for any $\bar q\in x_1$, for any $q^2\in x_2\setminus 
				\bar x_2$, there is no
				path 
				\begin{align*}
					q \xrightarrow[]{s}q^1\xrightarrow[]{e}q^2
				\end{align*}
				as in \eqref{eqn43_det_MPautomata} with weight $t$.
			\strut
			}
		\end{equation*}
		If the answer is YES, then we find a transition $x_1\xrightarrow[]{(\s,t)}\bar x_2$ of $\Acal^{\Q^k}_{det}$.
		We need to transform the satisfiability of \eqref{quoteG_det_MPautomata} to satisfiability of some
		Presburger formula
		as in \eqref{item23_det_MPautomata}, \eqref{item24_det_MPautomata}, \eqref{item25_det_MPautomata}.
		By Lemma~\ref{lem4_det_MPautomata}, such checks can be done in $2$-$\EXPTIME$.
\end{enumerate}

\begin{theorem}\label{thm13_det_MPautomata}
	Consider a labeled weighted automaton $\Acal^{\Q^k}$. An detector $\Acal^{\Q^k}_{det}$ can be computed in 
	$2$-$\EXPTIME$ (in the size of $\Acal^{\Q^k}$); particularly if for each state $q$ of $\Acal^{\Q^k}$,
	$|\Mt(\Acal^{\Q^k},\ep|\{q\})|>1$ (in this case, one need not check the satisfiability
	of \eqref{quoteG_det_MPautomata}), then $\Acal^{\Q^k}_{det}$ can be computed in $\NP$.
\end{theorem}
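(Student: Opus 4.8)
The plan is to read off the computation of $\Acal^{\Q^k}_{det}$ directly from the procedure in items (1)--(4) stated immediately above the statement, and to bound its running time case by case. Correctness is not the real issue: that the four cases exactly realize the transition relation of some detector follows from Definition~\ref{def_pre_detector_MPautomata} and Definition~\ref{def_detector_MPautomata}, once one observes that for a current detector state $x_1$ and an output $\s\in\Sig$, the target set $x_2$ of the $\s$-labelled patterns \eqref{eqn43_det_MPautomata} together with the estimate $\Mt(\Acal^{\Q^k},\ep|\cdot)$ of \eqref{ISE_det_MPautomata} determines all transitions out of $x_1$. The key quantitative point is that every reachable detector state other than $x_0$ is a subset of $Q$ of cardinality at most $2$, so there are at most $1+|Q|+\binom{|Q|}{2}=O(|Q|^2)$ states; it therefore suffices to bound the cost of processing a single pair $(x_1,\s)$.

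First I would compute $x_2$ by unobservable reachability followed by one observable transition, which is polynomial. Cases (1) and (3) only require exhibiting the weight of a single path and are hence polynomial. Case (2) asks, for each of the at most $\binom{|x_2|}{2}\le\binom{|Q|}{2}$ pairs $\bar x_2\subset x_2$, whether two $\s$-patterns \eqref{eqn43_det_MPautomata} reaching the two states of $\bar x_2$ carry a common weight; this is a $1$-dimensional \textup{EPL} instance on the synchronous product \eqref{eqn4_det_MPautomata} and is decided in $\NP$ by Lemma~\ref{lem1_det_MPautomata}. Case (4) is the only expensive one: here $\bar x_2$ must be a singleton (otherwise $|\Mt(\Acal^{\Q^k},\ep|\bar x_2)|\ge2$), so there are at most $|Q|$ candidates, and for each we must decide \eqref{quoteG_det_MPautomata}. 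I would reduce \eqref{quoteG_det_MPautomata} to a Presburger sentence of the subclass in Lemma~\ref{lem4_det_MPautomata} verbatim as in the observer construction of items \eqref{item23_det_MPautomata}--\eqref{item25_det_MPautomata} (the single-source/single-sink flow encoding, with the \eqref{eqn39_f_det_MPautomata}-type reachability tightened as in \eqref{item24_det_MPautomata}): the resulting sentence has at most two quantifier alternations and length polynomial in $\size(\Acal^{\Q^k})$, so by Lemma~\ref{lem4_det_MPautomata} its satisfiability is decidable in $2$-$\EXPTIME$.

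Summing the $O(|Q|^2)$ pairs $(x_1,\s)$ against the dominating per-pair cost of Case (4) yields the $2$-$\EXPTIME$ bound, exactly as for the observer in Theorem~\ref{thm5_det_MPautomata}. For the special regime I would argue by a short cardinality bookkeeping: if $|\Mt(\Acal^{\Q^k},\ep|\{q\})|>1$ for every state $q$, then for any nonempty $x_2$ and any $q\in x_2$ one has $\Mt(\Acal^{\Q^k},\ep|x_2)\supseteq\Mt(\Acal^{\Q^k},\ep|\{q\})$, whence $|\Mt(\Acal^{\Q^k},\ep|x_2)|>1$ and Case (1) never fires; moreover no singleton $\bar x_2=\{q\}$ can satisfy $|\Mt(\Acal^{\Q^k},\ep|\bar x_2)|=1$, so Case (4), and with it the Presburger test \eqref{quoteG_det_MPautomata}, is vacuous. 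Only the $\NP$ \textup{EPL} checks of Case (2) and the trivial Case (3) survive, giving an $\NP$ algorithm.

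I expect the main obstacle to be expository rather than mathematical: the $2$-$\EXPTIME$ step is the Presburger reduction of Case (4), but this is essentially the same machinery already spelled out for $\Acal^{\Q^k}_{obs}$, so the genuine work is to confirm that restricting the target estimate to singletons leaves that reduction intact while collapsing the number of required checks from exponentially many to polynomially many, which is precisely what drops the bound to $\NP$ under the distinctness hypothesis.
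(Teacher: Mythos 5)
Your proposal is correct and takes essentially the same route as the paper: the paper's justification of this theorem is exactly the four-case procedure you cite, with Case (2) resolved by a $1$-dimensional EPL instance on the synchronous product (hence $\NP$ by Lemma~\ref{lem1_det_MPautomata}), Case (4) resolved by the same Presburger reduction as items \eqref{item23_det_MPautomata}--\eqref{item25_det_MPautomata} of the observer construction (hence $2$-$\EXPTIME$ by Lemma~\ref{lem4_det_MPautomata}), and Case (4) becoming vacuous under the hypothesis $|\Mt(\Acal^{\Q^k},\ep|\{q\})|>1$ for all $q$. One minor correction to your closing remark: the drop to $\NP$ is not caused by the number of checks collapsing from exponential to polynomial (for the detector it is already polynomial, since all states have cardinality at most $2$), but solely by the elimination of the individually $2$-$\EXPTIME$ Presburger tests, exactly as your ``special regime'' paragraph argues.
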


By Theorem~\ref{thm9_det_MPautomata}, the following result holds.

\begin{theorem}\label{thm11_det_MPautomata}
	Consider a labeled weighted automaton $\Acal^{\Q^k}$.
	Its strong periodic detectability  can be verified in $2$-$\EXPTIME$. Particularly,
	if for each state $q$ of $\Acal^{\Q^k}$,
	$|\Mt(\Acal^{\Q^k},\ep|\{q\})|>1$, then its strong periodic detectability can be
	checked in $\NP$.
\end{theorem}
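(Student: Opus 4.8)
The plan is to obtain both complexity bounds from the detector-based characterization of strong periodic detectability in Theorem~\ref{thm9_det_MPautomata} together with the cost of building a detector from Theorem~\ref{thm13_det_MPautomata}. Recall that $\Acal^{\Q^k}$ fails to be strongly periodically detectable exactly when, in any detector $\Acal^{\Q^k}_{det}$, either condition~(\ref{item16_det_MPautomata}) holds---some reachable state $x'$ with $|x'|>1$ contains a state $q$ from which an unobservable cycle $q\xrightarrow[]{s_1}q'\xrightarrow[]{s_2}q'$ (with $s_1\in(E_{uo})^*$, $s_2\in(E_{uo})^+$) is reachable---or condition~(\ref{item17_det_MPautomata}) holds---the detector contains a reachable cycle all of whose states have cardinality~$2$. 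Because the states of every detector lie in $\{x_0\}\cup\{x\subset Q:1\le|x|\le2\}$, each detector has only $O(|Q|^2)$ states and polynomially many transitions, so once a detector is available both conditions can be tested in time polynomial in its size.

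For the general $2$-$\EXPTIME$ bound I would first construct a detector $\Acal^{\Q^k}_{det}$ in $2$-$\EXPTIME$ using Theorem~\ref{thm13_det_MPautomata}, and then decide conditions~(\ref{item16_det_MPautomata}) and~(\ref{item17_det_MPautomata}) on it. To decide~(\ref{item16_det_MPautomata}) I would precompute in polynomial time the set of states of $\Acal^{\Q^k}$ lying on an unobservable cycle (the states in a nontrivial strongly connected component of the subgraph induced by $\Dt_{uo}$, or carrying an unobservable self-loop), take its backward closure under unobservable transitions, and test whether any reachable cardinality-$2$ detector state meets it. To decide~(\ref{item17_det_MPautomata}) I would restrict the detector to its cardinality-$2$ states and search for a reachable cycle via a strongly-connected-component decomposition. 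Both tests are polynomial in the polynomial-sized detector, so the whole procedure is dominated by the detector construction and remains in $2$-$\EXPTIME$.

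The special case is where the genuine content---and the source of the $\NP$ bound---lies. The observation I would exploit is that the hypothesis $|\Mt(\Acal^{\Q^k},\ep|\{q\})|>1$ for every $q$ forces every current-state estimate to be non-singleton: since $\Mt(\Acal^{\Q^k},\gamma)$ is closed under unobservable, instantaneous paths by the trailing factor $\tau(q'\xrightarrow[]{s}q)\in(E_{uo}\times\{{\bf1}\})^*$ in~\eqref{CSE_det_MPautomata}, whenever $q\in\Mt(\Acal^{\Q^k},\gamma)$ the hypothesis produces a distinct state inside the same estimate, so $|\Mt(\Acal^{\Q^k},\gamma)|\ge2$ for every $\gamma$, including $\gamma=\ep$ (whence $|x_0|\ge2$ too). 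Therefore the defining requirement of strong periodic detectability---reaching a singleton estimate---can never be satisfied, and I would prove that here $\Acal^{\Q^k}$ is strongly periodically detectable if and only if $L^{\omega}(\Acal^{\Q^k})=\emptyset$: the ``if'' direction is vacuous, while for ``only if'' any $w\in L^{\omega}(\Acal^{\Q^k})$ taken with the empty prefix would, by Definition~\ref{def_SPD_MPautomata}, demand an extension with singleton estimate, contradicting the above.

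Finally, $L^{\omega}(\Acal^{\Q^k})=\emptyset$ is equivalent to the absence of a cycle reachable from $Q_0$ in the underlying directed graph of $\Acal^{\Q^k}$, which is decidable in polynomial time and hence in $\NP$, giving the claimed $\NP$ membership in the special case. The hardest point to get right is exactly this reconciliation with the detector view: guessing a witness for~(\ref{item16_det_MPautomata}) or~(\ref{item17_det_MPautomata}) only certifies the \emph{negation} of strong periodic detectability in $\NP$, so the clean route is to notice the ``estimate is never a singleton'' collapse, which reduces the property itself to a reachable-cycle test and thereby places strong periodic detectability in polynomial time, and a fortiori in $\NP$.
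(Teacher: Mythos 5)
Your proposal is correct, and for the general bound it is essentially the paper's own proof: the paper obtains Theorem~\ref{thm11_det_MPautomata} by combining the detector characterization of Theorem~\ref{thm9_det_MPautomata} with the $2$-$\EXPTIME$ detector construction of Theorem~\ref{thm13_det_MPautomata}, exactly as you do, the two detector conditions being decidable in time polynomial in the (polynomially sized) detector. Where you genuinely diverge is the special case. The paper stays inside the detector framework: under the hypothesis $|\Mt(\Acal^{\Q^k},\ep|\{q\})|>1$ for all $q$, the Presburger-based check \eqref{quoteG_det_MPautomata} in the detector construction is never invoked, so by Theorem~\ref{thm13_det_MPautomata} the detector is computable in $\NP$ using only EPL queries, and the polynomial test of Theorem~\ref{thm9_det_MPautomata} then yields the claimed bound. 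You instead exploit the closure of current-state estimates under unobservable, instantaneous suffixes to show that the hypothesis forces $|\Mt(\Acal^{\Q^k},\gamma)|\ne 1$ for every $\gamma$ (the estimate has cardinality $\ge 2$ when nonempty, and $0$ otherwise, so the singleton requirement of Definition~\ref{def_SPD_MPautomata} fails either way); hence SPD collapses to $L^{\omega}(\Acal^{\Q^k})=\emptyset$, i.e., to the absence of a cycle reachable from $Q_0$. This route is more elementary (no detector, no EPL, no Presburger arithmetic) and buys a strictly stronger conclusion: under the hypothesis SPD is decidable in $\PTIME$, hence in $\NP\cap\coNP$, which also resolves the asymmetry you flag (witnesses for the conditions of Theorem~\ref{thm9_det_MPautomata} certify the \emph{negation} of SPD, so the detector route most naturally gives $\coNP$, which is what the paper's abstract actually states for this case). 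Your collapse is consistent with the $\coNP$-hardness of Theorem~\ref{thm12_det_MPautomata}, since the automata built there violate the hypothesis (e.g., $q_m$ has only observable outgoing transitions). One small repair to your general-case procedure: condition~\eqref{item16_det_MPautomata} of Theorem~\ref{thm9_det_MPautomata} quantifies over reachable detector states $x'$ with $|x'|>1$, which includes the initial state $x_0=\Mt(\Acal^{\Q^k},\ep)$, possibly of cardinality larger than $2$; your test over reachable cardinality-$2$ states must also inspect $x_0$ (when the automaton has no observable transitions, $x_0$ is the only reachable state), a trivial addition that does not affect the complexity.
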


By Theorem~\ref{thm8_det_MPautomata} and Corollary~\ref{cor5_det_MPautomata}, the following two results
hold.
\begin{theorem}\label{thm14_det_MPautomata}
	The problem of verifying strong periodic detectability of a deadlock-free and divergence-free
	$\Acal^{\Q^k}$ belongs to $\EXPTIME$.
\end{theorem}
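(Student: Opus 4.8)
The plan is to combine the structural characterization of Theorem~\ref{thm8_det_MPautomata} with the complexity bound of Corollary~\ref{cor5_det_MPautomata}, exploiting the fact that divergence-freeness rules out one of the two obstructions to strong periodic detectability. First I would invoke Corollary~\ref{cor5_det_MPautomata} to compute an observer $\Acal^{\Q^k}_{obs}$ of the given divergence-free automaton $\Acal^{\Q^k}$ in $\EXPTIME$; its number of states is at most $2^{|Q|}$, hence exponential in the size of $\Acal^{\Q^k}$, and it is available explicitly for subsequent inspection.

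Next I would observe that condition~\eqref{item14_det_MPautomata} of Theorem~\ref{thm8_det_MPautomata} can never be met under the stated hypotheses. Indeed, that condition requires a path $q\xrightarrow{s_1}q'\xrightarrow{s_2}q'$ with $s_1\in(E_{uo})^*$ and $s_2\in(E_{uo})^+$, where $q$ belongs to a reachable observer state $x$. Since $q$ lies in a reachable current-state estimate, $q$ is a reachable state of $\Acal^{\Q^k}$, and therefore so is $q'$ (reached from $q$ along $s_1$). Thus $q'\xrightarrow{s_2}q'$ would be a reachable unobservable cycle, contradicting divergence-freeness. Consequently, by Theorem~\ref{thm8_det_MPautomata}, a divergence-free $\Acal^{\Q^k}$ fails to be strongly periodically detectable if and only if condition~\eqref{item15_det_MPautomata} holds, i.e.\ there is a reachable cycle in $\Acal^{\Q^k}_{obs}$ all of whose states are non-singletons.

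Finally I would check condition~\eqref{item15_det_MPautomata} directly on the computed observer. Form the subgraph of $\Acal^{\Q^k}_{obs}$ induced by the states $x$ with $|x|>1$ (keeping only transitions between such states), and test whether it contains a cycle that is reachable from $x_0$ in the full observer. This is a routine reachability-plus-cycle test: compute the strongly connected components of the induced subgraph and verify that some component containing at least one edge is reachable from $x_0$. It runs in time polynomial in the number of states and transitions of $\Acal^{\Q^k}_{obs}$, hence in $\EXPTIME$ in the size of $\Acal^{\Q^k}$. Since both phases---building the observer and scanning it---remain within $\EXPTIME$, the whole verification procedure is in $\EXPTIME$.

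I do not expect any genuine obstacle here, since the heavy lifting is already done by Corollary~\ref{cor5_det_MPautomata} and Theorem~\ref{thm8_det_MPautomata}. The only conceptual point, and the step I would state most carefully, is the observation that divergence-freeness annihilates condition~\eqref{item14_det_MPautomata}; after that, the decision collapses to a purely graph-theoretic cycle search on an exponentially large but explicitly available automaton, which is plainly polynomial in the observer's size.
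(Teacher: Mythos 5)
Your proposal is correct and follows essentially the same route as the paper, which proves this theorem simply by combining Theorem~\ref{thm8_det_MPautomata} with Corollary~\ref{cor5_det_MPautomata}: compute the observer in $\EXPTIME$ and then check the characterization on it in time polynomial in the observer's size. Your extra observation that divergence-freeness vacuously rules out condition~\eqref{item14_det_MPautomata} is valid and slightly streamlines the check, but it is not essential, since that condition can in any case be tested in time polynomial in the sizes of the observer and of $\Acal^{\Q^k}$.
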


\begin{corollary}\label{cor8_det_MPautomata}
	The detector ${\Acal}_{det}^{\Q^k}$ \eqref{eqn_detector_sim_MPautomata}
	of a divergence-free $\Acal^{\Q^k}$ 
	can be computed in $\EXPTIME$ in the size of $\Acal^{\Q^k}$.
\end{corollary}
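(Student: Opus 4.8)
The plan is to prove Corollary~\ref{cor8_det_MPautomata} by specializing the general detector-construction procedure of Section~\ref{subsubsec:detector} to the divergence-free case, exactly as Corollary~\ref{cor5_det_MPautomata} specializes the observer construction. The guiding observation is that divergence-freeness means there is no reachable unobservable cycle, so inside each subautomaton $\Acal_q^{\Q^k}$ of \eqref{eqn43_det_MPautomata} the unobservable part is a directed acyclic graph: every unobservable path is simple, has length at most $|Q|-1$, and there are at most exponentially many of them, each with a weight in $\Q^k$ computable in polynomial time. Consequently the two genuinely costly primitives used in the general construction---the $1$-dimensional EPL problem (Lemma~\ref{lem1_det_MPautomata}) in case~(2) and the Presburger satisfiability test for \eqref{quoteG_det_MPautomata} (Lemma~\ref{lem4_det_MPautomata}) in case~(4)---can each be replaced by brute-force enumeration over these acyclic paths, running in $\EXPTIME$ rather than requiring $\NP$ over an exponential-size product or $2$-$\EXPTIME$.

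Concretely, first I would note that $\Acal_{det}^{\Q^k}$ has $O(|Q|^2)$ states, since every state other than $x_0$ has cardinality at most $2$, and for each such $x_1$ and each $\s\in\Sig$ the set $x_2$ of \eqref{eqn43_det_MPautomata} is computed in polynomial time. Cases~(1) and~(3) of Section~\ref{subsubsec:detector} involve only cardinality tests and the weight of a single path, which are immediate. For case~(2), instead of invoking EPL I would, for each $q\in x_1$ and each target $q^2\in x_2$, enumerate all unobservable-then-observable paths $q\xrightarrow[]{s}q^1\xrightarrow[]{e}q^2$ and record the finite set of their weights; checking whether some pair reaching $\bar x_2=\{q_1^2,q_2^2\}$ shares a common weight $t$ is then a finite intersection test over two exponential-size weight sets, done in $\EXPTIME$. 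For case~(4) I would enumerate the finite weight set $W_{\bar x_2}$ of all paths reaching the singleton $\bar x_2$ together with the finite weight set $W_{\mathrm{rest}}$ of all paths reaching some $q^2\in x_2\setminus\bar x_2$; then \eqref{quoteG_det_MPautomata} holds if and only if $W_{\bar x_2}\setminus W_{\mathrm{rest}}\ne\emptyset$, a set-difference computation over exponential-size sets, again in $\EXPTIME$, and no Presburger formula as in \eqref{item25_det_MPautomata} is needed.

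The main obstacle will be case~(4): condition \eqref{quoteG_det_MPautomata} is exactly the source of the $2$-$\EXPTIME$ bound in Theorem~\ref{thm13_det_MPautomata}, because in the presence of unobservable cycles the achievable-weight sets are infinite and one is forced into the quantifier-alternation machinery of Lemma~\ref{lem4_det_MPautomata}. The heart of the argument is therefore to justify rigorously that divergence-freeness collapses these sets to finite, explicitly enumerable ones: since no unobservable cycle is reachable, each relevant path in $\Acal_q^{\Q^k}$ is acyclic and its weight is a sum of at most $|Q|$ transition weights in $\Q^k$, so $W_{\bar x_2}$ and $W_{\mathrm{rest}}$ have at most exponentially many elements, each of polynomial bit-size, and the universally quantified non-existence clause of \eqref{quoteG_det_MPautomata} becomes a quantifier-free comparison of two finite lists. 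Multiplying the $O(|Q|^2)$ detector states, the $O(|\Sig|)$ labels, and the $O(|Q|^2)$ candidate subsets $\bar x_2$ by the single-exponential cost of each enumeration yields an overall $\EXPTIME$ bound, establishing the corollary. I would also remark that this specialization is precisely what underlies Corollary~\ref{cor5_det_MPautomata} for the observer, so that $\Acal_{det}^{\Q^k}$, being a cardinality-$\le 2$ reduction of $\Acal_{obs}^{\Q^k}$ via Lemma~\ref{lem3_det_MPautomata}, inherits the $\EXPTIME$ bound \emph{a fortiori}.
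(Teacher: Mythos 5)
Your proposal is correct and takes essentially the same route as the paper: the paper justifies this corollary exactly as it justifies Corollary~\ref{cor5_det_MPautomata}, namely that divergence-freeness excludes reachable unobservable cycles, so every relevant unobservable path is simple and the EPL and Presburger steps in the detector construction of Section~\ref{subsubsec:detector} (cases (2) and (4)) can be replaced by enumeration of the exponentially many such paths and comparison of their finite weight sets, giving $\EXPTIME$. Only your closing remark that the bound is inherited ``a fortiori'' from the observer via Lemma~\ref{lem3_det_MPautomata} is loose---the outgoing transitions of a cardinality-$\le 2$ detector state are defined by current-state estimates from that state and are not determined by the observer's transitions alone---but your main argument does not rely on it.
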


Similarly to the case of $\CCa(\Acal^{\Q^k})$, we have the following direct corollary.

\begin{corollary}\label{cor3_det_MPautomata}
	Consider a labeled weighted automaton $\Acal^{\Q^k}$ all of whose transitions are observable.
	Its detector ${\Acal}{^{\Q^k}_{det}}$ can be computed in polynomial time,
	and its strong periodic detectability can be verified also in polynomial time.
\end{corollary}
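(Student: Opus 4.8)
The plan is to revisit the detector-construction procedure of Section~\ref{subsubsec:detector} (the one behind Theorem~\ref{thm13_det_MPautomata}) under the standing hypothesis that every transition of $\Acal^{\Q^k}$ is observable, i.e.\ $E_{uo}=\emptyset$, and to check that each step collapses to an elementary weight comparison. First I would record the two structural simplifications that $E_{uo}=\emptyset$ forces: (a) $\Mt(\Acal^{\Q^k},\ep|x)=x$ for every $x\subset Q$, since there are no unobservable, instantaneous paths along which to extend a state estimate, so in particular $x_0=Q_0$; and (b) every path of the form $q\xrightarrow[]{s}q^1\xrightarrow[]{e}q^2$ with $s\in(E_{uo})^*$ that occurs in the construction reduces to a single observable transition $q\xrightarrow[]{e}q^2$ of weight $\mu(e)_{qq^2}$. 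Hence the subautomaton $\Acal_q^{\Q^k}$ attached to a state $q$ and a label $\s$ is simply the set of observable transitions $q\xrightarrow[]{e}q^2$ with $\ell(e)=\s$, and $x_2$ is the set of their targets.

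Next I would walk through cases (1)--(4) of the detector construction. Cases (1)--(3) only require deciding, for a pair (or singleton) $\bar x_2$ and a candidate weight, whether two observable transitions out of $x_1$ hit the states of $\bar x_2$ with a common weight; because $s\in(E_{uo})^*$ is forced to be empty, the ``same weight'' test degenerates from the $1$-dimensional EPL problem (Lemma~\ref{lem1_det_MPautomata}) to a direct comparison of the finitely many transition weights $\mu(e)_{qq^2}$, hence is polynomial. Moreover case (3) becomes vacuous, since $\Mt(\Acal^{\Q^k},\ep|\bar x_2)=\bar x_2$ is a singleton whenever $|\bar x_2|=1$.

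The main obstacle is case (4), namely the test \eqref{quoteG_det_MPautomata}, which in the general construction is exactly what forces the appeal to Presburger arithmetic (Lemma~\ref{lem4_det_MPautomata}) and the $2$-$\EXPTIME$ bound. Here I would argue that with $E_{uo}=\emptyset$ the condition becomes purely combinatorial: the required path collapses to a transition $\hat q\xrightarrow[]{\hat e}\hat q^2$ with $\hat q\in x_1$, $\{\hat q^2\}=\bar x_2$, and the competing paths collapse to transitions into $x_2\setminus\bar x_2$. Thus \eqref{quoteG_det_MPautomata} holds if and only if some weight $t$ realized by a transition from $x_1$ into $\bar x_2$ is \emph{not} realized by any transition from $x_1$ into $x_2\setminus\bar x_2$. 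Since there are only polynomially many transition weights to consider, one enumerates the candidate values $t$ (the weights of transitions from $x_1$ into $\bar x_2$) and, for each, checks in polynomial time that no transition into $x_2\setminus\bar x_2$ carries weight $t$; no Presburger quantifier elimination is needed. As $\Acal^{\Q^k}_{det}$ has $O(|Q|^2)$ states and polynomially many transitions, the whole construction runs in polynomial time, which gives the first claim. This parallels the reasoning for $\CCa(\Acal^{\Q^k})$ in Corollary~\ref{cor1_det_MPautomata}.

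For the second claim I would invoke the detector characterization of strong periodic detectability (Theorem~\ref{thm9_det_MPautomata}). Since $E_{uo}=\emptyset$, there is no path $q\xrightarrow[]{s_1}q'\xrightarrow[]{s_2}q'$ with $s_2\in(E_{uo})^+$, so condition \eqref{item16_det_MPautomata} can never hold and only \eqref{item17_det_MPautomata} remains: a reachable cycle of $\Acal^{\Q^k}_{det}$ all of whose states have cardinality $2$. This is a reachability-plus-cycle property of the finite graph $\Acal^{\Q^k}_{det}$, decidable in time linear in its size (for instance by restricting to the cardinality-$2$ states and computing strongly connected components reachable from $x_0$), hence polynomial in the size of $\Acal^{\Q^k}$. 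This mirrors the argument for $\Acal^{\Q^k}_{obs}$ in Corollary~\ref{cor2_det_MPautomata}, completing the proof.
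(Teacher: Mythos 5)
Your proposal is correct and follows essentially the same route as the paper: the paper treats this as a direct corollary of the detector-construction procedure in Section~\ref{subsubsec:detector} (by analogy with Corollary~\ref{cor1_det_MPautomata}), observing that when $E_{uo}=\emptyset$ the EPL and Presburger checks degenerate to direct comparisons of transition weights, and then SPD follows from Theorem~\ref{thm9_det_MPautomata} with condition \eqref{item16_det_MPautomata} vacuous. Your write-up merely makes explicit the case analysis (including the collapse of \eqref{quoteG_det_MPautomata} to a polynomial enumeration of candidate weights) that the paper leaves as "one directly sees."
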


For a labeled finite-state automaton $\Acal$, one directly sees from the above procedure of computing 
${\Acal}{^{\Q^k}_{det}}$ that detector $\Acal_{det}$ can be computed in polynomial time,
because all unobservable transitions of $\Acal$ have weight $0$ as shown in Remark~\ref{rem1_det_MPautomata}.
Then the following corollary holds. 
\begin{corollary}\label{cor7_det_MPautomata}
	The strong periodic detectability of a labeled finite-state automaton $\Acal$ can be verified in polynomial
	time.
\end{corollary}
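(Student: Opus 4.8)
The plan is to read off the result from the detector construction of Section~\ref{subsubsec:detector} combined with the equivalent condition for strong periodic detectability in Theorem~\ref{thm9_det_MPautomata}, after specializing to the degenerate weight structure of a labeled finite-state automaton. By Remark~\ref{rem1_det_MPautomata}, such an $\Acal$ is an $\Acal^{\N}$ in which every unobservable transition is instantaneous (weight $0$) and any two observable transitions carrying the same label carry the same weight. The decisive consequence is that the weight of a path is completely determined by its observable label sequence: an unobservable segment contributes $0$, and each observable step with label $\s$ contributes the label-dependent constant shared by all $\s$-labeled observable transitions. Hence every weight-matching constraint that forced the general procedure into the EPL problem (Lemma~\ref{lem1_det_MPautomata}) and into Presburger arithmetic (Lemma~\ref{lem4_det_MPautomata}) collapses: any two paths from states of $x_1$ that traverse unobservable transitions followed by one observable $\s$-transition automatically share the same weight.

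First I would re-run the four cases of the detector computation in Section~\ref{subsubsec:detector} and observe that in each branch the guessed common weight $t$ exists trivially, while the nonexistence clause of condition~\eqref{quoteG_det_MPautomata} becomes vacuous or is decided by plain reachability, since weight no longer separates paths. Concretely, forming the subautomaton of unobservable transitions out of each $q\in x_1$ and collecting its observable $\s$-successors reduces to the ordinary successor/powerset computation used for the detector $\Acal_{det}$ of a labeled finite-state automaton in~\cite{Shu2011GDetectabilityDES}. Because the detector states have cardinality at most $2$, there are $O(|Q|^2)$ states and the successor relation is obtained by polynomial-time reachability in $\Acal$, so $\Acal_{det}$ is computed in polynomial time.

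Next I would invoke Theorem~\ref{thm9_det_MPautomata}: $\Acal$ fails to be strongly periodically detectable if and only if the detector exhibits condition~\eqref{item16_det_MPautomata} (a reachable state $x'$ with $|x'|>1$ from some state of which an unobservable cycle of $\Acal$ is reachable) or condition~\eqref{item17_det_MPautomata} (a reachable cycle of the detector all of whose states have cardinality $2$). Both are decidable in time linear in the size of $\Acal_{det}$ via a strongly-connected-component decomposition: \eqref{item17_det_MPautomata} amounts to locating a nontrivial SCC among the reachable detector states all of whose members have cardinality $2$, and \eqref{item16_det_MPautomata} amounts to checking, for each reachable multi-state $x'$, whether some $q\in x'$ reaches an unobservable cycle of $\Acal$ (an SCC/reachability computation on $\Acal$ restricted to unobservable transitions). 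Since both the detector and these checks are polynomial in the size of $\Acal$, the verification is polynomial overall.

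The main point requiring care, rather than a genuine obstacle, is confirming that the side-condition~\eqref{quoteG_det_MPautomata} truly degenerates, i.e.\ that when all same-label observable weights coincide and all unobservable weights vanish there is never a spurious equal-weight path forcing an unexpected merge, so that the detector transitions coincide exactly with those of the classical finite-state detector. This holds because equality of weights reduces to equality of observable label sequences, which is precisely what the classical subset construction of~\cite{Shu2011GDetectabilityDES} already tracks; once this identification is made, the polynomial bound is immediate.
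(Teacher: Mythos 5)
Your proposal is correct and follows essentially the same route as the paper: the paper's own (very terse) argument likewise observes that, by Remark~\ref{rem1_det_MPautomata}, the weight constraints in the detector construction of Section~\ref{subsubsec:detector} degenerate (unobservable transitions have weight $0$ and same-label observable transitions share a weight), so $\Acal_{det}$ is computable in polynomial time, after which Theorem~\ref{thm9_det_MPautomata} yields the result. Your write-up merely makes explicit the details the paper leaves implicit, namely the collapse of the EPL/Presburger checks and condition~\eqref{quoteG_det_MPautomata}, and the linear-time SCC verification of the two conditions of Theorem~\ref{thm9_det_MPautomata}.
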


\subsubsection{The complexity lower bounds on verifying strong (periodic) detectability of labeled weighted
automaton $\Acal^{\N}$ and $\Acal^{\underN}$}

In this subsection, we prove $\coNP$ lower bounds on verifying strong detectability 
and strong periodic detectability of labeled deterministic weighted automata over monoid $(\N,+,0)$.
These complexity lower bounds also apply to strong detectability 
and strong periodic detectability of labeled deterministic weighted automata over 
semiring $\underN$, because deterministic automata are unambiguous, and
the definitions of detectability 
for unambiguous $\Acal^{\underQ}$ in \cite{Lai2021DetUnambiguousWAutomata} coincide with
those for unambiguous $\Acal^{\Q}$ in the current paper (except for minor and neglectable differences,
see Remark~\ref{rem2_det_MPautomata}).
As a result, deterministic $\Acal^{\N}$ and $\Acal^{\underN}$ are fundamentally more
complicated than labeled finite-state automata, because it is known that strong (periodic) detectability of
labeled finite-state automata can be verified in polynomial time \cite{Shu2011GDetectabilityDES,Zhang2019KDelayStrDetDES}.

\begin{theorem}\label{thm12_det_MPautomata}
	The problems of verifying strong detectability and strong periodic detectability of labeled 
	deterministic, deadlock-free, and divergence-free weighted automaton $\Acal^{\N}$ and
	$\Acal^{\underN}$ 
	are both $\coNP$-hard.
\end{theorem}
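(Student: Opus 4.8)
The plan is to give a single polynomial-time reduction from the subset sum problem, which is $\NP$-complete by Lemma~\ref{lem2_det_MPautomata}, in such a way that the constructed automaton is simultaneously strongly detectable and strongly periodically detectable exactly when the subset sum instance is a NO-instance. Since the complement of subset sum is $\coNP$-complete, this establishes $\coNP$-hardness of both verification problems at once. Concretely, given an instance $n_1,\dots,n_m,N$ (with each $n_i\ge1$), I would build a deterministic $\Acal^{\N}$ with an initial state $q_0$, a selection chain $p_0,\dots,p_m$, a reference state $b$, and two sinks $z_A,z_B$: add unobservable weight-$\mathbf{0}$ transitions $q_0\to p_0$ and $q_0\to b$ (distinct events); for each $i\in\llb 1,m\rrb$ add two unobservable transitions $p_{i-1}\to p_i$, one of weight $0$ (``exclude $n_i$'') and one of weight $n_i$ (``include $n_i$''), each with a fresh event so that determinism is preserved while the two choices yield different accumulated weights; finally add observable transitions $p_m\xrightarrow[]{a/0}z_A$ and $b\xrightarrow[]{a/N}z_B$ with $\ell(a)=\rho$, together with observable weight-$1$ self-loops $z_A\xrightarrow[]{d/1}z_A$ and $z_B\xrightarrow[]{d/1}z_B$ with $\ell(d)=\delta$.

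The crucial feature is that every selection path funnels into the \emph{single} state $z_A$, and that the unique observable event $a$ of a selection path carries accumulated weight $S=\sum_{i\in I}n_i$, whereas the reference path carries weight $N$; hence a selection path and the reference path produce the identical weighted label $(\rho,N)$ precisely when $\sum_{i\in I}n_i=N$. The automaton is deterministic, deadlock-free (every reachable state has an outgoing edge), and divergence-free (the only reachable cycles are the observable self-loops), so it meets all three structural constraints, and its size is polynomial in $\sum_i\size(n_i)+\size(N)$.

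For correctness I would argue both directions. If some $I$ satisfies $\sum_{i\in I}n_i=N$, then the selection path for $I$ and the reference path generate the same $\omega$-observation $(\rho,N)(\delta,N+1)(\delta,N+2)\cdots$ while residing in the distinct states $z_A$ and $z_B$ forever; thus $|\Mt(\Acal^{\N},\gamma)|\ge 2$ for arbitrarily long prefixes $\gamma$, so $\Acal^{\N}$ is neither SD nor SPD. This is precisely the witness demanded by Theorem~\ref{thm1_det_MPautomata} (the off-diagonal state $(z_A,z_B)$ carries a $(d,d)$-self-loop in $\CCa(\Acal^{\N})$ and is reachable via a matched $(a,a)$-transition), and by condition~\eqref{item17_det_MPautomata} of Theorem~\ref{thm9_det_MPautomata}, where divergence-freeness rules out condition~\eqref{item16_det_MPautomata}. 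Conversely, if no subset sums to $N$, then after the first observable event the weighted label $(\rho,t)$ pins the current state to $z_A$ when $t\ne N$ and to $z_B$ when $t=N$, and every subsequent $(\delta,\cdot)$ keeps the estimate a singleton; hence $|\Mt(\Acal^{\N},\gamma)|=1$ for every observation $\gamma$ of length $\ge1$, giving both SD and SPD (with threshold $2$, say).

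The step I expect to be the main obstacle is ruling out spurious indistinguishability, that is, guaranteeing detectability in the NO-case. The danger is that two \emph{different} subsets with equal sum, or the initial cloud of instantaneous unobservable successors of $q_0$, could keep the estimate non-singleton and wrongly falsify detectability. Funnelling all selection paths through the single sink $z_A$ neutralizes the first issue, since two selection paths with equal observation both reside in $z_A$, which lies on the diagonal of $\CCa(\Acal^{\N})$ and is therefore harmless; restricting the requirement to observations of positive length handles the initial instantaneous cloud. I would also carefully verify the weight bookkeeping at the observable event and that determinism is not broken by the two choice-events sharing a target. Finally, since the automaton is deterministic and hence unambiguous, the identical instance transfers the bound to the max-plus semiring $\underN$: the four detectability notions for unambiguous $\Acal^{\underN}$ coincide with those for $\Acal^{\N}$ (Remark~\ref{rem2_det_MPautomata}), so the same reduction proves $\coNP$-hardness for $\Acal^{\underN}$ as well.
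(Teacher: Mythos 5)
Your proposal is correct and takes essentially the same route as the paper's own proof: a polynomial-time reduction from subset sum in which an unobservable chain of weight-$0$/weight-$n_i$ choices funnels into a single sink, a reference transition carrying weight tied to $N$ leads to a second sink, and observable self-loops extend runs to infinity, so that YES-instances produce two forever-indistinguishable sinks (refuting both SD and SPD) while NO-instances pin the estimate to a singleton after the first observation, with the $\Acal^{\underN}$ case transferred via unambiguity exactly as in the paper. The differences (a separate reference state $b$ instead of the paper's direct $e/N+1$ edge out of $q_0$, a second label $\delta$ for the self-loops, fresh events per chain level instead of reusing $u_1,u_2$) are cosmetic and do not affect the argument.
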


\begin{proof}
	We reduce the $\NP$-complete subset sum problem (Problem~\ref{prob2_det_MPautomata}) to
	negation of strong detectability and strong periodic detectability of labeled deterministic
	weighted automata over $\N$ (hence also over $\underN$).

	Given positive integers $n_1,\dots,n_m$, and $N$, next we construct in polynomial time a
	labeled deterministic weighted automaton $\Acal_2^{\N}$
	as illustrated in Fig.~\ref{fig1_det_MPautomata}. Apparently, $\Acal_2^{\N}$ is deadlock-free and
	divergence-free. $q_0$ is the unique
	initial state and has initial time delay $0$. Events $u_1,u_2$ are unobservable.
	Event $e$ is observable
	and $\ell(e)=e$. For all $i\in\llb 0,m-1\rrb$, there exist two unobservable transitions 
	$q_{i}\xrightarrow[]{u_1/n_{i+1}}q_{i+1}$ and $q_{i}\xrightarrow[]{u_2/0}q_{i+1}$.
	The observable transitions are $q_{m}\xrightarrow[]{e/1}q_{m+1}^1$,
	$q_{0}\xrightarrow[]{e/N+1}q_{m+1}^2$, and two self-loops $q_{m+1}^1\xrightarrow[]{e/1}q_{m+1}^1$
	and $q_{m+1}^2\xrightarrow[]{e/1}q_{m+1}^2$.

	\begin{figure}[!htbp]
        \centering
\begin{tikzpicture}
	[>=stealth',shorten >=1pt,thick,auto,node distance=1.5 cm, scale = 1.0, transform shape,
	->,>=stealth,inner sep=2pt, initial text = 0]

	\tikzstyle{emptynode}=[inner sep=0,outer sep=0]

	\node[initial, state] (q0) {$q_0$};
	\node[state] (q1) [above right of = q0] {$q_1$};
	\node[state] (q2) [right of = q1] {$q_2$};
	\node[emptynode] (e1) [right of = q2] {$\cdots$};
	\node[state] (qm-1) [right of = e1] {$q_{m-1}$};
	\node[state] (qm) [right of = qm-1] {$q_m$};
	\node[state] (qm+11) [right of = qm] {$q_{m+1}^1$};
	\node[state] (qm+12) [below of = qm+11] {$q_{m+1}^2$};

	\path [->]
	(q0) edge [bend left] node [above, sloped] {$u{_1}/n_1$} (q1)
	(q0) edge [bend right] node [below, sloped] {$u{_2}/0$} (q1)
	(q1) edge [bend left] node [above, sloped] {$u{_1}/n_2$} (q2)
	(q1) edge [bend right] node [below, sloped] {$u{_2}/0$} (q2)
	(qm-1) edge [bend left] node [above, sloped] {$u{_1}/n_m$} (qm)
	(qm-1) edge [bend right] node [below, sloped] {$u{_2}/0$} (qm)
	(qm) edge node [above, sloped] {$e/1$} (qm+11)
	(qm+11) edge [loop right] node {$e/1$} (qm+11)
	(qm+12) edge [loop right] node {$e/1$} (qm+12)
	;
	
	\draw [->]
	(q0) .. controls (1,-0.5) .. node [above, sloped] {$e/N+1$} (qm+12)
	;

        \end{tikzpicture}
		\caption{Sketch of the reduction in the proof of Theorem \ref{thm12_det_MPautomata}.}
		\label{fig1_det_MPautomata}
	\end{figure}
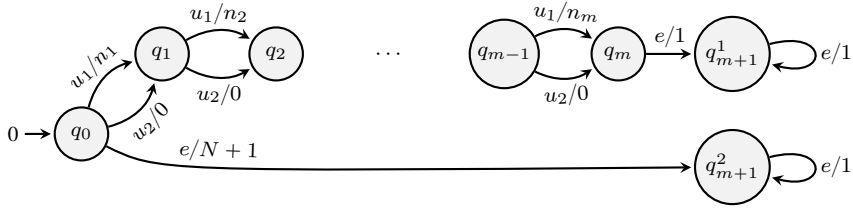

	Suppose there exists $I\subset\llb 1,m\rrb$ such that $N=\sum_{i\in I}n_i$.
	Then there is an unobservable path
	$\pi\in q_0\rightsquigarrow q_m$ whose weight is equal to $N$. Then we have 
	\begin{align}
		\ell(\tau(\pi\xrightarrow[]{e}q_{m+1}^1)) &= (e,N+1),\label{eqn9_det_MPautomata}\\
		\Mt(\Acal_2^{\N},(e,N+1)\dots(e,N+k)) &= \{q_{m+1}^1,q_{m+1}^2\}\label{eqn10_det_MPautomata}
	\end{align}
	for all $k\in\Z_+$.

	Choose \[w=\tau(\pi\xrightarrow[]{e}q_{m+1}^1(\xrightarrow[]{e}q_{m+1}^1)^{\omega})\in
	L^{\omega}(\Acal_2^\N).\]
	Then $$\ell(w)=(e,N+1)(e,N+2)\dots.$$

	Choose prefix $\gamma_k=(e,N+1)\dots(e,N+k)\sqsubset\ell(w)$. Then we have $|\gamma_k|\ge k$ and
	$|\Mt(\Acal_2^{\N},
	\gamma_k)|>1$ by \eqref{eqn10_det_MPautomata}. Hence $\Acal_2^{\N}$ is not strongly detectable.

	For all $k\in\N$, choose the above $w$, choose $w'=\tau(\pi\xrightarrow[]{e}q_{m+1}^1)\sqsubset w$,
	for all $w''$ such that $w'w''\sqsubset w$ and $|\ell(w'')|<k$, 
	we have $|\Mt(\Acal_2^{\N},\ell(w'w''))|>1$ by \eqref{eqn10_det_MPautomata}.
	Hence $\Acal_2^{\N}$ is not strongly periodically detectable.

	Suppose for all $I\subset\llb 1,m\rrb$, $N\ne\sum_{i\in I}n_i$. Then for all $\pi'\in q_0\rightsquigarrow
	q_m$,
	one has 
	\begin{align*}
		\ell(\tau(\pi'\xrightarrow[]{e}q_{m+1}^1)) &= (e,N'+1)\text{ for some }N'\ne N,\\
		\Mt(\Acal_2^{\N},\ell(\tau(\pi'\xrightarrow[]{e}q_{m+1}^1))) &= \{q_{m+1}^1\},\\
		\Mt(\Acal_2^{\N},(e,N'+1)\dots(e,N'+k)) &= \{q_{m+1}^1\},\\
		\Mt(\Acal_2^{\N},(e,N+1)\dots(e,N+k)) &= \{q_{m+1}^2\}
	\end{align*}
	for all $k\in\Z_+$. Hence $\Acal_2^{\N}$ is strongly detectable and strongly periodically detectable.
\end{proof}

The next corollary directly 
follows from Theorem~\ref{thm4_det_MPautomata} and Theorem~\ref{thm12_det_MPautomata}.

\begin{corollary}\label{cor4_det_MPautomata}
	The problems of verifying strong detectability of labeled unambiguous
	weighted automaton $\Acal^{\Q^k}$ and $\Acal^{\underQ}$ 
	are both $\coNP$-complete. 
\end{corollary}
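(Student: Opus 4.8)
The plan is to derive $\coNP$-completeness by pairing a membership argument taken from Theorem~\ref{thm4_det_MPautomata} with a hardness argument taken from Theorem~\ref{thm12_det_MPautomata}, while being careful to transfer each bound across the relevant subclasses. Since the corollary is billed as a direct consequence of those two theorems, the real content is organizing the inclusions correctly and handling the passage from the monoid $(\Q^k,+,0_k)$ to the max-plus semiring $\underQ$.

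For the $\coNP$ upper bound, first I would note that an unambiguous $\Acal^{\Q^k}$ is merely a special case of a general $\Acal^{\Q^k}$, so Theorem~\ref{thm4_det_MPautomata} immediately places strong-detectability verification for the unambiguous case in $\coNP$. To cover $\Acal^{unam,\underQ}$, I would invoke the coincidence of the max-plus and real-time semantics already explained in the literature review: in an unambiguous automaton, under every event sequence there is at most one path from the initial states to any given state, so the maximal weight $t_j'$ defining the max-plus timed word equals the genuine path weight $t_j$, and consequently the four notions of detectability for $\Acal^{unam,\underQ}$ agree with those for $\Acal^{unam,\Q}$ (up to the minor differences noted in Remark~\ref{rem2_det_MPautomata}). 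As $\Acal^{unam,\Q}$ is just the $k=1$ instance of $\Acal^{unam,\Q^k}$, its strong-detectability verification lies in $\coNP$, and membership transfers to $\Acal^{unam,\underQ}$.

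For the $\coNP$ lower bound, I would appeal directly to Theorem~\ref{thm12_det_MPautomata}, which already establishes $\coNP$-hardness of strong detectability for deterministic, deadlock-free, and divergence-free $\Acal^{\N}$ and $\Acal^{\underN}$. Because $\N\subset\Q$, such an $\Acal^{\N}$ is in particular a deterministic—and hence unambiguous—$\Acal^{\Q}$, i.e.\ a special case of $\Acal^{unam,\Q^k}$; likewise $\Acal^{\underN}$ is a special case of $\Acal^{unam,\underQ}$. Thus the hardness carries over to the more general classes with no additional reduction. Combining the two bounds yields $\coNP$-completeness for both $\Acal^{unam,\Q^k}$ and $\Acal^{unam,\underQ}$.

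The only delicate point is the semantic identification in the membership step: I would need to confirm that the definitions of strong detectability in \cite{Lai2021DetUnambiguousWAutomata} (the max-plus manner) and in the present paper (the real-time manner) truly coincide on unambiguous automata, which is exactly what the uniqueness-of-path property guarantees. Everything else is a routine invocation of closure of $\coNP$ under the evident inclusions of problem instances, so I expect this to be the main—and essentially the sole—obstacle.
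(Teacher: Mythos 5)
Your proposal is correct and takes essentially the same route as the paper: the paper obtains the corollary directly from Theorem~\ref{thm4_det_MPautomata} (membership, by restricting the general $\coNP$ upper bound to the unambiguous subclass) and Theorem~\ref{thm12_det_MPautomata} (hardness, since the deterministic $\Acal^{\N}$/$\Acal^{\underN}$ instances there are unambiguous special cases), with the $\Acal^{\underQ}$ case handled exactly as you do, via the coincidence of the max-plus and real-time detectability notions for unambiguous automata recorded in Remark~\ref{rem2_det_MPautomata}. Your write-up merely makes explicit the inclusions that the paper compresses into the phrase ``directly follows.''
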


\subsection{Illustrative examples}

In this subsection, we illustrate how to use Theorem~\ref{thm1_det_MPautomata}, 
Theorem~\ref{thm9_det_MPautomata}, Theorem~\ref{thm2_det_MPautomata}, and Theorem~\ref{thm7_det_MPautomata}
to verify strong (periodic) detectability and weak (periodic) detectability of labeled weighted
automata over monoid $(\N,+,0)$ and labeled unambiguous weighted automata over semiring $\underN$. 

\begin{example}\label{exam8_MPautomata} 
	Reconsider labeled unambiguous weighted automaton $\Acal_0^{\N}$ (the same as $\Acal_0^{\underN}$)
	in Fig.~\ref{fig10_det_MPautomata}.
	Its detectability cannot be verified by using the method developed in 
	\cite{Lai2021DetUnambiguousWAutomata} (see Remark~\ref{rem7_det_MPautomata}). Next we show how to verify
	its detectability by using the methods proposed in the current paper. 

	\begin{figure}[!htbp]
        \centering
	\begin{tikzpicture}
	[>=stealth',shorten >=1pt,thick,auto,node distance=2.0 cm, scale = 1.0, transform shape,
	->,>=stealth,inner sep=2pt]

	\tikzstyle{emptynode}=[inner sep=0,outer sep=0]

	\node[initial, state, initial where = above] (00) {$q_0q_0$};
	\node[state] (34) [right of = 00] {$q_3q_4$};
	\node[state] (43) [left of = 00] {$q_4q_3$};

	\path [->]
	(00) edge node [above, sloped] {$(a,a)$} (34)
	(00) edge node [above, sloped] {$(a,a)$} (43)
	(34) edge [loop right] node {$(a,a)$} (34)
	(43) edge [loop left] node {$(a,a)$} (43)
	;

    \end{tikzpicture}
	\caption{Self-composition $\CCa(\Acal_0^{\N})$ of the automaton $\Acal_0^{\N}$ in 
	Fig.~\ref{fig10_det_MPautomata}.}
	\label{fig12_det_MPautomata} 
	\end{figure} 

	Its self-composition $\CCa(\Acal_0^{\N})$ is shown in Fig.~\ref{fig12_det_MPautomata}.
	The self-loops on $(q_3,q_4)$ and on $(q_4,q_3)$ are easy to find. We use the method developed
	in Section~\ref{subsubsec:computation_self_composition} to check whether there exists a transition
	$((q_0,q_0),(a,a),(q_3,q_4))$: (1) Guess transitions $q_1\xrightarrow[]{a/1}q_3$ and $q_2\xrightarrow[]
	{a/1}q_4$ of $\Acal_0^{\N}$. (2) Compute subautomaton $\Acal_{q_0}^{\N}$ as in 
	Fig.~\ref{fig13_det_MPautomata} and asynchronous product $\Acal_{q_0}^{\N}\otimes\Acal_{q_0}^{\N}$
	as in Fig.~\ref{fig14_det_MPautomata} and check in weighted directed graph $\Acal_{q_0}^{\N}\otimes
	\Acal_{q_0}^{\N}$, whether there exists a path from $(q_0,q_0)$ to $(q_1,q_2)$ with weight $0$.
	By using the solution to the $1$-dimensional EPL problem given in \cite{Nykanen2002ExactPathLength}, 
	we can find several such paths, e.g., $(q_0,q_0)\left( \xrightarrow[]{-1} (q_0,q_2) \right)
	^{9}\xrightarrow[]{10} (q_1,q_2)$.

	In $\CCa(\Acal_0^{\N})$, there exists a self-loop on the reachable state $(q_3,q_4)$,
	then a transition sequence as in \eqref{eqn2_3_det_MPautomata} exists. In addition, in 
	$\Acal_0^{\N}$, there exists a self-loop on $q_3$, then \eqref{eqn2_2_det_MPautomata}
	of Theorem~\ref{thm1_det_MPautomata}
	holds. By Theorem~\ref{thm1_det_MPautomata}, $\Acal_0^{\N}$ is not strongly detectable.
	By definition, we choose infinite path $q_0\xrightarrow[]{u}q_1\left( \xrightarrow[]{a}q_3 \right)^{\omega}$,
	and choose path $\pi_n=q_0\xrightarrow[]{u}q_1\left( \xrightarrow[]{a}q_3 \right)^{n}$, then
	we have $\ell(\tau(\pi_n))=(a,11)\dots(a,10+n)$, and $\Mt(\Acal_0^{\N},\ell(\tau(\pi_n)))=
	\{q_3,q_4\}$, hence we also have $\Acal_0^{\N}$ is not strongly detectable.

	\begin{figure}[!htbp]
        \centering
	\begin{tikzpicture}
	[>=stealth',shorten >=1pt,thick,auto,node distance=2.0 cm, scale = 1.0, transform shape,
	->,>=stealth,inner sep=2pt, initial text = 0]

	\tikzstyle{emptynode}=[inner sep=0,outer sep=0]

	\node[initial, state, initial where = above] (q0) {$q_0$};
	\node[state] (q2) [right of = q0] {$q_2$};
	\node[state] (q1) [left of = q0] {$q_1$};

	\path [->]
	(q0) edge node [above, sloped] {$u/10$} (q1)
	(q0) edge node [above, sloped] {$u/1$} (q2)
	(q2) edge [loop above] node [above, sloped] {$u/1$} (q2)
	;

    \end{tikzpicture}
	\caption{Subautomaton $\Acal_{q_0}^{\N}$ of the automaton $\Acal_0^{\N}$ in 
	Fig.~\ref{fig10_det_MPautomata}.}
	\label{fig13_det_MPautomata} 
	\end{figure}

	\begin{figure}[!htbp]
        \centering
	\begin{tikzpicture}
	[>=stealth',shorten >=1pt,thick,auto,node distance=2.7 cm, scale = 1.0, transform shape,
	->,>=stealth,inner sep=2pt, initial text = 0]

	\tikzstyle{emptynode}=[inner sep=0,outer sep=0]

	\node[initial, state, initial where = left] (00) {$q_0q_0$};
	\node[state] (10) [above left of = 00] {$q_1q_0$};
	\node[state] (01) [above right of = 00] {$q_0q_1$};
	\node[state] (11) [above right of = 10] {$q_1q_1$};
	\node[state] (02) [below left of = 00] {$q_0q_2$};
	\node[state] (20) [below right of = 00] {$q_2q_0$};
	\node[state] (12) [above left of = 02] {$q_1q_2$};
	\node[state] (22) [below right of = 02] {$q_2q_2$};
	\node[state] (21) [above right of = 20] {$q_2q_1$};

	\path [->]
	(00) edge node [above, sloped] {$(\ep,u)/-10$} (01)
	(00) edge node [above, sloped] {$(u,\ep)/10$} (10)
	
	(00) edge node [above, sloped] {$(u,\ep)/1$} (20)
	(10) edge node [above, sloped] {$(\ep,u)/-10$} (11)
	(01) edge node [above, sloped] {$(u,\ep)/10$} (11)
	
	(20) edge [loop below] node {$(u,\ep)/1$} (20)
	
	(02) edge node [above, sloped] {$(u,\ep)/1$} (22)
	(20) edge node [above, sloped] {$(\ep,u)/-1$} (22)
	(20) edge node [above, sloped] {$(\ep,u)/-10$} (21)
	(10) edge node [above, sloped] {$(\ep,u)/-1$} (12)
	(01) edge node [above, sloped] {$(u,\ep)/1$} (21)
	(12) edge [loop below] node {$(\ep,u)/-1$} (12)
	(22) edge [loop below] node {$\begin{matrix}(u,\ep)/1\\(\ep,u)/-1\end{matrix}$} (22)
	(21) edge [loop below] node {$(u,\ep)/1$} (21)
	;

	\path [->, very thick]
	(00) edge node [above, sloped] {$(\ep,u)/-1$} (02)
	(02) edge [loop below] node {$(\ep,u)/-1$} (02)
	(02) edge node [above, sloped] {$(u,\ep)/10$} (12)
	;

    \end{tikzpicture}
	\caption{Asynchronous product $\Acal_{q_0}^{\N}\otimes\Acal_{q_0}^{\N}$ of subautomaton
	$\Acal_{q_0}^{\N}$ (in Fig.~\ref{fig13_det_MPautomata})
	of the automaton $\Acal_0^{\N}$ in Fig.~\ref{fig10_det_MPautomata}.}
	\label{fig14_det_MPautomata} 
	\end{figure}
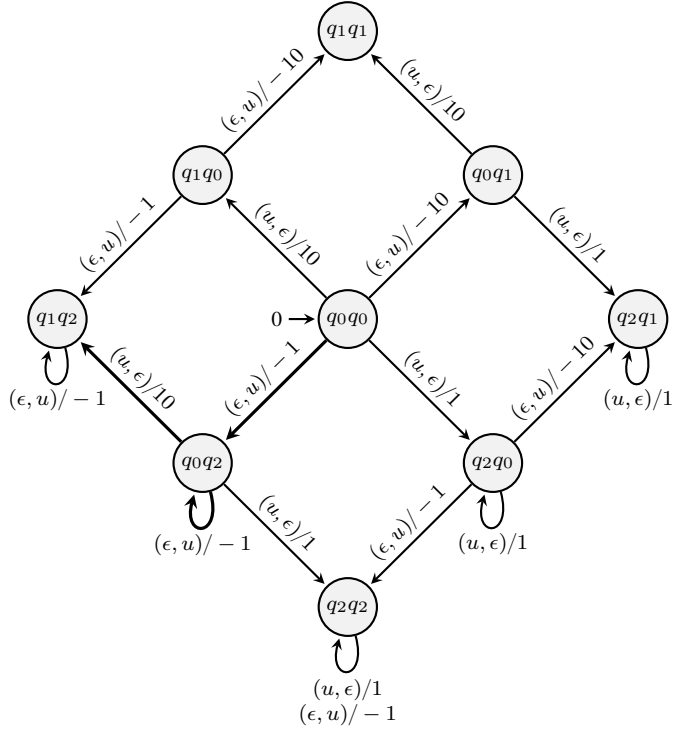

	One observer ${\Acal}{_{0obs}^{\N}}$ (also detector ${\Acal}{_{0det}^{\N}}$) is shown in 
	Fig.~\ref{fig15_det_MPautomata}. We use the method developed in Section~\ref{subsubsec:observer} to check
	whether $\{q_0\}\xrightarrow[]{(a,11)}\{q_3,q_4\}$ and $\{q_0\}\xrightarrow[]{(a,2)}\{q_4\}$ are
	transitions of ${\Acal}{_{0obs}^{\N}}$. For the former, by the path $(q_0,q_0)\left( \xrightarrow[]{-1}
	(q_0,q_2) \right)^{9}\xrightarrow[]{10} (q_1,q_2)$ with weight $0$ in Fig.~\ref{fig14_det_MPautomata} 
	and transitions $q_1\xrightarrow[]{a/1}q_3$ and $q_2\xrightarrow[]{a/1}q_4$, we know that $\{q_3,q_4\}$
	is a successor of $\{q_0\}$ (see \eqref{item9_det_MPautomata} in Section~\ref{subsubsec:observer}), 
	and then obtain the transition $\{q_0\}\xrightarrow[]{(a,11)}\Mt(\Acal_0^{\N},\ep|\{q_3,q_4\})$, 
	where $\Mt(\Acal_0^{\N},\ep|\{q_3,q_4\})=\{q_3,q_4\}$; for the latter, one sees
	a transition sequence $q_0\xrightarrow[]{u/1}q_2\xrightarrow[]{a/1}q_4$ with weight $2$, but the unique
	transition sequence from $q_0$ to $q_3$ is $q_0\xrightarrow[]{u/10}q_1\xrightarrow[]{a/1}q_3$ having weight
	$11$ that is not equal to $2$, so $\{q_0\}\xrightarrow[]{(a,2)}\{q_4\}$ is also a transition.
	On the other hand, since there is 
	a transition sequence $q_0\left(\xrightarrow[]{u/1}q_2\right)^{10}\xrightarrow[]{a/1}q_4$ having the same weight as
	$q_0\xrightarrow[]{u/10}q_1\xrightarrow[]{a/1}q_3$, there is no transition from $\{q_0\}$ to $\Mt(\Acal_0^{\N},
	\ep|\{q_3\})=\{q_3\}$.
	In observer ${\Acal}{_{0obs}^{\N}}$ (also detector ${\Acal}{_{0det}
	^{\N}}$), there is a self-loop on a reachable state 
	$\{q_3,q_4\}$ of cardinality $2$, then \eqref{item17_det_MPautomata} of Theorem~\ref{thm9_det_MPautomata} holds,
	hence $\Acal_0^{\N}$ is not strongly periodically detectable. Consider infinite path 
	$\pi=q_0\left( \xrightarrow[]{u}q_2 \right)^{\omega}$, one has $\tau(\pi)=(u,1)(u,2)\dots$, and $\ell(\tau
	(\pi))=\ep\in(\{a\}\times\N)^*$, then \eqref{item2_det_MPautomata} of Theorem~\ref{thm2_det_MPautomata}
	holds, and $\Acal_0^{\N}$ is weakly detectable. Moreover, one has $\Mt(\Acal_0^{\N},\ep)=\{q_0\}$,
	then \eqref{item12_det_MPautomata} of Theorem~\ref{thm7_det_MPautomata} holds, and $\Acal_0^{\N}$
	is weakly periodically detectable.

	\begin{figure}[!htbp]
        \centering
	\begin{tikzpicture}
	[>=stealth',shorten >=1pt,thick,auto,node distance=2.0 cm, scale = 1.0, transform shape,
	->,>=stealth,inner sep=2pt]

	\tikzstyle{emptynode}=[inner sep=0,outer sep=0]

	\node[initial, state, initial where = above] (0) {$q_0$};
	\node[state] (34) [right of = 00] {$q_3,q_4$};
	\node[state] (4) [left of = 00] {$q_4$};

	\path [->]
	(00) edge node [above, sloped] {$(a,11)$} (34)
	(34) edge [loop right] node {$(a,1)$} (34)
	(00) edge node [above, sloped] {$(a,2)$} (4)
	(4) edge [loop left] node {$(a,1)$} (4)
	;

    \end{tikzpicture}
	\caption{One observer ${\Acal}{_{0obs}^{\N}}$ (also detector ${\Acal}{_{0det}
	^{\N}}$) of the automaton $\Acal_0^{\N}$ in Fig.~\ref{fig10_det_MPautomata}.}
	\label{fig15_det_MPautomata} 
	\end{figure}
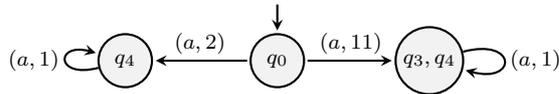 
\end{example}

\begin{example}\label{exam7_MPautomata}
	Reconsider automaton $\Acal_1^{\N}$ shown in Fig.~\ref{fig6_det_MPautomata}.
	
	In its self-composition
	shown in Fig.~\ref{fig7_det_MPautomata}, there exists a unique cycle, i.e., a self-loop on state
	$(q_4,q_4)$; there exists no state of the form $(q,q')$ satisfying $q\ne q'$ reachable from the 
	unique cycle. Then there exists no transition sequence shown in \eqref{eqn2_3_det_MPautomata}.
	By Theorem~\ref{thm1_det_MPautomata}, $\Acal_1^{\N}$ is strongly detectable. By definition,
	$\Acal_1^{\N}$ is strongly detectable by choosing $k=2$.

	In one of its detectors obtained from Fig.~\ref{fig8_det_MPautomata} by changing $\Z_{+}$ to $1$,
	there exists a reachable state
	$\{q_1,q_2\}$ satisfying $|\{q_1,q_2\}|>1$, and in $\Acal_1^{\N}$, there exists an unobservable
	self-loop on $q_1$. Then \eqref{item16_det_MPautomata} in Theorem~\ref{thm9_det_MPautomata} is satisfied,
	so $\Acal_1^{\N}$ is not strongly periodically detectable. By definition, for all $k\in\N$,
	choose $w_k=\tau\left(q_0\xrightarrow[]{a}q_1\left( \xrightarrow[]{u}q_1 \right)^{\omega}\right)=(a,1)(u,2)
	(u,3)\dots$, $w'=(a,1)\sqsubset w_k$; for all $w''=(u,2)\dots(u,i)\sqsubset (u,2)(u,3)\\\dots$, one 
	has $w'w''\sqsubset w_k$, $\ell(w'')=\ep$, and $\Mt(\Acal_1^{\N},\ell(w'w''))=\Mt(\Acal_1^{\N},
	(a,1))=\{q_1,q_2\}$, so $\Acal_1^{\N}$ is not strongly periodically detectable. 

	In one of its observers also obtained from Fig.~\ref{fig8_det_MPautomata} by changing $\Z_{+}$ to $1$,
	one sees that 
	\eqref{item3_det_MPautomata} in Theorem~\ref{thm2_det_MPautomata} is satisfied, so $\Acal_1^{\N}$
	is weakly detectable. On the other hand, we have $(a,1)(u,2)(u,3)\dots$$\in L^{\omega}(\Acal_1^\N)$
	such that $\ell((a,1)(u,2)(u,3)\dots)=(a,1)\in(\Sig\times\N)^+$,
	\eqref{item2_det_MPautomata} of Theorem~\ref{thm2_det_MPautomata} is also satisfied, one then also has 
	$\Acal_1^{\N}$ is weakly detectable. Similarly, \eqref{item13_det_MPautomata} in 
	Theorem~\ref{thm7_det_MPautomata} is satisfied, so $\Acal_1^{\N}$ is weakly periodically detectable.
\end{example}

\begin{example}
Reconsider automaton $\Acal^{\N}_2$ in the proof of
Theorem~\ref{thm12_det_MPautomata} (in Fig.~\ref{fig1_det_MPautomata}).

Assume that the subset sum problem has a solution, that is, there exists $I\subset\llb 1,m\rrb$ such that 
$N=\sum_{i\in I}n_i$. We consider $m>1$. 

The self-composition of $\Acal^{\N}_2$ is shown in Fig.~\ref{fig2_det_MPautomata}. The initial
state $(q_0,q_0)$ transitions to a self-loop on $(q_{m+1}^1,q_{m+1}^2)$ and then also to 
the state $(q_{m+1}^1,q_{m+1}^2)$ such that $q_{m+1}^1\ne q_{m+1}^2$. In $\Acal^{\N}_2$,
there is a self-loop on $q_{m+1}^1$. Then by Theorem~\ref{thm1_det_MPautomata}, $\Acal^{\N}_2$ is 
not strongly detectable.

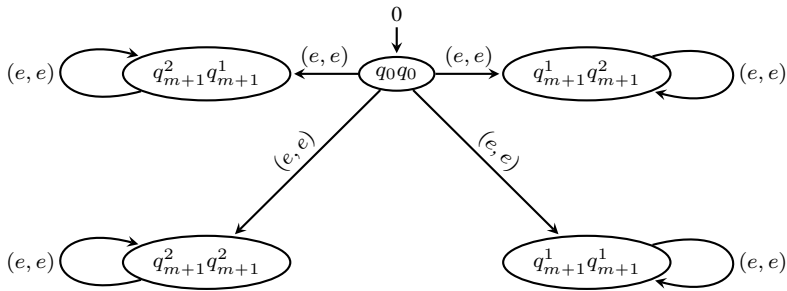
\begin{figure}[!htbp]
        \centering
	\begin{tikzpicture}
	[>=stealth',shorten >=1pt,thick,auto,node distance=2.5 cm, scale = 1.0, transform shape,
	->,>=stealth,inner sep=2pt, initial text = 0]

	\tikzstyle{emptynode}=[inner sep=0,outer sep=0]
	
	\node[initial, initial where = above, elliptic state] (q0) {$q_0q_0$};
	\node[elliptic state] (q12) [right of = q0] {$q_{m+1}^1q_{m+1}^2$};
	\node[elliptic state] (q21) [left of = q0] {$q_{m+1}^2q_{m+1}^1$};
	\node[elliptic state] (q11) [below of = q12] {$q_{m+1}^1q_{m+1}^1$};
	\node[elliptic state] (q22) [below of = q21] {$q_{m+1}^2q_{m+1}^2$};

	\path [->]
	(q0) edge node [above, sloped] {$(e,e)$} (q12)
	(q0) edge node [above, sloped] {$(e,e)$} (q21)
	(q12) edge [loop right] node [right] {$(e,e)$} (q12)
	(q21) edge [loop left] node [left] {$(e,e)$} (q21)
	(q0) edge node [above, sloped] {$(e,e)$} (q11)
	(q0) edge node [above, sloped] {$(e,e)$} (q22)
	(q11) edge [loop right] node [right] {$(e,e)$} (q11)
	(q22) edge [loop left] node [left] {$(e,e)$} (q22)
	
	;
	\end{tikzpicture}
	\caption{Self-composition of $\Acal^{\N}_2$ in Fig.~\ref{fig1_det_MPautomata} when the 
	subset sum problem has a solution.}
	\label{fig2_det_MPautomata}
\end{figure}

The observers (also the detectors) of $\Acal^{\N}_2$ are shown in Fig.~\ref{fig4_det_MPautomata}. 
The initial state $x_0$ transitions to a self-loop on $\{q_{m+1}^1,q_{m+1}^2\}$, where
$\{q_{m+1}^1,q_{m+1}^2\}$ is of cardinality $2$,
then \eqref{item15_det_MPautomata} of Theorem~\ref{thm8_det_MPautomata} and \eqref{item17_det_MPautomata}
of Theorem~\ref{thm9_det_MPautomata} are satisfied.
By Theorem~\ref{thm8_det_MPautomata} or Theorem~\ref{thm9_det_MPautomata}, $\Acal^{\N}_2$
is not strongly periodically detectable. On the other hand, $x_0$ transitions to a self-loop on
$\{q_{m+1}^1\}$, where $\{q_{m+1}^1\}$ is of cardinality $1$, then \eqref{item3_det_MPautomata}
of Theorem~\ref{thm2_det_MPautomata} is satisfied. 
By Theorem~\ref{thm2_det_MPautomata}, $\Acal^{\N}_2$
is weakly detectable. In addition, \eqref{item13_det_MPautomata} of Theorem~\ref{thm7_det_MPautomata}
is satisfied. By Theorem~\ref{thm7_det_MPautomata}, $\Acal^{\N}_2$
is weakly periodically detectable.

\begin{figure}[!htbp]
        \centering
	\begin{tikzpicture}
	[>=stealth',shorten >=1pt,thick,auto,node distance=2.8 cm, scale = 1.0, transform shape,
	->,>=stealth,inner sep=2pt, initial text = 0]
	\tikzstyle{emptynode}=[inner sep=0,outer sep=0]
	
	\node[initial, initial where = above, elliptic state] (q0) {$x_0$};
	\node[elliptic state] (q1) [right of = q0] {$q_{m+1}^1$};
	\node[elliptic state] (q12) [left of = q0] {$q_{m+1}^1,q_{m+1}^2$};

	\path [->]
	(q0) edge node [above, sloped] {$(e,U+1)$} (q1)
	(q0) edge node [above, sloped] {$(e,N+1)$} (q12)
	(q1) edge [loop right] node [right] {$(e,1)$} (q1)
	(q12) edge [loop left] node [left] {$(e,1)$} (q12)
	
	;
	\end{tikzpicture}
	\caption{Observers (also detectors) of $\Acal^{\N}_2$ in Fig.~\ref{fig1_det_MPautomata} when the 
	subset sum problem has a solution, where $x_0=\Mt(\Acal^{\N}_2,\ep)=\{q_0,\dots,q_m\}$, $U$ can be the sum
	of elements of any subset of $\{n_1,\dots,n_m\}$ different from $N$.}
	\label{fig4_det_MPautomata}
\end{figure}
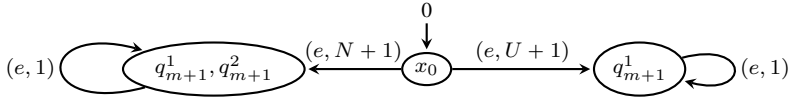

Assume that the subset sum problem has no solution, 
that is, for all $I\subset\llb 1,m\rrb$, $N\ne\sum_{i\in I}n_i$.

The self-composition of $\Acal^{\N}_2$ is shown in Fig.~\ref{fig3_det_MPautomata}, in which 
there is no reachable state of the form $(q,q')$ such that $q\ne q'$. 
Then by Theorem~\ref{thm1_det_MPautomata}, $\Acal^{\N}_2$ is strongly detectable.

\begin{figure}[!htbp]
        \centering
	\begin{tikzpicture}
	[>=stealth',shorten >=1pt,thick,auto,node distance=2.8 cm, scale = 1.0, transform shape,
	->,>=stealth,inner sep=2pt, initial text = 0]
	\tikzstyle{emptynode}=[inner sep=0,outer sep=0]
	
	\node[initial, initial where = above, elliptic state] (q0) {$q_0q_0$};
	\node[elliptic state] (q11) [right of = q0] {$q_{m+1}^1q_{m+1}^1$};
	\node[elliptic state] (q22) [left of = q0] {$q_{m+1}^2q_{m+1}^2$};

	\path [->]
	(q0) edge node [above, sloped] {$(e,e)$} (q11)
	(q0) edge node [above, sloped] {$(e,e)$} (q22)
	(q11) edge [loop right] node [right] {$(e,e)$} (q11)
	(q22) edge [loop left] node [left] {$(e,e)$} (q22)
	
	;
	\end{tikzpicture}
	\caption{Self-composition of $\Acal^{\N}_2$ in Fig.~\ref{fig1_det_MPautomata} when the 
	subset sum problem has no solution.}
	\label{fig3_det_MPautomata}
\end{figure}

The observers (also the detectors) of $\Acal^{\N}_2$ are shown in Fig.~\ref{fig5_det_MPautomata}. 
The initial state $x_0$ satisfies that $|x_0|>1$, but there is no unobservable cycle in $\Acal^{\N}_2$,
that is, \eqref{item16_det_MPautomata} of Theorem~\ref{thm9_det_MPautomata} is not satisfied.
Apparently, \eqref{item17_det_MPautomata} of Theorem~\ref{thm9_det_MPautomata} is not satisfied
either. Hence $\Acal^{\N}_2$ is strongly periodically detectable. Here $x_0$ also transitions to 
a self-loop in which the state is of cardinality $1$ (e.g., $\{q_{m+1}^1\}$), then by
Theorem~\ref{thm2_det_MPautomata} and Theorem~\ref{thm7_det_MPautomata}, $\Acal^{\N}_2$
is weakly detectable and weakly periodically detectable.

\begin{figure}[!htbp]
        \centering
	\begin{tikzpicture}
	a[>=stealth',shorten >=1pt,thick,auto,node distance=2.8 cm, scale = 1.0, transform shape,
	->,>=stealth,inner sep=2pt, initial text = 0]
	\tikzstyle{emptynode}=[inner sep=0,outer sep=0]
	
	\node[initial, initial where = above, elliptic state] (q0) {$x_0$};
	\node[elliptic state] (q1) [right of = q0] {$q_{m+1}^1$};
	\node[elliptic state] (q2) [left of = q0] {$q_{m+1}^2$};

	\path [->]
	(q0) edge node [above, sloped] {$(e,U+1)$} (q1)
	(q0) edge node [above, sloped] {$(e,N+1)$} (q2)
	(q1) edge [loop right] node [right] {$(e,1)$} (q1)
	(q2) edge [loop left] node [left] {$(e,1)$} (q2)
	
	;
	\end{tikzpicture}
	\caption{Observers (also detectors) of $\Acal^{\N}_2$ in Fig.~\ref{fig1_det_MPautomata} when the 
	subset sum problem has no solution, where $x_0=\Mt(\Acal^{\N}_2,\ep)=\{q_0,\dots,q_m\}$, $U$ can be the sum
	of elements of any subset of $\{n_1,\dots,n_m\}$.}
	\label{fig5_det_MPautomata}
\end{figure}

\end{example}

\section{Initial exploration of detectability in labeled timed automata}
\label{sec:detLTA}

As mentioned in the last paragraph of Section~\ref{subsec:background}, a labeled weighted automaton
$\Acal^{\Q_{\ge0}}$ is actually
a one-clock labeled timed automaton in which the clock is reset along with every occurrence of every event and all
clock constraints are singletons. In this section, we initially explore detectability in general labeled
timed automata, and show some relations between a labeled weighted automaton $\Acal^{\frakM}$ and a labeled
timed automaton $\Acal^{\frakT}$. 

\subsection{Notation}

Let $\Sig$ be an alphabet, a \emph{timed word} over $\Sig$ is a finite sequence $(a_1,t_1)\dots(a_n,t_n)$,
where $n\in\N$, $a_1,\dots,a_n\in\Sig$, $t_1,\dots,t_n\in\R_{\ge0}$, $t_1\le \cdots \le t_n$.
The set of timed words over $\Sig$ is denoted by $\TW^*(\Sig)$. An \emph{$\omega$-timed word} over $\Sig$
is an infinite sequence $(a_1,t_1)(a_2,t_2)\dots$, where $a_i\in\Sig$, $t_i\in\R_{\ge0}$, $t_i\le t_{i+1}$,
$i\in\Z_+$. The set of $\omega$-timed words over $\Sig$ is denoted by $\TW^{\omega}(\Sig)$.

A labeled timed automaton\footnote{In order to
study detectability, the model~\eqref{eqn_LTA} is obtained by adding a labeling function to the original
model of timed automata proposed in \cite{Alur1994TimedAutomaton}, and in addition, the final states in
the original model are
omitted. Adding a labeling function brings in essential difficulties. For example, event-recording automata
are a special class of timed automata which are determinizable, so the deterministic timed automaton obtained
by determinizing an event-recording automaton can be regarded as the observer of the latter and hence can be 
used to verify current-state opacity of the latter, where current-state opacity means that for a generated 
event sequence, if its last state is secret, then there is another generated event sequence whose last state
is not secret such that the two event sequences generate the same label sequence.
However, current-state opacity (called L-opacity in \cite{Cassez2009TimedOpacityUndecidable}) is undecidable
in deterministic labeled event-recording automata \cite{Cassez2009TimedOpacityUndecidable}.}
\cite{Tripakis2002DiagnosisTimedAutomata,Cassez2009TimedOpacityUndecidable} is formulated as 
\begin{equation}\label{eqn_LTA}
	\Acal^{\frakT}=(Q,E,Q_0,C,\Delta,\Sig,\ell),
\end{equation}
where $Q$ is a finite set of \emph{states}, $E$ a finite \emph{alphabet} (elements of $E$ are
called \emph{events}), $Q_0\subset Q$ a set of \emph{initial states}, $C$ a finite set of \emph{clocks}
(i.e., real variables), $\Delta\subset Q\times E\times Q\times 2^C\times \Phi(C)$ the set of \emph{edges},
$\ell:E\to\Sig\cup\{\ep\}$ a labeling function. The observable event set $E_o$ and unobservable event set
$E_{uo}$ are defined in the same way as those for $\Acal^{\frakM}$~\eqref{LWA_monoid_det_MPautomata}.
Labeling function $\ell$ is also recursively extended to $E^*\cup E^{\omega}\to\Sig^*\cup\Sig^{\omega}$.
A timed automaton is actually an $\Acal^{\frakT}$ in which $\ell$ is the identity mapping.
In the sequel, by ``a timed automaton $\bar\Acal^{\frakT}$'' we mean the automaton obtained from $\Acal^{\frakT}$
by removing all its labels, including $\ep$. In this case, one directly observes every occurrence of every 
event.

An edge $(q,e,q',\lambda,\zeta)$ represents a transition from state $q$ to state $q'$ when event $e$ occurs
and the current \emph{clock interpretation} satisfies the \emph{clock constraint} $\zeta$ with the clocks in the 
\emph{reset subset}
$\lambda\subset C$ reset (to $0$). A clock interpretation is a mapping $v:C\to\R_{\ge0}$ which assigns to each 
clock a nonnegative real number. A clock constraint $\zeta$ over $C$ is defined inductively by
$$\zeta:=x\le c|c\le x|\neg \zeta|\zeta_1\wedge\zeta_2,$$
where $x$ is a clock in $C$ and $c$ is a nonnegative rational constant.  A clock interpretation $\lambda$
satisfies a clock constraint $\zeta$ if and only if $\zeta$ evaluates to true using the values given by $v$.
For a clock interpretation $v$, a constant $t\in\R_{\ge0}$, and a reset set $\lambda\subset C$,
$v+t$ denotes the clock
interpretation that assigns to each clock $x$ the value $v(x)+t$, $[\lambda\mapsto t]v$ denotes the clock
interpretation that assigns $t$ to each clock in $\lambda$ and agrees with $v$ over the rest of the clocks.

A pair $(q,v)$ is called a \emph{configuration}, where $q\in Q$, $v\in(\R_{\ge0})^{C}$ (i.e., $v$ is 
a clock interpretation). An \emph{infinite run} of $\Acal^{\frakT}$ is defined as an alternating 
sequence
\begin{equation}\label{eqn_LTA_inf_run}
	\begin{array}{ll}
		\pi:= &(q_0,v_0)\xrightarrow[]{t_1}(q_0,v_0+t_1)\xrightarrow[\lambda_1,\zeta_1]{e_1}(q_1,v_1)\cdots\\
		&(q_i,v_i)\xrightarrow[]{t_{i+1}}(q_i,v_i+t_{i+1})\xrightarrow[\lambda_{i+1},\zeta_{i+1}]{e_{i+1}}(q_{i+1},v_{i+1})\cdots,
	\end{array}
\end{equation}
or briefly as
\begin{equation}\label{eqn_LTA_inf_run_brief}
	\pi:= (q_0,v_0)\xrightarrow[\lambda_1,\zeta_1]{t_1,e_1}(q_1,v_1)\cdots
	(q_i,v_i)\xrightarrow[\lambda_{i+1},\zeta_{i+1}]{t_{i+1},e_{i+1}}(q_{i+1},v_{i+1})\cdots,
\end{equation}
where $q_0\in Q_0$, $v_0(x)=0$ for all $x\in C$, for all $i\in\N$, one has $t_i\ge0$, $v_i+t_{i+1}$ satisfies
$\zeta_{i+1}$, $v_{i+1}=[\lambda_{i+1}\mapsto 0](v_i+t_{i+1})$, $(q_i,e_{i+1},q_{i+1},\lambda_{i+1},
\zeta_{i+1})\in\Delta$. 

When $\Acal^{\frakT}$ was in a state $q$ at time instant a nonnegative real number $\tau$ with clock
interpretation $v$, as time elapsed, $v$ might become $v + t$ with $t$ a nonnegative real number. If there was
a transition $(q,e,q',\lambda,\zeta)$ in $\Delta$ and $v + t$ satisfies $\zeta$, then the automaton could transition
to state $q'$ at time instant $\tau + t$ with the occurrence of $e$, and meanwhile $v + t$ was changed to 
$[\lambda\mapsto0](v + t)$, i.e., all clocks in $\lambda$ were reset and all clocks outside $\lambda$ remained 
invariant.

For an infinite run $\pi$ as in \eqref{eqn_LTA_inf_run}, the sequence $(e_1,t_1)\dots(e_i,\sum_{j=1}^i t_j)\dots$
is called its \emph{$\omega$-timed word} and denoted by $\tau(\pi)$. We use $L^{\omega}(\Acal^{\frakT})$ 
to denote the set of the
$\omega$-timed words  of infinite runs of $\Acal^{\frakT}$. Similarly, a \emph{finite run} is defined as
a prefix of an infinite run $\pi$ \eqref{eqn_LTA_inf_run} ending with some $(q_i,v_i+t_{i+1})\xrightarrow[\lambda_{i+1},
\zeta_{i+1}]{e_{i+1}}(q_{i+1},v_{i+1})$, then its \emph{timed word} is defined by $(e_1,t_1)\dots(e_{i+1},\sum_{j=1}^{i+1}
t_j)$ and its \emph{time}, denoted by $\tim(\pi)$, is defined by $\sum_{j=1}^{i+1}t_j$.
We use $L(\Acal^{\frakT})$ to denote the set of the timed words of finite runs of $\Acal^{\frakT}$.

A labeled timed automaton $\Acal^{\frakT}$ is \emph{deterministic} if
\begin{itemize}
	\item $|Q_0|=1$,
	\item for every two different edges of the form $(q,e,-,-,\zeta_1)$ and $(q,e,-,-,\zeta_2)$,
		$\zeta_1\wedge\zeta_2$ is unsatisfiable.
\end{itemize}
In a deterministic $\Acal^{\frakT}$, for an edge $(q,e,q',\lambda,\zeta)$, once $q,e,\zeta$ are fixed,
$q'$ and $\lambda$ are uniquely determined. And for an edge of the form $(q,e,-,-,\zeta)$ and a clock interpretation
$v$, once $q,e,v$ are fixed, there is at most one $\zeta$ such that $v$ satisfies $\zeta$. Hence for an edge
$(q,e,q',\lambda,\zeta)$ and a clock interpretation $v$, one has $q,e,v$ uniquely determine $q',\lambda,\zeta$.
Consequently, for a deterministic $\Acal^{\frakT}$, consider an infinite run $\pi$~\eqref{eqn_LTA_inf_run},
if $q_0,v_0$ and its $\omega$-timed word are fixed, then the whole run is uniquely determined. 

Labeling function
$\ell$ is also extended to $E\times \R_{\ge0}$ and recursively extended 
to $(E\times\R_{\ge0})^*\cup(E\times\R_{\ge0})^{\omega}$ in the same way as those for 
$\Acal^{\frakM}$~\eqref{LWA_monoid_det_MPautomata}. The \emph{observation} to a (finite or infinite) run
$\pi$ is defined by
timed label sequence $\ell(\tau(\pi))$ based on a conventional setting in which there is a clock
outside $\Acal^{\frakT}$ that is never reset and can record global time instants of the occurrences
of all events, where the setting has been widely used in the studies of fault diagnosis of $\Acal^{\frakT}$ 
\cite{Tripakis2002DiagnosisTimedAutomata} and opacity of $\Acal^{\frakT}$ \cite{Cassez2009TimedOpacityUndecidable}.
For a timed label sequence $\gamma\in(\Sig\times\R_{\ge0})^*$, the \emph{current-state estimate}
$\Mt(\Acal^{\frakT},\gamma)$\footnote{corresponding to $\Mt(\Acal^{\frakM},\gamma)$~\eqref{CSE_det_MPautomata}}
is defined by the set of states in $Q$ that are reachable from $Q_0$ via some finite
run the observation to which is $\gamma$, where after the last observable event in the run, no time elapses. 
Formally, 
\begin{equation}
	\begin{split}
		\Mt(\Acal^{\frakT},\ep) = Q_0\cup \{q\in Q| &(\exists \text{ run } (q_0,v_0)\xrightarrow[\lambda_1,\zeta_1]{0,e_1}\cdots
		\xrightarrow[\lambda_{i+1},\zeta_{i+1}]{0,e_{i+1}}(q_{i+1},v_{i+1})) \\
		&[(e_1,\dots,e_{i+1}\in E_{uo})\wedge(i\in\N)]\},
	\end{split}
\end{equation}
for all $\gamma\in(\Sig\times\R_{\ge0})^+$,
\begin{equation}
	\begin{split}
		\Mt(\Acal^{\frakT},\gamma) = \{ & (\exists\text{ run } 
		(q_0,v_0)\xrightarrow[\lambda_1,\zeta_1]{t_1,e_1}\cdots
		\xrightarrow[\lambda_{i+1},\zeta_{i+1}]{t_{i+1},e_{i+1}}(q_{i+1},v_{i+1})\\
		&\xrightarrow[\lambda_{i+2},\zeta_{i+2}]{{t_{i+2}},e_{i+2}}\cdots
		\xrightarrow[\lambda_{i+j+1},\zeta_{i+j+1}]{{t_{i+j+1}},e_{i+j+1}}(q_{i+j+1},v_{i+j+1}))\\
		&[(e_{i+1}\in E_{o})\wedge (e_{i+2},\dots,e_{i+j+1}\in E_{uo})\wedge(i,j\in\N)\wedge\\
		&\ell(\tau((q_0,v_0)\xrightarrow[\lambda_1,\zeta_1]{t_1,e_1}\cdots
		\xrightarrow[\lambda_{i+1},\zeta_{i+1}]{t_{i+1},e_{i+1}}(q_{i+1},v_{i+1})))=\gamma]\}.
	\end{split}
\end{equation}

\subsection{Relation between $\Acal^{\Q_{\ge0}}$ and $\Acal^{\frakT}$}

Consider labeled weighted automaton $\Acal^{\Q_{\ge0}}$~\eqref{LWA_monoid_det_MPautomata}, if we remove the 
weights on the initial states, and replace each transition $q\xrightarrow[]{e/\mu(e)_{qq'}}q'$ by an edge
$(q,e,q',\{x\},x=\mu(e)_{qq'})$, then $\Acal^{\Q_{\ge0}}$ becomes an equivalent labeled timed automaton
with a single clock $x$, where $x$ is reset along with every occurrence of every event.
The paths (starting from initial states) of the former correspond to the runs of the latter,
and the timed words of the paths coincide 
with the timed words of the runs. Hence by definition, a deterministic $\Acal^{\Q_{\ge0}}$ is a
deterministic labeled timed automaton, but not vice versa.

\begin{example}\label{exam1_LTA}
	Reconsider the labeled weighted automaton~$\Acal_1^{\N}$ in Fig.~\ref{fig6_det_MPautomata}. 

	\begin{figure}[!htbp]
        \centering
	\begin{tikzpicture}
	[>=stealth',shorten >=1pt,thick,auto,node distance=3.0 cm, scale = 1.0, transform shape,
	->,>=stealth,inner sep=2pt]

	\tikzstyle{emptynode}=[inner sep=0,outer sep=0]

	\node[initial, state, initial where = left] (q0) {$q_0$};
	\node[state] (q1) [above right of = q0] {$q_1$};
	\node[state] (q2) [below right of = q0] {$q_2$};
	\node[state] (q3) [above right of = q2] {$q_3$};
	\node[state] (q4) [right of = q3] {$q_{4}$};

	\path [->]
	(q0) edge node [above, sloped] {$a$} node [below, sloped] {$\{x\},x=1$} (q1)
	(q0) edge node [above, sloped] {$a$} node [below, sloped] {$\{x\},x=1$} (q2)
	(q1) edge [loop above] node [above, sloped] {$u,\{x\},x=1$} (q1)
	(q2) edge [loop below] node [below, sloped] {$u,\{x\},x=1$} (q2)
	(q1) edge node [above, sloped] {$b$} node [below, sloped] {$\{x\},x=2$} (q3)
	(q2) edge node [above, sloped] {$b$} node [below, sloped] {$\{x\},x=1$}  (q3)
	(q3) edge node [above, sloped] {$u$} node [below, sloped] {$\{x\},x=1$} (q4)
	(q4) edge [loop right] node {$a,\{x\},x=1$} (q4)
	;

        \end{tikzpicture}
		\caption{Labeled timed automaton $\Acal_1^{\frakT}$ that is equivalent to the labeled weighted automaton
		$\Acal_1^{\N}$ in Fig.~\ref{fig6_det_MPautomata}, where $x$ is the unique clock and reset whenever
		an event occurs, $\ell(u)=\ep$, $\ell(a)=\ell(b)=\rho$.}
		\label{fig1_det_LTA}
	\end{figure}
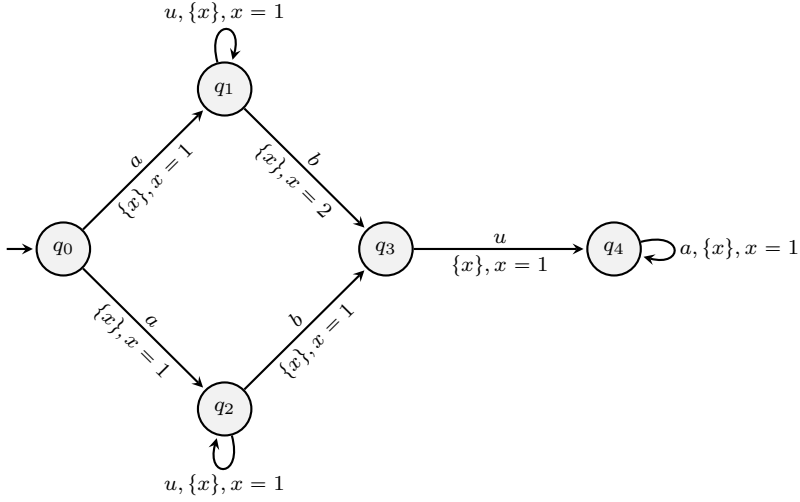

	Automaton $\Acal_1^{\N}$ can be equivalently represented as the labeled timed automaton $\Acal_1^{\frakT}$ 
	shown in Fig.~\ref{fig1_det_LTA}. Consider the finite run
	\begin{align}\label{eqn1_det_LTA}
		\pi_1 = (q_0,0) \xrightarrow[]{1} (q_0,1) \xrightarrow[\{x\},x=1]{a} (q_1,0)
		\xrightarrow[]{2} (q_1,2) \xrightarrow[\{x\},x=2]{b} (q_3,0),
	\end{align}
	its timed word and timed label sequence are
	\begin{subequations}\label{eqn2_det_LTA}
		\begin{align}
			\tau(\pi_1) &= (a,1)(b,3)\text{ and}\\
			\ell(\tau(\pi_1)) &= (\rho,1)(\rho,3),\text{ respectively}.
		\end{align}
	\end{subequations}
	Note that the run $\pi_1$ in \eqref{eqn1_det_LTA} is consistent with the path $\pi_1$ in
	\eqref{eqn12_1_det_MPautomata}, the timed word $\tau(\pi_1)$ and timed label sequence $\ell(\tau(\pi_1))$ in 
	\eqref{eqn2_det_LTA} are the same as those in \eqref{eqn13_1_det_MPautomata}.
\end{example}

\subsection{The definitions of detectability}

We reformulate the four definitions of detectability for labeled timed automata.

\begin{definition}[SD]\label{def_SD_LTA}
	A labeled timed automaton $\Acal^{\frakT}$ \eqref{eqn_LTA}
	is called \emph{strongly detectable} if
	there is $t\in\blue\Z_+$, for every $\omega$-timed word $w\in L^{\omega}(\Acal^\frakT)$, for each
	prefix $\gamma$ of $\ell(w)$, if $|\gamma|\ge t$, then $|\Mt(\Acal^{\frakT},\gamma)|=1$.
\end{definition}

\begin{definition}[SPD]\label{def_SPD_LTA}
	A labeled timed automaton $\Acal^{\frakT}$ \eqref{eqn_LTA}
	is called \emph{strongly periodically
	detectable} if there is $t\in\blue\Z_+$, for every $\omega$-timed word $w\in L^{\omega}(\Acal^\frakT)$, 
	for every prefix $w'\sqsubset w$, there is $w''\in (E\times \R_{\ge0})^*$ such that 
	$|\ell(w'')|<t$, $w'w''\sqsubset w$, and $|\Mt(\Acal^{\frakT},\ell(w'w''))|=1$.
\end{definition}

\begin{definition}[WD]\label{def_WD_LTA}
	A labeled timed automaton $\Acal^{\frakT}$ \eqref{eqn_LTA}
	is called \emph{weakly detectable} if
	$L^{\omega}(\Acal^\frakT)\ne\emptyset$ implies that
	there is $t\in\blue\Z_+$, for some $\omega$-timed word $w\in L^{\omega}(\Acal^\frakT)$, for each
	prefix $\gamma$ of $\ell(w)$, if $|\gamma|\ge t$, then $|\Mt(\Acal^{\frakT},\gamma)|=1$.
\end{definition}

\begin{definition}[WPD]\label{def_WPD_LTA}
	A labeled timed automaton $\Acal^{\frakT}$ \eqref{eqn_LTA}
	is called \emph{weakly periodically
	detectable} if $L^{\omega}(\Acal^\frakT)\ne\emptyset$ implies that
	there is $t\in\blue\Z_+$, for some $\omega$-timed word $w\in L^{\omega}(\Acal^\frakM)$, for each prefix
	$w'\sqsubset w$, there is $w''\in (E\times \R_{\ge0})^*$ such that $|\ell(w'')|<t$, $w'w''\sqsubset w$,
	and $|\Mt(\Acal^{\frakM},\ell(w'w''))|=1$.
\end{definition}

By definition, a deterministic timed automaton satisfies the four definitions of detectability.

\subsection{Decidability and undecidability of detectability}

In this section, we prove that in labeled timed automata, the strong detectability verification problem
is $\PSPACE$-complete, while weak (periodic) detectability is undecidable, which remarkably differentiates a 
labeled timed automaton $\Acal^{\frakT}$ from a labeled weighted automaton $\Acal^{\Q^k}$.

In order to prove $\PSPACE$-hardness of verifying strong detectability, we use the following reduction
that is almost the same as the one used to prove $\PSPACE$-hardness of verifying diagnosability of labeled
timed automata \cite{Tripakis2002DiagnosisTimedAutomata}.
Given a deterministic timed automaton $\bar\Acal^{\frakT}$ (which is always detectable as mentioned above)
and a target state $q_f$,
add two transitions from $q_f$ to two new states $q_1$ and $q_2$, and add two self-loops
on $q_1$ and $q_2$, all the four transitions have the same event and the same clock
constraint \textbf{true}. Call the new automaton $\bar\Acal^{\frakT}_{ext}$. All the events of 
$\bar\Acal^{\frakT}_{ext}$
are set to be observable with themselves as their labels. Then, it can be seen that $q_f$ is reachable in 
$\bar\Acal^{\frakT}$ if and only if $\bar\Acal^{\frakT}_{ext}$ is not strongly (periodically) detectable.
By the $\PSPACE$-hardness of the reachability problem in deterministic timed automata \cite{Alur1994TimedAutomaton},
we have it is $\PSPACE$-hard to verify strong  (periodic) detectability of labeled timed automata.

In order to verify strong detectability of $\Acal^{\frakT}$, we use the 
standard \emph{parallel composition} $\Acal^{\frakT}||\Acal^{\frakT}$ 
\cite{Tripakis2002DiagnosisTimedAutomata} in which observable edges of $\Acal^{\frakT}$ with the same label
are synchronized but unobservable edges interleave.
We make two copies of $\Acal^{\frakT}$, $\Acal^{\frakT}_i=(Q_i,E_i,Q_0^i,
C_i,\Delta_i,\Sig,\ell_i)$, $i=1,2$, by renaming states, events, and clocks of $\Acal^{\frakT}$:
\begin{itemize}
	\item Each state $q$ of $\Acal^{\frakT}$ is renamed $q_1$ in $\Acal^{\frakT}_1$ and $q_2$ in $\Acal^{\frakT}_2$.
	\item Each event $e$ of $\Acal^{\frakT}$ is renamed $e_1$ in $\Acal^{\frakT}_1$ and $e_2$ in $\Acal^{\frakT}_2$.
	\item Each clock $x$ of $\Acal^{\frakT}$ is renamed $x_1$ in $\Acal^{\frakT}_1$ and $x_2$ in $\Acal^{\frakT}_2$.
	\item The edges are copied and renamed correspondingly.
	\item For $i=1,2$, for all events $e_i\in E_i$, $\ell_i(e_i):=\ell(e)$.
\end{itemize}
$\Acal^{\frakT}||\Acal^{\frakT}$ is defined by 
\begin{equation}\label{eqn_PC_LTA} 
	(Q',E',Q_0',C',\Delta',\Sig,\ell'),
\end{equation}
where $Q'=Q_1\times Q_2$, $E'=\{(e_1,e_2)|e_1\in E_1,e_2\in E_2,\ell_1(e_1)=\ell_2(e_2)\in\Sig\}\cup\{(e_1,\ep)
|e_1\in E_1,\ell_1(e_1)
=\ep\}\cup\{(\ep,e_2)|e_2\in E_2,\ell_2(e_2)=\ep\}$, $Q_0'=Q_0^1\times Q_0^2$, $C'=C_1\cup C_2$,
for every two observable edges $(q_i,e_i,q_i',\lambda_i,\zeta_i)$ of $A^{\frakT}_i$, $i=1,2$, with
$\ell_1(e_1)=\ell_2(e_2)\in\Sig$, construct an observable edge $((q_1,q_2),(e_1,e_2),(q_1',q_2'),\lambda_1\cup
\lambda_2,\zeta_1\wedge \zeta_2)$ of $\Acal^{\frakT}||\Acal^{\frakT}$, for every unobservable edge 
$(q_1,e_1,q_1',\lambda_1,\zeta_1)$ of $\Acal^{\frakT}
_1$ and every state $q_2$ of $\Acal^{\frakT}_2$, construct an unobservable edge $((q_1,q_2),(e_1,\ep),(q_1',q_2),
\lambda_1,\zeta_1)$ of $\Acal^{\frakT}||\Acal^{\frakT}$, for every state $q_1$ of $\Acal^{\frakT}_1$ and every
unobservable edge $(q_2,e_2,q_2',\lambda_2, \zeta_2)$ of $\Acal^{\frakT}_2$, construct an unobservable edge
$((q_1,q_2),(\ep,e_2),(q_1,q_2'),\lambda_2,\zeta_2)$ of $\Acal^{\frakT}||\Acal^{\frakT}$, for every event
$(e_1,e_2)\in E'$, $\ell'((e_1,e_2)):=\ell_1(e_1)=\ell_2(e_2)$.

$\Acal^{\frakT}||\Acal^{\frakT}$ can be computed in time polynomial in the size of $\Acal^{\frakT}$.
Let $\rho$ be a finite run of $\Acal^{\frakT}||\Acal^{\frakT}$, for $i=1,2$, let $\rho^i$ be obtained by erasing
all elements of $\Acal^{\frakT}_{3-i}$ from $\rho$ and aggregating all consecutive time delays without events
between them. Then one has $\rho$ is a finite run of $\Acal^{\frakT}||\Acal^{\frakT}$ if and only if $\rho^1$
and $\rho^2$ are finite runs of $\Acal_1^{\frakT}$ and $\Acal_2^{\frakT}$, respectively, and for such
$\rho,\rho^1,\rho^2$, one has
$\tim(\rho)=\tim(\rho^1)=\tim(\rho^2)$ \cite{Tripakis2002DiagnosisTimedAutomata}.
For example, consider finite run 
\begin{align*}
	\rho = &((q_0^1,q_0^2),(v_0^1,v_0^2))\xrightarrow[\lambda_1^1\cup\lambda_1^2,\zeta_1^1\wedge\zeta_1^2]{t_1,
	(e_1^1,e_1^2)}  ((q_1^1,q_1^2),(v_1^1,v_1^2)) \xrightarrow[\lambda_2^2,\zeta_2^2]{t_2,(\ep,e_2^2)}\\
	&((q_1^1,q_2^2),(v_2^1,v_2^2)) \xrightarrow[\lambda_3^1\cup\lambda_3^2,\zeta_3^1\wedge\zeta_3^2]{t_3,(e_3^1,e_3^2)}
	((q_3^1,q_3^2),(v_3^1,v_3^2)),
\end{align*}
the corresponding $\rho^1$ and $\rho^2$ are as follows:
\begin{align*}
	\rho^1 = (q_0^1,v_0^1)\xrightarrow[\lambda_1^1,\zeta_1^1]{t_1,e_1^1} & (q_1^1,v_1^1) \xrightarrow[\lambda_3^1,
	\zeta_3^1]{t_2+t_3,e_3^1}(q_3^1,v_3^1),\\
	\rho^2 = (q_0^2,v_0^2)\xrightarrow[\lambda_1^2,\zeta_1^2]{t_1,e_1^2} & (q_1^2,v_1^2) \xrightarrow[\lambda_2^2,\zeta_2^2]{t_2,e_2^2}(q_2^2,v_2^2)
	\xrightarrow[\lambda_3^2,\zeta_3^2]{t_3,e_3^2}(q_3^2,v_3^2).
\end{align*}
Then 
\begin{align*}
	\tim(\rho) &= \tim(\rho^1)=\tim(\rho^2)=t_1+t_2+t_3,\\
	\ell(\tau(\rho)) &= \ell(\tau(\rho^1)) = \ell(\tau(\rho^2)) = (\ell(e_1^1),t_1)(\ell(e_3^1),t_1+t_2+t_3).
\end{align*}

\begin{example}\label{exam2_LTA}
	Reconsider the labeled timed automaton $\Acal_1^{\frakT}$ in Example~\ref{exam1_LTA} (shown in 
	Fig.~\ref{fig1_det_LTA}). Part of the parallel composition $\Acal_1^{\frakT}||\Acal_1^{\frakT}$ is
	shown in Fig.~\ref{fig2_det_LTA}.
	\begin{figure}[!htbp]
        \centering
	\begin{tikzpicture}
	[>=stealth',shorten >=1pt,thick,auto,node distance=3.5 cm, scale = 1.0, transform shape,
	->,>=stealth,inner sep=2pt]

	\tikzstyle{emptynode}=[inner sep=0,outer sep=0]

	\node[initial, elliptic state, initial where = left] (00) {$q_0^1,q_0^2$};
	\node[elliptic state] (12) [right of = 00] {$q_1^1,q_2^2$};
	\node[elliptic state] (33) [right of = 12] {$q_3^1,q_3^2$};
	\node[elliptic state] (43) [below of = 33] {$q_4^1,q_3^2$};
	\node[elliptic state] (44) [left of = 43] {$q_4^1,q_4^2$};

	\path [->]
	(00) edge node [above, sloped] {$(a_1,a_2),\{x_1,x_2\}$} node [below, sloped] {$x_1=1\wedge x_2=1$} (12)
	(12) edge [loop above] node [above, sloped] {$\begin{matrix}(u_1,\ep)\\\{x_1\},x_1=1\end{matrix}$} (12)
	(12) edge [loop below] node [below, sloped] {$\begin{matrix}(\ep,u_2)\\\{x_2\},x_2=1\end{matrix}$} (12)
	(12) edge node [above, sloped] {$(b_1,b_2),\{x_1,x_2\}$} node [below, sloped] {$x_1=2\wedge x_2=1$} (33)
	(33) edge node [right] {$\begin{matrix}(u_1,\ep)\\\{x_1\},x_1=1\end{matrix}$} (43)
	(43) edge node [above, sloped] {$(\ep,u_2)$} node [below, sloped] {$\{x_2\},x_2=1$} (44)
	(44) edge [loop below] node [below] {$\begin{matrix}(a_1,a_2),\{x_1,x_2\}\\x_1=1\wedge x_2=1\end{matrix}$} (44)
	;

    \end{tikzpicture}
	\caption{Part of the parallel composition $\Acal_1^{\frakT}||\Acal_1^{\frakT}$, where $\Acal_1^{\frakT}$
	is in Fig~\ref{fig1_det_LTA}.}
	\label{fig2_det_LTA}
	\end{figure}
\end{example}

By using the parallel composition $\Acal^{\frakT}||\Acal^{\frakT}$, we give an equivalent condition for 
strong detectability of $\Acal^{\frakT}$ (analogous to the equivalent condition for
strong detectability of labeled weighted automaton $\Acal^{\frakM}$ in Theorem~\ref{thm1_det_MPautomata} 
based on its self-composition $\CCa(\Acal^{\frakM})$). Differently
from Theorem~\ref{thm1_det_MPautomata} in which $\CCa(\Acal^{\frakM})$ was directly used, here
$\Acal^{\frakT}||\Acal^{\frakT}$ cannot be directly used because $\Acal^{\frakT}$ is time-variant,
i.e., when $\Acal^{\frakT}$ was in the same state at different time instants, the possible next transitions 
may vary. We use the region automaton \cite{Alur1994TimedAutomaton} of $\Acal^{\frakT}||\Acal^{\frakT}$
to obtain the equivalent condition. 

The region automaton $\RA(\bar\Acal^{\frakT})$ of a timed automaton $\bar\Acal^{\frakT}$ (here labels are 
not useful) is actually a finite-state automaton each of whose states is a pair of a state 
of $\bar\Acal^{\frakT}$ and a clock region (a special subset of $(\mathbb{R}_{\ge0})^C$).
The set $(\mathbb{R}_{\ge0})^C$ is partitioned into a finite number of subsets (called clock regions)
based on a special equivalence relation $\sim$ on $\R_{\ge0}^C$, where each clock region is an equivalence
class and each clock of all vectors
in the same clock region will be reset in the same order as time advances (if possible). We refer the reader 
to \cite{Alur1994TimedAutomaton} for details of constructing such a finite partition and a region automaton.
The size of a region automaton is exponential in the size of $\bar\Acal^{\frakT}$. The runs of an $\bar\Acal^{\frakT}$ 
correspond to the runs of its region automaton $\RA(\bar\Acal^{\frakT})$. In detail, for every run 
$\pi$~\eqref{eqn_LTA_inf_run_brief}, after replacing each $v_i$ by the clock region $[v_i]_{\sim}$ generated by
$v_i$ and removing each $t_i,\lambda_i,\zeta_i$, then a run of $\RA(\bar\Acal^{\frakT})$ is obtained. Finding a
run of $\bar\Acal^{\frakT}$ that corresponds to a given run of $\RA(\bar\Acal^{\frakT})$ can also be done (although
more complicatedly). Then based on the correspondence and argument similar
to that in the proof of Theorem~\ref{thm1_det_MPautomata}, the following result holds.
\begin{theorem}\label{thm1_det_LTA}
	A labeled timed automaton $\Acal^{\frakT}$~\eqref{eqn_LTA}
	is not strongly detectable if and only if in the region automaton $\RA(\Acal^{\frakT}||\Acal^{\frakT})$
	of the parallel composition $\Acal^{\frakT}||\Acal^{\frakT}$ \eqref{eqn_PC_LTA},
	\begin{enumerate}[(i)]
		\item there exists a transition sequence
			\begin{align*}
				((q_0^1,q_0^2),R_0) \xrightarrow[]{s_1'} ((q_1^1,q_1^2),R_1) \xrightarrow[]{s_2'}
				((q_1^1,q_1^2),R_1) \xrightarrow[]{s_3'} ((q_2^1,q_2^2),R_2)
			\end{align*}
			such that $((q_0^1,q_0^2),R_0)$ is initial,
			$s_2'$ contains at least one observable event of $\Acal^{\frakT}||\Acal^{\frakT}$, $q_2^1\ne q_2^2$,
		\item
			$(q_2^1,R_2|_{C_1})$ is reachable in the region automaton $\RA(\Acal^{\frakT})$ of $\Acal^{\frakT}$
			and there is a cycle reachable from $(q_2^1,R_2|_{C_1})$ in $\RA(\Acal^{\frakT})$, where $R_2|_{C_1}$
			is the projection of clock region $R_2$ to the left component of $\Acal^{\frakT}||\Acal^{\frakT}$.
	\end{enumerate}
\end{theorem}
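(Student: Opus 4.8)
The statement I want to prove is Theorem~\ref{thm1_det_LTA}, which characterizes non-strong-detectability of a labeled timed automaton $\Acal^{\frakT}$ via a cycle-reachability condition on the region automaton $\RA(\Acal^{\frakT}||\Acal^{\frakT})$. This is the timed analogue of Theorem~\ref{thm1_det_MPautomata}, which I may freely invoke as a structural template. The proof should exploit the three standard facts established in the excerpt: (a) the run correspondence between $\Acal^{\frakT}||\Acal^{\frakT}$ and the pair $(\rho^1,\rho^2)$ of its projected runs, with $\tim(\rho)=\tim(\rho^1)=\tim(\rho^2)$ and matching timed label sequences; (b) the run correspondence between a timed automaton and its region automaton, i.e.\ collapsing $v_i\mapsto[v_i]_\sim$ and forgetting delays/guards sends runs of $\bar\Acal^{\frakT}$ to runs of $\RA(\bar\Acal^{\frakT})$, and this map is onto in the sense that every region-automaton run can be time-realized; (c) the definition of strong detectability (Definition~\ref{def_SD_LTA}) unwound to: $\Acal^{\frakT}$ is not strongly detectable iff for every $t\in\Z_+$ there exist $w\in L^\omega(\Acal^{\frakT})$ and a prefix $\gamma\sqsubset\ell(w)$ with $|\gamma|\ge t$ and $|\Mt(\Acal^{\frakT},\gamma)|>1$.

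\textbf{Plan for the ``only if'' direction.} Assume $\Acal^{\frakT}$ is not strongly detectable. I would pick $t$ larger than the number of states of $\RA(\Acal^{\frakT}||\Acal^{\frakT})$, and extract from the definition two distinct runs $\pi_1,\pi_2$ of $\Acal^{\frakT}$, starting from initial configurations, sharing the same timed label sequence $\gamma$ with $|\gamma|\ge t$, ending in distinct states $q_2^1\ne q_2^2$, and such that after their last observable events only unobservable, time-zero edges occur. By fact~(a), synchronizing $\pi_1,\pi_2$ on observable labels and interleaving unobservable edges yields a finite run $\rho$ of $\Acal^{\frakT}||\Acal^{\frakT}$ reaching $(q_2^1,q_2^2)$. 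Passing to $\RA(\Acal^{\frakT}||\Acal^{\frakT})$ via fact~(b) and applying the Pigeonhole Principle to the long sequence of observable transitions produces a repeated region-automaton state $((q_1^1,q_1^2),R_1)$, giving the transition sequence in item~(i) with $s_2'$ containing an observable event. Finiteness of states and the fact that $w$ is an $\omega$-word (so the run from $q_2^1$ continues forever in $\Acal^{\frakT}$, hence in $\RA(\Acal^{\frakT})$) yields a reachable cycle from $(q_2^1,R_2|_{C_1})$, establishing item~(ii).

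\textbf{Plan for the ``if'' direction.} Given the region-automaton cycle configuration of items~(i)--(ii), I would reverse the constructions. Fix any $k\in\Z_+$; pump the middle loop $((q_1^1,q_1^2),R_1)\xrightarrow{s_2'}((q_1^1,q_1^2),R_1)$ $k$ times and lift the resulting region-automaton run back to a concrete run $\rho$ of $\Acal^{\frakT}||\Acal^{\frakT}$ using the (more delicate) time-realizability half of fact~(b). Projecting via fact~(a) gives two runs $\pi_1,\pi_2$ of $\Acal^{\frakT}$ with a common timed label sequence $\gamma$ whose length grows with $k$ (each traversal of $s_2'$ contributes at least one observable symbol), ending in $q_2^1\ne q_2^2$, so $|\Mt(\Acal^{\frakT},\gamma)|>1$. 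Item~(ii) lets me extend the left component past $q_2^1$ into a cycle in $\RA(\Acal^{\frakT})$, which I realize as an infinite run of $\Acal^{\frakT}$, furnishing the required $w\in L^\omega(\Acal^{\frakT})$ with $\gamma\sqsubset\ell(w)$. Since $k$ is arbitrary, strong detectability fails.

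\textbf{Main obstacle.} The hard part will be the time-realizability of region-automaton runs and, in particular, controlling \emph{guards and clock resets consistently across the synchronized product} when I pump the loop $s_2'$. In the untimed setting of Theorem~\ref{thm1_det_MPautomata} the loop repeats verbatim, but here repeating an abstract cycle in $\RA(\Acal^{\frakT}||\Acal^{\frakT})$ requires exhibiting actual time delays $t_i$ so that each traversal satisfies its guards $\zeta_i^1\wedge\zeta_i^2$ under the evolving clock interpretations; the whole point of the region construction is that \emph{any} representative of a clock region admits such a compatible delay sequence, but making the left-component projection $R_2|_{C_1}$ interact correctly with the shared reachability/cycle claim in $\RA(\Acal^{\frakT})$ needs care, since clocks $C_1$ and $C_2$ are reset independently along unobservable interleavings. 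I would therefore invoke the standard region-automaton time-abstract bisimulation lemma from \cite{Alur1994TimedAutomaton} to guarantee that the region-level loop and the region-level cycle are both concretely realizable, and verify that the projection $R_2\mapsto R_2|_{C_1}$ commutes with realizability, which is exactly where the bulk of the routine-but-technical verification lies.
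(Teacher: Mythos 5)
Your proposal takes essentially the same route as the paper: the paper establishes Theorem~\ref{thm1_det_LTA} precisely by combining the run correspondence between $\Acal^{\frakT}||\Acal^{\frakT}$ (resp.\ $\Acal^{\frakT}$) and its region automaton with an argument "similar to that in the proof of Theorem~\ref{thm1_det_MPautomata}", i.e.\ Pigeonhole on region-automaton states for the "only if" direction and pumping the middle loop plus realizing the region-level cycle for the "if" direction, which is exactly your plan. The paper's proof is in fact only this two-step sketch, deferring the concrete time-realizability of region-automaton runs to \cite{Alur1994TimedAutomaton}, so your explicit identification of that realizability step (and of the projection $R_2\mapsto R_2|_{C_1}$) as the main technical burden is consistent with, and more detailed than, the paper's own treatment.
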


By using nondeterministic search, the conditions in Theorem~\ref{thm1_det_LTA} can be checked in 
$\NPSPACE$, without computing the whole $\RA(\Acal^{\frakT}||\Acal^{\frakT})$ and
$\RA(\Acal^{\frakT})$. Hence by $\coNPSPACE=\NPSPACE=\PSPACE$, the following complexity result follows.

\begin{theorem}\label{thm2_det_LTA}
	The strong detectability verification problem is $\PSPACE$-complete in labeled timed automata.
\end{theorem}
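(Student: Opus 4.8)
The plan is to establish the two matching bounds separately: $\PSPACE$-hardness and membership in $\PSPACE$. Both directions lean on material already in place, namely the reduction sketched just before the statement and the structural characterization of Theorem~\ref{thm1_det_LTA}. Since $\PSPACE$ is closed under complement ($\coNPSPACE=\NPSPACE=\PSPACE$ by Savitch's theorem together with Immerman--Szelepcs\'enyi), it suffices to pin down the complexity of the \emph{negation} of strong detectability and then transfer it.

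For the lower bound, I would invoke the reduction already described: given a deterministic timed automaton $\bar\Acal^{\frakT}$ and a target state $q_f$, one forms $\bar\Acal^{\frakT}_{ext}$ by attaching two fresh states with self-loops reachable from $q_f$ under a common observable event with constraint \textbf{true}, and one has that $q_f$ is reachable in $\bar\Acal^{\frakT}$ if and only if $\bar\Acal^{\frakT}_{ext}$ is \emph{not} strongly detectable. This is a polynomial-time reduction from the reachability problem of deterministic timed automata, which is $\PSPACE$-hard by \cite{Alur1994TimedAutomaton}. Hence non--strong-detectability is $\PSPACE$-hard, and by closure of $\PSPACE$ under complement the strong detectability verification problem itself is $\PSPACE$-hard.

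For the upper bound, I would use Theorem~\ref{thm1_det_LTA}, which says $\Acal^{\frakT}$ is not strongly detectable exactly when the region automaton $\RA(\Acal^{\frakT}||\Acal^{\frakT})$ contains a transition sequence $((q_0^1,q_0^2),R_0)\xrightarrow[]{s_1'}((q_1^1,q_1^2),R_1)\xrightarrow[]{s_2'}((q_1^1,q_1^2),R_1)\xrightarrow[]{s_3'}((q_2^1,q_2^2),R_2)$ with $s_2'$ containing an observable event and $q_2^1\ne q_2^2$, together with a cycle in $\RA(\Acal^{\frakT})$ reachable from $(q_2^1,R_2|_{C_1})$. The key observation is that although $\RA(\Acal^{\frakT}||\Acal^{\frakT})$ and $\RA(\Acal^{\frakT})$ have size exponential in $\size(\Acal^{\frakT})$, each of their states (a control state paired with a clock region) has a description of size polynomial in $\size(\Acal^{\frakT})$, and the successor relation can be evaluated in polynomial space on the fly. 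I would therefore give a nondeterministic procedure that guesses the region-automaton states $((q_1^1,q_1^2),R_1)$ and $((q_2^1,q_2^2),R_2)$, then verifies reachability of $((q_1^1,q_1^2),R_1)$ from the initial state, the existence of a nontrivial loop at it passing through an observable edge, reachability of $((q_2^1,q_2^2),R_2)$, and finally the reachable cycle in $\RA(\Acal^{\frakT})$ from the projection $(q_2^1,R_2|_{C_1})$ --- each such reachability/cycle search being a standard guess-the-next-state walk that stores only the current (polynomial-size) region-automaton state and a step counter bounded by the (exponential) number of region states, hence using only polynomially many bits. This places non--strong-detectability in $\NPSPACE$, and by $\NPSPACE=\coNPSPACE=\PSPACE$ strong detectability verification lies in $\PSPACE$. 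Combining the two bounds yields $\PSPACE$-completeness.

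The step I expect to be the main obstacle is the membership argument, specifically making rigorous that every search can be carried out on the fly over the exponentially large region automata while respecting the polynomial-space budget: one must argue that a single region, the restriction of the synchronized clock-constraint semantics, and the successor edges of $\RA(\Acal^{\frakT}||\Acal^{\frakT})$ are all computable in polynomial space from their encodings, and that detecting a reachable cycle (as opposed to mere reachability) can be done by the usual ``guess a vertex on the cycle, then guess a returning path with a bounded counter'' technique without ever materializing the graph. The correspondence between runs of a timed automaton and runs of its region automaton from \cite{Alur1994TimedAutomaton}, already cited in the excerpt, is what licenses translating the run-level condition of Theorem~\ref{thm1_det_LTA} into this purely region-automaton search; I would treat that correspondence as given and concentrate the effort on the space accounting.
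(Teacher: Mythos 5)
Your proposal is correct and follows essentially the same route as the paper: the lower bound via the reduction from reachability in deterministic timed automata (already sketched before Theorem~\ref{thm1_det_LTA}), and the upper bound by checking the conditions of Theorem~\ref{thm1_det_LTA} with an on-the-fly nondeterministic search over the region automata in $\NPSPACE$, concluding by $\coNPSPACE=\NPSPACE=\PSPACE$. Your write-up merely makes explicit the space-accounting details (polynomial-size region encodings, bounded step counters, guess-and-walk cycle detection) that the paper leaves implicit.
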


In labeled weighted automaton $\Acal^{\Q^k}$, we have proven that weak (periodic) detectability can be verified 
in $2$-$\EXPTIME$ (Theorem~\ref{thm6_det_MPautomata} and Theorem~\ref{thm10_det_MPautomata}), however, here 
the two properties become undecidable in labeled timed automata. We will use the following undecidable problem
to do reduction.

\begin{problem}[Universality]\label{prob1_det_LTA}
	Let $\bar\Acal^{\frakT}_{single}$ be a timed automaton with a single state, a single event $a$, and using 
	clock constants $0$ and $1$ only. Decide whether $L(\bar\Acal^{\frakT}_{single})=\TW^*(\{a\})$, i.e., whether
	$\bar\Acal^{\frakT}_{single}$ accepts every timed word over alphabet $\{a\}$.
\end{problem}

\begin{lemma}[\cite{Adams2007UndecidUniversalityRestrictedTA}]\label{lem1_det_LTA}
	Problem~\ref{prob1_det_LTA} is undecidable.
\end{lemma}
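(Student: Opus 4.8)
The plan is to obtain undecidability by a reduction from the halting problem for deterministic two-counter (Minsky) machines, adapting the classical Alur--Dill scheme but compressing all bookkeeping into a single control state with clock constants drawn from $\{0,1\}$. First I would fix a deterministic two-counter machine $M$ and describe an encoding of a halting computation of $M$ as a single timed word over the one-letter alphabet $\{a\}$: I would lay out the successive configurations of $M$ in consecutive time windows of length exactly $1$, and inside each window I would use the fractional timestamps of the $a$-events to record the current control location and the two counter values. The crucial invariant is that a quantity recorded at fractional time $f$ in one window must reappear, suitably incremented or decremented, at fractional time $f$ in the next window, i.e.\ exactly one time unit later; this is precisely the kind of constraint that a clock reset together with a guard ``$x=1$'' can test, so only the constants $0$ and $1$ are needed.

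Next I would build the restricted timed automaton $\bar\Acal^{\frakT}_{single}$ so that it accepts a timed word if and only if that word is \emph{not} a faithful encoding of a halting computation of $M$. The point of passing to the complement is that detecting a single defect is a local, bounded-memory task, whereas certifying global correctness is not: to reject a word one need only exhibit one violated constraint. Using the single self-looping state, I would equip the automaton with several families of edges, each designed to catch one type of defect --- a malformed window structure, an incorrect initial or final configuration, or a position whose value is not copied, incremented, or decremented correctly into the next window. Nondeterminism across these edges realizes the ``guess where the error is'' step, and each individual check only needs to reset a clock at some $a$-event and then test, with a guard using the constants $0$ and $1$, that the expected matching $a$-event exactly one unit later is absent or misplaced. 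Because all guards compare a single clock against $0$ or $1$ and all edges are self-loops on the one state, the construction stays within the syntactic class of Problem~\ref{prob1_det_LTA}.

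Since $M$ is deterministic it has at most one halting computation, and hence at most one timed word is excluded by $\bar\Acal^{\frakT}_{single}$; therefore $L(\bar\Acal^{\frakT}_{single})=\TW^*(\{a\})$ holds if and only if $M$ does not halt. As the halting problem is undecidable, so is the universality problem. The hardest part will be the encoding itself: with only one control state I cannot use locations to remember which ``phase'' of a configuration is being read, so every piece of phase information (window boundaries, which counter a value belongs to, whether the current step is an increment or a decrement) must be carried entirely by the fractional parts of timestamps and by a fixed bank of clocks, and every correctness check must be expressible through resets and $\{0,1\}$-guards alone. Making the window-delimitation and the increment/decrement-by-one tests simultaneously (i)~faithful, so that a genuine computation is encoded by exactly one word, and (ii)~locally refutable by a single-state automaton, is the delicate combinatorial core; the one-letter alphabet sharpens this further, since all structural distinctions must come from timing rather than from distinct symbols. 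I would handle this by placing marker events at canonical fractional offsets and by reducing each semantic check to the presence or absence of a partner event exactly one time unit away.
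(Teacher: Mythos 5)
First, a point of calibration: the paper does not prove this lemma at all --- it is imported verbatim from Adams, Ouaknine and Worrell \cite{Adams2007UndecidUniversalityRestrictedTA}, so your proposal has to be judged against that work rather than against any argument in this paper. Your overall architecture (reduce from the halting problem of a Turing-powerful machine, encode computations in unit-length time windows, pass to the complement so that the automaton only has to catch a single local defect nondeterministically with self-loops) is indeed the classical Alur--Dill scheme on which the cited result is built, so the strategy is the right one in spirit.

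Two genuine gaps remain, one minor and one central. The minor one: your claim that determinism of $M$ implies that at most one timed word is excluded is false --- a single halting computation has uncountably many timed encodings, because only the relative ``distance exactly one'' constraints between consecutive windows are fixed, while the absolute fractional offsets of the events are free. Fortunately the reduction only needs ``some faithful encoding exists if and only if $M$ halts,'' so this is repairable. The central gap: the entire content of the theorem is that the defect-catching can be done with a single state, a single letter, and constants $0$ and $1$ \emph{simultaneously}, and your proposal defers exactly this. In particular, the mechanism you propose --- ``marker events at canonical fractional offsets'' --- cannot work as stated: guards may only compare a clock against $0$ and $1$, so the automaton can never recognize an absolute fractional position such as one half; every structural distinction (window boundaries, which counter an event belongs to, which instruction is being simulated) must be expressed purely through \emph{relative} timing, e.g.\ by clusters of simultaneous events (available over weakly monotone time, where the constant $0$ tests simultaneity) combined with distance-exactly-one tests. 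Designing such an encoding, and then verifying that every possible non-encoding --- including arbitrary ``garbage'' words with no window structure at all, not just near-miss encodings --- is accepted by one of the self-loop families, is precisely the delicate combinatorial work of \cite{Adams2007UndecidUniversalityRestrictedTA}. As it stands, your argument establishes only the classical undecidability of universality for unrestricted timed automata, not the restricted version asserted by the lemma.
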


By Lemma~\ref{lem1_det_LTA} (\cite[Theorem~1]{Adams2007UndecidUniversalityRestrictedTA}), we prove 
the undecidability of weak (periodic) detectability.

\begin{theorem}\label{thm3_det_LTA}
	The weak (periodic) detectability of timed automata is undecidable.
\end{theorem}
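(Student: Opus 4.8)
Theorem~\ref{thm3_det_LTA} asserts that weak detectability and weak periodic detectability of labeled timed automata are undecidable. Let me plan a reduction from the undecidable Universality Problem (Problem~\ref{prob1_det_LTA}), following the same spirit as the reductions already used in the paper for Petri nets.

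\textbf{Proof idea.} The plan is to reduce the Universality Problem (Problem~\ref{prob1_det_LTA}), undecidable by Lemma~\ref{lem1_det_LTA}, to the negation of weak detectability and, by the very same construction, to the negation of weak periodic detectability. Given a single-state timed automaton $\bar\Acal^{\frakT}_{single}$ over the one-letter alphabet $\{a\}$ using only clock constants $0$ and $1$, I would build in polynomial time a labeled timed automaton $\Acal^{\frakT}$ of the form~\eqref{eqn_LTA} consisting of two branches that run in parallel but never interact, and then argue that $\bar\Acal^{\frakT}_{single}$ is universal if and only if $\Acal^{\frakT}$ is \emph{not} weakly (periodically) detectable.

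The construction has two initial states $\hat p$ and $\hat q$ and no unobservable events. From $\hat p$ I attach a \emph{reference} gadget: the single state $\hat p$ with a self-loop on a fresh event $a_p$ with guard \textbf{true} resetting a clock, so that $\hat p$ generates every timed word over $\{a\}$ and admits infinite runs; in particular $L^{\omega}(\Acal^{\frakT})\neq\emptyset$ in all cases. From $\hat q$ I place a relabeled copy of $\bar\Acal^{\frakT}_{single}$, whose single accepting state is $\hat q$. Both $a_p$ and the event $a$ of $\bar\Acal^{\frakT}_{single}$ carry the same observable label $\rho$, so that the observation of every run is a timed label sequence $(\rho,t_1)(\rho,t_2)\cdots$; then for every observation $\gamma=(\rho,t_1)\cdots(\rho,t_n)$ one has $\hat p\in\Mt(\Acal^{\frakT},\gamma)$ (the reference branch realizes all words), while $\hat q\in\Mt(\Acal^{\frakT},\gamma)$ if and only if $(a,t_1)\cdots(a,t_n)\in L(\bar\Acal^{\frakT}_{single})$. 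Since the unique state of $\bar\Acal^{\frakT}_{single}$ is accepting, its timed language is prefix-closed, a fact I would isolate as a one-line lemma and use in both directions.

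For correctness, first suppose $\bar\Acal^{\frakT}_{single}$ is universal. Then for every infinite run $w$ of $\Acal^{\frakT}$ and every prefix $\gamma\sqsubset\ell(w)$ the timed word of $\gamma$ lies in $L(\bar\Acal^{\frakT}_{single})$, so both $\hat p$ and $\hat q$ lie in $\Mt(\Acal^{\frakT},\gamma)$ and $|\Mt(\Acal^{\frakT},\gamma)|\ge 2$ for every prefix including $\ep$; hence no run is eventually, nor periodically, state-certain, and $\Acal^{\frakT}$ is neither weakly nor weakly periodically detectable. Conversely, if $\bar\Acal^{\frakT}_{single}$ is not universal, choose $u=(a,t_1)\cdots(a,t_n)\notin L(\bar\Acal^{\frakT}_{single})$, realize it in the reference branch and continue forever, obtaining an infinite run $w$. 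By prefix-closedness every extension of $u$ remains outside the language, so for every $\gamma\sqsubset\ell(w)$ with $|\gamma|\ge n$ the $\hat q$-branch is dead and $\Mt(\Acal^{\frakT},\gamma)=\{\hat p\}$ is a singleton. Taking $t=n+1$ simultaneously witnesses weak detectability (Definition~\ref{def_WD_LTA}) and weak periodic detectability (Definition~\ref{def_WPD_LTA}). Thus universality is equivalent to non-detectability in both senses, and the undecidability of weak and weak periodic detectability follows at once.

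The main obstacle I anticipate is bridging the mismatch between the Universality Problem, which concerns the \emph{finite} timed language of $\bar\Acal^{\frakT}_{single}$, and detectability, which quantifies over \emph{infinite} runs and their prefixes. The reference branch resolves one side by supplying infinite runs that drive arbitrary observations, and the prefix-closedness of the single-state language resolves the other: a single infeasible word $u$ keeps the simulation branch dead \emph{forever} after step $n$, which is exactly what yields the \emph{permanently} singleton estimate that both weak and weak periodic detectability require. Without prefix-closedness one could only kill the $\hat q$-branch momentarily and would fail to produce such a run. A secondary point to check is routine: copying $\bar\Acal^{\frakT}_{single}$ and adjoining the reference gadget is polynomial and respects the clock-region apparatus, so that $\Acal^{\frakT}$ is a legitimate labeled timed automaton.
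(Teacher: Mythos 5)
Your proposal is correct and follows essentially the same route as the paper's own proof: both reduce from the undecidable universality problem for single-state timed automata (Lemma~\ref{lem1_det_LTA}) by pairing the given automaton with a universal self-loop branch carrying the same observable label, so that universality holds if and only if every current-state estimate stays non-singleton, i.e., weak (periodic) detectability fails. Your explicit isolation of prefix-closedness of the single-state language and the use of a freshly named (but identically labeled) reference event are only cosmetic refinements of the paper's construction $\bar\Acal^{\frakT}_{single}\cup\tilde\Acal^{\frakT}_{single}$.
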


\begin{proof}
	Given a timed automaton $\bar\Acal^{\frakT}_{single}$ with a single state $q_0$, a single event $a$, and 
	using clock constants $0$ and $1$ only. Add a new initial state $q_0'$ and self-loop on $q_0'$ with event $a$
	and clock constraint \textbf{true}. The newly added timed automaton is denoted by $\tilde\Acal^{\frakT}_{single}$.
	It is easy to see that $\tilde\Acal^{\frakT}_{single}$ is deterministic and accepts every timed word
	over alphabet $\{a\}$ (i.e., $L(\tilde\Acal^{\frakT}_{single})=\TW^{*}(\{a\})$), and
	$L^{\omega}(\tilde\Acal^{\frakT}_{single})=\TW^{\omega}(\{a\})$.
	Set $\ell(a)=a$, i.e., $a$ is observable.
	As mentioned above, every deterministic timed automaton is detectable. Choose a
	timed word $w$ in $\TW^*(\{a\})\setminus L(\bar\Acal^{\frakT}_{single})$ (if any), construct an $\omega$-timed
	word $w'=w(a,\tim(w)+1)(a,\tim(w)+2)\dots$$\in L^{\omega}(\tilde\Acal^{\frakT}_{single})$, then 
	for every timed label sequence $\gamma\sqsubset\ell(\tau(w'))$, if $|\gamma|\ge|w|$ then $\Mt(\bar\Acal^{\frakT}
	_{single}\cup \tilde\Acal^{\frakT}_{single},\gamma)=\{q_0'\}$. Hence 
	$L(\bar\Acal^{\frakT}_{single})=\TW^*(\{a\})$ if and only if $\bar\Acal^{\frakT}_{single}\cup \tilde
	\Acal^{\frakT}_{single}$ is not weakly (periodically) detectable. 
\end{proof}

\begin{remark}
	In \cite{Tripakis2002DiagnosisTimedAutomata}, for labeled timed automata, it is proven that the
	diagnosability verification problem is $\PSPACE$-complete, and a \emph{diagnoser} (supposed
	to be recursive, see \cite{Bouyer2005FaultDiagnosisTimedAutomata}) is also constructed and fault diagnosis 
	is done in $2$-$\EXPTIME$ in the size of a given labeled timed automaton and in the length of a given
	timed labeled sequence. This diagnoser can be regarded as an observer, because it can be used to do 
	state estimation. However, it cannot be computed with complexity upper bounds, because if so, then 
	it could be used to verify weak (periodic) detectability and current-state opacity which are undecidable
	(Theorem~\ref{thm3_det_LTA}, \cite{Cassez2009TimedOpacityUndecidable}) in labeled timed automata.
\end{remark}

\begin{remark}
	The results above in this section were all obtained over weakly monotone time, i.e., in a ($\omega$-)timed 
	word, its time sequence is (not necessarily strictly) increasing. All these results also hold over strongly 
	monotone time, i.e., in a ($\omega$-)timed word, its time sequence is strictly increasing. The reachability 
	problem is also
	$\PSPACE$-hard in deterministic timed automata over strongly monotone time \cite{Alur1994TimedAutomaton},
	so the strong (periodic) detectability verification problem is also $\PSPACE$-hard in labeled timed automata
	over strongly monotone time. The $\PSPACE$-easiness of verifying strong detectability of labeled timed automata
	can be obtained by using the region automata of timed automata over strongly monotone time
	\cite{Alur1994TimedAutomaton} (the difference of region automata over strongly monotone time and over
	weakly monotone time can be found in \cite{Ouaknine2003UnivLangIncOpenCloseTA}). The undecidability of 
	the weak (periodic) detectability problem can be proved by using the undecidability of the universality
	problem for timed automata with a single state, a single event, and using clock constants $1$, $2$, and $3$
	only, over strongly monotone time (\cite[Theorem~2]{Adams2007UndecidUniversalityRestrictedTA}).
\end{remark}

\section{conclusion}
In this paper, we extended the notions of concurrent composition, observer, and detector
from labeled finite-state automata to labeled weighted automata over monoids.
By using these extended notions, we 
gave equivalent conditions for four fundamental notions of detectability (i.e., 
strong (periodic) detectability and weak (periodic) detectability) for such automata. Particularly,
for a labeled weighted automaton $\Acal^{\Q^k}$ over the monoid $(\Q^k,+)$,
we proved that its concurrent composition, observer, and detector
can be computed in $\NP$, $2$-$\EXPTIME$, and $2$-$\EXPTIME$, respectively.
Moreover, for $\Acal^{\Q^k}$, we gave a $\coNP$ upper bound on verifying its strong detectability,
and $2$-$\EXPTIME$ upper bounds on verifying its strong periodic detectability and weak
(periodic) detectability.
We also gave $\coNP$ lower bounds for verifying strong (periodic) detectability of
labeled deterministic weighted automata over monoid $(\N,+,0)$.

The original methods developed in the current paper have been extended to labeled real-time automata which are a
subclass of labeled timed automata with
a single clock and whose clock constrains are all intervals in $\R_{\ge0}$ with rational or infinite endpoints 
\cite{Dima2002RealTimeAutomata}.
Four definitions of state-based opacity were formulated for labeled real-time automata in 
\cite{Zhang2021StateOpacityRTA}, and were also verified in $2$-$\EXPTIME$
by using the observers of labeled real-time automata
computable in $2$-$\EXPTIME$. The results in the current paper and those in \cite{Zhang2021StateOpacityRTA}
provide all technical details for computing observers of labeled real-time automata.

In addition, in order to differentiate labeled weighted automata over monoids from labeled timed automata,
we also initially explored detectability in labeled timed automata, and proved that the strong detectability 
verification problem is PSPACE-complete, while weak (periodic) detectability is undecidable.

It is the first time that the detectability verification results for general labeled weighted automata
over monoid $(\Q^k,+)$ were obtained algorithmically. The original methods proposed
in the current paper will provide foundations for characterizing other fundamental properties such as 
diagnosability and opacity, for general labeled weighted automata over monoids.
The algorithms for computing the observers of $\Acal^{\Q^k}$ could also imply that many
results obtained in labeled finite-state automata under the supervisory control framework can be 
extended to $\Acal^{\Q^k}$.
Several related open problems are as follows: whether the detectors of a general $\Acal^{\Q^k}$ can be 
computed in $\coNP$, whether strong periodic detectability of a general $\Acal^{\Q^k}$ can be verified 
in $\coNP$, what the complexity low bounds for verifying weak (periodic) detectability of
$\Acal^{\Q^k}$ are, studies on basic properties of labeled weighted automata over other monoids, etc.

\section*{Appendix}

\begin{remark}\label{rem3_det_MPautomata}
	Now we illustrate how to compute the observer of labeled weighted automaton $\Acal_0^{\N}$ in 
	Fig.~\ref{fig10_det_MPautomata} defined in \cite{Li2021ObserverSpecialTimedAutomata}. 
	First, we unfold states of $\Acal_0^\N$. Because the maximum among the weights of all outgoing transitions
	of $q_0$ is $10$, we unfold $q_0$ to $q_{0,1},\dots,q_{0,11}$. Similarly, we unfold $q_i$ to $q_{i,1},q_{i,2}$,
	$i\in\llb 1,4 \rrb$. Then we obtain the intermediate automaton as in Fig.~\ref{fig16_det_MPautomata}, where 
	``$1$'' means time delay ``$1$'' without transition, ``$0$'' means an unobservable transition, and $a$
	means an observable transition.
	Second, we compute the powerset construction of the intermediate automaton as in Fig.~\ref{fig17_det_MPautomata}.
	Third, we remove redundant states from the powerset construction and then recover states of $\Acal_0^{\N}$
	as in Fig.~\ref{fig18_det_MPautomata}, where all $q_{i,\max(q_i)}$ were removed expect for those as
	ending states of $a$-transitions, then change all the remaining $q_{i,j}$ to $q_i$, now we obtain the observer.
	Fourth, the observer is simplified as in Fig.~\ref{fig19_det_MPautomata} (by accumulating $1$-transitions),
	which is similar to but a little
	different from the observer $\Acal_{0obs}^{\N}$ defined in the current paper (as in 
	Fig.~\ref{fig15_det_MPautomata}). 

	In \cite{Li2021ObserverSpecialTimedAutomata}, for divergence-free $\Acal^{\Q_{\ge0}}$, an observer (obtained
	in the above third step) has at most $\sum_{k=0}^{|Q|}\binom{|Q|}{k}(2^M)^k=(1+2^M)^{|Q|}$ states and at most
	$(1+2^M)^{|Q|}( 2^M+|\Sig|)$ transitions; for general $\Acal^{\Q_{\ge0}}$, an observer  (obtained in the
	above third step, e.g., in Fig.~\ref{fig18_det_MPautomata}) has at most $2^{2^M|Q|}$ states and at most $
	2^{2^M|Q|} (2^M+|\Sig|)$ transitions, where $|Q|$ denotes the number of states of $\Acal^{\Q_{\ge0}}$;
	$M=\log_2{(\max_{i\in\llb 1,l \rrb}
	\{Nm_i/n_i\})}$; $m_1/n_1,\dots,m_l/n_l$ are enumerations of all positive weights of $\Acal^{\Q_{\ge0}}$;
	$m_i$ and $n_i$, $i\in\llb 1,l \rrb$, are relatively positive prime integers; $N$ is the least common multiple of
	$n_1,\dots,n_l$.
	\begin{figure}[!htbp]
        \centering
	\begin{tikzpicture}
	[>=stealth',shorten >=1pt,thick,auto,node distance=2.0 cm, scale = 0.84, transform shape,
	->,>=stealth,inner sep=2pt, initial text = 0]

	\tikzstyle{emptynode}=[inner sep=0,outer sep=0]

	\node[initial, state, initial where = left] (q01) {$q_{0,1}$};
	\node[state] (q02) [right of = q01] {$q_{0,2}$};
	\node[state] (q03) [right of = q02] {$q_{0,3}$};
	\node[emptynode] (empty1) [right of = q03] {$\cdots$};
	\node[state] (q010) [right of = empty1] {$q_{0,10}$};
	\node[state] (q011) [right of = q010] {$q_{0,11}$};
	\node[state] (q11) [below of = q01] {$q_{1,1}$};
	\node[state] (q12) [right of = q11] {$q_{1,2}$};
	\node[state] (q21) [below of = q11] {$q_{2,1}$};
	\node[state] (q22) [right of = q21] {$q_{2,2}$};
	\node[state] (q31) [below of = q21] {$q_{3,1}$};
	\node[state] (q32) [right of = q31] {$q_{3,2}$};
	\node[state] (q41) [below of = q31] {$q_{4,1}$};
	\node[state] (q42) [right of = q41] {$q_{4,2}$};

	\path [->]
	(q01) edge node [above, sloped] {$1$} (q02)
	(q02) edge node [above, sloped] {$1$} (q03)
	(q03) edge node [above, sloped] {$1$} (empty1)
	(empty1) edge node [above, sloped] {$1$} (q010)
	(q010) edge node [above, sloped] {$1$} (q011)
	(q01) edge node [above, sloped] {$1$} (q02)
	(q11) edge node [above, sloped] {$1$} (q12)
	(q21) edge node [above, sloped] {$1$} (q22)
	(q31) edge node [above, sloped] {$1$} (q32)
	(q41) edge node [above, sloped] {$1$} (q42)
	(q22) edge [bend right] node [above, sloped] {$0$} (q21)
	(q32) edge [bend right] node [above, sloped] {$a$} (q31)
	(q42) edge [bend right] node [above, sloped] {$a$} (q41)
	;

	\node at (5,-0.8) {$0$};

	\draw 
	(q011) .. controls (10.0,-1.0) and (0,-1) .. (q11) 
	;

	\node at (-0.5,-1.2) {$0$};

	\draw 
	(q02) .. controls (-1.0,-1.5) .. (q21) 
	;

	\node at (2.7,-4.3) {$a$};

	\draw 
	(q12) .. controls (4.0,-5.3) and (0.0,-5.0) .. (q31) 
	;

	\node at (-0.5,-5.2) {$a$};

	\draw 
	(q22) .. controls (-1.0,-5.5) .. (q41) 
	;

    \end{tikzpicture}
	\caption{The first step of computing the observer (defined in \cite{Li2021ObserverSpecialTimedAutomata})
	of automaton $\Acal_0^{\N}$ in Fig.~\ref{fig10_det_MPautomata}.}
	\label{fig16_det_MPautomata} 
	\end{figure}
	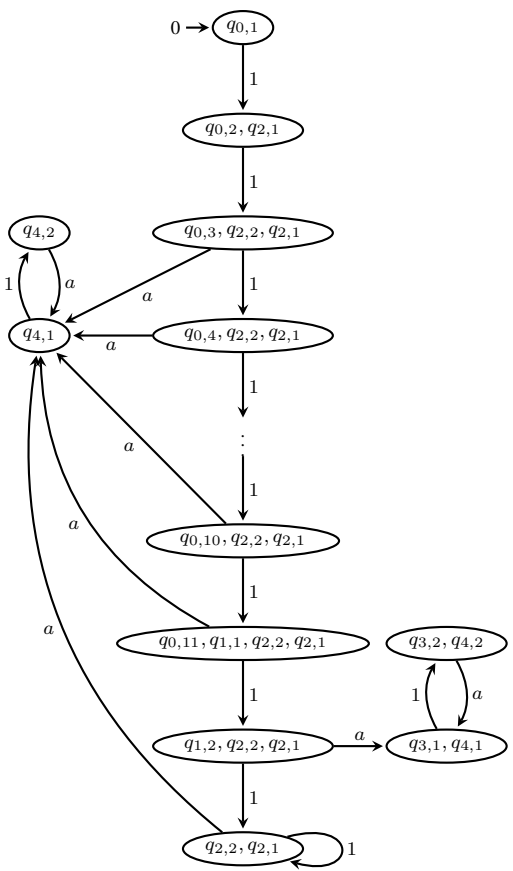
\begin{figure}[!htbp]
        \centering
	\rotatebox{90}{
	\begin{tikzpicture}
	[>=stealth',shorten >=1pt,thick,auto,node distance=1.5 cm, scale = 0.9, transform shape,
	->,>=stealth,inner sep=2pt, initial text = 0]

	\tikzstyle{emptynode}=[inner sep=0,outer sep=0]

	\node[initial, elliptic state, initial where = left] (q01) {$q_{0,1}$};
	\node[elliptic state] (q02q21) [below of = q01] {$q_{0,2},q_{2,1}$};
	\node[elliptic state] (q03q22q21) [below of = q02q21] {$q_{0,3},q_{2,2},q_{2,1}$};
	\node[elliptic state] (q04q22q21) [below of = q03q22q21] {$q_{0,4},q_{2,2},q_{2,1}$};
	\node[emptynode] (empty1) [below of = q04q22q21] {$\vdots$};
	\node[elliptic state] (q010q22q21) [below of = empty1] {$q_{0,10},q_{2,2},q_{2,1}$};
	\node[elliptic state] (q011q11q22q21) [below of = q010q22q21] {$q_{0,11},q_{1,1},q_{2,2},q_{2,1}$};
	\node[elliptic state] (q12q22q21) [below of = q011q11q22q21] {$q_{1,2},q_{2,2},q_{2,1}$};
	\node[elliptic state] (q22q21) [below of = q12q22q21] {$q_{2,2},q_{2,1}$};

	\node[emptynode] (empty2) [right of = q12q22q21] {};
	\node[elliptic state] (q31q41) [right of = empty2] {$q_{3,1},q_{4,1}$};

	\node[emptynode] (empty3) [right of = q31q41] {};
	\node[elliptic state] (q32q42) [above of = q31q41] {$q_{3,2},q_{4,2}$};

	\node[emptynode] (empty4) [left of = q04q22q21] {};
	\node[elliptic state] (q41) [left of = empty4] {$q_{4,1}$};

	\node[elliptic state] (q42) [above of = q41] {$q_{4,2}$};
	
	\path [->]
	(q01) edge node {$1$} (q02q21)
	(q02q21) edge node {$1$} (q03q22q21)
	(q03q22q21) edge node {$1$} (q04q22q21)
	(q04q22q21) edge node {$1$} (empty1)
	(empty1) edge node {$1$} (q010q22q21)
	(q010q22q21) edge node {$1$} (q011q11q22q21)
	(q011q11q22q21) edge node {$1$} (q12q22q21)
	(q12q22q21) edge node {$1$} (q22q21)
	(q22q21) edge [loop right] node {$1$} (q22q21)

	(q12q22q21) edge node {$a$} (q31q41)

	(q31q41) edge [bend left] node {$1$} (q32q42)
	(q32q42) edge [bend left] node {$a$} (q31q41)

	(q03q22q21) edge node {$a$} (q41)
	(q04q22q21) edge node {$a$} (q41)
	(q010q22q21) edge node {$a$} (q41)
	(q011q11q22q21) edge [bend left] node {$a$} (q41)
	(q22q21) edge [bend left] node {$a$} (q41)

	(q41) edge [bend left] node {$1$} (q42)
	(q42) edge [bend left] node {$a$} (q41)
	;

    \end{tikzpicture}
	}
	\caption{The second step of computing the observer (defined in \cite{Li2021ObserverSpecialTimedAutomata})
	of automaton $\Acal_0^{\N}$ in Fig.~\ref{fig10_det_MPautomata}.}
	\label{fig17_det_MPautomata} 
	\end{figure}
	\begin{figure}[!htbp]
        \centering
	\rotatebox{90}{
	\begin{tikzpicture}
	[>=stealth',shorten >=1pt,thick,auto,node distance=1.5 cm, scale = 0.9, transform shape,
	->,>=stealth,inner sep=2pt, initial text = 0]

	\tikzstyle{emptynode}=[inner sep=0,outer sep=0]

	\node[initial, elliptic state, initial where = left] (q01) {$q_{0}$};
	\node[elliptic state] (q02q21) [below of = q01] {$q_{0},q_{2}$};
	\node[elliptic state] (q03q22q21) [below of = q02q21] {$q_{0},q_{2}$};
	\node[elliptic state] (q04q22q21) [below of = q03q22q21] {$q_{0},q_{2}$};
	\node[emptynode] (empty1) [below of = q04q22q21] {$\vdots$};
	\node[elliptic state] (q010q22q21) [below of = empty1] {$q_{0},q_{2}$};
	\node[elliptic state] (q011q11q22q21) [below of = q010q22q21] {$q_{1},q_{2}$};
	\node[elliptic state] (q12q22q21) [below of = q011q11q22q21] {$q_{1},q_{2}$};
	\node[elliptic state] (q22q21) [below of = q12q22q21] {$q_{2}$};

	\node[emptynode] (empty2) [right of = q12q22q21] {};
	\node[elliptic state] (q31q41) [right of = empty2] {$q_{3},q_{4}$};

	\node[emptynode] (empty3) [right of = q31q41] {};
	\node[elliptic state] (q32q42) [above of = q31q41] {$\emptyset$};

	\node[emptynode] (empty4) [left of = q04q22q21] {};
	\node[elliptic state] (q41) [left of = empty4] {$q_{4}$};

	\node[elliptic state] (q42) [above of = q41] {$\emptyset$};
	
	\path [->]
	(q01) edge node {$1$} (q02q21)
	(q02q21) edge node {$1$} (q03q22q21)
	(q03q22q21) edge node {$1$} (q04q22q21)
	(q04q22q21) edge node {$1$} (empty1)
	(empty1) edge node {$1$} (q010q22q21)
	(q010q22q21) edge node {$1$} (q011q11q22q21)
	(q011q11q22q21) edge node {$1$} (q12q22q21)
	(q12q22q21) edge node {$1$} (q22q21)
	(q22q21) edge [loop right] node {$1$} (q22q21)

	(q12q22q21) edge node {$a$} (q31q41)

	(q31q41) edge [bend left] node {$1$} (q32q42)
	(q32q42) edge [bend left] node {$a$} (q31q41)

	(q03q22q21) edge node {$a$} (q41)
	(q04q22q21) edge node {$a$} (q41)
	(q010q22q21) edge node {$a$} (q41)
	(q011q11q22q21) edge [bend left] node {$a$} (q41)
	(q22q21) edge [bend left] node {$a$} (q41)

	(q41) edge [bend left] node {$1$} (q42)
	(q42) edge [bend left] node {$a$} (q41)
	;

    \end{tikzpicture}
	}
	\caption{The third step of computing the observer (defined in \cite{Li2021ObserverSpecialTimedAutomata})
	of automaton $\Acal_0^{\N}$ in Fig.~\ref{fig10_det_MPautomata}. In this step, the observer was obtained.}
	\label{fig18_det_MPautomata} 
	\end{figure}
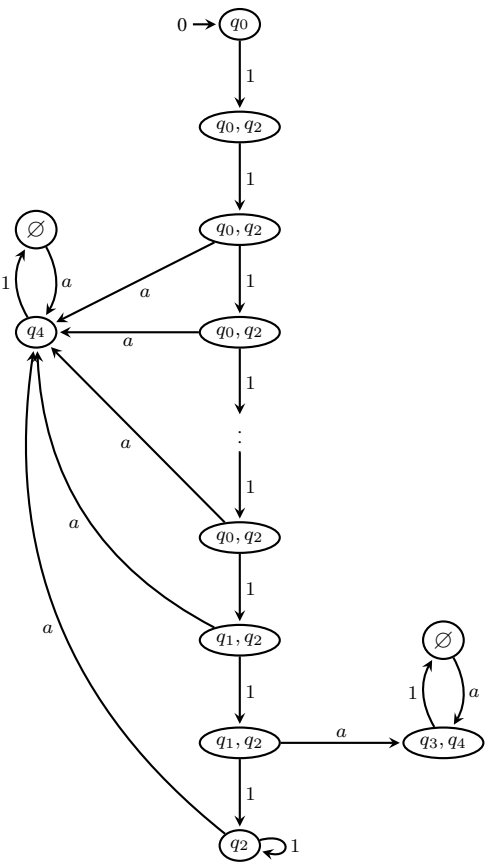

	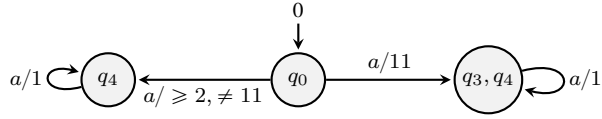
\begin{figure}[!htbp]
        \centering
	\begin{tikzpicture}
	[>=stealth',shorten >=1pt,thick,auto,node distance=2.5 cm, scale = 1.0, transform shape,
	->,>=stealth,inner sep=2pt, initial text = 0]

	\tikzstyle{emptynode}=[inner sep=0,outer sep=0]

	\node[initial, state, initial where = above] (q0) {$q_{0}$};
	\node[state] (q4) [left of = q0] {$q_{4}$};
	\node[state] (q3q4) [right of = q0] {$q_3,q_{4}$};
	
	\path [->]
	(q0) edge node {$a/\ge 2,\ne 11$} (q4)
	(q0) edge node {$a/11$} (q3q4)
	(q4) edge [loop left] node {$a/1$} (q4)
	(q3q4) edge [loop right] node {$a/1$} (q3q4)
	;

    \end{tikzpicture}
	\caption{The fourth step of computing the observer (defined in \cite{Li2021ObserverSpecialTimedAutomata})
	of automaton $\Acal_0^{\N}$ in Fig.~\ref{fig10_det_MPautomata}. In this step, the observer was simplified.}
	\label{fig19_det_MPautomata} 
	\end{figure}
\end{remark}

\begin{remark}\label{rem2_det_MPautomata}
	We now compare the \emph{current-state estimate}~\eqref{CSE_det_MPautomata} with the \emph{current-state estimate}
	used in \cite{Shu2007Detectability_DES,Shu2011GDetectabilityDES,Zhang2017PSPACEHardnessWeakDetectabilityDES,Masopust2018ComplexityDetectabilityDES}
	and \emph{the set of $\gamma$-consistent states} of labeled weighted automata over \emph{semirings}
	used in \cite{Lai2021DetUnambiguousWAutomata,Lai2019StateEstimationMPA}, where $\gamma$ is a weighted  label sequence.

	As mentioned before, a labeled finite-state automaton studied in 
	\cite{Shu2007Detectability_DES,Shu2011GDetectabilityDES,Masopust2018ComplexityDetectabilityDES,Zhang2017PSPACEHardnessWeakDetectabilityDES} 
	can be regarded as an automaton $\Acal^{\N}$ such that all 
	unobservable transitions
	are instantaneous and every two observable transitions with the same label have
	the same weight in $\N$. In this regard, \eqref{CSE_det_MPautomata} reduces to the current-state 
	estimate in \cite{Shu2007Detectability_DES,Shu2011GDetectabilityDES,Masopust2018ComplexityDetectabilityDES,Zhang2017PSPACEHardnessWeakDetectabilityDES} in form. However, because labeled finite-state automata
	are untimed models, in \cite{Shu2007Detectability_DES,Shu2011GDetectabilityDES,Masopust2018ComplexityDetectabilityDES,Zhang2017PSPACEHardnessWeakDetectabilityDES}, it is not specified how much time is needed for the execution
	of an observable transition.

	The set of $\gamma$-consistent states is the counterpart of the current-state estimate in 
	labeled weighted automata over semirings. Consider a labeled weighted automaton $\Acal^{\Q}=(\Q,Q,E,Q_0,\Dt,
	\alpha,\mu,\Sig,\ell)$ over monoid $(\Q,+,0)$ (as in \eqref{LWA_monoid_det_MPautomata}) and a labeled
	weighted automaton $\Acal^{\underQ}$ over 
	semiring $\underQ=(\Q\cup\{-\infty\},\max,+,-\infty,0)$ such that their only difference lies in that
	in $\Acal^{\Q}$ the weights are chosen from monoid $(\Q,+)$, but the weights in $\Acal^{\underQ}$ are chosen
	from semiring $\underQ$. Consider a path 
	$$\pi:=q_0\xrightarrow[]{e_1}q_1\xrightarrow[]{e_2}\cdots\xrightarrow[]{e_n}q_n$$
	as in \eqref{path_det_MPautomaton}, its weighted word (as in \eqref{timedword_det_MPautomaton}) is 
	$$\tau(\pi):=(e_1,t_1)(e_2,t_2)\dots(e_n,t_n),$$
	where for every $i\in\llb 1,n\rrb$, $t_i=\sum_{j=1}^{i}\mu(e_j)_{q_{j-1}q_j}$.
	Recall that the weight of $\pi$ is defined by $t_n$.

	In \cite{Lai2021DetUnambiguousWAutomata,Lai2019StateEstimationMPA},
	the weighted sequence of $\pi$ is defined by 
	$$\s(\pi)=(e_1,t_1')(e_2,t_2')\dots(e_n,t_n'),$$
	where for every $i\in\llb 1,n\rrb$, $t_i'$ is the maximum among the weights of all paths from $Q_0$ to $q_i$ under
	event sequence $e_1\dots e_i$. The label sequence $\ell(\s(\pi))$ of a weighted sequence $\s(\pi)$ is defined
	in the same way as the label sequence of a weighted word, i.e., $\ell$ erases $(e_i,t_i')$ if $e_i$ is unobservable,
	and maps $(e_i,t_i')$ to $(\ell(e_i),t_i)$ otherwise. Given $\gamma\in(\Sig\times\Q)^*$, the set of
	$\gamma$-consistent
	states is defined by 
	\begin{equation}
	\begin{split}
		C(\gamma)=\{q\in Q| &(\exists\text{ a path }\pi=q_0\xrightarrow[]{s}q)\\
		&[(q_0\in Q_0)\wedge(s\in E^*)\wedge (\ell(\s(\pi))=\gamma)]\}.
	\end{split}
	\end{equation}

	There are two differences between the set $C(\gamma)$ of $\gamma$-consistent states and current-state
	estimate $\Mt(\Acal^{\Q},\gamma)$:
	(A) in the former two operations ``$\max$'' and ``$+$'' are considered, in the latter only ``$+$'' is considered,
	(B) in the former after the last observable event in $s$, all unobservable paths are considered,
	in the latter after the last observable event in $s$, only unobservable, instantaneous paths are considered.
	The first difference is major, but the second is minor and neglectable. The first difference shows that
	$\Acal^{\underline{Q_{\ge0}}}$ can represent a max-plus timed system, while $\Acal^{\Q_{\ge0}}$
	can represent a real-time system.
	For unambiguous $\Acal^{\Q}$ and
	$\Acal^{\underQ}$, the difference (A) vanishes because in such automata, under every event sequence,
	there exists at most one path from $Q_0$ to any state. Hence, the unique difference between the definitions 
	of detectability of unambiguous $\Acal^{\underQ}$ in \cite{Lai2021DetUnambiguousWAutomata} and those 
	of unambiguous $\Acal^{\Q}$ in the current paper comes from the difference (B). If one method can be used to 
	verify the detectability in \cite{Lai2021DetUnambiguousWAutomata}, then it can be slightly modified to verify
	the detectability in the current paper, and vice versa. So (B) is minor and neglectable.

	In a self-composition $\CCa(\Acal^{unam,\Q})$ as in Definition~\ref{def_CC_MPautomata},
	for each state $(q_1,q_2)$ of $\CCa(\Acal^{unam,\Q})$, for all states $q_3,q_4$ of $\Acal^{unam,\Q}$
	such that $q_3$ (resp., $q_4$) is reachable from $q_1$ (resp., $q_2$) through some unobservable (not necessarily
	instantaneous) path, add two unobservable transitions $(q_1,q_2) \rightarrow (q_3,q_4)$ and $(q_3,q_4) \rightarrow
	(q_1,q_2)$ into $\CCa(\Acal^{unam,\Q})$. Then after putting the updated self-composition into 
	Theorem~\ref{thm1_det_MPautomata}, we obtain an equivalent condition for the strong detectability of 
	$\Acal^{unam,\underQ}$ studied in \cite{Lai2021DetUnambiguousWAutomata}. On the other hand, for an observer
	$\Acal_{obs}^{unam,\Q}$ as in Definition~\ref{def_observer_MPautomata}, at each state $x\in X$, we add every state
	$q$ that is reachable from some state in $x$ through some (not necessarily instantaneous) unobservable
	path into $x$, then after putting the updated observer $\tilde\Acal_{obs}^{unam,\Q}$
	into Theorem~\ref{thm8_det_MPautomata}, Theorem~\ref{thm2_det_MPautomata}, and 
	Theorem~\ref{thm7_det_MPautomata}, we obtain equivalent conditions for the strong periodic detectability,
	weak detectability, and weak approximate detectability of $\Acal^{unam,\underQ}$ studied in 
	\cite{Lai2021DetUnambiguousWAutomata}. Recall that the above verification does not depend on the divergence-freeness
	assumption, but the verification in \cite{Lai2021DetUnambiguousWAutomata} depends on the assumption.
	
	On the other hand, the deterministic finite automaton $G_{obs}'$ 
	returned by \cite[Algorithm~1]{Lai2021DetUnambiguousWAutomata} can be used to verify the detectability studied
	in the current paper for divergence-free $\Acal^{unam,\Q}$ and $\Acal^{unam,\underQ}$, while the observer
	$G_{obs}$ returned by \cite[Algorithm~1]{Lai2021DetUnambiguousWAutomata} can be used to verify
	the detectability defined in \cite{Lai2021DetUnambiguousWAutomata} for divergence-free $\Acal^{unam,\underQ}$.
	Recall that $G_{obs}$ was obtained from $G_{obs}'$ in the same way as obtaining $\tilde\Acal_{obs}^{unam,\Q}$
	from $\Acal_{obs}^{unam,\Q}$.

	Now we use automaton $\Acal_{1}^{\N}$ (which can also be regarded
	as automaton $\Acal_1^{\underN}$ over semiring $\underN$) in Fig.~\ref{fig6_det_MPautomata} to
	show the essential differences between the two notions. Reconsider the
	paths $\pi_1,\dots,\pi_5$ in \eqref{eqn12_det_MPautomata}. By definition, one has
	\begin{subequations}\label{eqn15_det_MPautomata} 
		\begin{align}
			\sigma(\pi_1)=(a,1)(b,3), & \quad \ell(\sigma(\pi_1))=(\rho,1)(\rho,3),\\
			\sigma(\pi_2)=(a,1)(b,3), & \quad \ell(\sigma(\pi_2))=(\rho,1)(\rho,3),\\
			\sigma(\pi_3)=(a,1)(u,2)(b,4), & \quad \ell(\sigma(\pi_3))=(\rho,1)(\rho,4),\\
			\sigma(\pi_4)=(a,1)(u,2)(b,4), & \quad \ell(\sigma(\pi_4))=(\rho,1)(\rho,4),\\
			\sigma(\pi_5)=(a,1)(b,3)(u,4), & \quad \ell(\sigma(\pi_5))=(\rho,1)(\rho,3).
		\end{align}
	\end{subequations}
	Then 
	\begin{subequations}\label{eqn16_det_MPautomata}
		\begin{align}
			&C((\rho,1)(\rho,2))=\emptyset,\label{eqn16_1_det_MPautomata}\\
			&C((\rho,1)(\rho,3))=\{q_3,q_4\}.\label{eqn16_2_det_MPautomata}
		\end{align}
	\end{subequations}

	Compared with \eqref{eqn17_det_MPautomata} (i.e., $\Mt(\Acal_1^{\N},(\rho,1)(\rho,2))=\Mt(\Acal_1^{\N},
	(\rho,1)(\rho,3))=\{q_3\}$), one can see the difference
	between $\Mt(\Acal_1^{\N},(\rho,1)(\rho,2))$ and $C((\rho,1)(\rho,2))$ is caused by (A), but the 
	difference between $\Mt(\Acal_1^{\N},(\rho,1)(\rho,3))$ and $C((\rho,1)(\rho,3))$ is caused by (B).
	Although there is a transition sequence $q_0\xrightarrow[]{a/1}q_2\xrightarrow[]{b/1}q_3$,
	there exists no weighted sequence $(a,1)(b,2)$, so $C((\rho,1)(\rho,2))=\emptyset$.

	The above difference (A) also induces another remarkable difference between $\Acal^{\Q}$ and $\Acal^{\underQ}$.
	In Remark~\ref{rem6_det_MPautomata}, we show that for any weighted label sequence $\gamma(\s_2,t_2)
	\in(\Sig\times {\Q})^+$, one can compute $\Mt(\Acal^{\Q},\gamma(\s_2,t_2))$ from $\Mt(\Acal^{\Q},
	\gamma)$. However, the results in \cite[Example~7]{Lai2019StateEstimationMPA} show that generally
	$C(\gamma(\s_2,t_2))$ cannot be computed from $C(\gamma)$, but must be computed from the initial time.
	This shows that a notion of observer for general $\Acal^{\underQ}$ might not be computable with complexity
	upper bounds.  
\end{remark}

\begin{remark}\label{rem7_det_MPautomata}
	The method of computing observer $\Acal_{obs}^{unam,\underQ}$ for divergence-free $\Acal^{unam,\underQ}$
in \cite{Lai2021DetUnambiguousWAutomata}
is as follows. Due to the feature of unambiguity, a given $\Acal^{unam,\underQ}$
is firstly\footnote{Informally, this step is to 
aggregate every path $q_0\xrightarrow[]{s_1} q_1\xrightarrow[]{e_2}q_2$, where $s_1$ is a sequence of unobservable
events of length no greater than the number of states and $e_2$ is an observable event,
to a path $q_0\xrightarrow[]{e_2}q_2$ whose weight 
is equal to the weight of $q_0\xrightarrow[]{s_1} q_1\xrightarrow[]{e_2}q_2$. After this step,
the obtained structure may not be a weighted automaton any more, because a path
$q_0\xrightarrow[]{e_2}q_2$ may have two different weights; however, after regarding every pair of 
event $e$ and the weight of a transition under $e$ as a new event, then a labeled finite-state automaton
is obtained.} transformed to a labeled finite-state automaton $\Acal'$ in exponential time, 
the subsequent procedure of computing $\Acal_{obs}^{unam,\underQ}$
is almost the same as the procedure of computing the observer $\Acal'_{obs}$ of 
$\Acal'$ as in \cite{Shu2007Detectability_DES}, hence the size of $\Acal_{obs}^{unam,\underQ}$
is exponential in the size of $\Acal^{unam,\underQ}$, which is the same
as the case that the size of the observer $\Acal_{obs}$ is exponential in that 
of automaton $\Acal$. Note that for $\Acal^{unam,\underQ}$
that contains an unobservable cycle (that is, $\Acal^{unam,\underQ}$ is not divergence-free),
generally the method in \cite{Lai2021DetUnambiguousWAutomata} cannot be used to verify 
detectability of such automata. For example, consider automaton $\Acal_0^{\underN}$ 
over semiring $\underN$ shown in Fig.~\ref{fig10_det_MPautomata}, there is an unobservable self-loop
on state $q_2$. Every number denotes the execution time of the corresponding transition,
e.g., when
$\Acal_0^{\underN}$ is in state $q_0$ and event $u$ occurs, $\Acal_0^{\underN}$ transitions to state $q_1$,
the execution time of this transition is $10$. Hence when we observe $a$ at instant $11$,
we know that $\Acal_0^{\underN}$ can be in states $q_3$ or $q_4$. However, by using the method in 
\cite{Lai2021DetUnambiguousWAutomata}, after the first step as mentioned above, we obtain the finite-state
automaton shown in Fig.~\ref{fig11_det_MPautomata}, by which we know that when we observe $a$ at
instant $11$, $\Acal_0^{\underN}$ can only be in state $q_3$. Hence, the detectability of automata
like $\Acal_0^{\underN}$ cannot be verified by using the method in \cite{Lai2021DetUnambiguousWAutomata}.
On the other hand, it is not pointed out in 
\cite{Lai2021DetUnambiguousWAutomata} that whether an observer $\Acal_{obs}^{unam,\underQ}$
is computable heavily depends on the weights. If the weights are real numbers, then generally
the observer is uncomputable, because there exist
only countably infinitely many computable real numbers \cite{Turing1936ComputableNumbers}. 

\begin{figure}[!htbp]
        \centering
	\begin{tikzpicture}
	[>=stealth',shorten >=1pt,thick,auto,node distance=2.5 cm, scale = 1.0, transform shape,
	->,>=stealth,inner sep=2pt]

	\tikzstyle{emptynode}=[inner sep=0,outer sep=0]

	\node[initial, state, initial where = above] (q0) {$q_0$};
	\node[state] (q3) [left of = q0] {$q_3$};
	\node[state] (q4) [right of = q0] {$q_{4}$};

	\path [->]
	(q0) edge node [above, sloped] {$(a,11)$} (q3)
	(q0) edge node [above, sloped] {$\begin{matrix}(a,2)(a,3)\\(a,4)(a,5)\end{matrix}$} (q4)
	(q3) edge [loop left] node {$(a,1)$} (q3)
	(q4) edge [loop right] node {$(a,1)$} (q4)
	;

     \end{tikzpicture}
	 \caption{Finite-state automaton computed from $\Acal_0^{\underN}$ in Fig~\ref{fig10_det_MPautomata}
	 by using the method in \cite{Lai2021DetUnambiguousWAutomata}.}
	\label{fig11_det_MPautomata}
\end{figure}
\end{remark}

\end{document}